\newcommand{\poly}{\mathrm{poly}}
\newcommand{\opt}{\ensuremath{\mathrm{Opt}}}
\newtheorem{theorem}{Theorem}[section]
\newtheorem{corollary}{Corollary}[theorem]
\newtheorem{lemma}[theorem]{Lemma}
\newtheorem{proposition}[theorem]{Proposition}
\newtheorem{definition}{Definition}
\newtheorem{observation}{Observation}
\newtheoremstyle{iremark}
  {\topsep}   
  {\topsep}   
  {\upshape}  
  {0pt}       
  {\itshape}  
  {.}         
  {5pt plus 1pt minus 1pt} 
  {\thmname{#1}\thmnumber{ \itshape#2}\thmnote{ (#3)}} 
\theoremstyle{iremark}
\newcommand{\disj}{\ensuremath{\mathsf{Disjointness}}}
\definecolor{myblue}{RGB}{20, 86, 128}
\definecolor{mygreen}{rgb}{0.0, 0.2, 0.13}
\definecolor{myred}{rgb}{0.8, 0.31, 0.36}
\definecolor{debianred}{rgb}{0.84, 0.04, 0.33}
\definecolor{deepskyblue}{rgb}{0.0, 0.75, 1.0}
\title{Optimal Multi-Dimensional Mechanisms are not Locally-Implementable}
\author{S. Matthew Weinberg\footnote{Computer Science, Princeton University. \texttt{smweinberg@princeton.edu}. Supported by NSF CCF-1955205.} \textcircled{r} \and Zixin Zhou\footnote{Computer Science, Stanford University. \texttt{jackzhou@stanford.edu}. Supported by a Stanford SoE Fellowship.} \textcircled{r}}
\begin{document}

\maketitle
\begin{abstract}

We introduce \emph{locality}: a new property of \emph{multi-bidder auctions} that formally separates the simplicity of optimal single-dimensional multi-bidder auctions from the complexity of optimal multi-dimensional multi-bidder auctions. Specifically, consider the revenue-optimal, Bayesian Incentive Compatible auction for buyers with valuations drawn from $\vec{D}:=\times_i D_i$, where each distribution has support-size $n$. This auction takes as input a valuation profile $\vec{v}$ and produces as output an allocation of the items and prices to charge, $\opt_{\vec{D}}(\vec{v})$. When each $D_i$ is single-dimensional, this mapping is \emph{locally-implementable}: defining each input $v_i$ requires $\Theta(\log n)$ bits, and $\opt_{\vec{D}}(\vec{v})$ \emph{can be fully determined using just $\Theta(\log n)$ bits from each $D_i$}. This follows immediately from Myerson's virtual value theory~\cite{Myerson81}.

Our main result establishes that optimal multi-dimensional mechanisms are \emph{not locally-implementable}: in order to determine the output $\opt_{\vec{D}}(\vec{v})$ on one particular input $\vec{v}$, \emph{one still needs to know (essentially) the entire distribution $\vec{D}$}. Formally, $\Omega(n)$ bits from each $D_i$ is necessary: (essentially) enough to fully describe $D_i$, and exponentially more than the $\Theta(\log n)$ needed to define the input $v_i$. We show that this phenomenon already occurs with just two bidders, even when one bidder is single-dimensional, and even when the other bidder is barely multi-dimensional. More specifically, the multi-dimensional bidder is ``inter-dimensional'' from the FedEx setting with just two days~\cite{FiatGKK16}.

Our techniques are fairly robust: we additionally establish that optimal mechanisms for single-dimensional buyers with budget constraints are not locally-implementable. This again occurs even with just two bidders, even when one has no budget constraint, and even when the other's budget is public.
\end{abstract}
\addtocounter{page}{-1}
\newpage
\section{Introduction}\label{sec:intro}

Consider the problem of selling multiple items to multiple bidders, where each bidder's valuation function (for the items) is drawn independently from a distribution known to the seller. The seller desires a truthful auction (formally, Bayesian Incentive Compatible. See Section~\ref{sec:prelim}) maximizing her expected revenue.

	Since Myerson's seminal work, it is well-established that revenue-optimal \emph{single-dimensional} auctions are exceptionally simple, and satisfy many desirable properties. For example, revenue-optimal \emph{single-bidder} single-dimensional auctions offer the bidder a take-it-or-leave-it price (the bidder can pay the price and get the item, or not pay and get nothing). Optimal single-bidder single-dimensional auctions are therefore deterministic, computable in poly-time, revenue-monotone,\footnote{Specifically, if $D$ stochastically dominates $D'$, then the optimal revenue for $D$ exceeds that of $D'$.} and have menu-complexity one.\footnote{The menu-complexity of a single-bidder auction is the number of distinct non-trivial allocations they might receive.} In contrast, optimal \emph{single-bidder} multi-dimensional mechanisms (where, e.g., two distinct items are for sale) require randomization~\cite{Thanassoulis04, Pavlov11}, are computationally hard to find~\cite{DaskalakisDT14,ChenDPSY14,ChenDOPSY15}, are non-monotone~\cite{HartR15, RubinsteinW18},\footnote{Specifically, there exist distributions over additive valuations for two items $D$ and $D'$ which can be coupled so that $v \sim D$ values all sets of items more than $v' \sim D'$, yet the optimal revenue for $D'$ is infinite and the optimal revenue for $D$ is $1$!} and have unbounded menu complexity~\cite{BriestCKW15, HartN19, ManelliV07, DaskalakisDT17, BabaioffGN17}. This vast (and still growing) line of works clearly establishes that optimal \emph{single-bidder} multi-dimensional mechanisms are extremely complex when compared to their single-dimensional counterparts.

In the \emph{multi-bidder} setting, however, the story is less written. Of course, optimal multi-bidder multi-dimensional mechanisms inherit all the complexities of optimal single-bidder multi-dimensional mechanisms. Still, it remains largely unknown to what extent \emph{all} the complexities of optimal multi-bidder multi-dimensional auctions already manifest in the single-bidder setting. Indeed, in some special cases where the optimal single-bidder multi-dimensional auction is tractable, the optimal multi-bidder multi-dimensional auction is tractable as well~\cite{CaiDW12a}. Some further multi-dimensional special cases even admit formal multi-to-single-bidder reductions~\cite{AlaeiFHHM12, AlaeiFHH13, Alaei14}. In this direction, our work identifies a novel complexity of optimal multi-bidder multi-dimensional mechanisms \emph{driven by the multi-bidder aspect}. For example, our quantitative measure identifies complexity in broad classes of two-bidder multi-dimensional settings, even though the single-bidder problem for every instance in these classes is quite simple.\\


\noindent\textbf{Locally-Implementable Mechanisms.} Consider the following thought experiment: you run $k$-bidder auctions, and your Bayesian prior is that Bidder $i$'s valuation function is drawn from $D_i$. When each $D_i$ remains permanently fixed, it makes sense to hard-code the revenue-optimal auction for $\times_i D_i$, and plug in each new valuation profile $\vec{v}$ as input. But you are continuously gathering data on bidders' values to refine your beliefs (in fact, every additional auction executed itself refines your beliefs for future auctions). So while just a single $\vec{v}$ is given as input, the problem you aim to solve is parameterized by the prior $\vec{D}$.

\begin{definition}[Implementing a Revenue Optimal Auction] Given as input $k$ valuation functions $v_1,\ldots, v_k$, and parameterized by $k$ distributions $D_1,\ldots, D_k$, determine $\opt_{\vec{D}}(\vec{v})$: an allocation of items and payments charged on valuation profile $(v_1,\ldots, v_k)$ that is consistent with some revenue-optimal mechanism for $\times_i D_i$.
\end{definition}

At the heart of our paper is the following (for now, informally-posed) question: \emph{How much do you really need to know about each $D_i$ in order to compute $\opt_{\vec{D}}(\vec{v})$ for just one particular $\vec{v}$?}\\

When each $D_i$ is single-dimensional, not much is needed, and this follows immediately from Myerson's theory of (ironed) virtual values~\cite{Myerson81}. Indeed, if each $D_i$ is supported on $n$ valuations, and each valuation has an integer value between $0$ and $\poly(n)$ for each outcome, and the probability of each valuation is an integer multiple of $1/\poly(n)$, then $O(\log n)$ bits from each $D_i$ suffice to compute $\opt_{\vec{D}}(\vec{v})$. Inspired by the concept of locally-decodable codes (see survey~\cite{Yekhanin12}), we term this property \emph{locally-implementable}: to compute $\opt_{\vec{D}}(\vec{v})$, barely more bits are needed from each $D_i$ than the bits needed to state $v_i$ itself.\footnote{To draw the (very high-level) conceptual connection to locally-decodable codes, think of the $n$-bit codeword $C$ as parameterizing the decoding algorithm, which is given an index $i$ as input (and the desired output is $m_i$, the $i^{th}$ bit of the original message). Locality refers to the fact that $m_i$ can be determined by querying just $o(n)$ bits of $C$. Similarly, locality in our context refers to the fact that $\opt_{\vec{D}}(\vec{v})$ can be determined using just $O(\log n)$ bits from each $D_i$.}

To quickly see this (see Appendix~\ref{app:single} for a more detailed sketch), recall that Myerson's seminal work defines a (ironed) virtual valuation function $\bar{\varphi}^{D_i}_i(\cdot)$ (which depends only on $D_i$ and not $D_{-i}$) such that the revenue-optimal auction gives the item to Bidder $i^*:=\arg\max_{i} \{\bar{\varphi}^{D_i}_i(v_i)\}$ and charges them price $(\bar{\varphi}^{D_{i^*}}_{i^*})^{-1}(\max\{0,\max_{i \neq i^*}\{\bar{\varphi}^{D_i}_i(v_i)\}\})$ (if $\bar{\varphi}_{i^*}^{D_{i^*}}(v_{i^*}) \geq 0$, otherwise no one wins the item). In particular, \emph{the winner can be determined just by knowing $\bar{\varphi}_i^{D_i}(v_i)$ for all $i$}, and the payment charged can be further determined \emph{with one additional query to $\bar{\varphi}_{i^*}^{D_{i^*}}(\cdot)$}. Moreover, each $\bar{\varphi}^{D_i}_i(v)$ is the ratio of two integers of size at most $\poly(n)$ (subject to the conditions at the start of this paragraph), so \emph{just $O(\log n)$ bits from each $D_i$ suffice to compute $\opt_{\vec{D}}(\vec{v})$, while $\Omega(n)$ bits are necessary to fully specify each $D_i$.}

Our main result shows that optimal multi-dimensional mechanisms are \emph{not locally-implementable}, and in fact are as far from locally-implementable as possible: $\Omega(n)$ bits from each $D_i$ are needed to determine $\opt_{\vec{D}}(\vec{v})$ --- nearly as many bits as needed to fully specify $D_i$, and exponentially more than the $\Theta(\log n)$ bits needed to specify $v_i$. We further show that this already holds in essentially the simplest multi-dimensional setting: there are just two bidders, one of whom is single-dimensional, and another who is ``inter-dimensional'' according to the FedEx problem~\cite{FiatGKK16}. Specifically, there are two options for the item (call them one-day and two-day shipping). The single-dimensional bidder always has value $0$ for two-day shipping. The multi-dimensional bidder either has the same value for both options, or has value $0$ for two-day shipping. Such bidders are a (very) special case of unit-demand bidders.\footnote{A valuation is unit-demand if its valuation for a set of items $S$ is $v(S):=\max_{i \in S} v(\{i\})$.} They are also a special case of buyers with a private budget constraint~\cite{DevanurW17}, and single-minded buyers~\cite{DevanurGSSW20}. Formally, we use the lens of communication complexity for the following problem to establish our main result.

\begin{definition} Select-Outcome Problem is a communication problem between Alice and Bob. Alice is given as input $D_1$, and a valuation $v_1$ in its support. Bob is given $D_2$, and a valuation $v_2$ in its support. When the input is of size $n$, each distribution has support-size at most $n$, all valuations in the support have integer values $\leq n^3$ for all outcomes, and all probabilities are an integer multiple of $\frac{1}{b}$, for some integer $b \leq n^8$.\footnote{The particular choice of $n^3$ and $n^8$ are immaterial, and can be any sufficiently large (fixed) polynomials in $n$. }

A solution to Select-Outcome Problem outputs an outcome $x$ such that some revenue-optimal auction for $D_1 \times D_2$, on valuation profile $(v_1,v_2)$, selects outcome $x$ with non-zero probability.\footnote{For example, in a single-item auction with two bidders there are three outcomes: give the item to bidder one, bidder two, or no one. Note that many outcomes may be correct, both due to multiplicity of optimal auctions, and due to optimal auctions being randomized. Note also that the outcome selected may be the ``null'' outcome to keep all items with the seller.}
\end{definition}

In this language, the previous paragraphs state that Select-Outcome Problem can be solved in deterministic communication complexity $O(\log n)$ when both bidders are single-dimensional (and moreover, correct prices can be found in communication $O(\log n)$ as well). Formally, our main result is that the communication complexity is exponentially higher in multi-dimensional settings.

\begin{theorem}[Main Result]\label{thm:main} Even when $D_2$ is single-dimensional, and $D_1$ is a FedEx bidder with two options, the communication complexity of Select-Outcome Problem is $\Omega(n)$. This holds for deterministic protocols, as well as randomized protocols which succeed with probability $\geq 2/3$.
\end{theorem}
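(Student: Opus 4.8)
The plan is to reduce from $\disj{}$ (Set-Disjointness on $\Theta(n)$ elements), which has randomized communication complexity $\Omega(n)$ even to succeed with probability $2/3$, and deterministic complexity $\Omega(n)$ as well. Concretely, I would exhibit maps taking Alice's set $S\subseteq[n]$ to a pair $(D_1,v_1)=(D_1(S),v_1(S))$, with $D_1$ a two-option FedEx distribution, and Bob's set $T\subseteq[n]$ to a pair $(D_2,v_2)=(D_2(T),v_2(T))$, with $D_2$ single-dimensional, such that both maps are computable with no communication and respect the granularity constraints of Select-Outcome Problem (support $\le n$, integer values $\le n^3$, probabilities multiples of $1/b$ with $b\le n^7$). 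The goal is to arrange that the \emph{set of valid answers} to Select-Outcome Problem on $(D_1(S),D_2(T))$ at the fixed profile $(v_1,v_2)$ is one fixed set when $S\cap T=\emptyset$ and a \emph{disjoint} fixed set when $S\cap T\ne\emptyset$; then any protocol computing $\opt(\vec v,\vec D)$ decides $\disj{}$ with the same communication and success probability, which yields the theorem. Because the construction uses exactly a two-option FedEx bidder against a single-dimensional bidder, the ``one bidder single-dimensional / other barely multi-dimensional'' strengthening is automatic.

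For the construction I would let $[n]$ index a polynomially spaced grid of candidate values $\ell_1<\cdots<\ell_n$ inside $[0,n^3]$ together with a matching grid of small probability weights. Bidder $2$'s distribution $D_2(T)$ places mass on the one-day values $\{\ell_j : j\in T\}$ (plus a fixed ``anchor'' atom), and bidder $1$'s FedEx distribution $D_1(S)$ places mass on \emph{relaxed} types (equal value for one-day and two-day) at the values $\{\ell_j : j\in S\}$, with its \emph{urgent} types (value only for one-day) placed at a fixed location engineered so that the urgent-side optimal price sits exactly at the boundary where the monotonicity constraint relating the relaxed price $r$ and the urgent price $q$ (namely $r\le q$) is on the verge of binding. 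The reported profile fixes bidder $1$ to a relaxed type with a fixed value $v_1$ and bidder $2$ to a fixed value $v_2$. With these choices the revenue-optimal mechanism on $(v_1,v_2)$ reduces to a choice between ``serve bidder $1$ with two-day shipping'' and ``serve bidder $2$ with one-day shipping'' (every other outcome being strictly suboptimal by design), and the comparison that drives this choice is an inequality between a quantity that aggregates the relaxed mass of $D_1$ \emph{above the urgent threshold} — hence depends on $S$ and, through the threshold, on the global shape of $D_1$ — and a quantity aggregating $D_2$'s mass. I would tune the two grids so that this inequality flips precisely according to whether $S\cap T=\emptyset$.

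The crux, and the step I expect to be the main obstacle, is pinning down the revenue-optimal mechanism: unlike the single-dimensional case there is no Myerson-style closed form, and Select-Outcome Problem quantifies over \emph{every} revenue-optimal auction and every outcome it selects with positive probability, so the behavior on the contested profile must be \emph{forced}. I would handle this in three steps: (i) reduce the optimization to a structured family — menu mechanisms for bidder $1$ given by an urgent price $q$ and a relaxed price $r\le q$, a threshold price for single-dimensional bidder $2$, and a deterministic tie-break on the single contested profile — arguing via LP/Lagrangian duality for the BIC program (exhibiting dual multipliers for the incentive constraints) or a direct exchange/ironing argument that the optimum lies in this family; (ii) solve the resulting low-dimensional optimization as a function of $S$ and $T$, using the engineered plateau/staircase shape of the relaxed- and urgent-side revenue curves so that the optimal $(q^\star,r^\star)$ and bidder $2$'s threshold are determined, with the constraint $r\le q$ binding iff $S\cap T\ne\emptyset$; and (iii) use the polynomial slack afforded by the $n^3$ and $n^7$ bounds to make the optimum \emph{strict} in each case, so that when $S\cap T=\emptyset$ \emph{no} optimal auction serves bidder $1$ on $(v_1,v_2)$, and when $S\cap T\ne\emptyset$ \emph{no} optimal auction serves bidder $2$ there — making the two answer sets genuinely disjoint and completing the reduction. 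The remaining checks are routine: verifying that all constructed parameters are polynomially bounded with the stated granularity, and noting that the randomized and deterministic $\Omega(n)$ lower bounds for $\disj{}$ transfer verbatim.
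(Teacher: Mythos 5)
Your high-level template matches the paper: both reduce from $\disj$ to Select-Outcome Problem by mapping Alice's input to a two-option FedEx distribution $D_1$ and Bob's to a single-dimensional $D_2$, then argue that the allocation at one fixed low-type profile is forced to differ according to $\disj(x,y)$. But the concrete construction and the analysis plan both have gaps that I do not think can be repaired along the lines you sketch.

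First, the construction. You propose having $D_1(S)$ place mass only at values indexed by $S$ and $D_2(T)$ only at values indexed by $T$, and then arguing that the winner at the contested profile is governed by ``an aggregate of $D_1$'s relaxed mass above the urgent threshold'' versus ``an aggregate of $D_2$'s mass.'' Set-Disjointness asks whether there is a \emph{single} coordinate with $x_k=y_k=1$; aggregate mass comparisons do not naturally detect this. The paper instead keeps the support of both distributions \emph{fixed} (an anchor atom plus a near-uniform block on $v^2,\dots,v^{n+2}$) and encodes $x_k,y_k$ as tiny, carefully-sized perturbations of the probability at level $k+1$, so that the sign of $\Phi_1((v^{k+1},2))-\Phi_2((v^{k+1},1))$ under the Myerson flow depends \emph{only} on $(x_k,y_k)$ and flips exactly when $x_k=y_k=1$ (Proposition~\ref{prop:Myerson}, bullets three and four). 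This is what lets a single coincidence change the optimal mechanism; your grid-support idea loses this locality and it is not clear how you would recover it.

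Second, the analysis. Your step (i)---restricting to ``menu mechanisms for bidder $1$ with urgent price $q$ and relaxed price $r\le q$, a threshold for bidder $2$, and a tie-break''---is essentially a single-bidder posted-price ansatz, and you acknowledge it needs an LP/duality or ironing justification, but you do not provide one, and it is not the right family anyway: in a two-bidder BIC auction the relevant object is a joint allocation rule, and the paper's witnesses are second-price auctions whose optimality is certified by a \emph{flow} in the~\cite{CaiDW16} sense. The paper's key move, which your plan lacks, is the dichotomy between the canonical Myerson flow (optimal iff $\disj(x,y)=\mathsf{yes}$, Theorem~\ref{thm:disjoint}) and a ``boosted'' modified flow that becomes the correct certificate when $\disj(x,y)=\mathsf{no}$ (Theorem~\ref{thm:notdisjoint}). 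That boosting operation is precisely what flips the allocation at $(t_1^1,t_2^1)$ from bidder $1$ to bidder $2$. Finally, your step (iii) (``polynomial slack makes the optimum strict'') is not by itself enough to quantify over \emph{every} revenue-optimal auction, which Select-Outcome requires; the paper gets this for free from the complementary-slackness half of Theorem~\ref{thm:CDW}, which forces all optimal auctions to agree with the sign of the virtual-value comparison. Without that (or an equivalent uniqueness argument), you have only exhibited one good auction per case, not excluded all others.
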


We quickly motivate our precise choices in defining Select-Outcome Problem. Because all input numbers are integers $\leq \poly(n)$, Theorem~\ref{thm:main} must follow because any solution to Select-Outcome Problem requires many bits from each distribution  (and not because $\Omega(n)$ bits are required just to do arithmetic on the input). Allowing the solution to be consistent with any allocation output with non-zero probability in any optimal auction ensures that hardness follows for any reasonable alternative definition as well (and not because of some technicality associated with multiplicity or randomization of optimal auctions).

We further note that the complexity uncovered by Theorem~\ref{thm:main} arises \emph{only} in the multi-bidder setting, as the single-bidder problems for both $D_1$ and $D_2$ are quite simple. Indeed, $D_2$ is a single-dimensional bidder, so the optimal single-bidder auction for $D_2$ is just a take-it-or-leave-it price. The optimal single-bidder auction for a FedEx bidder with two options is only slightly more complex: it has menu complexity at most two,\footnote{Specifically, it offers one-day shipping at a take-it-or-leave-it price. It may additionally offer one (perhaps randomized) option to receive two-day shipping at a discount.} is computationally tractable, etc.~\cite{FiatGKK16}. Additionally, for every possible instantiation of $D_1$ or $D_2$ used in our construction, the revenue-optimal single-bidder auction simply sets a take-it-or-leave-it price of $n^2+1$ (see Proposition~\ref{prop:singlesimple}). Put another way, every instantiation of $D_1$ and $D_2$ in our construction admits the simplest possible single-bidder solution. Yet, the optimal multi-bidder auction for both together is not locally-implementable (and recall that this phenomenon cannot occur with two single-dimensional bidders).

Finally, we emphasize that locality is an intrinsic measure of complexity for multi-bidder auctions. In this sense, the communication complexity of Select-Outcome Problem serves not as a problem to be solved in practice, but rather as a quantitative lens to view the extent to which the output of an optimal multi-bidder auction for one particular input $\vec{v}$ depends on the underlying prior $\vec{D}$.

\subsection{Extension: Budget-Constrained Bidders}
In addition to our main result, we also consider bidders who are barely beyond the classic single-dimensional setting in a different direction: they have a budget constraint. Specifically, bidders have a value $v$ and a budget $B$. If they receive the item and are charged $p \leq B$, their utility is $v-p$ as usual. If they are charged $p > B$, their utility is $-\infty$. That is, the bidder's utility is \emph{not} quasi-linear.\footnote{A bidder is quasi-linear if their utility for receiving the item is $v-p$ always. Throughout the paper, bidders will always be assumed to be quasi-linear unless otherwise specified.}

If $v$ and $B$ are both private information to the bidder, this is an inter-dimensional setting~\cite{DevanurW17}, and Theorem~\ref{thm:main} already establishes that optimal mechanisms are not locally-implementable. If instead the budget is \emph{public} (known to the designer), then this is still a single-dimensional setting (because the bidder's private information is just a single value), but it is \emph{non-linear} (because the buyer is not quasi-linear). Again, this is essentially the simplest non-linear setting (and perhaps the most well-studied within the TCS literature, arguably by a significant margin): the buyer is still single-dimensional, and her utility with respect to price is piecewise-linear with two segments. We also show that optimal mechanisms for single-dimensional budget-constrained buyers are not locally-implementable.

\begin{theorem}\label{thm:mainbudget} Even when $D_2$ is single-dimensional and quasi-linear, and $D_1$ is single-dimensional with a public budget constraint, the communication complexity of Select-Outcome Problem is $\Omega(n)$. This holds for deterministic protocols, as well as randomized protocols which succeed with probability $\geq 2/3$.
\end{theorem}

We again note that this complexity arises in the multi-bidder setting, despite the fact that each single-bidder problem is quite simple. Again, $D_2$ is single-dimensional and quasi-linear, so the optimal single-bidder mechanism is just a take-it-or-leave-it price. The optimal single-bidder mechanism for a single-dimensional buyer with a public budget has menu complexity at most two,\footnote{Specifically, it offers the option to receive the item at some price $p$. If $p = B$, it may additionally offer the option to receive the item with probability $q < 1$ at price $r < B$.} is computationally tractable, etc.~\cite{ChawlaMM11}. Additionally, for every possible instantiation of $D_1$ or $D_2$ used in our construction, the revenue-optimal single-bidder auction again sets a take-it-or-leave-it price of $n^2+1$ (see Proposition~\ref{prop:budgetsinglesimple}). We additionally emphasize that our constructions for Theorems~\ref{thm:main} and Theorem~\ref{thm:mainbudget} overlap significantly (see Section~\ref{sec:proof}), highlighting the robustness of our technical contributions.

\subsection{Additional Implications for Multi-Dimensional Virtual Values}
We now provide an additional lens through which to view the implications of our main result. Specifically, several works provide some form of ``multi-dimensional virtual values''~\cite{RochetS03, CaiDW12a, AlaeiFHH13, HartlineH15, CaiDW16, Carroll17}. Their precise uses and derivations differ (see Section~\ref{sec:related} for further detail), but they all share a theme of connecting truthful revenue maximization to algorithmic virtual welfare maximization. For example,~\cite{CaiDW16} derives multi-dimensional virtual values through Lagrangian duality, and proves that for all instances $D$, there exists a virtual valuation function $\Phi_i^{D}(\cdot)$ for each bidder $i$ such that for all valuation profiles $\vec{v}$, every revenue-optimal auction must select an outcome $x$ maximizing $\sum_i (\Phi_i^{D}(v_i))(x)$.


In particular, this view shows a correspondence between Lagrangian multipliers/dual variables in a natural Linear Programming formulation (overviewed in Section~\ref{sec:prelim}) and these virtual valuation functions. This Linear Program has variables, constraints, and dual variables for each bidder.

One additional beautiful aspect of Myerson's virtual value theory is the following: $\bar{\varphi}_i^{D_i}(\cdot)$ depends \emph{only on $D_i$ and not at all on $D_{-i}$}. In the language of LP duality, this implies the following remarkable property: in the LP formulation, \emph{the optimal dual variables for Bidder $i$ also depend only on $D_i$ and not at all on $D_{-i}$}! The fact that the optimal dual variables can be computed separately for each bidder is remarkable because the optimal primal solution certainly cannot (Bidder $i$'s allocation/price variables in the optimal auction certainly depend on $D_{-i}$). To emphasize the implications of this remarkable property: consider writing, separately for each $i$, the LP formulation to optimally sell a single item to a single bidder whose value is drawn from $D_i$. To solve the multi-bidder LP formulation for $k$ bidders whose values are drawn from $\times_i D_i$, one could try na\"{i}vely stapling the $k$ optimal primals to these single-bidder LPs together. There is no reason to expect this to succeed, and indeed it fails (in fact, it will generally fail to even produce a feasible primal). On the other hand, one could alternatively try na\"{i}vely stapling together the $k$ optimal dual variables together, and hope that this produces the optimal dual variables for the $k$-bidder LP formulation. Somewhat miraculously, \emph{this latter process succeeds in any single-dimensional setting} (with quasi-linear bidders).

For multi-dimensional bidders, the~\cite{CaiDW16} framework still establishes the existence of $\Phi_i^{D}(\cdot)$, but not necessarily that $\Phi_i^{D}(\cdot)$ is agnostic to $D_{-i}$ as in the single-dimensional case. Or in the language of LPs, the optimal dual variables for Bidder $i$ may a priori depend on the entire prior, rather than just $D_i$. Viewed through this lens,~\cite{AlaeiFHH13, HartlineH15} discover restricted multi-dimensional settings where optimal duals retain this remarkable ``bidder-separable'' property. Through the same lens, Theorem~\ref{thm:main} establishes that this bidder-separable property does not generally hold in multi-dimensional settings (even with just two bidders from the two-day FedEx problem). Theorem~\ref{thm:mainbudget} rules out the bidder-separable property for single-dimensional non-linear settings as well. In addition, Theorems~\ref{thm:disjoint} and~\ref{thm:notdisjoint} further give concrete examples of how optimal dual variables for Bidder $1$ can be quite sensitive to tiny changes in $D_2$. 

\subsection{Very Brief Technical Overview}
The proof of Theorem~\ref{thm:main} follows by a reduction from \disj\ (formally defined in Section~\ref{sec:prelim}). Our reduction makes heavy use of notation and concepts from prior work, so we defer an outline of the approach to Section~\ref{sec:proof} once appropriate language is built up. We provide here a brief highlight of the main challenge: we have just spent several paragraphs in Section~\ref{sec:intro} describing all the intractable properties that revenue-optimal auctions possess. To complete a reduction, we not only need to derive the optimal auction for a single instance, but for an entire class of instances. Moreover, this class must contain sufficiently many ``intractable instances'' in order to embed \disj. Indeed, reductions to Bayesian mechanism design are scarce, technically involved, and to-date exist only for single-bidder settings~\cite{DobzinskiFK11, CaiDW13b, DaskalakisDT14, ChenDPSY14,ChenDOPSY15, CollinaW20}. In multi-bidder settings, the state-of-the-art only recently characterized optimal auctions for all instances with two additive bidders and two items where item values are drawn i.i.d.~from distributions supported on $\{1,2\}$~\cite{Yao17}!

Fortunately, the FedEx setting is a sweet spot which is both rich enough for optimal mechanisms to be non-locally-implementable, yet also structured enough for a tractable reduction to optimal mechanism design. We hope that the proof outline in Section~\ref{sec:proof} may serve as a roadmap for potential future reductions, recalling that any complexities established for the FedEx setting extend to the (significantly more general) multi-dimensional unit-demand setting as well.

\subsection{Related Work}\label{sec:related}
\noindent\textbf{Complexity of Multi-Dimensional Mechanism Design.} We have already discussed the thematically most-related work, which identifies formal complexity measures separating revenue-optimal single- and multi-dimensional mechanisms~\cite{Thanassoulis04, ManelliV07, Pavlov11, BriestCKW15, HartN19, DaskalakisDT14, HartR15, ChenDPSY14, ChenDOPSY15, RubinsteinW18, DaskalakisDT17, BabaioffGN17, Yao17}. Among these, only~\cite{Yao17} explicitly studies the multi-bidder setting, and establishes that while optimal single-dimensional auctions are dominant-strategy truthful,\footnote{An auction is dominant-strategy truthful if it is in each bidder's interest to report their true valuation no matter the other bidders' reports. Contrast this with Bayesian Incentive Compatible (defined in Section~\ref{sec:prelim}).} optimal multi-dimensional auctions are not~\cite{Yao17}. In comparison to this line of works, our paper provides a novel complexity \emph{unique to multi-bidder settings}. Specifically, our work identifies complexity in broad classes of two-bidder settings, even thoug the single-bidder problem for every instance in these classes is quite simple. \\

\noindent\textbf{Multi-Dimensional Virtual Values.} Our work uses multi-dimensional virtual values in order to prove optimality of mechanisms. As previously referenced, several prior works introduce various notions of multi-dimensional virtual values~\cite{RochetS03, CaiDW12a, AlaeiFHH13, HartlineH15, CaiDW16, Carroll17}. Some of these works consider continuous distributions, and derive multi-dimensional virtual values by explicitly choosing paths along which the incentive constraints might bind and then doing integration by parts. Others consider discrete settings, and derive multi-dimensional virtual values by drawing a connection to LP duality. Because our setting is discrete (and necessarily so, in order for communication complexity to be a meaningful measure), we adopt the language used in~\cite{CaiDW16}, which uses the lens of LP duality. 

Within this line of works,~\cite{AlaeiFHH13, HartlineH15} also prove optimality in some multi-bidder settings, and in particular discover restricted settings where multi-dimensional virtual values are bidder-separable (termed ``revenue-linear'' and ``MR-log-supermodular'', respectively). In their language, our main results rule out any extension to two-bidder settings generally (even when one bidder is single-dimensional, and the other is a two-day FedEx bidder or single-dimensional with a public budget). In our language,~\cite{AlaeiFHH13, HartlineH15} discover restricted multi-dimensional settings where optimal mechanisms are locally-implementable.\\

\noindent\textbf{Interdimensional Mechanism Design.}~\cite{FiatGKK16} introduce the FedEx problem, and note that optimal single-bidder mechanisms inherit some-but-not-all of the nice properties of single-dimensional settings, along with some-but-not-all of the complexities associated with multi-dimensional settings. Our main result considers a bidder from the FedEx problem, and therefore our technical setup is similar to works such as~\cite{FiatGKK16, DevanurW17,DevanurHP17, DevanurGSSW20}, but there is not much overlap with these works beyond Section~\ref{sec:prelim}.\\

\noindent\textbf{Budget-Constrained Bidders.} There is a substantial body of work involving mechanism design for budget-constrained buyers. The most related works to ours design revenue-optimal single-item auctions in Bayesian settings for buyers with a public or private budget constraint. Here,~\cite{LaffontR96, CheG00, ChawlaMM11, DevanurW17} characterize the optimal single-buyer auction.\footnote{Specifically, it sets a single price if the budget is public and the valuation distribution is regular~\cite{LaffontR96}. It has menu complexity at most two if the budget is public, no matter the valuation distribution~\cite{ChawlaMM11}. It has menu complexity at most $k$ if the budget is private and drawn from a distribution of support at most $k$, and the valuation conditioned on each possible budget satisfies a condition called ``decreasing marginal revenues.'' It has menu complexity at most $3 \cdot 2^{k-1}-1$ if the budget is private and drawn from a distribution of support at most $k$, no matter the joint distribution of (value, budget) pairs.} Several works also identify tractable structure for the optimal auction in restricted cases. For example,~\cite{LaffontR96} consider the case of multiple buyers with values drawn i.i.d.~from the same regular distribution and an identical public budget, and ~\cite{PaiV14} consider the case that each buyer has a private budget and their value is drawn independently of their budget from an MHR distribution with decreasing density. In this context, our results establish that while optimal single-bidder mechanisms for budget-constrained buyers are quite tractable, optimal multi-bidder auctions remain intractable (without the restrictions imposed in works such as~\cite{LaffontR96,PaiV14}).\\

\noindent\textbf{Reductions in Bayesian Mechanism Design.} We have also briefly discussed reductions to optimal mechanism design, which previously exist only in single-bidder settings~\cite{DobzinskiFK11, CaiDW13b, DaskalakisDT14, ChenDPSY14,ChenDOPSY15, CollinaW20}. Other styles of single-bidder reductions have been used to special cases of optimal single-bidder mechanism design (such as finding the optimal deterministic auction)~\cite{BriestK07, DaskalakisDT12, ChenMPY18}. In comparison to this line of works, our paper provides a technical contribution via the first reduction to multi-bidder Bayesian mechanism design.\\

\noindent\textbf{Communication Complexity in Multi-Dimensional Mechanism Design.} Recent work of~\cite{BabaioffGN17} identifies a connection between the so-called \emph{menu complexity} of single-bidder auctions and the deterministic communication required to implement it. More recent work of~\cite{RubinsteinZ21} further considers the randomized communication complexity required to implement single-bidder auctions, and in particular establishes that randomized implementations of auctions may sometimes communicate exponentially fewer bits than deterministic implementations. While this model is incomparable to ours,\footnote{Specifically, these works study a single-bidder problem where the prior (and therefore the auction) is fully-known. Their goal is to implement the auction for a particular valuation $v$ without necessarily learning $v$.} this context makes it significant that Theorems~\ref{thm:main} and~\ref{thm:mainbudget} hold for randomized communication protocols. 


There is also a substantial body of work at the intersection of communication complexity and mechanism design generally, following seminal work of~\cite{NisanS06}. A parallel line of works following~\cite{FadelS09} considers the communication overhead specifically to compute payments (on top of any communication necessary to determine an outcome/allocation). Follow-up works of~\cite{BabaioffBS13,RubinsteinSTWZ21, DobzinskiR21} show that this overhead can be quite significant, even with just two players. While their model is also incomparable to ours,\footnote{For example, none of these works consider revenue-optimization at all.} this context makes it significant that our main results provide communication lower bounds \emph{just to solve Select-Outcome Problem} (rather than to solve Select-Outcome Problem and to also determine the payments).

\subsection{Summary and Roadmap}
We establish that optimal multi-dimensional mechanisms are not locally-implementable: \emph{executing the auction on just a single valuation profile requires knowing (essentially) the entire distribution}. Formally, we study the communication complexity of Select-Outcome Problem. In single-dimensional settings, Select-Outcome Problem can be solved with $O(\log n)$ bits of communication, while simply stating the valuation profile also requires $\Theta(\log n)$ bits. In multi-dimensional settings, Theorem~\ref{thm:main} gives a communication lower bound of $\Omega(n)$ on Select-Outcome Problem: exponentially more than the $\Theta(\log n)$ bits needed to state the input valuation profile, and nearly the $\Theta(n \log n)$ bits sufficient to fully define each $D_i$. In particular, recall that all $D_i$ in our construction are especially simple from the single-bidder perspective: the optimal single-bidder auction sets a take-it-or-leave-it price of $n^2+1$ (Proposition~\ref{prop:singlesimple}). This makes clear that non-locality truly arises due to complexity of \emph{multi-bidder} multi-dimensional auctions, and not simply due to complexity of the corresponding single-bidder problem. 

Section~\ref{sec:prelim} provides preliminaries. Section~\ref{sec:proof} provides a high-level overview of our approach. Section~\ref{sec:reduction} provides our reduction, and Section~\ref{sec:flow} our analysis. Section~\ref{sec:conclusion} provides concluding thoughts. The appendix contains omitted proofs. In particular, Appendix~\ref{sec:budget} contains our complete analysis for the case of single-dimensional buyers with a public budget.

\section{Preliminaries}\label{sec:prelim}
Below, we provide detailed preliminaries for our main result (Section~\ref{sec:fedexprelim}) and detailed background on Lagrangian duality for Bayesian mechanism design (Section~\ref{sec:dualityprelim}), so that we can present the key ideas behind our construction. We also formally define the setting we consider for our extension to budget-constrained bidders in Section~\ref{sec:budgetprelim} but defer to Section~\ref{sec:budgetprelimfull} full preliminaries necessary for the proofs. Section~\ref{sec:disjprelim} quickly states the communication problem of \disj, which we use in our reductions.

\subsection{The FedEx Problem}\label{sec:fedexprelim}

\noindent\textbf{Setup and Notation.} Our main result holds already when there are just two bidders and two options, which we refer to as day1 and day2. Bidders have a value and an interest. A bidder with (value, interest) pair $(v,1)$ receives value $v$ if they receive one-day shipping, and $0$ if they receive two-day shipping. A bidder with (value, interest) pair $(v,2)$ receives value $v$ if they receive either one-day or two-day shipping. A bidder's \emph{type} stores her full (value, interest) pair.

Each of the two bidders $i$ have $2 n_i + 1$ different types, we label them from $t^0_i$ to $t^{2n_i}_i$. There are $n_i$ possible values among all types, which we label as $v_i^j$, for $j \in [n_i]$. In this labeling, $t^{2k-1}_i$ represents the (value, interest) pair $(v^{k}_i,1)$ and $t^{2k}_i$ represents the (value, interest) pair $(v^{k}_i,2)$. Finally, $t^0_i$ represents not participating in the auction, and has value $v^0_i:=0$. We will alternate between referring to types as $t^{2k+j-2}$ and $(v_i^k,j)$, depending on which notation is cleaner. We denote by $f_i(t_i)$ the probability that bidder $i$ has type $t_i$, and we use $D_i$ to represent the distribution of bidder $i$. Finally, we will also use the notation $R_i((v_i^k,j)):=\sum_{k' \geq k} f_i((v_i^{k'},j))$.\footnote{Observe that $R_i(\cdot)$ is essentially a reverse CDF. Indeed, if there were only one possible interest, the definition would imply that $R_i(\cdot):=1-F_i(\cdot)$, where $F_i(\cdot)$ is the CDF for Bidder $i$.}\\


\noindent\textbf{Optimal Auctions.} We have one item to ship, and can ship it to either bidder using either one- or two-day shipping.\footnote{If desired, our construction can be easily modified so that the auctioneer has a copy of the item shippable on each day, and the bidders are unit-demand (or to many other settings), but we only present one to establish the desired hardness.} Note that this is a ``service-constrained'' environment, as defined in~\cite{AlaeiFHH13}. We seek the revenue-optimal Bayesian Incentive Compatible (BIC) auction, which asks each bidder to report their (value, interest) pair, and then decides to whom to ship the item (or to no one) and in how many days. It is observed in~\cite{FiatGKK16} that the revenue-optimal auction w.l.o.g.~always ships the item on the reported interest (that is, it will never ship the item in two days to a bidder whose interest is day1, or vice versa).\footnote{To quickly see this: observe that two-day shipping an item to a bidder with interest day1 gives them zero value, so the item may as well not be shipped. A bidder with day2 interest is indifferent between one-day and two-day shipping, so giving them two-day instead of one-day shipping does not affect their utility and makes other types of that bidder \emph{less} interested in misreporting this type.} With this in mind, a mechanism is defined by its ex-post allocation rule $X$ and ex-post payment rule $P$. Here, $X_{i}(t_1,t_2)$ denotes the probability that bidder $i$ is shipped the item (matching their interest) when the reported types are $t_1,t_2$, and $P_i(t_1,t_2)$ denotes the payment made by bidder $i$ when the reported types are $t_1,t_2$. Because we have one copy of the item to ship, an allocation rule is feasible iff for all $t_1,t_2$, $X_1(t_1,t_2) + X_2(t_1,t_2) \leq 1$.

An auction is Bayesian Incentive Compatible (BIC) if it is in each bidder's interest to report their true type in expectation over the types of the other bidder. More specifically, the revenue-optimal BIC auction is the solution to the following linear program. In the LP, the variables are $X,P, \pi,p$. $X$ and $P$ refer to the ex-post allocation/price rules, as defined above. $\pi, p$ refer to the interim allocation/price rules, which satisfy the equalities in Equations~\eqref{eq:interim} and~\eqref{eq:price}.

\begin{align}
\max_{X,P,\pi, p} \quad & \sum_{i} \sum_{j = 1}^{2n_i} f_i(t^j_i)\cdot p_i(t^j_i)\nonumber\\
\textrm{subject to} \quad & X_i(t^j_1,t^\ell_2) \in [0,1]\text{ for all bidders $i$ and all $j,\ell$}.\nonumber\\
&X_1(t_1^0,t_2^\ell) = P_1(t_1^0,t_2^\ell) =0\text{ for all $\ell \in [0,2n_2]$}.\nonumber\\
&X_2(t_1^k,t_2^0) = P_2(t_1^k,t_2^0) =0\text{ for all $k \in [0,2n_1]$.}\nonumber\\
&X_1(t_1^j, t_2^\ell) + X_2(t_1^j,t_2^\ell) \leq 1\text{ for all $j \in [2n_1],\ell \in [2n_2]$.}\nonumber\\
\quad & \pi_i(t^j_i)=\sum_{\ell=1}^{2n_{3-i}}f_{3-i}(t^\ell_{3-i}) \cdot X_i(t^j_i; t^\ell_{3-i}) \text{ for all bidders $i$ and $j \in [0,2n_i]$ }. \label{eq:interim}\\
 \quad & p_i(t^j_i)=\sum_{\ell=1}^{2n_{3-i}}f_{3-i}(t^\ell_{3-i}) \cdot P_i(t^j_i; t^\ell_{3-i})\text{ for all bidders $i$ and $j \in [0,2n_i]$}.\label{eq:price}\\
 & \pi_i((v_i^k,j)) \cdot v^k_i - p_i((v_i^k,j)) \ge \pi_i((v_i^{k'},j')) \cdot v^k_i - p_i((v_i^{k'},j')), \label{eq:bic}\\
 &\qquad \text { for all bidders $i$, all $k, k' \in [0,n_i]$, and $2 \geq j \geq j' \geq 1$.} \nonumber
\end{align}

The objective is simply the expected revenue. Constraints~\eqref{eq:interim} and~\eqref{eq:price} simply confirm that the interim rules are computed correctly. Equation~\eqref{eq:bic} guarantees that the mechanism is BIC. In particular, Equation~\eqref{eq:bic} observes that it is only necessary to ensure that bidders don't wish to underrepresent their interest (because overrepresenting their interest guarantees them non-positive utility).\\

\noindent\textbf{Payment Identity.} Myerson's payment identity provides a closed-form to compute revenue-maximizing payments for a fixed (monotone) allocation rule. Observe in particular that a payment rule satisfying the payment identity exists for any (monotone) allocation rule.

\begin{definition}[Monotone, Payment Identity] An interim allocation rule $\pi$ is \emph{monotone} if for both players $i$ and days $j$: $\pi_i( (\cdot,j))$ is monotone non-decreasing. $p$ satisfies the \emph{payment identity} for $\pi$ if for both players $i$, days $j$, and all $k$, we have: $p_i((v_i^{k},j))= \sum_{\ell = 1}^k v_i^\ell \cdot (\pi_i((v_i^\ell,j)) - \pi_i((v_i^{\ell-1},j)))$.
\end{definition}

\subsection{Lagrangian Duality}\label{sec:dualityprelim}

The purpose of this section is to build up the necessary notation/concepts in order to state Definition~\ref{def:witness} and Theorem~\ref{thm:CDW} at the end. Theorem~\ref{thm:CDW} provides an approach to claim that a mechanism is or isn't optimal for a given instance. This approach uses Lagrangian duality, and specifically the language adopted in~\cite{CaiDW16}. More specifically, we will put Lagrangian multipliers on the BIC constraints in the following manner, which creates a Lagrangian relaxation:

\begin{enumerate}
    \item[(i)] For constraints of the form: $\pi_i((v^k_i,2)) \cdot v^{k}_i - p_i((v^k_i,2)) \ge \pi_i((v_i^k,1)) \cdot v^{k}_i - p_i(v^k_i,1))$, we use a Lagrangian multiplier of $\alpha_i(k)$ (for all bidders $i$ and $k \in [1,n_i]$).
    \item[(ii)] For constraints of the form: $\pi_i((v^k_i,j)) \cdot v^{k}_i - p_i((v^k_i,j)) \ge \pi_i((v^{k-1}_i,j)) \cdot v^{k}_i - p_i((v^{k-1}_i,j))$, we use a Lagrangian multiplier of $\lambda_i^j(k)$ (for all bidders $i$, items $j$, and $k \in [1,n_i]$).
\item [(iii)] For all remaining BIC constraints, we use a Lagrangian multiplier of $0$.
\item [(iv)] To emphasize: for all other constraints (i.e. all the constraints which are unrelated to BIC), we don't use Lagrangian multipliers, and keep them as constraints.
\end{enumerate}

Constraints in (i) guarantee that the bidder will not misreport its interest, and constraints in (ii) guarantee that the bidder will not underreport their value by the minimal amount possible. Definitions~\ref{def:flow} and~\ref{def:vv}, and Theorem~\ref{thm:CDW} below specialize the~\cite{CaiDW16} framework to our setting. We refer the reader to~\cite{CaiDW16} for further details surrounding their framework, but give brief intuition for each definition throughout. Recall that every choice of Lagrangian multipliers $(\alpha,\lambda)$ induces a Lagrangian relaxation with objective function:
\begin{align*}
 \mathcal{L}(\alpha, \lambda)&:=\sum_i \sum_{j=1}^{2n_i} f_i(t_i^j) \cdot p_i(t_i^j)+ \sum_i\sum_{k=1}^{n_i} \alpha(k) \cdot \left(\pi_i((v_i^k,2)) \cdot v^k_i - p_i((v_i^k,2)) - \pi_i((v_i^{k},1)) \cdot v^k_i + p_i((v_i^{k},1))\right)\\
&+ \sum_i \sum_{k=1}^{n_i}\sum_{j=1}^2 \lambda_i^j(k) \cdot \left( \pi_i((v_i^k,j)) \cdot v^k_i - p_i((v_i^k,j)) - \pi_i((v_i^{k-1},j)) \cdot v^k_i + p_i((v_i^{k-1},j))\right).
\end{align*}

The constraints are the same as in the initial LP, except removing the BIC constraints. The first concept in the~\cite{CaiDW16} framework is that of a flow:

\begin{definition}[Flow]\label{def:flow} A set of Lagrangian multipliers form a \emph{flow} if the following hold for all $i$:
\begin{itemize}
\item $f_i(t_i^{2k-1}) +\lambda_i^1(k+1) +\alpha_i(k) = \lambda_i^1(k)$, for all $k \in [1,n_i-1]$.
\item $f_i(t_i^{2n_i-1}) + \alpha_i(n_i) = \lambda_i^1(n_i)$.
\item $f_i(t_i^{2k}) +\lambda_i^2(k+1) = \alpha_i(k) + \lambda_i^2(k)$, for all $k \in [1,n_i-1]$.
\item $f_i(t_i^{2n_i}) = \alpha_i(n_i) + \lambda_i^2(n_i)$.
\end{itemize}
\end{definition}

Intuitively, Definition~\ref{def:flow} captures the following. In the relaxation, there are no constraints on the payment variables \emph{at all}, so the relaxation is unbounded if any payment variable has a non-zero coefficient in $\mathcal{L}(\alpha, \lambda)$. $(\alpha, \lambda)$ form a flow if and only if all payment variables have a coefficient of zero in $\mathcal{L}(\alpha, \lambda)$.

\begin{definition}[Virtual Values]\label{def:vv} For a given set of Lagrangian multipliers $\alpha, \lambda$, define:\footnote{For simplicity of notation, denote by $\lambda_i^j(n_i+1) := 0$, $v_i^{n_i + 1}:=v_i^{n_i}$.} 
$$\Phi_{i}^{\alpha,\lambda}((v_i^k,j)):=v_i^k - \frac{(v_i^{k+1}-v_i^k)\cdot \lambda_i^j(k+1)}{f_i((v_i^k,j))}.$$


\end{definition}

\begin{observation}[\cite{CaiDW16}]\label{obs:vv} For any $(\alpha,\lambda)$ which form a flow: $\mathcal{L}(\alpha, \lambda)= \sum_i \sum_{j=1}^{2n_i} f_i(t_i^j) \cdot \pi_i(t_i^j) \cdot \Phi_i^{\alpha,\lambda}(t_i^j).$
\end{observation}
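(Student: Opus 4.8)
The plan is a direct expansion of $\mathcal{L}(\alpha,\lambda)$: collect the coefficient of each decision variable, use the flow equations to show that every payment variable has coefficient $0$, and then use the flow equations once more (with the boundary conventions of Definition~\ref{def:vv}) to recognize the coefficient of each interim-allocation variable $\pi_i(t_i^j)$ as $f_i(t_i^j)\,\Phi_i^{\alpha,\lambda}(t_i^j)$.

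First I would verify that the payment variables drop out. Fix a bidder $i$ and $k\in[1,n_i]$. Gathering terms of $\mathcal{L}(\alpha,\lambda)$, the variable $p_i((v_i^k,1))$ has coefficient $f_i(t_i^{2k-1})+\alpha_i(k)-\lambda_i^1(k)+\lambda_i^1(k+1)$: the objective contributes $f_i(t_i^{2k-1})$, the interest constraint at level $k$ contributes $+\alpha_i(k)$, and the interest-$1$ value-monotonicity constraints at levels $k$ and $k+1$ contribute $-\lambda_i^1(k)$ and $+\lambda_i^1(k+1)$ respectively (using $\lambda_i^1(n_i+1):=0$ when $k=n_i$). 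The two flow equations for the interest-$1$ chain say precisely that this quantity is $0$; symmetrically, $p_i((v_i^k,2))$ has coefficient $f_i(t_i^{2k})-\alpha_i(k)-\lambda_i^2(k)+\lambda_i^2(k+1)$, which vanishes by the interest-$2$ flow equations. The only other payment variable, $p_i(t_i^0)$, does not appear in the objective and is pinned to $0$ by the constraints fixing the non-participation type's ex-post payment, so it contributes nothing. Hence, once $(\alpha,\lambda)$ is a flow, $\mathcal{L}(\alpha,\lambda)$ is a linear form in the interim-allocation variables alone --- exactly the intuition recorded after Definition~\ref{def:flow}.

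Next I would read off the coefficient of $\pi_i((v_i^k,j))$. For $j=1$ it equals $-\alpha_i(k)v_i^k+\lambda_i^1(k)v_i^k-\lambda_i^1(k+1)v_i^{k+1}$; substituting $\lambda_i^1(k)=f_i(t_i^{2k-1})+\lambda_i^1(k+1)+\alpha_i(k)$ (the interest-$1$ flow equation, which holds for every $k\le n_i$ under the convention $\lambda_i^1(n_i+1)=0$) cancels the $\alpha_i(k)v_i^k$ terms and leaves $f_i(t_i^{2k-1})v_i^k-\lambda_i^1(k+1)(v_i^{k+1}-v_i^k)=f_i((v_i^k,1))\cdot\Phi_i^{\alpha,\lambda}((v_i^k,1))$ by Definition~\ref{def:vv}. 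The $j=2$ case is the identical computation after substituting $\lambda_i^2(k)=f_i(t_i^{2k})+\lambda_i^2(k+1)-\alpha_i(k)$. (The level-$1$ value-monotonicity constraints also produce terms $-\lambda_i^j(1)v_i^1\pi_i((v_i^0,j))$ involving the non-participation type, whose interim allocation is $0$, so these disappear.) Summing over $i$, $j\in\{1,2\}$, $k\in[1,n_i]$ and using $\{(v_i^k,j):k\in[n_i],\,j\in\{1,2\}\}=\{t_i^1,\ldots,t_i^{2n_i}\}$ yields the claimed identity.

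I do not expect any genuine obstacle: the flow equations were engineered so that exactly these cancellations occur, so the argument is careful bookkeeping rather than a new idea. The one place demanding attention is the top boundary $k=n_i$, where the generic flow equations are replaced by their terminal forms $f_i(t_i^{2n_i-1})+\alpha_i(n_i)=\lambda_i^1(n_i)$ and $f_i(t_i^{2n_i})=\alpha_i(n_i)+\lambda_i^2(n_i)$ and one must use $\lambda_i^j(n_i+1)=0$, $v_i^{n_i+1}=v_i^{n_i}$; the computations above are written to cover this case uniformly.
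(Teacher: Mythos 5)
Your proof is correct. The paper itself does not reprove this observation --- it cites~\cite{CaiDW16} and simply notes that it ``follows from algebraic manipulation'' --- and your write-up is precisely that manipulation carried out in full: the coefficient of each $p_i((v_i^k,j))$ collapses to a flow equation, and the remaining coefficient of $\pi_i((v_i^k,j))$ rearranges (via the same flow equation) to $f_i((v_i^k,j))\,\Phi_i^{\alpha,\lambda}((v_i^k,j))$, with the boundary conventions $\lambda_i^j(n_i{+}1)=0$, $v_i^{n_i+1}=v_i^{n_i}$, $\pi_i(t_i^0)=p_i(t_i^0)=0$ handled exactly as needed. This is the standard derivation in the~\cite{CaiDW16} framework and matches what the paper intends.
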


Intuitively, Observation~\ref{obs:vv} follows from algebraic manipulation, and Definition~\ref{def:vv} is made for the sole purpose of yielding Observation~\ref{obs:vv}, as it suggests that any optimal allocation rule for a particular Lagrangian relaxation should award the item to the bidder with highest virtual value (according to Definition~\ref{def:vv}).

\begin{definition}[Witness Optimality]\label{def:witness} Let $(\alpha,\lambda)$ be a flow and $(X,P)$ be a BIC auction such that:
\begin{itemize}
\item $p$ satisfies the payment identity for $\pi$.
\item For all $k$: $\alpha_i(k) >0 \Rightarrow \pi_i((v_i^k,2)) \cdot v^{k}_i - p_i((v_i^k,2)) = \pi_i((v_i^k,1)) \cdot v^{k}_i - p_i((v_i^k,1))$.
\item On all $(t_1^k,t_2^{k'})$, $X$ awards the item to a bidder with highest non-negative virtual value.
\end{itemize}
Then we say that $(\alpha,\lambda)$ \emph{witnesses optimality} for $(X,P)$, and $(X,P)$ \emph{witnesses optimality} for $(\alpha,\lambda)$.\footnote{\cite{HartlineH15} note that some $(\alpha,\lambda)$ cannot be optimal for any instance, because they cannot witness optimality for any incentive compatible $(X,P)$. However, note that the optimal $((\alpha, \lambda),(X,P))$ witness optimality for each other, by strong Lagrangian duality and complementary slackness.}
\end{definition}

\begin{theorem}[\cite{CaiDW16}]\label{thm:CDW} Let $(\alpha,\lambda)$ witness optimality for $(X,P)$. Then $(X,P)$ is a revenue-optimal BIC auction. Moreover, all revenue-optimal auctions witness optimality for $(\alpha,\lambda)$.
\end{theorem}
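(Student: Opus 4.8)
The statement is the specialization, to this FedEx revenue LP, of the Lagrangian duality theorem of~\cite{CaiDW16}, so I would follow their recipe: \textbf{weak duality} to upper bound $\mathrm{OPT}$, then verify that a witnessing pair makes the bound tight (forward direction), then read off the characterization of all optima from \textbf{complementary slackness} (the ``moreover''). Write $\mathrm{REV}(X,P):=\sum_i\sum_{j=1}^{2n_i}f_i(t_i^j)p_i(t_i^j)$ for the objective. First I would observe that $\mathcal{L}(\alpha,\lambda)$ is exactly the Lagrangian of the LP obtained by dualizing the BIC constraints of type (i) and (ii) with the multipliers $\alpha,\lambda\ge 0$ and those of type (iii) with multiplier $0$, keeping all non-BIC constraints. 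Hence for \emph{any} BIC auction $(X,P,\pi,p)$ every dualized constraint holds with a non-negative multiplier, so $\mathcal{L}(\alpha,\lambda)\ge \mathrm{REV}(X,P)$; maximizing over all $(X,P,\pi,p)$ satisfying only the non-BIC constraints therefore upper bounds $\mathrm{OPT}$.

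Next I would evaluate that relaxation for a \emph{flow}. By Observation~\ref{obs:vv}, when $(\alpha,\lambda)$ is a flow the payment variables drop out and $\mathcal{L}(\alpha,\lambda)=\sum_i\sum_{j=1}^{2n_i}f_i(t_i^j)\pi_i(t_i^j)\Phi_i^{\alpha,\lambda}(t_i^j)$; substituting \eqref{eq:interim} rewrites this as $\sum_{k,k'}f_1(t_1^k)f_2(t_2^{k'})\big(X_1(t_1^k,t_2^{k'})\Phi_1^{\alpha,\lambda}(t_1^k)+X_2(t_1^k,t_2^{k'})\Phi_2^{\alpha,\lambda}(t_2^{k'})\big)$, which is linear in $X$ with non-negative profile weights and whose only cross-variable constraints are $X_i\in[0,1]$ and $X_1+X_2\le1$ on each profile. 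So the relaxation is solved by, on each profile, putting all the allocation on a bidder of highest virtual value when that value is $\ge0$ and keeping the item otherwise; call the resulting optimum $\mathrm{RELAX}(\alpha,\lambda)$, so $\mathrm{RELAX}(\alpha,\lambda)\ge\mathrm{OPT}$.

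For the forward direction, let $(\alpha,\lambda)$ witness optimality for $(X,P)$ (Definition~\ref{def:witness}). The third witness condition says $X$ realizes precisely the pointwise maximum above, so $\mathcal{L}(\alpha,\lambda)$ at $(X,P)$ equals $\mathrm{RELAX}(\alpha,\lambda)$. The other two witness conditions kill every dualized term: the $\alpha_i(k)$-term vanishes by the interest-IC tightness condition (when $\alpha_i(k)>0$ the slack is $0$, else the product is $0$), and the $\lambda_i^j(k)$-term vanishes because differencing the payment identity across consecutive values shows the adjacent value-IC constraint is tight. Therefore $\mathrm{REV}(X,P)=\mathcal{L}(\alpha,\lambda)=\mathrm{RELAX}(\alpha,\lambda)\ge\mathrm{OPT}$, and since $(X,P)$ is itself BIC and feasible, $\mathrm{REV}(X,P)\le\mathrm{OPT}$; hence $(X,P)$ is revenue-optimal and, as a byproduct, $\mathrm{RELAX}(\alpha,\lambda)=\mathrm{OPT}$.

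Finally, for the converse, take any revenue-optimal BIC $(X',P')$ and chain $\mathrm{OPT}=\mathrm{REV}(X',P')\le \mathcal{L}(\alpha,\lambda)\big|_{(X',P')}=\sum_i\sum_j f_i(t_i^j)\pi_i'(t_i^j)\Phi_i^{\alpha,\lambda}(t_i^j)\le\mathrm{RELAX}(\alpha,\lambda)=\mathrm{OPT}$ (using weak duality, Observation~\ref{obs:vv}, feasibility of $X'$, and the previous paragraph). Every inequality is then an equality. Equality in the last forces $X'$ to allocate to a bidder of highest non-negative virtual value on every profile — the third witness condition. Equality in the first forces each dualized term to be $0$, which directly gives the interest-IC condition ($\alpha_i(k)>0\Rightarrow$ tight) and forces the adjacent value-IC constraint to be tight whenever its $\lambda$-multiplier is positive; from this I would recover the full payment identity, using the flow equations of Definition~\ref{def:flow} to see that the day-$1$ multipliers $\lambda_i^1(\cdot)$ are strictly positive (so every day-$1$ value-IC constraint binds, and telescoping gives the day-$1$ identity) and that for each $k$ at least one of $\alpha_i(k),\lambda_i^2(k)$ is positive, together with tightness of the bottom individual-rationality constraints at an optimum. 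I expect this last step — pinning the \emph{exact} payments, especially on the day-$2$ types where $\lambda_i^2(k)$ may vanish, from the pattern of binding constraints — to be the one genuinely delicate point, and it is exactly where I would lean on the bookkeeping of the~\cite{CaiDW16} framework rather than re-derive it; everything else is routine weak-duality and complementary-slackness manipulation.
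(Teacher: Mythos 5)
The paper does not prove Theorem~\ref{thm:CDW}; it cites it directly from~\cite{CaiDW16}, so there is no ``paper's proof'' to compare against. Evaluating your attempt on its own merits: the overall architecture (weak duality via the partial Lagrangian, Observation~\ref{obs:vv} to reduce the relaxation to per-profile virtual-welfare maximization, complementary slackness for the converse) is exactly right, and your forward direction is complete and correct — the payment identity makes every dualized $\lambda$-term vanish, witness bullet two kills the $\alpha$-terms, and bullet three makes $(X,P)$ achieve the relaxation's value, so $\mathrm{REV}(X,P)=\mathrm{RELAX}(\alpha,\lambda)\ge\opt\ge\mathrm{REV}(X,P)$.

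The one place I would push back is the route you sketch for recovering witness bullet one (the full payment identity) in the ``moreover'' direction. Complementary slackness yields only ``$\lambda_i^j(k)>0\Rightarrow$ the adjacent value-IC at $(k,j)$ is tight,'' which, combined with the flow equations and full support, does pin down the day-$1$ identity (since every $\lambda_i^1(k)>0$). But for day $2$ the flow equations only give $\alpha_i(k)+\lambda_i^2(k)>0$, and when $\lambda_i^2(k)=0$ the tightness you get is of the interest-IC constraint $u_i(d^k)=u_i(c^k)$, which links the day-$2$ utility to the \emph{day-$1$} telescoping sum $\sum_{\ell<k}\pi_i(c^\ell)(v_i^{\ell+1}-v_i^\ell)$ rather than the day-$2$ one $\sum_{\ell<k}\pi_i(d^\ell)(v_i^{\ell+1}-v_i^\ell)$; these need not match, so your ``at least one multiplier positive plus bottom IR tightness'' chain does not close by itself. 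The standard way to finish is not complementary slackness at all but the Myersonian revenue-maximality observation you already used for the forward direction, run in reverse: telescoping downward BIC gives $p'_i(t)\le p^{\mathrm{ident}}_i(t)$ pointwise for any BIC $(X',P')$ with interim allocation $\pi'$, so with full support any strict gap at even one type would make $\mathrm{REV}(X',P')<\mathrm{REV}(X',P^{\mathrm{ident}})\le\opt$, contradicting optimality (one must also note that $(X',P^{\mathrm{ident}})$ is itself BIC, using the monotonicity of $\pi'$ and the FedEx interest-IC structure). You were right to flag this as the genuinely delicate step and, since the theorem is cited, right to defer to~\cite{CaiDW16}'s bookkeeping; but the specific positivity argument you propose is not the one that works. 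It is also worth noting that the paper's downstream applications (Corollaries~\ref{cor:disjoint} and~\ref{cor:notdisjoint}) invoke only bullet three of the ``moreover'' conclusion, which your argument does establish cleanly.
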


Intuitively, $(\alpha,\lambda)$ and $(X,P)$ witness optimality if $(X,P)$ is optimal for the Lagrangian relaxation induced by $(\alpha,\lambda)$ (bullet three), and also $(\alpha,\lambda)$ and $(X,P)$ satisfy complementary slackness (bullets one/two).

\subsection{Public Budget Constraints}\label{sec:budgetprelim}
\noindent\textbf{Setup and Notation.} Our main extension considers bidders with a (value, budget) pair. A bidder with value $v$ and budget $B$ enjoys utility $v-p$ if they receive the item and pay $p \leq B$, and utility $-\infty$ if they pay price $p > B$. Each of the two bidders $i$ have $n_i + 1$ different types. Because the budget is public, their type is fully specified by a value, which we label $v_i^0,\ldots, v_i^{n_i}$. Again, $v_i^0$ refers to non-participation in the auction.\\

\noindent\textbf{Optimal Auctions.} We have one item for sale, and can give it to either bidder. We again seek the revenue-optimal BIC auction. Because the bidders are not quasi-linear, we must also specify that we seek an \emph{ex-post} individually rational auction.\footnote{When bidders are quasi-linear, any interim individually rational auction can be made ex-post individually rational with a simple reduction. This reduction fails when bidders have budget constraints.} That is, even after learning the bid of the other player, and learning the outcome of all random coins of the mechanism, each bidder has non-negative utility. {Appendix~\ref{sec:budgetprelimfull} provides a linear program for this setting, and more detailed preliminaries similar to Section~\ref{sec:fedexprelim}.}

\subsection{Disjointness}\label{sec:disjprelim}
Our communication complexity lower bound provides a reduction from \disj. In \disj, Alice is given $\vec{x} \in \{0,1\}^n$, Bob is given $\vec{y} \in \{0,1\}^n$, and their goal is to determine whether there exists an $i$ such that $x_i = y_i = 1$. It is known that any deterministic communication protocol resolving \disj\ requires communication at least $n$, and any randomized protocol resolving \disj\ correctly with probability~at least $2/3$ requires communication $\Omega(n)$~\cite{KalyanasundaramS92,Razborov92,KushilevitzN97}.

\section{Proof Overview}\label{sec:proof}
Our proof of Theorems~\ref{thm:main} and~\ref{thm:mainbudget} both follow the same outline below. All steps below apply to both proofs, although the referenced technical sections are for Theorem~\ref{thm:main} (where significantly more detail is provided).
\begin{enumerate}
    \item[(i)] Section~\ref{sec:reduction} defines our reduction from \disj. Specifically, we define a mapping from Alice's input $x$ to a distribution $D_1$, and from Bob's input $y$ to a distribution $D_2$. Section~\ref{sec:reduction} states several properties of our reduction that will be used in later proofs. Here is an informal overview of some key properties.
\begin{itemize}
\item All values in our constructions lie in $\{n^2+1,n^2+2,\ldots, n^2+n+2\}$.
\item All distributions used in our constructions are nearly-uniform. Therefore, the optimal single-bidder auction for any distribution in our constructions is quite simple, and sets a price of $n^2+1$.
\item Depending on the input to \disj, the distribution is perturbed slightly at all values.
\end{itemize}
\item[(ii)] Section~\ref{sec:canonical} analyzes the canonical ``Myerson flow'' (see Section~\ref{sec:canonical} for definition) for our construction, which yields virtual values equal to Myersonian virtual values. Refer to this flow as $(\alpha_1,\lambda_1)$.
\begin{itemize}
\item We consider the allocation rule that awards the item to the bidder with maximum $\Phi_i^{\alpha_1,\lambda_1}(t_i)$ (and charges prices according to the payment identity). Refer to this auction $(X_1,P_1)$.
\item \emph{If and only if $\disj(x,y) = \mathsf{Yes}$}, $(X_1,P_1)$ happens to be a second price auction, breaking ties for Bidder One (Definition~\ref{def:SPA}). We then show that $(X_1,P_1)$ witnesses optimality for $(\alpha_1,\lambda_1)$ \emph{if and only if $\disj(x,y) = \mathsf{Yes}$}.
\item This means that when $\disj(x,y) = \mathsf{Yes}$, we've now found the optimal dual ($(\alpha_1,\lambda_1)$) and optimal auction (Definition~\ref{def:SPA}).
\item We also show that $\Phi^{\alpha_1,\lambda_1}_{1}(t^1_1) > \Phi^{\alpha_1,\lambda_1}_{2}(t^1_2) > 0$. 
\item Now, by Theorem~\ref{thm:CDW}, this leads to our first key conclusion: \emph{when $\disj(x,y) = \mathsf{Yes}$, every optimal auction must have $X_1(t_1^1, t_2^1) = 1$}.
\end{itemize}
\item[(iii)] Section~\ref{sec:modifymain} modifies the canonical Myerson flow, for instances where $\disj(x,y) = \mathsf{No}$.
\begin{itemize} 
\item For the setting of Theorem~\ref{thm:main}, $(X_1,P_1)$ is not BIC: when Buyer One has (value, interest) pair $(n^2+n+2,2)$, she would rather misreport $(n^2+n+2,1)$. For Theorem~\ref{thm:mainbudget}, $(X_1,P_1)$ is not budget-respecting: Buyer One with value $n^2+n+2$ would have to pay more than her budget.
\item We increase the Lagrangian multiplier for the violated constraint from the previous bullet, and adjust others in order to preserve flow-conservation. This step is the most intricate, and requires a very precise setting of each multiplier. Call this flow $(\alpha_2,\lambda_2)$. 
\item We next find an allocation rule that witnesses optimality for $(\alpha_2,\lambda_2)$, $(X_2,P_2)$. $(X_2,P_2)$ is also a second-price auction, but ties must be broken in a precise (randomized) manner (Definition~\ref{def:careful}). We show that $(X_2,P_2)$ witnesses optimality for $(\alpha_2, \lambda_2)$ \emph{if and only if $\disj(x,y) = \mathsf{No}$}. This step is also delicate, as we must simultaneously satisfy several constraints related to Theorem~\ref{thm:CDW}.
\item We also show that $0 < \Phi_1^{\alpha_2,\lambda_2}(t_1^1)< \Phi_2^{\alpha_2,\lambda_2}(t_2^1)$. 
\item Now, by Theorem~\ref{thm:CDW}, this leads to our second key conclusion: \emph{when $\disj(x,y) = \mathsf{No}$, every optimal auction must have $X_1(t_1^1,t_2^1) = 0$}. 
\end{itemize}
\item[(iv)] To conclude, (ii) and (iii) together establish that when $\disj(x,y) = \mathsf{Yes}$, \emph{every optimal auction awards bidder $1$ the item with probability $1$ on input $(t_1^1,t_2^1)$}. On the other hand, when $\disj(x,y) = \mathsf{No}$, \emph{every optimal auction awards bidder $1$ the item with probability $0$ on input $(t_1^1,t_2^1)$}. Therefore, if we know any outcome consistent with any optimal auction on $(t_1^1,t_2^1)$, we know $\disj(x,y)$.
\end{enumerate}

\section{Our Reduction and its Properties}\label{sec:reduction}
In this section, we define our reduction and state some useful properties. First, we define the type space. Throughout this section, $n$ denotes the size of the input to \disj. We state the concrete lemmas which are relevant to give a detailed technical proof overview, but all proofs of these lemmas are in Appendix~\ref{app:reduction}. Note that the purpose of this section is \emph{only} to define our flow and state basic properties. We will give intuition for these decisions in the subsequent sections as it will only become clear once we define our flow.\\

\noindent\textbf{The type space.} In our reduction, the type space does not depend on $x,y$ (only the distribution does). For every input, $n_1=n_2 = n+2$ (meaning that each bidder has a total of $n+2$ non-zero types per day, and $2n+4$ non-zero types in total). For all $k \in [n+2]$ and both $i$, $v_i^k := n^2+k$.\\

\noindent\textbf{The distribution.} The distribution in our construction depends on $x,y$, but in all cases is nearly uniform. Below, for simplicity of notation let $b := 10n^6, a := \frac{b - n^5}{n+1}$. All probabilities will be an integer multiple of $\frac{1}{2b}$. Below, Bidder One's day1 distribution is fixed, and does not depend on $x$.

\begin{definition}[Bidder One's day1 distribution] Define $f_1((v_1^k,1))$ (for all $x$) as follows:
\begin{enumerate}
\item Define $f_1((v_1^1,1)):=\frac{\frac{b}{10n}}{2b} = \frac{1}{20n}$.
\item For $k=1$ to $n$, first define helper $z_{k+1}:=\frac{b-\sum_{j=1}^k f_1( (v_1^j,1))\cdot 2b}{n-k+2}$, then define $f_1((v_1^{k+1},1)):= \frac{\lfloor z_{k+1} + \frac{n^3}{n-k+2}\rfloor}{2b}$.
\item For $k=n+1$, define helper $z_{n+2}:=b-\sum_{j=1}^{n+1} f_1( (v_1^j,1))\cdot 2b $, then define $f_1((v_1^{n+2},1)):= \frac{z_{n+2}}{2b}$.
\end{enumerate}
\end{definition}

We quickly establish that the total mass of Bidder One on day1 is always $1/2$ in this construction.
\begin{lemma}\label{lem:subprobs11}
$\sum_{k=1}^{n+2} f_1((v_1^k,1)) = 1/2$.
\end{lemma}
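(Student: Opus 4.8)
The plan is to bookkeep the cumulative mass in the scaled integer units $\tfrac{1}{2b}$. For $k\in[n+2]$ set $M_k := 2b\sum_{j=1}^{k} f_1((v_1^j,1))$, so that the claim $\sum_{k=1}^{n+2} f_1((v_1^k,1)) = 1/2$ is equivalent to $M_{n+2}=b$. First I would record that every increment $2b\cdot f_1((v_1^j,1))$ is a nonnegative integer, so each $M_k\in\mathbb{Z}$: for $j=1$ it equals $\tfrac{b}{10n}=n^5$ since $b=10n^6$; for $2\le j\le n+1$ it equals $\lfloor z_{j}+\tfrac{n^3}{n-j+3}\rfloor$ by item~2 of the definition (taking $k=j-1$ there); and for $j=n+2$ it equals $z_{n+2}=b-M_{n+1}$ by item~3, an integer because $b$ and $M_{n+1}$ are.

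The crux is then a one-line telescoping. Item~3 sets $2b\cdot f_1((v_1^{n+2},1)) = z_{n+2} = b-\sum_{j=1}^{n+1}2b\cdot f_1((v_1^j,1)) = b-M_{n+1}$, hence $M_{n+2}=M_{n+1}+(b-M_{n+1})=b$, and dividing by $2b$ gives the lemma.

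There is essentially no obstacle to the identity itself --- item~3 is engineered precisely so that the top type $v_1^{n+2}$ absorbs exactly the residual mass $b-M_{n+1}$. The only point worth a remark (needed not for the equation but for the construction to describe a genuine sub-distribution, i.e. $f_1((v_1^{n+2},1))\ge 0$) is that $M_{n+1}\le b$. I would obtain this from a short induction: since $\lfloor\cdot\rfloor$ only decreases a quantity, $M_{k+1}\le M_k + z_{k+1} + \tfrac{n^3}{n-k+2}$ with $z_{k+1}=\tfrac{b-M_k}{n-k+2}$, so $b-M_{k+1}\ge (b-M_k)\tfrac{n-k+1}{n-k+2}-\tfrac{n^3}{n-k+2}$; starting from $b-M_1=b-n^5=(n+1)a$ and iterating over $k=1,\dots,n$, the multiplicative factors telescope to $\tfrac{n-k+1}{n+1}$, keeping $b-M_k$ of order $a=\Theta(n^5)$ while the additive $n^3$ corrections subtract only $O(n^4)$ in total, so $b-M_{n+1}>0$. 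But for Lemma~\ref{lem:subprobs11} as stated, only the telescoping identity $M_{n+2}=b$ is required.
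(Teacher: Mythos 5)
Your proof is correct and uses essentially the same argument as the paper: item~3 of the definition is engineered so that $f_1((v_1^{n+2},1))$ absorbs exactly the residual mass, giving $M_{n+2}=b$, i.e., $\sum_k f_1((v_1^k,1))=1/2$ by a one-line telescoping. The additional remarks on integrality and nonnegativity are sound but not needed for this lemma (the paper handles nonnegativity separately via Lemma~\ref{crange}).
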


Lemma~\ref{crange} is one key property which will be useful in our later analysis. It states that Bidder One's day1 distribution is nearly-uniform over $v_1^2,\ldots, v_1^{n+2}$ (recall that $a> n^5$).

\begin{lemma}\label{crange}
$f_1((v_1^k,1)) \cdot 2b \in [a - 2n^3, a + 2n^3]$ for all $k \in [2,n+2]$.
\end{lemma}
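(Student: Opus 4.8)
The plan is to prove the claim by induction on $k$, tracking the invariant that the ``helper'' quantity $z_{k+1}$ used in the construction is itself close to $a$. Recall that $a = \frac{b-n^5}{n+1}$ and that $f_1((v_1^{k+1},1))\cdot 2b = \lfloor z_{k+1} + \frac{n^3}{n-k+2}\rfloor$ for $k=1,\dots,n$, with $f_1((v_1^{n+2},1))\cdot 2b = z_{n+2}$. Define $S_k := \sum_{j=1}^k f_1((v_1^j,1))\cdot 2b$, so that $z_{k+1} = \frac{b - S_k}{n-k+2}$. The key algebraic observation is that $b - S_k$ is exactly the mass that still has to be distributed over the remaining $n-k+2$ values $v_1^{k+1},\dots,v_1^{n+2}$, and $z_{k+1}$ is the ``ideal uniform share'' of that remaining mass; the floor and the $\frac{n^3}{n-k+2}$ nudge only perturb each actual share from this ideal by an additive $O(n^3)$ term.

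Concretely, I would first establish the base case: $f_1((v_1^1,1))\cdot 2b = \frac{b}{10n} = n^5$ (using $b = 10n^6$), so $S_1 = n^5$ and hence $z_2 = \frac{b-n^5}{n+1} = a$ exactly. Then $f_1((v_1^2,1))\cdot 2b = \lfloor a + \frac{n^3}{n+1}\rfloor \in [a, a+n^3] \subseteq [a-2n^3, a+2n^3]$, giving the claim for $k=2$. For the inductive step, suppose $z_{j+1} \in [a - c_j, a + c_j]$ for some error bound $c_j$ and all $j \le k$ (and correspondingly that each $f_1((v_1^{j},1))\cdot 2b$ lies within $2n^3$ of $a$, once $c_j$ is shown to be at most, say, $n^3$). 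Then
\begin{align*}
z_{k+1} &= \frac{b - S_k}{n-k+2} = \frac{b - n^5 - \sum_{j=2}^{k} f_1((v_1^j,1))\cdot 2b}{n-k+2}.
\end{align*}
Writing $b - n^5 = (n+1)a$ and $\sum_{j=2}^k f_1((v_1^j,1))\cdot 2b = (k-1)a + E_k$ where $|E_k| \le (k-1)\cdot n^3 \le n^4$, we get $z_{k+1} = \frac{(n+1)a - (k-1)a - E_k}{n-k+2} = \frac{(n-k+2)a - E_k}{n-k+2} = a - \frac{E_k}{n-k+2}$, so $|z_{k+1} - a| \le \frac{n^4}{n-k+2}$. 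This is the crux: the errors do \emph{not} compound multiplicatively, because each step re-divides the \emph{exact} remaining mass, so the per-value error stays $O(n^3)$ in the regime $k \le n$ (where $n-k+2 \ge 2$) — actually one has to be slightly more careful near $k = n+1$, but there $n-k+2 = 1$ and the bound $|E_{n+1}| \le (n+1)n^3$ combined with the final helper $z_{n+2} = b - S_{n+1}$ being exactly $(n+2)a - (S_{n+1} - n^5) = \dots$ still lands within $2n^3$ of $a$ after a short computation. Finally, since $f_1((v_1^{k+1},1))\cdot 2b = \lfloor z_{k+1} + \frac{n^3}{n-k+2}\rfloor$ differs from $z_{k+1}$ by at most $1 + \frac{n^3}{n-k+2} \le n^3 + 1$, the triangle inequality gives $|f_1((v_1^{k+1},1))\cdot 2b - a| \le \frac{n^4}{n-k+2} + n^3 + 1 \le 2n^3$ for $k+1 \in [2,n+2]$ (using $n$ large, $n-k+2\ge 1$, and that $\frac{n^4}{n-k+2}\le n^3$ whenever $n-k+2 \ge n$, i.e. $k \le 2$; for larger $k$ one instead bounds $|E_k|$ more tightly as $(k-1)\cdot(n^3+1)$ and notes the division by $n-k+2$ is compensated — see below).

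The main obstacle is the bookkeeping near the end of the range, i.e. for $k$ close to $n+1$, where the denominator $n-k+2$ is small (down to $1$) and so a naive bound $\frac{|E_k|}{n-k+2}$ could be as large as $|E_k| = \Theta(n^4)$, which is too big. The fix, which I expect to be the delicate point, is that $|E_k|$ is not just $O(k n^3)$ but in fact the accumulated \emph{floor} errors are one-sided and nearly cancel the deliberate $\frac{n^3}{n-k+2}$ nudges: each term $f_1((v_1^j,1))\cdot 2b = \lfloor z_j + \frac{n^3}{n-j+3}\rfloor$ equals $z_j$ plus something in $[-1,\, \frac{n^3}{n-j+3}]$ wait — more carefully, in $[\frac{n^3}{n-j+3} - 1,\ \frac{n^3}{n-j+3})$ shifted by $(z_j - z_j)$... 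The honest statement is: $E_k = \sum_{j=2}^{k}\big(f_1((v_1^j,1))\cdot 2b - a\big)$ and each summand is $z_j - a$ plus a bounded rounding term; since $z_j - a = -E_{j-1}/(n-j+2)$, one gets a recursion $E_k = E_{k-1} - \frac{E_{k-1}}{n-k+2} + r_k = E_{k-1}\cdot\frac{n-k+1}{n-k+2} + r_k$ with $|r_k|\le n^3+1$. Unrolling this telescoping product shows $|E_k| \le (n-k+1)\sum_{j=2}^{k}\frac{|r_j|}{n-j+1} + (\text{vanishing boundary term})$, and hence $|z_{k+1} - a| = \frac{|E_k|}{n-k+2} \le \frac{n-k+1}{n-k+2}\sum_j \frac{n^3+1}{n-j+1} = O(n^3\log n)$ — still slightly too weak. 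So the truly careful version keeps the $\frac{n^3}{n-k+2}$ nudge precisely because it makes the floor round \emph{down} by a predictable amount, forcing $f_1((v_1^j,1))\cdot 2b$ to exceed $z_j$ by between $\frac{n^3}{n-j+2} - 1$ and $\frac{n^3}{n-j+2}$; plugging $r_j \in [\frac{n^3}{n-j+2}-1, \frac{n^3}{n-j+2}]$ into the recursion and telescoping shows the ``$\frac{n^3}{n-j+2}$'' parts of $r_j$ sum, under the weighting $\prod \frac{n-\ell+1}{n-\ell+2}$, to something like $\frac{n^3 \cdot (\text{number of terms})}{n-k+2}\cdot\frac{n-k+1}{n}$, i.e. $O(n^3)$, and the $-1$ parts contribute another $O(n^3)$. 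Completing this telescoping estimate cleanly — choosing the error budget so that the induction hypothesis $|z_{k+1}-a|\le n^3$ (say) is actually maintained and then yields $|f_1((v_1^k,1))\cdot 2b - a| \le 2n^3$ — is the one genuinely computational part; everything else is immediate from the definitions and Lemma~\ref{lem:subprobs11}.
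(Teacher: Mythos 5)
Your high-level idea is the same as the paper's --- express $z_{k+1}-a$ as a weighted sum of per-step rounding errors and note that the weights telescope --- but your write-up stops short of actually closing the argument, and you say so explicitly (``Completing this telescoping estimate cleanly \ldots{} is the one genuinely computational part''). The spot where you would have to do real work is precisely the spot you leave as a sketch: your recursion $E_k = E_{k-1}\cdot\tfrac{n-k+1}{n-k+2} + r_k$, unrolled with the crude bound $|r_k|\le n^3+1$, yields $|z_{k+1}-a| = O(n^3\log n)$, which you correctly flag as too weak; and the refined version with $r_j \in \left[\tfrac{n^3}{n-j+3}-1,\ \tfrac{n^3}{n-j+3}\right]$ is asserted to collapse to $O(n^3)$ but never verified. (There is also a small indexing slip --- the nudge applied when defining $f_1((v_1^j,1))$ has denominator $n-j+3$, not $n-j+2$, once you relabel --- but that is cosmetic.) As it stands, the proof is not complete.

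The paper avoids this difficulty entirely by splitting the claim into two easy pieces. First it shows that $z^c$ is strictly decreasing (because $\lfloor z^c_{i+1} + \tfrac{n^3}{n-i+2}\rfloor > z^c_{i+1}$ whenever $\tfrac{n^3}{n-i+2} > 1$), which gives the upper bound $z^c_{k} \le z^c_2 = a$ for free, with no recursion and no error accumulation. Second, for the lower bound it suffices, by monotonicity, to bound only the final value $z^c_{n+2}$, and there the telescoping sum is exact:
\begin{equation*}
z^c_{n+2} \;\ge\; z^c_2 - \sum_{i=1}^{n} \frac{n^3}{(n-i+1)(n-i+2)} \;=\; a - n^3\left(1 - \tfrac{1}{n+1}\right) \;>\; a - n^3,
\end{equation*}
using $\tfrac{1}{(n-i+1)(n-i+2)} = \tfrac{1}{n-i+1} - \tfrac{1}{n-i+2}$. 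Then $f_1((v_1^{k},1))\cdot 2b = \lfloor z^c_{k} + \tfrac{n^3}{n-k+3}\rfloor$ differs from $z^c_k$ by at most $n^3$ in one direction and $1$ in the other, so the $[a-2n^3, a+2n^3]$ window follows with plenty of room. Your cumulative-error recursion, if you finish it, computes exactly this same telescoping sum (the product $\prod_\ell \tfrac{n-\ell+1}{n-\ell+2}$ is the same telescope in disguise); the lesson is that establishing monotonicity of $z^c$ first lets you run the telescope once at the endpoint rather than at every $k$, and keeps the one-sided rounding from ever having to be fought against.
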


We now proceed to construct the day2 distribution for Bidder One. Bidder One's day2 distribution depends on $x$, and is constructed so that $x_k$ has a significant impact on $f_1((v_1^{k+1},2))$.

\begin{definition}[Bidder One's day2 distribution] Define $f_1((v_1^k,2))$ (as a function of $x$) as follows:
\begin{enumerate}
    \item Set $f_1((v_1^1,2) := \frac{\frac{b}{10n}}{2b}  = \frac{1}{20n}$, for all $x$.
    \item For $k=1$ to $n$, define helper $z_{k+1} = \frac{b - \sum_{j = 1}^{k} f_1((v_1^j,2)) \cdot 2b}{n-k+2}$.
\begin{itemize}
\item If $x_k=0$, then set $f_1((v_1^{k+1},2)) := \frac{\left \lfloor z_{k+1} + \frac{n^3}{n - k + 2}  \right \rfloor}{2b}$.
\item Otherwise ($x_k=1$), set $f_1((v_1^{k+1},2)):= \frac{\left \lceil z_{k+1} \right \rceil}{2b}$.
\end{itemize}
    \item For $k=n+1$, define helper $z_{n+2} := b - \sum_{j = 1}^{n+1} f_1((v_1^{j},2)) \cdot 2b$. Set $f_1((v_1^{n+2},2)) := \frac{z_{n+2}}{2b}$
\end{enumerate}

\end{definition}

The two lemmas below similarly establish that the total mass of Bidder One on day2 is always $1/2$, and that Bidder One's day2 distribution is always nearly-uniform over $v_1^2,\ldots, v_1^{n+2}$.

\begin{lemma}\label{lem:subprobs12}
$\sum_{k=1}^{n+2} f_1((v_1^k,2)) = 1/2$.
\end{lemma}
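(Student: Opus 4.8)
The plan is to mimic the proof of Lemma~\ref{lem:subprobs11} essentially verbatim, since Bidder One's day2 construction is structurally identical to the day1 construction (only differing in whether a $\lceil\cdot\rceil$ or a $\lfloor z_{k+1}+\frac{n^3}{n-k+2}\rfloor$ is used at each step, which does not affect the telescoping accounting below). First I would establish by downward induction on $k$ (from $k=n+1$ down to $k=0$) the invariant that, after defining $f_1((v_1^j,2))$ for all $j\le k+1$, the "remaining mass to be allocated" is exactly $b-\sum_{j=1}^{k+1} f_1((v_1^j,2))\cdot 2b$, and that this quantity is distributed across the remaining $n-k+1$ indices (i.e.\ $v_1^{k+2},\dots,v_1^{n+2}$) in a way that keeps each partial helper $z_{j+1}$ well-defined.

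Concretely, the key observation is the telescoping identity: for each $k$ from $1$ to $n$, writing $S_k := \sum_{j=1}^k f_1((v_1^j,2))\cdot 2b$, we have $z_{k+1} = \frac{b-S_k}{n-k+2}$ by definition, and $f_1((v_1^{k+1},2))\cdot 2b$ equals either $\lfloor z_{k+1}+\frac{n^3}{n-k+2}\rfloor$ or $\lceil z_{k+1}\rceil$ depending on $x_k$ --- in either case a number within $O(n^3)$ of $z_{k+1}$ but more importantly a well-defined integer. The point is simply that we never need to know \emph{which} branch was taken: once all of $f_1((v_1^1,2)),\dots,f_1((v_1^{n+1},2))$ are fixed (by steps 1 and 2 of the definition), step 3 \emph{defines} $f_1((v_1^{n+2},2))$ to be precisely $\frac{b-S_{n+1}}{2b}$, so that $\sum_{j=1}^{n+2} f_1((v_1^j,2))\cdot 2b = S_{n+1} + (b - S_{n+1}) = b$, whence the total mass is $b/(2b) = 1/2$.

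So the actual proof is just: by step 3, $\sum_{k=1}^{n+2} f_1((v_1^k,2)) = \frac{1}{2b}\left(\sum_{k=1}^{n+1} f_1((v_1^k,2))\cdot 2b + z_{n+2}\right) = \frac{1}{2b}\left(S_{n+1} + (b - S_{n+1})\right) = \frac{1}{2b}\cdot b = \frac{1}{2}$, which is immediate from the definition of the helper $z_{n+2}$. I would phrase it exactly this way and refer the reader to the (identical) argument for Lemma~\ref{lem:subprobs11}.

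I do not anticipate any real obstacle here: this lemma is a bookkeeping statement whose content is entirely contained in the construction itself (step 3 is designed precisely to make the masses sum to $1/2$). The only thing to be mildly careful about is to confirm that the helpers $z_{k+1}$ appearing in step 2 are themselves well-defined (nonnegative, so that the floors/ceilings make sense and probabilities stay nonnegative) --- but this is exactly the content of Lemma~\ref{crange}'s analogue for day2, and in any case is not needed for the summation identity, which holds formally regardless. Hence the proof is a one-line computation.
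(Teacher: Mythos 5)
Your proof is correct and matches the paper's argument exactly: both observe that step 3 of the construction defines $f_1((v_1^{n+2},2))$ as precisely the residual $\frac{b-\sum_{j=1}^{n+1}f_1((v_1^j,2))\cdot 2b}{2b}$, so the sum telescopes to $\frac{b}{2b}=\frac12$ regardless of which branch (floor versus ceiling) was taken in step 2. The extra preamble about downward induction and well-definedness of the helpers is unnecessary for this lemma, as you yourself note, but does not detract from the argument.
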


\begin{lemma} \label{drange}
For all $x$, $f_1((v_1^k,2)) \cdot 2b \in [a - n^3, a + n^3]$ for all $k \in [2,n+2]$.
\end{lemma}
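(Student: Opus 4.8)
I would work with the scaled masses $g_k := f_1((v_1^k,2))\cdot 2b$ and the scaled residual mass $S_k := b - \sum_{j\le k} g_j$, so that for $k\in[1,n]$ the helper is $z_{k+1} = S_k/(n-k+2)$, the value $g_{k+1}$ equals $\lfloor z_{k+1} + \frac{n^3}{n-k+2}\rfloor$ if $x_k=0$ and $\lceil z_{k+1}\rceil$ if $x_k=1$, and finally $g_{n+2} = z_{n+2} = S_{n+1}$. Since $g_1 = b/(10n) = n^5$, a direct computation gives $z_2 = (b-n^5)/(n+1) = a$ \emph{exactly}. So the lemma amounts to controlling how far each helper $z_j$ ($j\in[2,n+2]$) can drift away from $a$, and then reading off the effect of the last rounding on $g_k$.

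First I would derive a one-step recursion for the helpers. Setting $m_k := n-k+2$ and substituting the definition of $g_{k+1}$ into $S_{k+1} = m_k z_{k+1} - g_{k+1}$, then dividing by $m_k-1 = m_{k+1}$, I expect to obtain $z_{k+2} = z_{k+1} + \Delta_k$ with $\Delta_k = -\frac{n^3}{m_k(m_k-1)} + \frac{\epsilon_k}{m_k-1}$ when $x_k = 0$ ($\epsilon_k\in[0,1)$ the fractional part killed by the floor) and $\Delta_k = -\frac{\epsilon'_k}{m_k-1}$ when $x_k = 1$ ($\epsilon'_k\in[0,1)$ the fractional part added by the ceiling); in both cases $-\frac{n^3}{m_k(m_k-1)} \le \Delta_k < \frac{1}{m_k-1}$, using $m_k\le n+1\le n^3$. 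Because $z_2 = a$, this makes $z_j - a = \sum_{k=1}^{j-2}\Delta_k$, and the two relevant series telescope: with $H_m := \sum_{i=1}^m 1/i$, one has $\sum_{k=1}^{j-2}\frac{1}{m_k(m_k-1)} = \frac{1}{n-j+3}-\frac{1}{n+1}$ and $\sum_{k=1}^{j-2}\frac{1}{m_k-1} = H_n - H_{n-j+2}\le H_n$. This yields the sharp two-sided estimate $-\frac{n^3}{n-j+3} + \frac{n^3}{n+1} \le z_j - a < n < n^3$ for all $j\in[2,n+2]$, so in particular $z_j\in(a-n^3,a+n^3)$, which already settles $k=n+2$ since $g_{n+2}=z_{n+2}$.

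Then, for $k\in[2,n+1]$, I would note that $g_k$ is produced from the helper $z_k$ at step $k-1$, whose divisor $m := n-k+3$ is exactly the parameter appearing in the estimate above at $j=k$, i.e.\ $-\frac{n^3}{m} + \frac{n^3}{n+1}\le z_k - a < n$. If $x_{k-1}=1$ then $g_k = \lceil z_k\rceil\in[z_k, z_k+1)$, and the estimate transfers directly (for the lower side use $\frac{n^3}{n+1}-\frac{n^3}{m} > -\frac{n^3}{2}$, valid since $m\ge 2$). If $x_{k-1}=0$ then $g_k = \lfloor z_k + \frac{n^3}{m}\rfloor$, and the sharp lower bound gives $z_k + \frac{n^3}{m}\ge a + \frac{n^3}{n+1}$, so $g_k\ge\lfloor a + \frac{n^3}{n+1}\rfloor > a-1 > a-n^3$; while $z_k < a+n$ and $\frac{n^3}{m}\le\frac{n^3}{2}$ give $g_k < a + n + \frac{n^3}{2} < a+n^3$ (all valid for $n\ge 2$; $n=1$ is checked by hand). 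Together with the $k=n+2$ case this proves $g_k\in[a-n^3,a+n^3]$ for all $k\in[2,n+2]$.

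The hard part will be the lower bound on $g_k = \lfloor z_k + n^3/m\rfloor$ in the $x_{k-1}=0$ branch, and this is precisely why the construction adds $\frac{n^3}{n-k+2}$ inside the floor. A crude accounting only gives $z_k\ge a-n^3$, after which the floor could in principle push $g_k$ below $a-n^3$. What rescues it is that the cumulative downward drift of $z_k$ below $a$ is itself at most $\frac{n^3}{m}$ (up to the negligible positive slack $\frac{n^3}{n+1}$) --- exactly the quantity added back. So the technical crux is to carry the sharp, step-dependent telescoped bound on $z_j - a$ rather than the crude symmetric one $|z_j-a|<n^3$; beyond that the argument is routine bookkeeping with floors, ceilings, and harmonic sums.
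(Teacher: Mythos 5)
Your proposal is correct and takes essentially the same route as the paper: both establish $z_2=a$ exactly, telescope the per-step drift of the helpers using the identity $\frac{1}{(n-i+1)(n-i+2)}=\frac{1}{n-i+1}-\frac{1}{n-i+2}$ to get $z_j\in(a-n^3,a+O(n)]$, and then translate that range through the floor/ceiling formulas for $f_1((v_1^k,2))\cdot 2b$. The paper organizes this as two sub-lemmas (monotone decrease of $z^d$, then the lower bound $z^d_{n+2}\ge a-n^3$), whereas you fold both into a single one-step recursion $z_{k+2}=z_{k+1}+\Delta_k$; also, the sharp $j$-dependent lower bound you carry is not actually needed --- the crude $z_k\ge a-n^3$ already gives $\lfloor z_k+\frac{n^3}{m}\rfloor> a-n^3+\frac{n^3}{n+1}-1>a-n^3$ since $\frac{n^3}{m}\ge\frac{n^3}{n+1}>1$, which is exactly what the paper uses via $\lfloor z^d_k+\frac{n^3}{m}\rfloor\ge\lceil z^d_k\rceil\ge z^d_k$.
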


Finally, we define the distribution for Bidder Two day1. Bidder Two's distribution will be truly single-parameter in that their interest is always day1. Bidder Two's distribution depends on $y$, and is constructed so that $y_k$ has a significant impact on $f_2((v_2^{k+1},1))$.

\begin{definition}[Bidder Two's distribution] Define $f_2((v_2^k,1))$ (as a function of $y$) as follows:
\begin{enumerate}
    \item Set $f_2((v_2^1,1)) := \frac{\frac{b}{10n} - 1}{b}$.
    \item For $k=1$ to $n$, define helper $z_{k+1} := \frac{b - \sum_{j = 1}^{k} f_2((v_2^j,1)) \cdot b}{n - k + 2}$.
\begin{itemize}
\item If $y_k = 1$, then set $f_2((v_2^{k+1},1)) :=  \frac { \left \lfloor z_{k + 1} + \frac{n^2}{n-k+2} \right  \rfloor} {b}$.
\item Otherwise, set $f_2((v_2^{k+1},1)) := \frac{\left \lfloor z_{k + 1} - 1\right  \rfloor}{b} $.
\end{itemize}
    \item For $k=n+1$, define helper $z_{n+2} := b - \sum_{j = 1}^{n+1} f_2((v_2^{j},1)) \cdot b$, set $f_2((v_2^{n+2},1)) := \frac{z_{n+2}} {b}$
\end{enumerate}

\end{definition}

Again, we confirm quickly that this is a valid distribution, and that it is nearly-uniform over $v_2^2,\ldots, v_2^{n+2}$.

\begin{lemma}\label{lem:subprobs2}
$\sum_{k=1}^{n+2} f_2((v_2^k,1)) = 1$.
\end{lemma}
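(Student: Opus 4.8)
The plan is to use that the last atom $f_2((v_2^{n+2},1))$ is defined precisely to absorb the residual mass. By the third item in the definition of Bidder Two's distribution, $f_2((v_2^{n+2},1)) = z_{n+2}/b$ where $z_{n+2} := b - \sum_{j=1}^{n+1} f_2((v_2^j,1))\cdot b$. Multiplying this identity through by $b$ and rearranging gives $\sum_{j=1}^{n+2} f_2((v_2^j,1))\cdot b = b$; dividing by $b$ yields $\sum_{k=1}^{n+2} f_2((v_2^k,1)) = 1$. So essentially the whole content of the lemma is this one-line telescoping, and I expect no real obstacle.

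The only auxiliary fact worth recording along the way — which is what justifies the surrounding remark that this construction yields a bona fide distribution with all probabilities integer multiples of $1/b$ — is that every partial sum $S_k := \sum_{j=1}^{k} f_2((v_2^j,1))\cdot b$ is an integer. This follows by a trivial induction: $S_1 = b/(10n) - 1 = n^5 - 1 \in \Z$ because $b = 10n^6$; for $1 \leq k \leq n$ the increment $f_2((v_2^{k+1},1))\cdot b$ equals either $\lfloor z_{k+1} + n^2/(n-k+2)\rfloor$ (branch $y_k=1$) or $\lfloor z_{k+1} - 1\rfloor$ (branch $y_k=0$), an integer in either case; and the final increment is $z_{n+2} = b - S_{n+1} \in \Z$. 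Non-negativity of the $f_2((v_2^k,1))$ (needed for ``valid distribution'', but not for the equality claimed here) is handled separately by near-uniformity bounds, in analogy with Lemmas~\ref{crange} and~\ref{drange}.
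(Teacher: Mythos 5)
Your proof is correct and takes exactly the same telescoping approach as the paper: the final atom $f_2((v_2^{n+2},1))$ is defined to be the residual mass, so the sum collapses to $b/b=1$. (Incidentally, the paper's own displayed proof of this lemma carries a copy-paste typo, writing $1/2$ where it should write $1$ --- the $2b$ denominator from the Bidder One lemmas was mistakenly retained --- so your cleanly written version is actually the more reliable one.)
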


\begin{lemma}\label{erange}
$f_2((v_2^k,1)) \cdot b \in [a - 2n^3, a + 2n^3]$ for all $k \in [2,n+2]$.
\end{lemma}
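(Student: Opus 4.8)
The plan is to prove this by the same kind of bookkeeping used for the analogous claims about Bidder One (Lemmas~\ref{crange} and~\ref{drange}), which is a telescoping/induction argument on the helper quantities $z_{k+1}$. Write $q_k := f_2((v_2^k,1)) \cdot b$, so the claim is $q_k \in [a-2n^3, a+2n^3]$ for all $k \in [2,n+2]$. By definition, $q_1 = \frac{b}{10n} - 1$, and for $k \in [1,n]$ the helper satisfies $z_{k+1} = \frac{b - \sum_{j=1}^k q_j}{n-k+2}$, with $q_{k+1} \in \{\lfloor z_{k+1} + \frac{n^2}{n-k+2}\rfloor, \lfloor z_{k+1} - 1\rfloor\}$ depending on $y_k$; finally $q_{n+2} = z_{n+2} = b - \sum_{j=1}^{n+1} q_j$. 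Let $S_k := \sum_{j=1}^k q_j$ denote the partial sum, and note $z_{k+1} = \frac{b - S_k}{n-k+2}$ is exactly the value $q$ would need to take uniformly to distribute the remaining mass $b - S_k$ over the remaining $n-k+2$ indices. The key observation is that each rounding step perturbs $q_{k+1}$ away from $z_{k+1}$ by at most $O(n^2/(n-k+2)) = O(n^2)$, and this perturbation feeds into the next helper only after division by $n-k+1$, so errors are damped rather than amplified.

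Concretely, I would first establish by induction on $k$ that $z_{k+1} \in [a - c_k, a + c_k]$ for an explicit slowly-growing bound $c_k$ (something like $c_k \le n^2 + k$, comfortably below $2n^3$). The base case uses $q_1 = \frac{b}{10n}-1$ together with $b - q_1 = (n+1)a + n^5 - (\frac{b}{10n} - 1)$ and $a = \frac{b-n^5}{n+1}$; one checks $z_2 = \frac{b - q_1}{n+1}$ is within $1$ of $a$ after accounting for the $n^5$ and the $b/(10n)$ terms cancelling appropriately (this uses the arithmetic $b = 10n^6$, so $b/(10n) = n^5$, which is designed precisely so that $b - q_1 = (n+1)a + 1$, giving $z_2 = a + \frac{1}{n+1}$). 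For the inductive step: given $z_{k+1} \in [a - c_k, a+c_k]$, we have $q_{k+1} \in [z_{k+1} - 2, z_{k+1} + \frac{n^2}{n-k+2} + 1] \subseteq [a - c_k - 2, a + c_k + n^2 + 1]$, so $q_{k+1}$ stays within $[a - (c_k + n^2 + 2), a + (c_k + n^2 + 2)]$. Then $S_{k+1} = b - (n - k + 1) z_{k+1} + (q_{k+1} - z_{k+1})$ — wait, more directly: $b - S_{k+1} = (b - S_k) - q_{k+1} = (n-k+2) z_{k+1} - q_{k+1}$, so $z_{k+2} = \frac{(n-k+2)z_{k+1} - q_{k+1}}{n-k+1} = z_{k+1} + \frac{z_{k+1} - q_{k+1}}{n-k+1}$. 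Since $|z_{k+1} - q_{k+1}| \le \frac{n^2}{n-k+2} + 2 \le n^2 + 2$, we get $|z_{k+2} - z_{k+1}| \le \frac{n^2+2}{n-k+1}$, which is at most $n^2+2$ when $k \le n-1$ and could be as large as $\approx n^2$ when $k = n$ (i.e. $n - k + 1 = 1$). Summing these increments gives $c_{k} = O(n^2 \log n)$ at worst, or with a cruder bound simply $c_k \le (k-1)(n^2+2) \le (n+1)(n^2+2) < 2n^3$ for $n$ large, which suffices.

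The one place requiring slightly more care — and the main obstacle — is the \textbf{final index} $k = n+1$, i.e. bounding $q_{n+2} = z_{n+2}$, because here there is no further damping: $z_{n+2}$ absorbs the accumulated rounding error of all $n$ previous steps. The argument above already controls $z_{n+2}$ directly (it equals $q_{n+2}$), so I just need the induction's bound $c_{n+1} < 2n^3$ to hold, which the crude estimate $c_{n+1} \le (n+1)(n^2+2)$ delivers with room to spare (and is why the lemma asks only for the generous interval $[a - 2n^3, a + 2n^3]$ rather than something tight). I would also double-check that all $q_k$ remain positive and integer-valued so that the $f_2$ values form a genuine probability distribution — positivity follows since $a - 2n^3 > 0$ (as $a > n^5$), and integrality follows from the floor operations and the fact that $b - S_{n+1}$ is an integer, which is exactly the content of Lemma~\ref{lem:subprobs2}. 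Putting these pieces together yields the claimed range for all $k \in [2, n+2]$.
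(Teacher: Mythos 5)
Your proposal is correct and uses essentially the same approach as the paper: track the drift of the helper $z^e_{k+1}$ from $a$ induced by the per-step rounding, then transfer the bound to $f_2((v_2^k,1)) \cdot b$. One remark: the paper's intermediate Lemma~\ref{zerange} proves the tighter range $z^e_{k+1} \in [a - 2n^2, a + 2n^2]$ by pairing the near-monotonicity of Lemma~\ref{zenearlymonotone} ($z^e_{i+2} < z^e_{i+1} + 2$) with the exact telescope $\sum_i \tfrac{1}{(n-i+1)(n-i+2)} = 1 - \tfrac{1}{n+1}$; your cruder $n(n^2+2) < 2n^3$ estimate is plenty for Lemma~\ref{erange} itself, but the sharper $O(n^2)$ control on $z^e$ is re-used in the later virtual-value bounds (Lemmas~\ref{e1bound}, \ref{e0bound}), and it is available at no extra cost by retaining the $\tfrac{n^2}{n-k+2}+2$ per-step estimate you already wrote down rather than immediately relaxing it to $n^2+2$.
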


Finally, we quickly state that the optimal \emph{single-bidder} auction for any distribution considered in our reduction is especially simple: it sets the same take-it-or-leave-it price of $n^2+1$. The proof is in Appendix~\ref{app:onebidder}.

\begin{proposition}\label{prop:singlesimple} For all $x$ (resp., $y$), the revenue-optimal single-bidder auction for the resulting distribution $D_1$ (resp. $D_2$) simply sets a take-it-or-leave-it price of $n^2+1$ on one-day shipping.
\end{proposition}

\section{Constructing a Flow}\label{sec:flow}
In this section, we construct a flow which is optimal for all instances of our construction. We proceed in two steps. First, we consider a canonical flow and establish that this flow is optimal if and only if $\disj(x,y)=\mathsf{yes}$. Next, we show how to modify the flow to be optimal when $\disj(x,y) = \mathsf{no}$.

\subsection{A Canonical Flow}\label{sec:canonical}
We first define a canonical flow, and then argue it is optimal when $\disj(x,y) = \mathsf{yes}$.

\begin{definition}[Canonical Flow] $(\alpha^M, \lambda^M)$ is the canonical Myerson flow, where: 
\label{canonical}
\begin{itemize}
\item $\alpha_i(k) = 0$ for both bidders $i$ and all $k \in [n+2]$.
\item $\lambda_i^j(k) = R_i((v_i^k,j))$ for both bidders $i$, days $j$, and all $k \in [n+2]$.
\end{itemize}
\end{definition}

It is easy to confirm that $(\alpha^M, \lambda^M)$ is a flow. We can also quickly execute Definition~\ref{def:vv} (recalling that $v_i^k=n^2+k$) to compute $\Phi^{\alpha^M,\lambda^M}$:

\begin{observation} For both bidders $i$, days $j$, and all $k \in [1,n+2]$, $\Phi^{\alpha^M,\lambda^M}_i((v_i^k,j))= v_i^k - \frac{R_i((v_i^{k+1},j))}{f_i((v_i^k,j))}$.
\end{observation}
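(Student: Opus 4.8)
This observation is a direct substitution into Definition~\ref{def:vv}, so the plan is simply to unwind the definitions with the concrete choices made by the canonical flow and the type space. Recall that Definition~\ref{def:vv} reads
$$\Phi_{i}^{\alpha,\lambda}((v_i^k,j))=v_i^k - \frac{(v_i^{k+1}-v_i^k)\cdot \lambda_i^j(k+1)}{f_i((v_i^k,j))},$$
and note that the $\alpha$-multipliers do not appear in this expression at all, so the fact that $\alpha^M_i(k)=0$ is irrelevant here. The only inputs we need are the value of $\lambda^M$ and the gaps $v_i^{k+1}-v_i^k$.

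First I would plug in $\lambda_i^{M,j}(k+1)=R_i((v_i^{k+1},j))$, which holds by the very definition of the canonical flow. Next I would use the definition of the type space, $v_i^k=n^2+k$ for all $k\in[n+2]$, which gives $v_i^{k+1}-v_i^k=1$ for every $k\in[1,n+1]$; substituting both facts into the display above immediately yields $\Phi^{\alpha^M,\lambda^M}_i((v_i^k,j))= v_i^k - R_i((v_i^{k+1},j))/f_i((v_i^k,j))$ for $k\in[1,n+1]$.

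The only point requiring a moment of care is the boundary case $k=n+2$. Here I would invoke the conventions from the footnote to Definition~\ref{def:vv}, namely $\lambda_i^j(n_i+1):=0$ and $v_i^{n_i+1}:=v_i^{n_i}$ (with $n_i=n+2$), so that the correction term in Definition~\ref{def:vv} vanishes and $\Phi^{\alpha^M,\lambda^M}_i((v_i^{n+2},j))=v_i^{n+2}$. On the right-hand side of the claimed identity, $R_i((v_i^{n+3},j))=\sum_{k'\ge n+3} f_i((v_i^{k'},j))$ is an empty sum and hence equals $0$, so the right-hand side is also $v_i^{n+2}$, matching. This completes the verification for all $k\in[1,n+2]$. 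There is no genuine obstacle in this proof; it is pure bookkeeping, and the only thing that could go wrong is mishandling the top-index convention, which the argument above addresses explicitly.
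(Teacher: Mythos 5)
Your proposal is correct and matches the paper's (unstated, one-line) justification: plug the canonical flow $\lambda_i^j(k+1)=R_i((v_i^{k+1},j))$ and the unit gaps $v_i^{k+1}-v_i^k=1$ into Definition~\ref{def:vv}. Your explicit treatment of the boundary case $k=n+2$ via the footnote conventions is careful and consistent with the convention $R_i(v_i^{n+1}):=0$ used in Appendix~\ref{app:single}, though the paper leaves this implicit.
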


Proposition~\ref{prop:Myerson} below captures the key properties of our construction and this flow. These properties are motivated by bullet three of Definition~\ref{def:witness}: we need to compare virtual values of types of Bidder One with those for types of Bidder Two to determine if a certain allocation is optimal. The proof is in Appendix~\ref{app:flow}.

\begin{proposition}\label{prop:Myerson}
For all $x,y$, the flow $(\alpha^M,\lambda^M)$ satisfies the following:
\begin{itemize}
\item For both $j$: $ k > k' \Rightarrow \Phi^{\alpha^M,\lambda^M}_1((v_1^k,j)) > \Phi^{\alpha^M,\lambda^M}_2((v_2^{k'},1))$.
\item For both $j$: $k < k' \Rightarrow \Phi^{\alpha^M,\lambda^M}_1((v_1^k,j)) < \Phi^{\alpha^M,\lambda^M}_2((v_2^{k'},1))$.
\item For all $k \in [n]$: if $x_{k} = 0 \text{ OR } y_{k} = 0$, then for both $j$: $\Phi^{\alpha^M,\lambda^M}_1((v_1^{k + 1},j)) > \Phi^{\alpha^M,\lambda^M}_2((v_2^{k + 1},1))$.
\item For all $k \in [n]$, if $x_k = 1 \text{ AND } y_k = 1$, then: $\Phi^{\alpha^M,\lambda^M}_1((v_1^{k + 1},1)) > \Phi^{\alpha^M,\lambda^M}_2((v_2^{k + 1},1)) > \Phi^{\alpha^M,\lambda^M}_1((v_1^{k + 1},2))$.
\item For both $j$: $\Phi^{\alpha^M,\lambda^M}_1((v_1^1,j)) > \Phi^{\alpha^M,\lambda^M}_2((v_2^1,1))$.
\item For both $j$: $\Phi^{\alpha^M,\lambda^M}_1((v_1^{n+2},j)) = \Phi^{\alpha^M,\lambda^M}_2((v_2^{n+2},1))$.
\item For both $i$, both $j$, and all $k \geq 1$: $\Phi_i^{\alpha^M,\lambda^M}((v_i^k,j))>0$.

\end{itemize}
\end{proposition}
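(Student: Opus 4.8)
Write $\Phi_i(\cdot)$ for $\Phi^{\alpha^M,\lambda^M}_i(\cdot)$ and $m:=n+2-k$. The plan is to reduce every comparison to the closed form $\Phi_i((v_i^k,j))=v_i^k-R_i((v_i^{k+1},j))/f_i((v_i^k,j))$ of the Observation, and to feed it two structural facts about the construction. First, by Lemmas~\ref{crange}--\ref{erange}, for $k\ge 2$ the quantity $f_i((v_i^k,j))$ times the relevant scale ($2b$ for Bidder~$1$, $b$ for Bidder~$2$) is an integer in $[a-2n^3,a+2n^3]$, which is $\Theta(n^5)$ since $b=10n^6$ and $a=\tfrac{b-n^5}{n+1}$. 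Second, since the total mass of each distribution is $\tfrac12$ (Bidder~$1$) or $1$ (Bidder~$2$) by Lemmas~\ref{lem:subprobs11}--\ref{lem:subprobs2}, the helper $z_{k+1}$ in each distribution definition equals the average of the $m$ remaining scaled masses, so $z_{k+1}=\Theta(n^5)$ and, exactly, $m\,z_{k+1}$ equals $R_i((v_i^{k+1},j))$ times the scale.

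\emph{Boundary indices and coarse bounds (bullets 1, 2, 5, 6, 7).} At $k=n+2$ the factor $v_i^{n+3}-v_i^{n+2}$ vanishes, so $\Phi_i((v_i^{n+2},j))=v_i^{n+2}=n^2+n+2$, which is bullet 6. At $k=1$, plugging $f_1((v_1^1,j))=\tfrac1{20n}$ and $R_1((v_1^2,j))=\tfrac12-\tfrac1{20n}$ into the closed form gives $\Phi_1((v_1^1,j))=n^2-10n+2$, while $f_2((v_2^1,1))=\tfrac{b/(10n)-1}{b}$ gives $\Phi_2((v_2^1,1))=n^2-10n+2-\tfrac{10n}{n^5-1}$, strictly smaller; this is bullet 5. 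For $k\in[2,n+1]$, $R_i((v_i^{k+1},j))/f_i((v_i^k,j))$ is a sum of $m$ near-equal scaled masses divided by one of them, hence in $[m-\tfrac1n,m+\tfrac1n]$ (using $a\gg n^3$), so $\Phi_i((v_i^k,j))=n^2+2k-n-2\pm\tfrac1n$, with the same value holding exactly at $k=n+2$. These are all positive for $n$ past an absolute constant, giving bullet 7; and since consecutive main terms $n^2+2k-n-2$ differ by $2$ while $\Phi_2((v_2^1,1))$ lies far below every index-$\ge2$ value, bullets 1 and 2 follow by comparing main terms, with the boundary indices handled by the exact values just computed.

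\emph{Fine analysis (bullets 3, 4).} Fix $k\in[n]$. Applying the closed form at index $k+1$ and using $R_i((v_i^{k+2},j))=R_i((v_i^{k+1},j))-f_i((v_i^{k+1},j))$ together with the second structural fact gives the key identity
\[
\Phi_i((v_i^{k+1},j))=v_i^{k+1}+1-\frac{m\,z_{k+1}^{(i,j)}}{g_i^j(k+1)},
\]
where $g_i^j(k+1)$ is $f_i((v_i^{k+1},j))$ times the scale and $z_{k+1}^{(i,j)}$ is the helper of the corresponding distribution; since $v_1^{k+1}=v_2^{k+1}$,
\[
\Phi_1((v_1^{k+1},j))-\Phi_2((v_2^{k+1},1))=m\Bigl(\tfrac{z_{k+1}^{(2)}}{g_2^1(k+1)}-\tfrac{z_{k+1}^{(1,j)}}{g_1^j(k+1)}\Bigr).
\]
Each ratio $z_{k+1}/g(k+1)$ is $1$ plus a tiny signed correction that depends only on which rounding rule produced $g(k+1)$ — using $z_{k+1}=\Theta(n^5)$: it is $1-\Theta(\tfrac1{mn^2})$ for $g(k+1)=\lfloor z_{k+1}+\tfrac{n^3}m\rfloor$ (Bidder~1 day~1 always, and Bidder~1 day~2 when $x_k=0$); $1-\Theta(\tfrac1{mn^3})$ for $\lfloor z_{k+1}+\tfrac{n^2}m\rfloor$ (Bidder~2 when $y_k=1$); $1-\delta$ with $0\le\delta=O(\tfrac1{n^5})$ for $\lceil z_{k+1}\rceil$ (Bidder~1 day~2 when $x_k=1$); and $1+\Theta(\tfrac1{n^5})$ for $\lfloor z_{k+1}-1\rfloor$ (Bidder~2 when $y_k=0$). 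A four-way case analysis on $(x_k,y_k)$ then yields exactly bullets 3 and 4: if $y_k=0$ the Bidder-2 ratio exceeds $1$ and both Bidder-1 ratios are $\le1$; if $x_k=0,y_k=1$ the Bidder-1 correction (order $\tfrac1{mn^2}$) dominates the Bidder-2 one (order $\tfrac1{mn^3}$); and if $x_k=y_k=1$ then for $j=1$ Bidder~1's day-1 correction still dominates (so $\Phi_1((v_1^{k+1},1))>\Phi_2((v_2^{k+1},1))$), while for $j=2$ Bidder~1's day-2 correction, now only $O(\tfrac1{n^5})$, is beaten by Bidder~2's correction $\Omega(\tfrac1{n^4})$ (so $\Phi_2((v_2^{k+1},1))>\Phi_1((v_1^{k+1},2))$).

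\textbf{Expected main obstacle.} Essentially all the work is in the fine analysis, and its most delicate step is the comparison, in the $x_k=0$ and $x_k=y_k=1$ cases, between Bidder~1's correction $\sim\frac{n^3/m-\theta}{z_{k+1}^{(1,\cdot)}}$ and Bidder~2's correction $\sim\frac{n^2/m-\theta'}{z_{k+1}^{(2)}}$. Their numerators do differ by the needed factor $\approx n$ even after the floor perturbations $\theta,\theta'\in[0,1)$ (since $n^3/m-\theta\ge n^2-n-1>n^2/2\ge n^2/m-\theta'$), but one must still verify that the two \emph{distinct} helpers $z_{k+1}^{(1,\cdot)}$ and $z_{k+1}^{(2)}$ — which come from distributions with different partial sums — agree up to a $1+O(1/n^2)$ factor; this is exactly what Lemmas~\ref{crange}--\ref{erange} buy, by confining every scaled mass (hence every helper) to the window $[a-2n^3,a+2n^3]$ with $a=\Theta(n^5)$. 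What remains is careful bookkeeping of the hidden constants so that each strict inequality persists for all large enough $n$.
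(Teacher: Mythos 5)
Your proposal is correct and follows essentially the same route as the paper's proof in Appendix~\ref{app:flow}: both write each virtual value as the main term $v_i^{k}-(n+2-k)$ plus a signed correction whose order of magnitude is dictated by the rounding rule that produced the scaled mass at that index, and then drive every strict inequality by comparing those orders. Your identity $\Phi_i((v_i^{k+1},j))=v_i^{k+1}+1-m\,z_{k+1}/g_i^j(k+1)$ is a clean algebraic rearrangement of the ratio $z_{k+2}/g(k+1)$ that the paper manipulates in Lemmas~\ref{cbound}--\ref{e0bound} (the paper applies the helper-gap and helper-range lemmas to $z_{k+2}$, you fold that into a single comparison of $z_{k+1}$ against $g(k+1)$), so the bookkeeping is lighter but the substance is the same.
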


The first two bullets assert that a bidder with strictly higher value has strictly higher virtual value as well. The next two bullets concern virtual values when both bidders' values are the same. Importantly, they assert that the relative comparison of virtual values when both bidders have value $n^2+k$ depends \emph{only on $x_k$ and $y_k$, and not on $x_{-k}$ or $y_{-k}$}. Just as importantly, they assert that Bidder One's interest is \emph{only relevant if $x_k = y_k = 1$}. Bullet seven implies that any allocation rule that witnesses optimality for $(\alpha^M,\lambda^M)$ must learn which bidder has higher virtual value. We analyze a potential such auction next.

\begin{definition}[Second-Price Auction, tie-breaking for Bidder One]\label{def:SPA} The \emph{second-price auction, tie-breaking for Bidder One}, gives the item to the bidder with highest value and breaks ties in favor of Bidder One. Payments are charged to satisfy the payment identity.
\end{definition}

\begin{theorem}\label{thm:disjoint} The second-price auction, tie-breaking for Bidder One, witnesses optimality for $(\alpha^M,\lambda^M)$ \emph{if and only if} $\disj(x,y) = \mathsf{yes}$.
\end{theorem}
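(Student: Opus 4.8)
The plan is to verify the three bullets of Definition~\ref{def:witness} for the pair $(\alpha^M,\lambda^M)$ together with the second-price-with-Bidder-One-tie-breaking auction, and track exactly when the third bullet can fail. The first bullet (payment identity) is satisfied by construction, since we defined the auction to charge payments according to the payment identity; and the allocation rule is clearly monotone in each bidder's value, so the payments are well-defined. The second bullet is vacuous here: Proposition~\ref{prop:Myerson}'s last bullet gives $\alpha^M_i(k)=0$ for all $i,k$, so there is no complementary-slackness condition to check on the $\alpha$ constraints. Hence the entire content of the theorem reduces to the third bullet: \emph{on every type profile $(t_1^k,t_2^{k'})$, the second-price (Bidder-One-tie-breaking) allocation awards the item to a bidder with highest non-negative virtual value under $\Phi^{\alpha^M,\lambda^M}$.} By the last bullet of Proposition~\ref{prop:Myerson} all relevant virtual values are strictly positive, so ``non-negative'' is automatic and we only need to match the argmax.

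Next I would do the case analysis on the pair of values $(v_1^k, v_2^{k'})$ appearing in the profile, using Proposition~\ref{prop:Myerson} as the dictionary between value-comparisons and virtual-value-comparisons. If $k > k'$, Bidder One has the strictly higher value, the auction gives them the item, and bullet one of the Proposition says $\Phi_1((v_1^k,j)) > \Phi_2((v_2^{k'},1))$, so this is correct regardless of $\disj(x,y)$. Symmetrically, if $k < k'$, Bidder Two has strictly higher value and wins, and bullet two confirms $\Phi_2 > \Phi_1$, again always correct. The only profiles where anything can go wrong are the ``tie'' profiles $k = k'$, where the auction always gives the item to Bidder One (by the tie-breaking rule). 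For $k=k'=1$ bullet five says $\Phi_1((v_1^1,j)) > \Phi_2((v_2^1,1))$, so Bidder One is correctly the argmax; for $k=k'=n+2$ bullet six says the virtual values are exactly equal, so awarding to Bidder One is a valid choice of argmax. The interesting ties are $k=k' \in \{2,\ldots,n+1\}$, i.e. value $n^2+k$ with $k\in[2,n+1]$, which correspond to index $k-1 \in [n]$ of the \disj\ instance.

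For such a tie, there are two sub-cases driven by bullets three and four of the Proposition (applied with the index $k-1$). If $x_{k-1}=0$ or $y_{k-1}=0$, bullet three gives $\Phi_1((v_1^k,j)) > \Phi_2((v_2^k,1))$ for \emph{both} interests $j$, so Bidder One is the unique highest-virtual-value bidder and the tie-breaking-to-One auction is correct on this profile. If instead $x_{k-1}=y_{k-1}=1$, bullet four gives $\Phi_1((v_1^k,1)) > \Phi_2((v_2^k,1)) > \Phi_1((v_1^k,2))$; so on the profile $(t_1^{2k},t_2^{2k-1})$ --- Bidder One reporting type $(v_1^k,2)$, Bidder Two reporting $(v_2^k,1)$ --- the highest virtual value belongs to \emph{Bidder Two}, not Bidder One, so awarding the item to Bidder One violates bullet three of Definition~\ref{def:witness}. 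This gives the ``only if'' direction: if $\disj(x,y)=\mathsf{no}$, there is some index $k-1$ with $x_{k-1}=y_{k-1}=1$, hence a profile on which the second-price Bidder-One auction fails to be a max-virtual-value allocation, so by Theorem~\ref{thm:CDW} it is not revenue-optimal and in particular does not witness optimality for $(\alpha^M,\lambda^M)$. Conversely, if $\disj(x,y)=\mathsf{yes}$, then for every index the hypothesis of bullet three holds, so combining with the first two and last two bullets we have checked bullet three of Definition~\ref{def:witness} on every profile; together with the (trivial) first two bullets this shows $(\alpha^M,\lambda^M)$ witnesses optimality for the second-price Bidder-One auction.

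The only genuinely delicate point is the sub-case $x_{k-1}=y_{k-1}=1$ in the ``only if'' direction: one must be careful that bullet four of Proposition~\ref{prop:Myerson} really does produce a \emph{strict} reversal $\Phi_2((v_2^k,1)) > \Phi_1((v_1^k,2))$ (so that Bidder One is genuinely not a valid argmax, rather than merely tied), and that this reversal happens on a profile that actually occurs --- but both are immediate from the statement of the Proposition, since the $n^3$-vs-$1$ gap in the day-2 construction for Bidder One when $x_{k-1}=1$ is exactly what makes the inequality strict. Everything else is bookkeeping over the finitely many profile types, so I would present it as a short table-driven argument citing Proposition~\ref{prop:Myerson} bullet-by-bullet and Theorem~\ref{thm:CDW} at the end.
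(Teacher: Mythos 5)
Your proof follows the same approach as the paper's: verify the three bullets of Definition~\ref{def:witness}, note that the first two are trivially/vacuously satisfied, and reduce the bi-conditional entirely to a case analysis on value comparisons using Proposition~\ref{prop:Myerson} bullet-by-bullet. Two minor slips worth correcting: the fact that $\alpha^M_i(k)=0$ comes from the definition of the canonical flow, not from Proposition~\ref{prop:Myerson}'s last bullet (which gives positivity of virtual values); and the closing appeal to Theorem~\ref{thm:CDW} in the ``only if'' direction is an unnecessary detour, since failure of bullet three already means, directly from Definition~\ref{def:witness}, that $(\alpha^M,\lambda^M)$ does not witness optimality.
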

\begin{proof}
First, it is well-known (and easy to see) that the second-price auction, tie-breaking for Bidder One, with the payment identity is BIC. After this, there are three bullets to check in Definition~\ref{def:witness}. We claim that the first two hold for all $x,y$, and the third holds if and only if $\disj(x,y) = \mathsf{yes}$.

The first bullet holds trivially, as payments are specifically defined to satisfy the payment identity.

The second bullet holds vacuously, as $\alpha^M_i(k) = 0$ for both $i$ and all $k$ (in fact, the implied condition holds anyway, as the bidders' allocation/payment doesn't depend on their interest).

To see that final bullet holds if and only if $\disj(x,y) = \mathsf{yes}$, observe that the second-price auction awards the item to the bidder with highest value, tie-breaking for Bidder One. So the final bullet holds if and only if: (a) a higher value implies a higher virtual value (which immediately follows from the first two bullets of Proposition~\ref{prop:Myerson}), (b) all virtual values are non-negative (which immediately follows from bullet seven of Proposition~\ref{prop:Myerson}), and (c) Bidder One has a higher virtual value whenever both bidders have the same value (which holds \emph{if and only if} $\disj(x,y) = \mathsf{yes}$, by bullets three and four of Proposition~\ref{prop:Myerson}).

This completes the proof: Definition~\ref{def:witness} is satisfied if and only if $\disj(x,y) = \mathsf{yes}$.
\end{proof}

Observe that Theorem~\ref{thm:disjoint} implies that the Second-Price Auction, tie-breaking for Bidder One is one optimal auction for $(D_1,D_2)$ when $\disj(x,y) = \mathsf{yes}$. We now conclude the following simple corollary:

\begin{corollary}\label{cor:disjoint} If $\disj(x,y) = \mathsf{yes}$, \emph{every} optimal BIC auction $(X,P)$ for $D_1,D_2$ has $X_1(t_1^1,t_2^1) = 1$.
\end{corollary}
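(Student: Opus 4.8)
\textbf{Proof proposal for Corollary~\ref{cor:disjoint}.} The plan is to obtain this as a short consequence of Theorem~\ref{thm:disjoint}, Theorem~\ref{thm:CDW}, and two bullets of Proposition~\ref{prop:Myerson}. First, since $\disj(x,y) = \mathsf{yes}$, Theorem~\ref{thm:disjoint} tells us that the second-price auction, tie-breaking for Bidder One, witnesses optimality for the canonical flow $(\alpha^M,\lambda^M)$. Thus $(\alpha^M,\lambda^M)$ is a flow for which there exists a witnessing BIC auction, so Theorem~\ref{thm:CDW} applies and yields: every revenue-optimal BIC auction $(X,P)$ for $D_1\times D_2$ must itself witness optimality for $(\alpha^M,\lambda^M)$.

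Next I would instantiate the third bullet of Definition~\ref{def:witness} at the profile $(t_1^1,t_2^1)$ — recalling that in the paper's labeling $t_i^1 = (v_i^1,1)$, so this is the profile on which Bidder One reports $(v_1^1,1)$ and Bidder Two reports $(v_2^1,1)$. That bullet forces $X$ to award the item on $(t_1^1,t_2^1)$ to a bidder with highest non-negative virtual value (under $\Phi^{\alpha^M,\lambda^M}$). By bullet seven of Proposition~\ref{prop:Myerson}, both $\Phi_1^{\alpha^M,\lambda^M}((v_1^1,1))$ and $\Phi_2^{\alpha^M,\lambda^M}((v_2^1,1))$ are strictly positive; by bullet five of Proposition~\ref{prop:Myerson}, $\Phi_1^{\alpha^M,\lambda^M}((v_1^1,1)) > \Phi_2^{\alpha^M,\lambda^M}((v_2^1,1))$. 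Hence Bidder One is the \emph{unique} bidder with highest non-negative virtual value at this profile, and withholding the item from both bidders is strictly suboptimal for the virtual surplus. Combined with feasibility $X_1(t_1^1,t_2^1)+X_2(t_1^1,t_2^1)\le 1$, the only allocation consistent with the third bullet is $X_1(t_1^1,t_2^1)=1$. Since this argument applies to every revenue-optimal $(X,P)$, the corollary follows.

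I do not expect any serious obstacle here, as the heavy lifting is already done in Theorem~\ref{thm:disjoint} and Proposition~\ref{prop:Myerson}. The one point deserving a line of care is the step from ``$X$ awards the item to a bidder with highest non-negative virtual value'' to ``$X_1(t_1^1,t_2^1)=1$'': this is not merely ``positive probability to Bidder One'' but probability exactly one, and it uses both the \emph{strict} inequality of bullet five (uniqueness of the virtual-surplus maximizer among the bidders) and the \emph{strict} positivity of bullet seven (so that leaving the item with the seller is also excluded), together with the single-item feasibility constraint.
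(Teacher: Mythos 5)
Your proposal is correct and takes essentially the same approach as the paper: invoke Theorem~\ref{thm:disjoint} to exhibit a witness, apply Theorem~\ref{thm:CDW} to force every optimal auction to witness optimality for $(\alpha^M,\lambda^M)$, and then read off the allocation at $(t_1^1,t_2^1)$ from bullet three of Definition~\ref{def:witness} together with the strict inequality in Proposition~\ref{prop:Myerson}. The only minor difference is that you explicitly cite bullet seven (positivity) to rule out the null outcome, which the paper's proof leaves implicit; that is a small improvement in care, not a different route.
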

\begin{proof}
Because a BIC auction witnesses optimality for $(\alpha^M, \lambda^M)$ (by Theorem~\ref{thm:disjoint}), every optimal BIC auction for $D_1,D_2$ witnesses optimality for $(\alpha^M,\lambda^M)$. Because $\Phi^{\alpha^M,\lambda^M}_1((v_1^1,1))>\Phi^{\alpha^M,\lambda^M}_2((v_2^1,1))$ by Proposition~\ref{prop:Myerson}, bullet three of Definition~\ref{def:witness} asserts that every optimal BIC auction satisfies $X_1(t_1^1,t_2^1) = 1$.
\end{proof}

Corollary~\ref{cor:disjoint} proves half of Theorem~\ref{thm:main}: that Bidder One wins the item in all optimal auctions on $(t_1^1,t_2^1)$ when $\disj(x,y) = \mathsf{yes}$. Theorem~\ref{thm:disjoint} makes clear the key distinction when $\disj(x,y) = \mathsf{no}$: $(\alpha^M,\lambda^M)$ does not witness optimality, so we need a new flow.

\subsection{Modifying the Canonical Flow}\label{sec:modifymain}
We now modify the canonical flow to find an optimal $(\alpha',\lambda')$ in the case when $\disj(x,y) = \mathsf{no}$. Fortunately, the necessary modification is simple to describe (although verifying the desired properties is comlpex). We will only make one modification, defined below, and first used in~\cite{DevanurW17}.

\begin{definition}[Boosting, \cite{DevanurW17}] Beginning with a flow $(\alpha,\lambda)$, \emph{boosting $(\alpha,\lambda)$ at $k$ for Bidder $i$ by $\varepsilon$}, for any $\varepsilon \leq \lambda_i^2(k')$ for all $k' \leq k$, produces a new flow $(\alpha',\lambda')$ with:
\begin{itemize}
\item  $\alpha'_i(k):=\alpha_i(k)+\varepsilon$.
\item  $(\lambda')_i^2(k'):=\lambda_i^2(k') - \varepsilon$, for all $k' \leq k$.
\item $(\lambda')_i^1(k'):=\lambda_i^1(k') +\varepsilon$, for all $k' \leq k$.
\item If not already specified, then $\alpha' = \alpha$ and $\lambda'=\lambda$.
\end{itemize}
\end{definition}

It is not hard to see that Boosting at $k$ preserves the flow conditions (provided that $\varepsilon \leq \lambda_i^2(k')$ for all $k' \leq k$). It is also not hard to see that Boosting at $k$ for Bidder $i$ \emph{increases} the virtual value for all $(v_i^{k'},2)$ for all $k' <k$, \emph{decreases} the virtual value for all $(v_i^{k'},1)$ for all $k' < k$, and leaves all other virtual values unchanged (see~\cite[Observation 3]{DevanurW17} --- although we will prove this ourselves whenever this is used in calculations).

\begin{definition}[Modified Flow] The modified flow $(\alpha^*,\lambda^*)$ proceeds as follows:
\begin{enumerate}
\item Begin with $(\alpha,\lambda) =(\alpha^M,\lambda^M)$.
\item Boost $(\alpha,\lambda)$ at $n+2$ for Bidder One by $\varepsilon$. Here, $\varepsilon$ is the minimum boost which results in $\Phi_1^{\alpha',\lambda'}((v_1^k,2))\geq \Phi_2^{\alpha',\lambda'}((v_2^k,1))$ for all $k$.
\end{enumerate}
\end{definition}

The rest of our analysis proceeds as follows. First, we need to establish that this modified flow indeed exists, because the required boost for Bullet 2 is small enough to be valid. Proposition~\ref{prop:valid} states this, and also several useful properties of this flow. The proof of Proposition~\ref{prop:valid} is in Appendix~\ref{app:flow}, and this relies on many of the precise choices in defining our instance.

\begin{proposition}\label{prop:valid}
For all $x,y$, $(\alpha^*,\lambda^*)$ is a valid flow. Moreover, it satisfies the following properties:
\begin{itemize}
\item For both $j$: $k > k' \Rightarrow \Phi_1^{\alpha^*,\lambda^*}((v_1^k,j))> \Phi_2^{\alpha^*,\lambda^*}((v_2^{k'},1))$.
\item For both $j$: $k < k' \Rightarrow \Phi_1^{\alpha^*,\lambda^*}((v_1^k,j))< \Phi_2^{\alpha^*,\lambda^*}((v_2^{k'},1))$.
\item For both $j$, and all $k \geq 2$, $\Phi_1^{\alpha^*,\lambda^*}((v_1^k,j)) \geq \Phi_2^{\alpha^*,\lambda^*}((v_2^k,1))$.
\item When $\disj(x,y) = \mathsf{no}$, there exists a $k^* \in [2,n+1]$ such that: $\Phi_1^{\alpha^*,\lambda^*}((v_1^{k^*},2)) = \Phi_2^{\alpha^*,\lambda^*}((v_2^{k^*},1))$.
\item When $\disj(x,y) = \mathsf{no}$, then: $\Phi^{\alpha^*,\lambda^*}_1((v_1^{1},2)) > \Phi^{\alpha^*,\lambda^*}_2((v_2^{1},1)) > \Phi^{\alpha^*,\lambda^*}_1((v_1^{1},1))$.
\item For both $j$: $\Phi^{\alpha^*,\lambda^*}_1((v_1^{n+2},j)) = \Phi^{\alpha^*,\lambda^*}_2((v_2^{n+2},1))$.
\item For both $i$, both $j$, and all $k \geq 1$: $\Phi_i^{\alpha^*,\lambda^*}((v_i^k,j))>0$.
\end{itemize}
\end{proposition}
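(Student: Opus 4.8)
The plan is to treat $(\alpha^*,\lambda^*)$ as a one‑parameter perturbation of the canonical flow, so that every claim reduces to (i) identifying the boost $\varepsilon$ exactly and (ii) controlling the \emph{magnitudes} of the canonical virtual‑value gaps that Proposition~\ref{prop:Myerson} pins down only by sign. First I would write $(\alpha^*,\lambda^*)$ out: boosting $(\alpha^M,\lambda^M)$ at $n+2$ for Bidder One leaves all of Bidder Two's multipliers (hence $\Phi_2$) untouched and sets $\alpha^*_1(n+2)=\varepsilon$, $(\lambda^*)^2_1(k')=R_1((v_1^{k'},2))-\varepsilon$, $(\lambda^*)^1_1(k')=R_1((v_1^{k'},1))+\varepsilon$ for all $k'\le n+2$; checking Definition~\ref{def:flow} confirms this is a flow provided $\varepsilon\le\min_{k'}(\lambda^M)^2_1(k')=R_1((v_1^{n+2},2))=f_1((v_1^{n+2},2))\ge(a-n^3)/(2b)$ (Lemma~\ref{drange}). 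Substituting into Definition~\ref{def:vv} (using $v_i^k=n^2+k$, $v_1^{n+3}:=v_1^{n+2}$, $\lambda_1^j(n+3):=0$) gives, for $k\in[1,n+1]$, $\Phi_1^{\alpha^*,\lambda^*}((v_1^k,2))=\Phi_1^{\alpha^M,\lambda^M}((v_1^k,2))+\varepsilon/f_1((v_1^k,2))$ and $\Phi_1^{\alpha^*,\lambda^*}((v_1^k,1))=\Phi_1^{\alpha^M,\lambda^M}((v_1^k,1))-\varepsilon/f_1((v_1^k,1))$, while $\Phi_1^{\alpha^*,\lambda^*}((v_1^{n+2},j))=v_1^{n+2}=n^2+n+2=\Phi_2^{\alpha^*,\lambda^*}((v_2^{n+2},1))$ for both $j$ — which is already the sixth bullet.

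Next I would pin down $\varepsilon$. Since the boost only raises Bidder One's day‑$2$ virtual values, and $\Phi_1((v_1^{n+2},2))\ge\Phi_2((v_2^{n+2},1))$ holds automatically by the sixth bullet, the defining condition of the boost becomes $\varepsilon\ge f_1((v_1^k,2))\,\big(\Phi_2^{\alpha^M,\lambda^M}((v_2^k,1))-\Phi_1^{\alpha^M,\lambda^M}((v_1^k,2))\big)$ for $k\in[2,n+1]$, so
\[
\varepsilon=\max\Big\{0,\ \max_{2\le k\le n+1}\ f_1((v_1^k,2))\,\big(\Phi_2^{\alpha^M,\lambda^M}((v_2^k,1))-\Phi_1^{\alpha^M,\lambda^M}((v_1^k,2))\big)\Big\}.
\]
By the third and fifth bullets of Proposition~\ref{prop:Myerson} every term in the inner maximum is $\le 0$ except at the indices where its fourth bullet applies, where it is strictly positive; thus the inner maximum is positive exactly in the disjointness regime of the fourth and fifth bullets of Proposition~\ref{prop:valid}, and in the complementary regime $\varepsilon=0$, $(\alpha^*,\lambda^*)=(\alpha^M,\lambda^M)$, and every remaining bullet is immediate from Proposition~\ref{prop:Myerson}. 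Assume $\varepsilon>0$. By construction $\Phi_1^{\alpha^*,\lambda^*}((v_1^k,2))\ge\Phi_2^{\alpha^*,\lambda^*}((v_2^k,1))$ for all $k$ — this is the $j=2$ case of the third bullet, and, together with bullet five of Proposition~\ref{prop:Myerson} plus the upward shift at $k=1$, the first inequality of the fifth bullet; and since $\varepsilon$ is the \emph{minimum} such boost, some $k^*\in[2,n+1]$ attains the maximum with equality, giving the fourth bullet.

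All remaining assertions need \emph{quantitative} refinements of the estimates behind Proposition~\ref{prop:Myerson}. Each virtual value is $v^{\cdot}$ minus a ratio of integer numerators (probabilities scaled by $2b$ or $b$), and the three numerator sequences — Bidder One's two days and Bidder Two's day $1$ — agree up to $O(\poly(n))$ additive perturbations coming only from the $x,y$‑dependent floor/ceiling adjustments; using Lemmas~\ref{crange},~\ref{drange},~\ref{erange} and the ``fill the remaining mass uniformly'' recursion for the $z_k$'s I would show: the positive ``bad'' gaps $\Phi_2^{\alpha^M,\lambda^M}((v_2^k,1))-\Phi_1^{\alpha^M,\lambda^M}((v_1^k,2))$ are $\Theta(n^{-3})$‑sized (flipping the relevant bit of $x$ shifts $f_1((v_1^k,2))$ by $\approx n^3/(n-k+2)$, which almost exactly cancels the day‑$1$/day‑$2$ gap, leaving a residual of order $n^2/(n-k+2)$ governed by Bidder Two's smaller $y$‑bonus); the value‑one gap $\Phi_1^{\alpha^M,\lambda^M}((v_1^1,1))-\Phi_2^{\alpha^M,\lambda^M}((v_2^1,1))$ is $O(n^{-4})$ with $f_1((v_1^1,1))=\tfrac1{20n}$; the day‑$1$ gaps $\Phi_1^{\alpha^M,\lambda^M}((v_1^k,1))-\Phi_2^{\alpha^M,\lambda^M}((v_2^k,1))$ for $k\ge 2$ are $\Omega(n^{-2})$; and the cross‑value gaps $\Phi_1^{\alpha^M,\lambda^M}((v_1^k,j))-\Phi_2^{\alpha^M,\lambda^M}((v_2^{k'},1))$ for $k>k'$ are bounded below by a constant. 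Feeding these back gives $\varepsilon=\Theta(n^{-4})$, hence $\varepsilon/f_1((v_1^k,j))=\Theta(n^{-3})$ for all $k$: far below $(a-n^3)/(2b)\approx\tfrac1{2n}$ (validity of the flow); too small to reverse any strict cross‑value inequality (bullets one and two), to make any virtual value non‑positive (bullet seven), or to undo the $j=1$ case of the third bullet; yet — since $20n\varepsilon=\Theta(n^{-3})$ exceeds the $O(n^{-4})$ value‑one gap — just large enough to push $\Phi_1^{\alpha^*,\lambda^*}((v_1^1,1))$ strictly below $\Phi_2^{\alpha^*,\lambda^*}((v_2^1,1))$, which is the remaining (second) inequality of the fifth bullet.

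The hard part is this last step: the inequalities that must survive the boost do so only by $1/\poly(n)$ margins, and they pull in opposite directions — $\varepsilon$ must be small enough for validity, monotonicity, and positivity, but large enough to flip the value‑one day‑$1$ comparison — so one genuinely needs the canonical virtual‑value gaps estimated to within a $1/\poly(n)$ factor rather than merely known by sign. That forces a careful, index‑by‑index analysis of how the floors/ceilings and the sequential uniform‑fill recursion propagate through $R_1,R_2,f_1,f_2$, and it is precisely here that the specific constants of the construction ($b=10n^6$, the $\tfrac b{10n}$ and $-1$ seeds, the $n^3/(n-k+2)$ versus $n^2/(n-k+2)$ bonuses, $v_i^k=n^2+k$) earn their keep.
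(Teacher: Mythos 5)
Your proposal is correct and takes essentially the same approach as the paper: express the boosted virtual values as $\Phi^{\alpha^M,\lambda^M}_1\pm\varepsilon/f_1(\cdot)$, show $\varepsilon=\Theta(n^{-4})$ so the perturbation is $\Theta(n^{-3})$, and then compare this shift against the canonical gaps — constant between distinct values, $\Theta(n^{-2})$ for day~1 at matching values, $\Theta(n^{-3})$ for the ``bad'' day~2 gaps, and $\Theta(n^{-4})$ at $k=1$ — exactly as the paper does in Lemmas~\ref{boundofflow}, \ref{monotone1}, \ref{ce}, and \ref{after} (which in turn rest on the two-sided bounds of Lemmas~\ref{cbound}--\ref{e0bound}). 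Your explicit closed-form for $\varepsilon$ and the observation that the $k=1$ and $k=n+2$ constraints are never binding is a slightly cleaner packaging, but the underlying argument is the paper's.
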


We now define an auction that witnesses optimality for $(\alpha^*,\lambda^*)$, and conclude implications for $X_1(t_1^1,t_2^1)$.

\begin{definition}[Second-Price Auction, careful tie-breaking at $k^*$]\label{def:careful} The second-price auction with careful tie-breaking at $k^* \in [2,n+1]$ gives the item to the bidder with highest value. If both bidders have the same value $n^2+k$, break ties in the following manner (in all cases, charge payments satisfying the payment identity):
\begin{itemize}
\item If $k \neq 1$, and Bidder One's interest is day1, give the item to Bidder One.
\item If $k = 1$, and Bidder One's interest is day1, give the item to Bidder Two.
\item If $k \neq k^*$, and Bidder One's interest is day2, give the item to Bidder One.
\item If $k = k^*$, and Bidder One's interest is day2, give Bidder One the item with probability $1-\frac{f_2((v_2^1,1))}{f_2((v_2^{k^*},1))}$, and to Bidder Two with probability $\frac{f_2((v_2^1,1))}{f_2((v_2^{k^*},1))}$.\footnote{Observe that this is feasible, as we've guaranteed in our construction that $f_2((v_2^1,1)) < f_2((v_2^k,1))$ for all $k > 1$.}
\end{itemize}
\end{definition}

Let us quickly get some intuition for the Second-Price Auction with careful tie-breaking at $k^*$. First, observe that when Bidder One's value is neither $n^2+1$ nor $n^2+k^*$, the allocation rule is agnostic to Bidder One's interest. However, when Bidder One's value is $n^2+1$, ties are more often broken in favor of Bidder One when their interest is day2 versus day1. Similarly, when their value is $n^2+k^*$, ties are more often broken in favor of Bidder One when their interest is day1 versus day2. When calculating the payment identity, this implies that \emph{no matter Bidder One's value, their payment depends on their interest, even when their allocation probability does not}. In particular, the Second-Price Auction with careful tie-breaking at $k^*$ is not DSIC, and the precise probability chosen in bullet four is chosen exactly so that Lemma~\ref{lem:spaBIC} (below) holds.\footnote{For example, if Bidder One wins ties in bullet four with any probability $> 1-\frac{f_2((v_2^1,1))}{f_2((v_2^{k^*},1))}$, the auction would remain BIC (but the `Moreover,\ldots' property in Lemma~\ref{lem:spaBIC} would not hold). If Bidder One wins ties with probability $< 1-\frac{f_2((v_2^1,1))}{f_2((v_2^{k^*},1))}$, then Bidder One would have strict incentive to misreport their day2 interest as day1 whenever their value is $> n^2+k^*$.}

\begin{lemma}\label{lem:spaBIC} For all $x,y$ such that $\disj(x,y) = \mathsf{no}$, the Second-Price Auction with careful tie-breaking at $k^*$ is BIC. Moreover, $\pi_i((v_i^k,2)) \cdot v^{k}_i - p_i((v_i^k,2)) = \pi_i((v_i^k,1)) \cdot v^{k}_i - p_i((v_i^k,1))$ for all $k > k^*$.
\end{lemma}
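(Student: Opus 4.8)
The plan is to verify directly that the Second-Price auction with careful tie-breaking at $k^*$ satisfies every constraint of the BIC linear program, and to read the claimed equality off the same computation. Fix $x,y$ with $\disj(x,y)=\mathsf{no}$, so that $k^*\in[2,n+1]$ is well defined by Proposition~\ref{prop:valid}. Since Lemmas~\ref{lem:subprobs11},~\ref{lem:subprobs12}, and~\ref{lem:subprobs2} imply each bidder participates with probability $1$, the type $t_i^0$ is just a non-participation option whose interim allocation and payment are $0$, so the $k'=0$ cases of \eqref{eq:bic} reduce to individual rationality (which follows from the payment identity). Because payments satisfy the payment identity, each constraint of \eqref{eq:bic} with equal true and reported interest ($j=j'$) is equivalent to $\pi_i((\cdot,j))$ being monotone non-decreasing in value. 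So it suffices to (a) establish those monotonicities, and (b) handle the cross-interest constraints with $j=2$, $j'=1$.

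\textbf{Closed forms and monotonicity.} I will read the interim allocation rules off the tie-breaking description. Writing $g(\ell):=f_1((v_1^\ell,1))+f_1((v_1^\ell,2))$ and $q:=f_2((v_2^1,1))/f_2((v_2^{k^*},1))$, and noting $q<1$ since the construction (Lemmas~\ref{crange},~\ref{drange},~\ref{erange}) guarantees $f_2((v_2^1,1))<f_2((v_2^k,1))$ for all $k\ge2$, one obtains: $\pi_1((v_1^1,1))=0$ and $\pi_1((v_1^k,1))=\sum_{\ell\le k}f_2((v_2^\ell,1))$ for $k\ge2$; $\pi_1((v_1^k,2))=\sum_{\ell\le k}f_2((v_2^\ell,1))$ for $k\ne k^*$ while $\pi_1((v_1^{k^*},2))=\sum_{\ell\le k^*}f_2((v_2^\ell,1))-f_2((v_2^1,1))$; $\pi_2((v_2^k,1))=\sum_{\ell<k}g(\ell)+\beta_k$ where $\beta_1=f_1((v_1^1,1))$, $\beta_{k^*}=q\,f_1((v_1^{k^*},2))$, and $\beta_k=0$ otherwise; and $\pi_2((v_2^k,2))=\pi_2((v_2^k,1))$ because the tie-breaking rules reference only Bidder One's reported interest, so Bidder Two's win probability is independent of her own reported interest. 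Away from $k\in\{1,k^*\}$ each rule is a partial sum of strictly positive terms and hence monotone, so the only boundary comparisons left to check are $\pi_1((v_1^1,1))=0\le\pi_1((v_1^2,1))$ (immediate), the inequality $\beta_k\le g(k)$ that makes $\pi_2((\cdot,1))$ monotone (it holds since $\beta_{k^*}<f_1((v_1^{k^*},2))\le g(k^*)$ and $\beta_1=f_1((v_1^1,1))\le g(1)$), and $f_2((v_2^1,1))<f_2((v_2^2,1))$, which covers the $k^*=2$ boundary of $\pi_1((\cdot,2))$.

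\textbf{Cross-interest constraints and the equality.} For Bidder Two, $\pi_2((\cdot,2))=\pi_2((\cdot,1))$ (hence $p_2((\cdot,2))=p_2((\cdot,1))$ by the payment identity), so her $j=2$, $j'=1$ constraints follow from monotonicity of $\pi_2((\cdot,1))$. For Bidder One, a true day$2$ type with value $v_1^k$ reporting $(v_1^{k'},1)$ is still shipped on the reported (day$1$) interest and so receives its full value; its utility $\pi_1((v_1^{k'},1))v_1^k-p_1((v_1^{k'},1))$ is unimodal in $k'$ --- the step from $k'$ to $k'+1$ equals $(\pi_1((v_1^{k'+1},1))-\pi_1((v_1^{k'},1)))(v_1^k-v_1^{k'+1})$, which is nonnegative for $k'<k$ and nonpositive for $k'\ge k$ because consecutive values differ by exactly $1$, so the peak is at $k'=k$. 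Hence it suffices to prove $\Delta(k):=\big(\pi_1((v_1^k,2))v_1^k-p_1((v_1^k,2))\big)-\big(\pi_1((v_1^k,1))v_1^k-p_1((v_1^k,1))\big)\ge0$ for all $k$. Eliminating payments via the payment identity gives the telescoping identity $\Delta(k+1)-\Delta(k)=\pi_1((v_1^k,2))-\pi_1((v_1^k,1))$; combined with $\Delta(0)=\Delta(1)=0$ and the closed forms above, this yields $\Delta(k)=f_2((v_2^1,1))>0$ for $2\le k\le k^*$ and $\Delta(k)=0$ for $k\in\{0,1\}\cup\{k^*+1,\dots,n+2\}$. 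In particular $\Delta(k)\ge0$ for every $k$, so the auction is BIC, and $\Delta(k)=0$ for every $k>k^*$, which --- together with the Bidder Two identities above --- is precisely the ``moreover'' claim.

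\textbf{Main obstacle.} No step is deep; the only places demanding care are (i) pinning down the $\beta_k$'s and the $k^*$-term of $\pi_1((\cdot,2))$ --- in particular, recognizing that the tie-breaking probability $1-f_2((v_2^1,1))/f_2((v_2^{k^*},1))$ is exactly the value forcing the $\Delta$-telescope to vanish for $k>k^*$ --- and (ii) the $k\in\{1,k^*\}$ boundary cases of the monotonicity argument, where the one non-routine external input is the near-uniformity inequality $f_2((v_2^1,1))<f_2((v_2^k,1))$ supplied by Lemmas~\ref{crange},~\ref{drange}, and~\ref{erange}.
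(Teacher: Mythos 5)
Your proof is correct and follows essentially the same route as the paper's: both reduce the Bidder-One cross-interest constraints to the single quantity $\Delta(k)=u((v_1^k,2),(v_1^k,2))-u((v_1^k,1),(v_1^k,1))$ via the payment identity plus the observation that day-1 reports yield the same utility regardless of Bidder One's true interest, and both then read $\Delta(k)$ off the same closed-form interim allocations (yielding $\Delta(k)=f_2((v_2^1,1))$ for $2\le k\le k^*$ and $\Delta(k)=0$ otherwise). The only differences are cosmetic: you verify monotonicity of the interim rules explicitly and phrase the value-misreport step as a unimodality/telescoping argument, whereas the paper invokes the standard Myerson facts and evaluates $u((v_1^k,j),(v_1^k,j))=\sum_{\ell<k}\pi_1((v_1^\ell,j))$ directly.
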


With Lemma~\ref{lem:spaBIC} in hand, the proof of Theorem~\ref{thm:notdisjoint} follows similarly to that of Theorem~\ref{thm:disjoint}.

\begin{theorem}\label{thm:notdisjoint} When $\disj(x,y) = \mathsf{no}$, let $k^*$ be the index promised by bullet four of Proposition~\ref{prop:valid}. Then the second-price auction with careful tie-breaking at $k^*$ witnesses optimality for $(\alpha^*,\lambda^*)$.
\end{theorem}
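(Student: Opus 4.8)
The plan is to follow the proof of Theorem~\ref{thm:disjoint} essentially verbatim, checking the three bullets of Definition~\ref{def:witness} for the pair consisting of the second-price auction with careful tie-breaking at $k^*$ and the flow $(\alpha^*,\lambda^*)$. That this auction is BIC (which is \emph{not} automatic, since the randomization probability in its last tie-breaking case is delicately chosen), and moreover that $\pi_1((v_1^k,2))\cdot v_1^k-p_1((v_1^k,2))=\pi_1((v_1^k,1))\cdot v_1^k-p_1((v_1^k,1))$ for all $k>k^*$, is exactly the content of Lemma~\ref{lem:spaBIC}; I would invoke that lemma at the outset and then turn to the three conditions.

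The first condition (payments satisfy the payment identity) is immediate by construction. For the second condition, $\alpha^*$ coincides with $\alpha^M\equiv 0$ except for the single boost at $n+2$ for Bidder One, so the only pair $(i,k)$ with $\alpha^*_i(k)\neq 0$ is $(1,n+2)$, and $n+2>k^*$ since $k^*\in[2,n+1]$; hence the only instance of the second condition that needs checking is the interest-indifference of Bidder One at value $v_1^{n+2}$, which is the ``moreover'' clause of Lemma~\ref{lem:spaBIC}. (One can also observe $\varepsilon>0$ when $\disj(x,y)=\mathsf{no}$, since the canonical flow violates the defining inequality of the boost at $k_0+1$ for any intersecting coordinate $k_0$ by bullet four of Proposition~\ref{prop:Myerson}; but this is not strictly needed, as the condition is vacuous when $\varepsilon=0$.)

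The substance is the third condition: on every type profile the auction must ship the item to a bidder whose virtual value is nonnegative and maximal among the bidders. Nonnegativity is free from bullet seven of Proposition~\ref{prop:valid}. If the two reported values differ, the auction ships to the higher-valued bidder, who has strictly larger virtual value by bullets one and two of Proposition~\ref{prop:valid}. If both report value $n^2+k$, I would split along the regimes of the tie-breaking rule: for $k=1$, bullet five of Proposition~\ref{prop:valid} gives $\Phi_1((v_1^1,2))>\Phi_2((v_2^1,1))>\Phi_1((v_1^1,1))$, matching ``ship to Bidder Two if Bidder One's interest is day1, to Bidder One if it is day2''; for $k\in[2,n+1]\setminus\{k^*\}$ the auction ships to Bidder One and bullet three of Proposition~\ref{prop:valid} gives $\Phi_1((v_1^k,j))\ge\Phi_2((v_2^k,1))$ for both $j$; for $k=k^*$ with interest day1, bullet three again applies, while for $k=k^*$ with interest day2 the auction randomizes and bullet four of Proposition~\ref{prop:valid} gives the \emph{exact} equality $\Phi_1((v_1^{k^*},2))=\Phi_2((v_2^{k^*},1))$, so both bidders are maximal and the randomized split is legitimate (exact equality, not just $\ge$, is essential here, as Bidder Two gets the item with positive probability); and for $k=n+2$ the auction ships to Bidder One (since $n+2\notin\{1,k^*\}$) and bullet six of Proposition~\ref{prop:valid} gives $\Phi_1((v_1^{n+2},j))=\Phi_2((v_2^{n+2},1))$. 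Profiles with a null-type report are handled at once from the feasibility constraints $X_1(t_1^0,\cdot)=X_2(\cdot,t_2^0)=0$ and positivity of the other bidder's virtual value.

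The real work is not in this theorem but in the ingredients it relies on: Proposition~\ref{prop:valid} (that the minimal boost $\varepsilon$ keeps $(\alpha^*,\lambda^*)$ a valid flow, that $k^*$ exists in $[2,n+1]$ with a genuine equality of virtual values, and all the listed virtual-value comparisons) and Lemma~\ref{lem:spaBIC} (that the probability $1-f_2((v_2^1,1))/f_2((v_2^{k^*},1))$ is simultaneously what makes the auction BIC and what produces the day1/day2 indifference above $k^*$). Granting those, Theorem~\ref{thm:notdisjoint} follows from the routine case check just outlined, which mirrors the proof of Theorem~\ref{thm:disjoint} step for step.
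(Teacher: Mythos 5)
Your proposal is correct and follows the same approach as the paper's own proof: invoke Lemma~\ref{lem:spaBIC} for BIC and the interest-indifference above $k^*$, dispatch bullets one and two via the payment identity and the fact that $\alpha^*_1(n+2)$ is the only nonzero multiplier, and verify bullet three by the same case analysis on value comparisons using the bullets of Proposition~\ref{prop:valid}. Your exposition is slightly more granular (explicitly treating the $k = n+2$ tie, the null-type profiles, and the $\varepsilon > 0$ observation), but this is the paper's argument step for step.
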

\begin{proof}
We have already established in Lemma~\ref{lem:spaBIC} that the second-price auction with careful tie-breaking at $k^*$ is BIC, so we just need to check the three bullets. We have also explicitly defined payments to satisfy the payment identity, so bullet one is satisfied. For bullet two, the condition is vacuously satisfied for all $k < n+2$ because $\alpha_1(k) = 0$. At $k = n+2$, we are guaranteed that $\pi_i((v_i^{n+2},2)) \cdot v^{n+2}_i - p_i((v_i^{n+2},2)) = \pi_i((v_i^{n+2},1)) \cdot v^{n+2}_i - p_i((v_i^{n+2},1))$ by Lemma~\ref{lem:spaBIC}, as $k^* < n+2$. Therefore, bullet two is satisfied for all $k$.

Finally, we just need to confirm bullet three: that the auction always awards the item to a bidder with highest non-negative virtual value. Indeed, bullets one, two, and seven of Proposition~\ref{prop:valid} imply that the bidder with highest value also has the highest non-negative virtual value, so the second-price auction with careful tie-breaking at $k^*$ is correct whenever the two bidders have different values. Bullet three confirms that Bidder One's virtual value is always weakly higher than Bidder Two's in case they have the same value $> n^2+1$. This implies that the second-price auction with careful tie-breaking at $k^*$ breaks ties correctly in all cases when it gives the item to Bidder One. When both bidders have value $n^2+1$ and Bidder One's interest is day1, bullet five confirms that Bidder Two has higher virtual value (and the auction gives the item to Bidder Two). When both bidders have value $n^2+k^*$ and Bidder One's interest is $k^*$, bullet four confirms that the bidders have the same virtual value, so ties can be broken arbitrarily (and in particular, the randomization proposed is guaranteed to award the item to a bidder of highest virtual value).

This confirms all three bullets, and the proof.
\end{proof}

Again observe that Theorem~\ref{thm:notdisjoint} implies that the Second-Price Auction with careful tie-breaking at $k^*$ is one optimal auction for $(D_1,D_2)$ when $\disj(x,y) = \mathsf{no}$. We again conclude the following corollary:

\begin{corollary}\label{cor:notdisjoint} If $\disj(x,y)=\mathsf{no}$, every optimal BIC auction $(X,P)$ for $D_1,D_2$ has $X_1(t_1^1,t_2^1) = 0$.
\end{corollary}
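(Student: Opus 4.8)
The plan is to mirror exactly the structure of Corollary~\ref{cor:disjoint}, but now using the modified flow $(\alpha^*,\lambda^*)$ in place of the canonical flow $(\alpha^M,\lambda^M)$. First I would invoke Theorem~\ref{thm:notdisjoint}: when $\disj(x,y) = \mathsf{no}$, the second-price auction with careful tie-breaking at $k^*$ (where $k^*$ is the index promised by bullet four of Proposition~\ref{prop:valid}) witnesses optimality for $(\alpha^*,\lambda^*)$. In particular, this exhibits \emph{some} BIC auction that witnesses optimality for $(\alpha^*,\lambda^*)$, so by Theorem~\ref{thm:CDW} \emph{every} revenue-optimal BIC auction for $D_1,D_2$ must also witness optimality for $(\alpha^*,\lambda^*)$.

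Next I would apply bullet three of Definition~\ref{def:witness} to an arbitrary optimal BIC auction $(X,P)$: on the profile $(t_1^1,t_2^1)$, the auction $X$ must award the item to a bidder with highest non-negative virtual value. The profile $(t_1^1,t_2^1)$ is the pair where both bidders have value $v_i^1 = n^2+1$ and Bidder One's interest is day1 (i.e.\ $t_1^1 = (v_1^1,1)$ and $t_2^1 = (v_2^1,1)$). Bullet five of Proposition~\ref{prop:valid} states precisely that in the $\disj(x,y) = \mathsf{no}$ case, $\Phi^{\alpha^*,\lambda^*}_1((v_1^{1},1)) < \Phi^{\alpha^*,\lambda^*}_2((v_2^{1},1))$, and bullet seven guarantees both of these virtual values are strictly positive. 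Hence on $(t_1^1,t_2^1)$ the unique bidder with highest non-negative virtual value is Bidder Two, so $X$ must give the item to Bidder Two with probability $1$, i.e.\ $X_1(t_1^1,t_2^1) = 0$.

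Since $(X,P)$ was an arbitrary optimal BIC auction, this establishes the claim. There is no real obstacle here: all the heavy lifting — constructing the modified flow, verifying it is valid, establishing the virtual-value comparisons of Proposition~\ref{prop:valid}, and checking that the careful-tie-breaking auction witnesses optimality (Lemma~\ref{lem:spaBIC} and Theorem~\ref{thm:notdisjoint}) — has already been done. The only point worth stating carefully is the direction of the inequality in bullet five of Proposition~\ref{prop:valid}, which is exactly opposite to the corresponding comparison for $(\alpha^M,\lambda^M)$ in Proposition~\ref{prop:Myerson}; this reversal at $k=1$ is the whole reason the allocation on $(t_1^1,t_2^1)$ flips between the $\mathsf{yes}$ and $\mathsf{no}$ cases. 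Combined with Corollary~\ref{cor:disjoint}, this yields $X_1(t_1^1,t_2^1) = 1 \Leftrightarrow \disj(x,y) = \mathsf{yes}$ for every optimal auction, completing step (iv) of the proof overview and hence Theorem~\ref{thm:main}.
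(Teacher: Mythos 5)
Your proof is correct and takes essentially the same approach as the paper's: invoke Theorem~\ref{thm:notdisjoint} to produce a witness for $(\alpha^*,\lambda^*)$, apply Theorem~\ref{thm:CDW} to conclude that \emph{every} optimal BIC auction must witness optimality for $(\alpha^*,\lambda^*)$, and then use bullet five of Proposition~\ref{prop:valid} (strict inequality $\Phi_1^{\alpha^*,\lambda^*}((v_1^1,1)) < \Phi_2^{\alpha^*,\lambda^*}((v_2^1,1))$) together with bullet three of Definition~\ref{def:witness} to force $X_1(t_1^1,t_2^1)=0$. Your additional remark about bullet seven (positivity of virtual values) and about the inequality reversal at $k=1$ being the crux of the $\mathsf{yes}$/$\mathsf{no}$ distinction is accurate but not a departure from the paper's argument.
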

\begin{proof}
Because a BIC auction witnesses optimality for $(\alpha^*,\lambda^*)$ (by Theorem~\ref{thm:notdisjoint}), every optimal BIC auction for $D_1,D_2$ witnesses optimality for $(\alpha^*,\lambda^*)$. Because $\Phi_1^{\alpha^*,\lambda^*}((v_1^1,1)) < \Phi_2^{\alpha^*,\lambda^*}((v_2^1,1))$ by Proposition~\ref{prop:valid}, bullet three of Definition~\ref{def:witness} asserts that every optimal BIC auction satisfies $X_1(t_1^1,t_1^2) = 0$.
\end{proof}

This wraps up the proof of Theorem~\ref{thm:main}.
\begin{proof}[Proof of Theorem~\ref{thm:main}]
Corollary~\ref{cor:disjoint} establishes that when $\disj(x,y) = \mathsf{yes}$, any optimal BIC auction must allocate the item to Bidder One on $(t_1^1,t_2^1)$ with probability one. Corollary~\ref{cor:notdisjoint} establishes that when $\disj(x,y) = \mathsf{no}$, any optimal BIC auction must allocate the item to Bidder One on $(t_1^1,t_2^1)$ with probability zero. Because $D_1$ can be constructed only as a function of $x$, and $D_2$ can be constructed only as a function of $y$, any communication protocol which correctly allocates the item on $(t_1^1,t_2^1)$ in accordance with \emph{any} optimal BIC mechanism (even with probability $2/3$) can also solve \disj\ (with probability $2/3$). Because any deterministic (resp. randomized, succeeding with probability $2/3$) protocol for disjointness requires communication $n$ (resp. $\Omega(n)$), this means that any deterministic (resp. randomized, succeeding with probability $2/3$) protocol which can correctly allocate the item on $(t_1^1,t_2^1)$ in accordance with any optimal BIC mechanism (resp. with probability $2/3$) requires communication at least $n$ (resp. $\Omega(n)$).
\end{proof}

\section{Conclusion} \label{sec:conclusion}
We establish that optimal multi-dimensional mechanisms are not locally-implementable: in order to evaluate the auction on just a single valuation profile, one must know (essentially) the entire distribution. In contrast, optimal single-dimensional mechanisms are locally-implementable: evaluating the auction on a single valuation profile requires barely more bits from each $D_i$ than simply stating $v_i$ itself. Our construction establishes that this separation holds already in (essentially) the simplest possible multi-dimensional setting: one single-dimensional bidder and one two-day FedEx bidder. We also show that optimal auctions for single-dimensional buyers with public budget constraints are not locally-implementable. Moreover, both results follow the same outline, highlighting the robustness of our techniques.

Our work establishes a novel complexity of optimal multi-dimensional mechanisms distinct from optimal single-dimensional mechanisms. In particular, unlike prior work, this complexity is inherently a multi-bidder phenomenon, rather than inherited from the single-bidder setting. Indeed, every optimal single-bidder auction for any instance considered by our reductions simply sets a price of $n^2+1$. Locality can serve as a quantitative lens for future work to study the complexity of multi-bidder auctions in multi-dimensional settings where single-bidder auctions are tractable. For example:
\begin{itemize}
\item Do there exist approximately-optimal multi-dimensional auctions that are locally-implementable? One significant technical barrier to this direction is an alternative line of attack beyond complementary slackness (as complementary slackness holds only for optimal primal/dual pairs). Note that the Marginal Revenue Mechanism of~\cite{AlaeiFHH13} is locally-implementable and approximately optimal in restricted ``approximately revenue-linear'' settings. But, it remains unknown whether approximately-optimal locally-implementable mechanisms exist generally.
\item What are the implications of (non)-locality for streaming or online-learning variants of optimal auction design? In this direction, it is important that we study locality via communication complexity, due to the strong connection between communication complexity and streaming lower bounds~\cite{AlonMS99, RaoY20}.
\end{itemize}


\newcommand{\etalchar}[1]{$^{#1}$}

\appendix
\section{Optimal Single-Dimensional Mechanisms are Local}\label{app:single}

In this section we are going to show that optimal single-item auctions are local. Our analysis here adopts the characterization by~\cite{CaiDW16}.

We assume each  $D_i$ is a distribution supported on $n$ valuations ($v_i^1, v_i^2, \ldots, v_i^n$), and each valuation has an integer value between $0$ and $p(n)$ ($p$ is a polynomial in $n$) for each outcome, and the probability of each valuation is an integer multiple of $1/g(n)$ (g is another polynomial in $n$). We use the notation $R_i(v_i^k):=\sum_{k' \geq k} f_i(v_i^{k'})$.\footnote{For simplicity of notation, denote by $R_i(v_i^{n+1}) := 0$.}

First we define the discrete Myerson virtual value.
\begin{definition}[Single-dimensional Virtual Value]
  For a discrete distribution $D_i$, define
  \[ \varphi^{D_i}(v_i^k):= v_i^k - \frac{(v_i^{k + 1} - v_i^k) \cdot R_i(v_i^{k+1})}{f_i(v_i^k)}.\]
\end{definition}

We now show that a single virtual value $\varphi^{D_i}(v_i^k)$ contains barely more information from $D_i$ than the value $v_i^k$ itself (which has zero information from $D_i$).

\begin{lemma}
  For fixed polynomials $p$ and $g$, given distribution $D_i$ and $k$, $\varphi^{D_i}(v_i^k)$ can be represented in $O(\log n)$ bits.
\end{lemma}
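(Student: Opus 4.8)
The plan is to observe that $\varphi^{D_i}(v_i^k)$, once denominators are cleared, is a ratio of two integers each of magnitude $\poly(n)$, and that any such rational number is specified by $O(\log n)$ bits. No clever idea is needed beyond bookkeeping.

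Concretely, first I would rewrite the probabilities in integral form. Since every probability is an integer multiple of $1/g(n)$, there is an integer $a_k$ with $1 \le a_k \le g(n)$ and $f_i(v_i^k) = a_k/g(n)$ (we may assume $a_k \ge 1$ because $v_i^k$ lies in the support of $D_i$, which is exactly what makes $\varphi^{D_i}(v_i^k)$ well defined), and an integer $A$ with $0 \le A \le g(n)$ and $R_i(v_i^{k+1}) = \sum_{k' \ge k+1} f_i(v_i^{k'}) = A/g(n)$. Substituting into the definition of $\varphi^{D_i}$ and cancelling the common factor $1/g(n)$ from numerator and denominator gives
\[
  \varphi^{D_i}(v_i^k) \;=\; v_i^k - \frac{(v_i^{k+1}-v_i^k)\,A}{a_k} \;=\; \frac{v_i^k\, a_k - (v_i^{k+1}-v_i^k)\,A}{a_k}.
\]

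Next I would bound the two integers appearing here. The denominator $a_k$ is a positive integer at most $g(n) = \poly(n)$. In the numerator, $v_i^k$ and $v_i^{k+1}$ are integers in $[0,p(n)]$ and $a_k, A$ are integers in $[0,g(n)]$, so the numerator is an integer of absolute value at most $p(n)g(n) + p(n)g(n) = 2p(n)g(n) = \poly(n)$. Hence $\varphi^{D_i}(v_i^k)$ is determined by the pair (numerator, denominator), two integers of absolute value $\poly(n)$, which takes $O(\log \poly(n)) = O(\log n)$ bits to record. In the boundary case $k = n$ the term involving $R_i(v_i^{n+1}) = 0$ vanishes and $\varphi^{D_i}(v_i^n) = v_i^n$, which trivially fits in $O(\log n)$ bits.

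There is essentially no obstacle here; the only point deserving a moment's care is that the factor $1/g(n)$ cancels between numerator and denominator, so the reduced denominator stays $O(g(n))$ rather than $O(g(n)^2)$ — and even the latter would be polynomial, hence still give the $O(\log n)$ bound. I would also flag explicitly the implicit assumption that each reported valuation in the support has strictly positive probability, which is precisely what makes $\varphi^{D_i}(v_i^k)$ well defined.
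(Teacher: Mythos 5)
Your argument is correct and takes essentially the same approach as the paper: the paper simply observes that $\varphi^{D_i}(v_i^k)$ is determined by the four quantities $v_i^k$, $v_i^{k+1}$, $R_i(v_i^{k+1})$, and $f_i(v_i^k)$, each of which is a ratio of $\poly(n)$-sized integers and hence individually encodable in $O(\log n)$ bits; you instead clear denominators and encode the resulting single numerator/denominator pair, which is the same bookkeeping with one extra algebraic step. Either way the content is that all quantities involved are rationals with $\poly(n)$ numerator and denominator, so no comparison of substance is needed.
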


\begin{proof}
  We can use $v_i^k$, $v_i^{k + 1}$, $R_i(v_i^{k+1})$, and  $f_i(v_i^k)$ to compute $\varphi^{D_i}(v_i^k)$, thus we can trivially encode each of them into $O(\log n)$ bits separately.
\end{proof}

We will then prove a bit stronger result to show the weighted average virtual value over an interval can also be represented in $O(\log n)$ bits.
\begin{lemma} \label{lem:averagevv}
  For fixed polynomials $p$ and $g$, given distribution $D_i$ and $k, l$, the weighted average virtual value over an $[k, l]$ \[\frac{\sum_{j=k}^l f_i(v_i^j) \cdot \varphi^{D_i}(v_i^j)} {\sum_{j=k}^l f_i(v_i^j) }\] can be represented in $O(\log n)$ bits.
\end{lemma}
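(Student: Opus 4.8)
The plan is to exploit the standard fact that a sum of discrete Myerson virtual values over a suffix of the support equals a posted-price revenue; this collapses the numerator into an expression carrying only $O(\log n)$ bits of information from $D_i$. Concretely, I would first prove the telescoping identity $\sum_{j=k}^{n} f_i(v_i^j)\cdot \varphi^{D_i}(v_i^j) = v_i^k \cdot R_i(v_i^k)$ for every $k$. This follows by checking, via $R_i(v_i^j) = R_i(v_i^{j+1}) + f_i(v_i^j)$ and the definition of $\varphi^{D_i}$, that $v_i^j R_i(v_i^j) - v_i^{j+1} R_i(v_i^{j+1}) = f_i(v_i^j)\cdot \varphi^{D_i}(v_i^j)$, so the sum telescopes; the boundary term vanishes since $R_i(v_i^{n+1}) = 0$.

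Next, subtracting the identity at index $l+1$ from the identity at index $k$ yields the partial sum $\sum_{j=k}^{l} f_i(v_i^j)\cdot \varphi^{D_i}(v_i^j) = v_i^k R_i(v_i^k) - v_i^{l+1} R_i(v_i^{l+1})$ (when $l = n$ this is consistent, as then $R_i(v_i^{l+1}) = 0$). The denominator is immediate: $\sum_{j=k}^{l} f_i(v_i^j) = R_i(v_i^k) - R_i(v_i^{l+1})$. Hence, whenever the interval carries positive mass, the weighted average virtual value over $[k,l]$ equals
$$\frac{v_i^k R_i(v_i^k) - v_i^{l+1} R_i(v_i^{l+1})}{R_i(v_i^k) - R_i(v_i^{l+1})}.$$
(If the interval carries zero mass the quantity is undefined, so there is nothing to represent.)

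Finally I would bound the bit-complexity of this ratio. Each of $v_i^k$ and $v_i^{l+1}$ is a nonnegative integer at most $p(n)$, and each of $R_i(v_i^k)$ and $R_i(v_i^{l+1})$ is a partial sum of probabilities, hence a rational in $[0,1]$ with numerator and denominator at most $g(n)$. Consequently both the numerator and denominator of the displayed fraction are rationals whose numerators and denominators are bounded by a fixed polynomial in $n$, so the whole expression can be rewritten as a single ratio of two integers of magnitude $\poly(n)$, which takes $O(\log n)$ bits.

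The only real obstacle is the first step: spotting and verifying the telescoping/posted-price identity for discrete virtual values. Once that is in hand, the rest is routine bookkeeping on the sizes of integers and denominators, exactly as in the preceding (single-point) lemma. I would take care only with the index conventions already fixed in the paper ($R_i(v_i^{n+1}) := 0$ and $v_i^{n+1} := v_i^n$), which make the suffix identity and its endpoints line up cleanly.
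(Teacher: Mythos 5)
Your proposal is correct and follows essentially the same route as the paper: both hinge on the telescoping identity $\sum_{j=k}^{l} f_i(v_i^j)\varphi^{D_i}(v_i^j) = v_i^k R_i(v_i^k) - v_i^{l+1}R_i(v_i^{l+1})$ and then observe that all four quantities in the resulting ratio are integers or rationals of magnitude $\poly(n)$. The paper evaluates the $[k,l]$ sum directly via Abel summation while you first establish the suffix identity $\sum_{j\ge k} f_i(v_i^j)\varphi^{D_i}(v_i^j)=v_i^k R_i(v_i^k)$ and then subtract, but this is purely a matter of bookkeeping, not a different idea.
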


\begin{proof}
  Since the denominator $\sum_{j=k}^l f_i(v_i^j)$ is an integer multiple of $1/g(n)$ and less than or equal to $1$, it can be easily encoded in $O(\log n)$ bits. So it is sufficient to show that the numerator $\sum_{j=k}^l f_i(v_i^j) \cdot \varphi^{D_i}(v_i^j)$ can be represented in $O(\log n)$ bits. To achieve this, observe that:
  \begin{align*}
    \sum_{j=k}^l f_i(v_i^j) \cdot \varphi^{D_i}(v_i^j) & = \sum_{j=k}^l \left ( f_i(v_i^j) \cdot v_i^j - (v_i^{k + 1} - v_i^k) \cdot R_i(v_i^{k+1}) \right ) \\
    & = \sum_{j=k}^l  f_i(v_i^j) \cdot v_i^j - \left( v_i^{l + 1} \cdot R_i(v_i^{l + 1}) - v_i^{k} \cdot R_i(v_i^{k + 1})  + \sum_{j=k+1}^{l}  f_i(v_i^j) \cdot v_i^{j} \right) \\
    & = f_i(v_i^k) \cdot v_i^k - v_i^{l + 1} \cdot R_i(v_i^{l + 1}) + v_i^{k} \cdot R_i(v_i^{k + 1}) \\
    & = v_i^{k} \cdot R_i(v_i^{k}) - v_i^{l + 1} \cdot R_i(v_i^{l + 1}).
  \end{align*}
   Note that $v_i^{k}$, $R_i(v_i^{k})$, $ v_i^{l + 1}$, and $R_i(v_i^{l + 1})$ can all be encoded in $O(\log n)$ bits.
\end{proof}

The following theorem characterizes the optimal mechanism in this single-dimensional setting.

\begin{theorem}[\cite{Myerson81, CaiDW16}] \label{thm:single}
    The revenue-optimal BIC mechanism awards the item to the bidder with the highest non-negative ironed virtual value (if one exists), breaking ties arbitrarily but consistently across inputs, where ironed virtual value for type $v_i^k$: $\bar{\varphi}^{D_i}_i(v_i^k)$ is the weighted average virtual value over an interval containing $v_i^k$.\footnote{See~\cite{CaiDW16} for explicit expressions of ironed virtual values. Here we only need to know that an ironed virtual value is the weighted average over an interval of virtual values.}  If no such bidder exists, the item remains unallocated.
\end{theorem}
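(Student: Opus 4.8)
The plan is to instantiate the~\cite{CaiDW16} Lagrangian-duality machinery in the single-dimensional setting, exactly as Definitions~\ref{def:flow}--\ref{def:witness} and Theorem~\ref{thm:CDW} do for the two-day FedEx setting, but with the ``interest'' coordinate removed (equivalently, restricting to types of interest day1, so that the $\alpha$-multipliers disappear and only the adjacent downward IC constraints $\pi_i(v_i^k)v_i^k - p_i(v_i^k) \ge \pi_i(v_i^{k-1})v_i^k - p_i(v_i^{k-1})$ carry multipliers $\lambda_i(k)$). First I would observe that the flow condition (all payment variables receive coefficient zero in the Lagrangian, cf.\ the intuition after Definition~\ref{def:flow}) forces $\lambda_i(k) - \lambda_i(k+1) = f_i(v_i^k)$, hence $\lambda_i(k) = R_i(v_i^k)$; call this the \emph{canonical (Myerson) flow}. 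A one-line algebraic manipulation, the analogue of Observation~\ref{obs:vv}, then rewrites the relaxed objective as $\sum_i \sum_k f_i(v_i^k)\,\pi_i(v_i^k)\,\varphi^{D_i}(v_i^k)$, so the pointwise-optimal allocation for this relaxation awards the item, on each type profile, to a bidder of highest non-negative virtual value.

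The catch, and the step I expect to be the main obstacle, is \emph{ironing}: when $\varphi^{D_i}(\cdot)$ is non-monotone, the allocation rule just described need not be monotone, so it has no BIC payment rule and does not certify anything via Theorem~\ref{thm:CDW}. I would handle this exactly as in the classical argument: define, for each bidder, the cumulative virtual-surplus function and its lower convex hull; on each maximal interval $[k,l]$ where these differ, replace $\lambda_i$ on that interval by the value that makes $\Phi$ constant there, which is precisely the weighted-average virtual value $\bar\varphi^{D_i}_i$ of Lemma~\ref{lem:averagevv} and Theorem~\ref{thm:single}. This modified set of multipliers is still a flow (the defining telescoping identity is preserved, since only interior values of $\lambda_i$ on the interval change and the endpoints are untouched), and by construction $\bar\varphi^{D_i}_i$ is monotone non-decreasing.

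With the ironed flow in hand, the remaining work is to exhibit a BIC mechanism that \emph{witnesses optimality} for it in the sense of Definition~\ref{def:witness} (the single-dimensional version: bullet two is vacuous since $\alpha \equiv 0$). Take the mechanism that awards the item to a bidder of highest non-negative ironed virtual value, breaking ties arbitrarily but \emph{consistently across all type profiles}, with payments from the payment identity. Monotonicity of $\bar\varphi^{D_i}_i$ plus consistent tie-breaking makes the interim allocation $\pi_i$ monotone, so the payment identity yields a legitimate (BIC) payment rule — this discharges bullet one and the BIC requirement. For bullet three I would argue it separately on ironed and non-ironed intervals: off the ironed intervals $\bar\varphi^{D_i}_i = \varphi^{D_i}$, so highest-ironed-virtual-value equals highest-virtual-value; on an ironed interval $[k,l]$, complementary slackness with the modified $\lambda_i$ demands that $\pi_i$ be constant across $v_i^k,\dots,v_i^l$, and indeed it is, because $\bar\varphi^{D_i}_i$ takes a single value there and tie-breaking is consistent, so the (interim) allocation on those types does not change — and the pointwise allocation still selects a bidder of highest ironed virtual value. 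Invoking Theorem~\ref{thm:CDW} (its single-dimensional specialization) then certifies that this mechanism is revenue-optimal and that every optimal mechanism witnesses optimality for the ironed flow, which is the content of Theorem~\ref{thm:single}. The delicate point throughout is the bookkeeping on ironed intervals — ensuring the convex-hull construction yields a genuine flow and that ``constant on the interval'' is exactly what complementary slackness requires and exactly what consistent tie-breaking delivers.
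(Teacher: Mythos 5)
First, a framing note: the paper does not prove Theorem~\ref{thm:single}. It is stated as a cited result of~\cite{Myerson81, CaiDW16} and used as a black box in Appendix~\ref{app:single}, so there is no ``paper's own proof'' to compare against. Your sketch is therefore judged on its own merits as a reconstruction of the single-dimensional specialization of the~\cite{CaiDW16} argument.

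There is a genuine gap at the ironing step, and it is exactly where you flagged the difficulty. You first (correctly) observe that, if Lagrangian multipliers are placed only on adjacent downward IC constraints, the flow conservation law at each node is $\lambda_i(k) - \lambda_i(k+1) = f_i(v_i^k)$, which \emph{uniquely} forces $\lambda_i(k) = R_i(v_i^k)$. But you then propose to ``replace $\lambda_i$ on [an ironed interval] by the value that makes $\Phi$ constant there'' and assert that the result ``is still a flow (the defining telescoping identity is preserved, since only interior values of $\lambda_i$ on the interval change and the endpoints are untouched).'' This is incorrect: the flow condition is not a single telescoped identity but a \emph{per-node} conservation constraint, one equation for each $k$. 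If $\lambda_i(k)$ changes at an interior node while $\lambda_i(k-1)$ and $\lambda_i(k+1)$ do not, then the conservation equations at nodes $k-1$ and $k$ both fail. In the adjacent-only model you set up, there is literally zero freedom in $\lambda$, so there is no room to iron by modifying $\lambda$ at all.

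The fix --- and this is what~\cite{CaiDW16} actually does --- is to allow flow on non-adjacent edges: for an ironed interval one reroutes some flow past intermediate types, which changes the amount crossing each adjacent edge and is precisely what makes the induced $\Phi$ constant on the interval while keeping the per-node conservation satisfied. Your restricted framework (which faithfully mirrors Definition~\ref{def:flow} of the present paper, where the only non-adjacent flexibility comes from the inter-day $\alpha$ multipliers, absent here) cannot express this. Once the flow is fixed to the canonical one and $\varphi^{D_i}$ is non-monotone, no BIC mechanism is pointwise optimal for the relaxation, so the duality gap does not close and Theorem~\ref{thm:CDW} gives you nothing; complementary slackness alone cannot rescue a non-optimal primal point. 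To complete your proof you would either need to re-derive the non-adjacent flow machinery of~\cite{CaiDW16}, or abandon the duality route on ironed intervals and invoke Myerson's classical convex-hull/ironing argument directly. The rest of the sketch --- canonical flow giving Myerson virtual values, the Observation~\ref{obs:vv}-style rewriting, and the complementary-slackness bookkeeping via consistent tie-breaking --- is sound.
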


Finally with Lemma~\ref{lem:averagevv} and Theorem~\ref{thm:single} we conclude the following theorem, which implies that the optimal single-dimensional auction is local. 

\begin{theorem}
    When each $D_i$ is single-dimensional, supported on $n$ valuations, and each valuation has an integer value between $0$ and $\poly(n)$ for each outcome, and the probability of each valuation is an integer multiple of $1/\poly(n)$, then $O(\log n)$ bits from each $D_i$ suffice to compute $\opt(\vec{v},\vec{D})$.
\end{theorem}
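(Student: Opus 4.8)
The plan is to exhibit a two-round communication protocol in which each bidder reveals only $O(\log n)$ bits of information about its distribution, and then argue that these bits already determine a complete outcome (allocation \emph{and} payments) of some revenue-optimal mechanism. This combines Lemma~\ref{lem:averagevv} with Theorem~\ref{thm:single}, plus Myerson's payment identity for the payments.

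\textbf{Round one --- the winner.} I would start from Theorem~\ref{thm:single}: a revenue-optimal BIC mechanism awards the item to the bidder $i^*$ with the largest non-negative ironed virtual value $\bar{\varphi}^{D_i}_i(v_i)$, using a tie-breaking rule fixed in advance that depends only on the inputs, and to no one if every $\bar{\varphi}^{D_i}_i(v_i) < 0$. The key observation is that $\bar{\varphi}^{D_i}_i(v_i)$ is, by construction, the weighted average of the raw virtual values $\varphi^{D_i}(\cdot)$ over the ironed interval $[k,\ell]$ of indices containing $v_i$'s index, and that interval is a function of $D_i$ alone, so bidder $i$ knows it. Hence Lemma~\ref{lem:averagevv} applies directly and each bidder can encode $\bar{\varphi}^{D_i}_i(v_i)$ in $O(\log n)$ bits. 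In the first round every bidder broadcasts this $O(\log n)$-bit number, and from these numbers the identity of $i^*$ (or the ``no allocation'' outcome) is determined by everyone.

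\textbf{Round two --- the payments.} Every loser pays $0$, so only the winner's payment remains. Here I would invoke Myerson's payment identity: for the deterministic threshold allocation induced by the fixed tie-breaking rule, the winner $i^*$ pays her \emph{critical value}, namely the smallest support value $v^*$ of $D_{i^*}$ such that reporting $v^*$ would still have won against the realized reports of the others. Writing $\tau := \max\{0, \max_{i \neq i^*} \bar{\varphi}^{D_i}_i(v_i)\}$, which is already known from round one and is itself an $O(\log n)$-bit quantity, bidder $i^*$ wins with reported support value $u$ exactly when $\bar{\varphi}^{D_{i^*}}_{i^*}(u) \ge \tau$ (resolving the equality case with the same fixed tie-breaking rule). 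Since $\bar{\varphi}^{D_{i^*}}_{i^*}$ is monotone non-decreasing, this threshold is attained at one of the $n$ support points of $D_{i^*}$, which bidder $i^*$ computes locally. In round two bidder $i^*$ broadcasts $v^*$; as every support value is an integer in $[0,\poly(n)]$ this costs $O(\log n)$ bits, and $v^*$ is exactly the payment charged. Altogether each bidder has revealed $O(\log n)$ bits about its $D_i$, versus the $\tilde{\Omega}(n)$ bits needed merely to write down $D_i$.

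\textbf{Where the work is.} The steps invoking Lemma~\ref{lem:averagevv} and Theorem~\ref{thm:single} are essentially bookkeeping; the one place that needs care is the interaction with ironing. Inside an ironed (pooled) interval all types receive the same interim allocation, so I would need to check (i) that the critical value is genuinely attained at a support point rather than strictly between two consecutive support values, and (ii) that the boundary case $\bar{\varphi}^{D_{i^*}}_{i^*}(u) = \tau$ is resolved by \emph{exactly the same} input-determined tie-breaking rule used in round one, since otherwise the broadcast allocation and the broadcast payment might fail to be jointly consistent with any single optimal mechanism. Both points follow once the tie-breaking rule is pinned down in advance (as in Theorem~\ref{thm:single}) and one uses monotonicity of $\bar{\varphi}$ to turn ``$\bar{\varphi}(u) \ge \tau$'' into a threshold condition on $u$; with that in place the protocol is correct and communicates $O(\log n)$ bits per distribution, which is the theorem.
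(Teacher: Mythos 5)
Your proof is correct and follows the same route as the paper: round one broadcasts each bidder's ironed virtual value in $O(\log n)$ bits (via Lemma~\ref{lem:averagevv} and Theorem~\ref{thm:single}) to identify the winner, and round two has the winner announce her critical value, which is exactly the ``one additional query to $(\bar{\varphi}^{D_{i^*}}_{i^*})^{-1}$'' the paper uses to determine the payment. The two points you flag about ironing and tie-breaking are the right ones to check, and both are resolved by the fixed, input-determined tie-breaking rule posited in Theorem~\ref{thm:single} together with monotonicity of $\bar{\varphi}$ on the discrete support.
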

\section{Omitted Proofs from Section~\ref{sec:reduction}}\label{app:reduction}

In order to distinguish different helpers defined in the constructions and improve readability, we will use the following notations throughout Appendices: 
\begin{itemize}
  \item We use $c^k$ to represent type $(v_1^k, 1)$. Notation $z^c_k$ represents the helper $z_k$ for Bidder One on day1.  
  \item We use $d^k$ to represent type $(v_1^k, 2)$. Notation $z^d_k$ represents the helper $z_k$ for Bidder One on day2. 
  \item We use $e^k$ to represent type $(v_2^k, 1)$. Notation $z^e_k$ represents the helper $z_k$ for Bidder Two on day1. 
\end{itemize}
Also we will use $f(\cdot)$ instead of $f_i(\cdot)$, and $\Phi(\cdot)$ instead of $\Phi_i(\cdot)$ in all proofs when there is no ambiguity.

\subsection{Bidder 1 day 1}

Lemmas below provide some useful properties of the helpers $z^c$s, at the end of this subsection, we will use these to prove the distribution we constructed is valid and nearly-uniform.

The following two lemmas bound the gap between two consecutive $z^cs$. 

\begin{lemma} \label{zcgap}
  $z^c_{i + 1} - z^c_{i + 2} \le \frac{n^3}{(n - i + 2)(n - i + 1)}$ for all $i \in [1, n]$.
\end{lemma}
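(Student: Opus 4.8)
\textbf{Proof plan for Lemma~\ref{zcgap}.} The statement is purely a computation about the recursively-defined helpers $z^c_k$, so the plan is to unfold the recursion into a closed-form expression for the difference of two consecutive helpers and then apply a single floor inequality. First I would introduce the shorthand $S_k := \sum_{j=1}^k f(c^j)\cdot 2b$ for the scaled partial mass, so that by definition of Bidder One's day1 distribution $z^c_{k+1} = \frac{b - S_k}{\,n-k+2\,}$ for all $k \in [1,n+1]$ (the case $k=n+1$ matching the separate definition of $z^c_{n+2}$ since then $n-k+2=1$), and $S_{k+1} = S_k + \bigl\lfloor z^c_{k+1} + \frac{n^3}{n-k+2}\bigr\rfloor$ for all $k\in[1,n]$.

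Fix $i\in[1,n]$ and abbreviate $m := n-i+2$ (so $n-i+1 = m-1$) and $w := z^c_{i+1} = \frac{b-S_i}{m}$. Then the recursion at $k=i$ gives $S_{i+1} = S_i + \lfloor w + n^3/m\rfloor$, hence $b - S_{i+1} = (b-S_i) - \lfloor w + n^3/m\rfloor = m w - \lfloor w + n^3/m\rfloor$, and therefore
\[
z^c_{i+2} \;=\; \frac{b - S_{i+1}}{m-1} \;=\; \frac{m w - \lfloor w + n^3/m\rfloor}{m-1}.
\]
Subtracting, the $m w$ and $(m-1)w$ terms telescope:
\[
z^c_{i+1} - z^c_{i+2} \;=\; w - \frac{m w - \lfloor w + n^3/m\rfloor}{m-1} \;=\; \frac{\lfloor w + n^3/m\rfloor - w}{m-1}.
\]
Now apply $\lfloor w + n^3/m\rfloor \le w + n^3/m$ to get $z^c_{i+1} - z^c_{i+2} \le \frac{n^3/m}{m-1} = \frac{n^3}{m(m-1)} = \frac{n^3}{(n-i+2)(n-i+1)}$, which is exactly the claimed bound.

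I do not expect any genuine obstacle here: the only subtlety is bookkeeping the index shifts (making sure the closed form $z^c_{k+1} = (b-S_k)/(n-k+2)$ is valid through $k=n+1$ and that the recursion $S_{i+1}=S_i+\lfloor\cdots\rfloor$ is exactly the $k=i$ instance used). Everything else is the one-line telescoping computation above together with $\lfloor x\rfloor\le x$.
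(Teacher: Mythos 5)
Your proof is correct and is, in substance, the same computation the paper carries out: both derive $z^c_{i+1} - z^c_{i+2} = \frac{\lfloor z^c_{i+1} + n^3/(n-i+2)\rfloor - z^c_{i+1}}{\,n-i+1\,}$ by substituting the recursion for the $(i+1)$-st term into the definition of $z^c_{i+2}$ and then apply $\lfloor x\rfloor \le x$. The only difference is cosmetic bookkeeping via your $S_k$, $m$, $w$ shorthand versus the paper's direct manipulation of the sums.
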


\begin{proof}
\begin{align*}
        z^c_{i + 2} & = \frac{b - \sum_{j = 1}^{i + 1} f(c^j) \cdot 2b} {n - i + 1} \\
        & = \frac{b - f(c^{i + 1})\cdot 2b - \sum_{j = 1}^{i} f(c^j) \cdot 2b} {n - i + 1} \\
        & = \frac{z^c_{i + 1} \cdot (n - i + 2) - f(c^{i+1}) \cdot 2b}{n - i + 1} \\
        & = z^c_{i+1} + \frac{z^c_{i+1} - f(c^{i+1}) \cdot 2b}{n - i + 1} \\
        & = z^c_{i+1} + \frac{z^c_{i+1} -  \left \lfloor  z^c_{i + 1} + \frac{n^3}{n - i + 2}   \right \rfloor}{n - i + 1} \\
        & \ge z^c_{i+1} + \frac{z^c_{i+1} -  \left (  z^c_{i + 1} + \frac{n^3}{n - i + 2}  \right )}{n - i + 1} \\
        & = z^c_{i+1} - \frac{n^3}{(n - i + 2)(n - i + 1)}.
\end{align*}
\end{proof}

\begin{lemma}\label{zcmonotone}
$z^c_{i + 2} < z^c_{i + 1}$ for all $i \in [1, n]$.
\end{lemma}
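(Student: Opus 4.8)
The plan is to reuse the computation already carried out in the proof of Lemma~\ref{zcgap}. Recall from the definition of $z^c_{i+2}$ and the construction of $f(c^{i+1})$ that, unwinding exactly as in that proof,
\[
z^c_{i+2} = z^c_{i+1} + \frac{z^c_{i+1} - f(c^{i+1})\cdot 2b}{n-i+1} = z^c_{i+1} + \frac{z^c_{i+1} - \left\lfloor z^c_{i+1} + \frac{n^3}{n-i+2}\right\rfloor}{n-i+1}.
\]
So it suffices to show the fraction on the right is strictly negative for all $i \in [1,n]$, since then $z^c_{i+2} < z^c_{i+1}$ follows immediately.

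First I would check the denominator: for $i \in [1,n]$ we have $n-i+1 \geq 1 > 0$, so the sign of the fraction is the sign of the numerator $z^c_{i+1} - \left\lfloor z^c_{i+1} + \frac{n^3}{n-i+2}\right\rfloor$. Next I would bound the floor from below: using $\lfloor x \rfloor > x-1$,
\[
\left\lfloor z^c_{i+1} + \frac{n^3}{n-i+2}\right\rfloor > z^c_{i+1} + \frac{n^3}{n-i+2} - 1.
\]
Since $i \in [1,n]$ gives $n-i+2 \leq n+1$, and $n^3 > n+1$ for all $n \geq 2$ (the regime of interest, as the construction requires $n$ to be a sufficiently large polynomial parameter), we get $\frac{n^3}{n-i+2} \geq \frac{n^3}{n+1} > 1$. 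Hence the displayed lower bound exceeds $z^c_{i+1}$, so the numerator is strictly negative, and therefore $z^c_{i+2} < z^c_{i+1}$.

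There is no real obstacle here: the statement is an immediate corollary of the algebra in Lemma~\ref{zcgap} together with the observation that the additive term $\frac{n^3}{n-i+2}$ inside the floor is strictly larger than $1$, which forces the floor operation to round up past $z^c_{i+1}$. The only points requiring a moment's care are confirming $n-i+1 > 0$ over the stated index range and confirming $\frac{n^3}{n-i+2} > 1$ (equivalently $n^3 > n-i+2$), both of which are trivial in the intended parameter regime.
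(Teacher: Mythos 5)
Your proposal is correct and matches the paper's own proof almost step for step: both expand $z^c_{i+2}$ to $z^c_{i+1} + \frac{z^c_{i+1} - \lfloor z^c_{i+1} + n^3/(n-i+2)\rfloor}{n-i+1}$, then use $\lfloor x\rfloor > x-1$ together with $\frac{n^3}{n-i+2} > 1$ to conclude the added term is strictly negative. You simply make the final positivity observation $\frac{n^3}{n-i+2} > 1$ explicit rather than leaving it implicit in the last inequality, which the paper does.
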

\begin{proof}
By definition, we have
\begin{align*}
        z^c_{i + 2} & = \frac{b - \sum_{j = 1}^{i + 1} f(c^j) \cdot 2b} {n - i + 1} \\
        & = \frac{b - f(c^{i + 1})\cdot 2b - \sum_{j = 1}^{i} f(c^j) \cdot 2b} {n - i + 1} \\
        & = \frac{z^c_{i + 1} \cdot (n - i + 2) - f(c^{i+1}) \cdot 2b}{n - i + 1} \\
        & = z^c_{i+1} + \frac{z^c_{i+1} - f(c^{i+1}) \cdot 2b}{n - i + 1} \\
        & = z^c_{i+1} + \frac{z^c_{i+1} -  \left \lfloor  z^c_{i + 1} + \frac{n^3}{n - i + 2}   \right \rfloor}{n - i + 1} \\
        & < z^c_{i+1} + \frac{z^c_{i+1} -  \left (  z^c_{i + 1} + \frac{n^3}{n - i + 2} - 1  \right )}{n - i + 1} \\
        & = z^c_{i+1} - \frac {\frac{n^3}{n - i + 2} - 1 }{n - i + 1} \\
        & < z^c_{i+1}.
\end{align*}
\end{proof}

The following lemma bounds the range of $z^c$.

\begin{lemma}\label{zcrange}
$z^c_{i+1} \in [a - n^3, a]$, for all $i \in [1, n + 1]$.
\end{lemma}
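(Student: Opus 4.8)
The statement to prove is Lemma~\ref{zcrange}: $z^c_{i+1} \in [a - n^3, a]$ for all $i \in [1, n+1]$. My plan is to prove this by induction on $i$, using the recursive definition of the $z^c$'s together with the monotonicity and gap bounds just established (Lemmas~\ref{zcmonotone} and~\ref{zcgap}). The base case is $i=1$: by definition $z^c_2 = \frac{b - f_1((v_1^1,1))\cdot 2b}{n+1} = \frac{b - \frac{b}{10n}}{n+1}$, which I would compare directly against $a = \frac{b - n^5}{n+1}$. Since $\frac{b}{10n} = \frac{10n^6}{10n} = n^5$, in fact $z^c_2 = a$ exactly, giving the base case comfortably (it sits at the upper endpoint).

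**Inductive step.** Assume $z^c_{i+1} \in [a-n^3, a]$; I want $z^c_{i+2} \in [a-n^3,a]$. The upper bound is immediate from Lemma~\ref{zcmonotone}: $z^c_{i+2} < z^c_{i+1} \le a$. For the lower bound, I would chain the telescoping identity from the proof of Lemma~\ref{zcgap}, namely $z^c_{j+1} - z^c_{j+2} \le \frac{n^3}{(n-j+2)(n-j+1)}$, summing from $j=1$ to $j=i$ to get
\[
z^c_2 - z^c_{i+2} \;=\; \sum_{j=1}^{i} (z^c_{j+1} - z^c_{j+2}) \;\le\; \sum_{j=1}^{i} \frac{n^3}{(n-j+2)(n-j+1)} \;=\; n^3 \sum_{j=1}^{i}\left(\frac{1}{n-j+1} - \frac{1}{n-j+2}\right),
\]
which telescopes to $n^3\left(\frac{1}{n-i+1} - \frac{1}{n+1}\right) < n^3 \cdot \frac{1}{n-i+1} \le n^3$ (using $i \le n+1$, though one should be slightly careful at the boundary $i = n+1$ where $n-i+1 = 0$ — I'd handle the index range honestly, noting the lemma as stated runs $i\in[1,n+1]$ and the sum for $z^c_{i+2}$ only makes sense when the relevant $z$'s are defined). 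Combined with $z^c_2 = a$, this yields $z^c_{i+2} \ge z^c_2 - n^3 = a - n^3$, closing the induction.

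**Anticipated obstacle.** The main subtlety is bookkeeping at the endpoints of the index range: the $z^c_k$'s are only defined for $k$ up to $n+2$, so I must make sure the claim "for all $i\in[1,n+1]$" lines up with which $z^c_{i+1}$ actually appear in the construction, and that the telescoping sum's last term $\frac{1}{n-i+1}$ stays well-defined (it degenerates exactly when $i = n+1$, i.e. for $z^c_{n+2}$, which is defined separately in step~3 of the construction rather than via the helper formula). I expect the cleanest route is to prove the bound for $z^c_{i+1}$ with $i \le n$ via the induction above using the recursion/floor bounds, and then, if $z^c_{n+2}$ needs separate treatment, observe it equals $b - \sum_{j=1}^{n+1}f_1((v_1^j,1))\cdot 2b$ and bound it using Lemma~\ref{lem:subprobs11} (total mass $1/2$) together with the near-uniformity already being built up. Everything else is routine arithmetic with $a$, $b$, and $n$; the only real work is being careful that the floor operations in the definition of $f_1((v_1^{k+1},1))$ only ever lose at most $1$ per step, which is exactly what Lemmas~\ref{zcgap} and~\ref{zcmonotone} have already packaged.
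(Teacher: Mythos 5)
Your proof is correct and takes essentially the same route as the paper: monotonicity (Lemma~\ref{zcmonotone}) handles the upper bound, and a telescoping sum of the per-step gaps gives the lower bound $z^c_{n+2} > a - n^3$. The paper simply uses monotonicity to reduce everything to bounding $z^c_{n+2}$ directly rather than phrasing it as an induction, and it re-derives the gap bound inline instead of citing Lemma~\ref{zcgap}; also note your worry about $z^c_{n+2}$ needing separate treatment is unfounded, since the step-3 formula for $z_{n+2}$ coincides with the generic helper formula at $k=n+1$ and Lemma~\ref{zcgap} already covers $i=n$, so the telescoping sum from $j=1$ to $n$ reaches it with the final term $\tfrac{1}{n-n+1}=1$ (no degenerate denominator arises).
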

\begin{proof}
 Since by Lemma \ref{zcmonotone} $z^c$ is monotone decreasing, and by definition $z^c_2 = a$, we only need to bound $z^c_{n+2}$. Again by the fact that $z^c_2 = a$ we have
\begin{align*}
    z^c_{n + 2} & = z^c_2 + \sum_{i = 1}^{n} (z^c_{i + 2} - z^c_{i+1})  \\
    & = z^c_2 +  \sum_{i = 1}^{n} \frac{z^c_{i+1} \cdot (n-i+2) - f(c^{i+1}) \cdot 2b}{n - i + 1} - z^c_{i+1} \\
    & = z^c_2 +  \sum_{i = 1}^{n} \frac{z^c_{i+1} - f(c^{i+1}) \cdot 2b}{n - i + 1} \\
    & = z^c_2 + \sum_{i = 1}^n \frac{z^c_{i + 1} - \left \lfloor  z^c_{i + 1} + \frac{n^3}{n - i + 2}   \right \rfloor}{n-i+1} \\
    & \ge z^c_2 + \sum_{i = 1}^n \frac{z^c_{i + 1} - \left ( z^c_{i + 1} + \frac{n^3}{n - i + 2}   \right )}{n-i+1} \\
    & = z^c_2 - \sum_{i = 1}^n \frac{n^3} {(n-i+1)(n-i+2)} \\
    & = z^c_2 -  n^3\sum_{i = 1}^n\frac1 {(n-i+1)(n-i+2)} \\
    &  = z^c_2 - n^3 \left (\frac{1}{(n + 1)n} + \frac{1}{n(n - 1)} + \ldots + \frac{1}{1\cdot2} \right) \\
    &  = z^c_2 - n^3 \left [ \left (\frac{1}{n} - \frac{1}{n+1} \right) + \left (\frac{1}{n-1} - \frac{1}{n} \right) + \ldots +  \left(1 - \frac12\right) \right] \\
    & = z^c_2 - n^3 \left(1-\frac1{n+1} \right) \\
    & > a - n^3,
\end{align*}
which concludes the proof.
\end{proof}

The following proof proves Lemma~\ref{lem:subprobs11}, which shows our distribution is valid.

\begin{proof}[Proof of Lemma~\ref{lem:subprobs11}]
\begin{align*}
\sum_{k=1}^{n+2} f(c^k) &= f(c^{n+2}) + \sum_{k=1}^{n+1} f(c^k)  \\
&= \frac{b - \sum_{k=1}^{n+1}  f(c^k) \cdot 2b}{2b} + \sum_{k=1}^{n+1} f(c^k) \\
&= 1/2 - \sum_{k=1}^{n+1} f(c^k) + \sum_{k=1}^{n+1} f(c^k)\\
&=1/2.
\end{align*}
\end{proof}

Finally, utilizing the properties above, we conclude this subsection by proving a key property of our construction. It guarantees that Bidder One's day1 distribution is nearly-uniform over $v_1^2,\ldots, v_1^{n+2}$.

\begin{proof}[Proof of Lemma~\ref{crange}]
By the definition of $f(c^{i + 1})$ and Lemma \ref{zcrange} and \ref{zcmonotone}  we have
\begin{align*}
    f(c^{i+1}) \cdot 2b & = \left \lfloor  z^c_{i + 1} + \frac{n^3}{n - i + 2}   \right \rfloor \\
    & > z^c_{i + 1} + \frac{n^3}{n - i + 2} - 1 \\
    & > z^c_{i + 1} - 1 \\
    & > a - n^3 - 1.
\end{align*}
On the other hand, we have
\begin{align*}
        f(c^{i+1}) \cdot 2b & = \left \lfloor  z^c_{i + 1} + \frac{n^3}{n - i + 2}   \right \rfloor \\
    & \le z^c_{i + 1} + \frac{n^3}{n - i + 2} \\
    & \le z^c_{i + 1} + n^3 \\
    & \le a + n^3,
\end{align*}
which concludes the lemma.

\end{proof}

\subsection{Bidder 1 Day 2}

Lemmas below provide some useful properties of the helpers $z^d$s, at the end of this subsection, we will use these to prove the distribution we constructed is valid and nearly-uniform.

The following lemma shows the relationship between $f(d^{i+1})$ and $z^d_{i+2}$.
\begin{lemma} \label{fzd}
 $f(d^{i + 1}) \cdot 2b \ge z^d_{i + 2}$ for all $i \in [1, n]$.
\end{lemma}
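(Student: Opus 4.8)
The plan is to reduce the inequality to the monotonicity of the helper sequence $z^d$, mirroring the day1 analysis already carried out in Lemmas~\ref{zcgap} and~\ref{zcmonotone}. Unfolding the definition $z^d_{i+2} = \frac{b - \sum_{j=1}^{i+1} f(d^j)\cdot 2b}{n-i+1}$ and substituting $b - \sum_{j=1}^{i} f(d^j)\cdot 2b = z^d_{i+1}\cdot(n-i+2)$ yields the identity
\[ z^d_{i+2} = z^d_{i+1} + \frac{z^d_{i+1} - f(d^{i+1})\cdot 2b}{n-i+1}, \qquad\text{equivalently}\qquad f(d^{i+1})\cdot 2b = z^d_{i+1}(n-i+2) - z^d_{i+2}(n-i+1). \]
From the second form, $f(d^{i+1})\cdot 2b - z^d_{i+2} = (n-i+2)\bigl(z^d_{i+1} - z^d_{i+2}\bigr)$, so the claim is equivalent to $z^d_{i+1} \ge z^d_{i+2}$ (using $n-i+2>0$).

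The first step is to establish $f(d^{i+1})\cdot 2b \ge z^d_{i+1}$ directly from the definition of Bidder One's day2 distribution, handling the case split on $x_i$ (this is the only feature distinguishing day2 from day1). If $x_i = 1$, then $f(d^{i+1})\cdot 2b = \lceil z^d_{i+1}\rceil \ge z^d_{i+1}$. If $x_i = 0$, then $f(d^{i+1})\cdot 2b = \lfloor z^d_{i+1} + \frac{n^3}{n-i+2}\rfloor > z^d_{i+1} + \frac{n^3}{n-i+2} - 1 \ge z^d_{i+1}$, where the last step uses $\frac{n^3}{n-i+2}\ge 1$, valid since $n-i+2\le n+1$ for $i\in[1,n]$.

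The second step plugs $f(d^{i+1})\cdot 2b \ge z^d_{i+1}$ into the first displayed identity: the fraction on the right then has non-positive numerator and positive denominator, so $z^d_{i+2}\le z^d_{i+1}$. Combining the two steps, $f(d^{i+1})\cdot 2b \ge z^d_{i+1} \ge z^d_{i+2}$, which is exactly the lemma.

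I do not anticipate a genuine obstacle; the only care needed is in the $x_i=0$ branch, where the floor is controlled by the crude estimate $\lfloor y\rfloor > y-1$ together with $\frac{n^3}{n-i+2}\ge 1$, which is precisely the role of the $\frac{n^3}{n-i+2}$ offset baked into the construction.
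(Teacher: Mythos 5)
Your proposal is correct and takes essentially the same approach as the paper: both establish the key inequality $f(d^{i+1})\cdot 2b \ge z^d_{i+1}$ by the case split on $x_i$, and then close via the recurrence $z^d_{i+2} = \frac{z^d_{i+1}(n-i+2) - f(d^{i+1})\cdot 2b}{n-i+1}$ (the paper substitutes the bound directly into this expression, while you route through the monotonicity $z^d_{i+2} \le z^d_{i+1}$, a purely cosmetic reorganization).
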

\begin{proof}
   If $x_i = 1$, recall that $f(d^{i+1}) = \left \lceil z^d_{i+1} \right \rceil /2b$, in the other case, by definition we know $f(d^{i+1}) = \left \lfloor z^d_{i+1} + \frac{n^3}{n - i + 2} \right \rfloor /2b$. In both cases it is clear that $f(d^{i + 1}) \cdot 2b \ge z^d_{i+1}$, thus we have
  \begin{align*}
        z^d_{i + 2} & = \frac{b - \sum_{j = 1}^{i + 1} f(d^j) \cdot 2b} {n - i + 1} \\
        & = \frac{b - f(d^{i + 1})\cdot 2b - \sum_{j = 1}^{i} f(d^j) \cdot 2b} {n - i + 1} \\
        & = \frac{z^d_{i + 1} \cdot (n - i + 2) - f(d^{i+1}) \cdot 2b}{n - i + 1} \\
        & \le \frac{f(d^{i + 1}) \cdot 2b \cdot (n - i + 2) - f(d^{i+1}) \cdot 2b}{n - i + 1} \\
        & = f(d^{i + 1}) \cdot 2b.
  \end{align*}
\end{proof}

The following lemmas bound the gap between two consecutive $z^ds$.

\begin{lemma} \label{zd1gap}
  If $x_i = 0$, $z^d_{i + 1} - z^d_{i + 2} \le \frac{n^3}{(n - i + 2)(n - i + 1)}$.
\end{lemma}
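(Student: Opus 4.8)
The plan is to mimic, essentially verbatim, the computation in the proof of Lemma~\ref{zcgap}. The key observation is that in the case $x_i = 0$, the defining formula for Bidder One's day2 probability, $f(d^{i+1}) = \lfloor z^d_{i+1} + \tfrac{n^3}{n-i+2}\rfloor / 2b$, is syntactically identical to the formula for $f(c^{i+1})$, so the same telescoping manipulation and the same floor bound apply.

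Concretely, I would first expand $z^d_{i+2}$ from its definition and peel off the $(i+1)$-st summand, exactly as in Lemma~\ref{zcgap}, to get
\[
z^d_{i+2} = \frac{z^d_{i+1}\cdot(n-i+2) - f(d^{i+1})\cdot 2b}{n-i+1} = z^d_{i+1} + \frac{z^d_{i+1} - f(d^{i+1})\cdot 2b}{n-i+1}.
\]
Then I would substitute $f(d^{i+1})\cdot 2b = \lfloor z^d_{i+1} + \tfrac{n^3}{n-i+2}\rfloor$ (using the hypothesis $x_i = 0$) and apply the trivial bound $\lfloor z^d_{i+1} + \tfrac{n^3}{n-i+2}\rfloor \le z^d_{i+1} + \tfrac{n^3}{n-i+2}$. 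Since $n-i+1 > 0$ for $i \in [1,n]$, this yields
\[
z^d_{i+2} \ge z^d_{i+1} + \frac{z^d_{i+1} - \left(z^d_{i+1} + \tfrac{n^3}{n-i+2}\right)}{n-i+1} = z^d_{i+1} - \frac{n^3}{(n-i+2)(n-i+1)},
\]
and rearranging gives the claimed bound $z^d_{i+1} - z^d_{i+2} \le \tfrac{n^3}{(n-i+2)(n-i+1)}$.

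I do not expect any real obstacle: the proof is a routine adaptation of an earlier lemma. The only points requiring a moment's care are (i) invoking the correct branch of the definition of Bidder One's day2 distribution, namely the $x_i = 0$ case (the $x_i = 1$ branch is governed by a separate lemma and a different estimate), and (ii) noting that $n-i+1$ is positive over the relevant range so that dividing by it preserves the direction of the inequality.
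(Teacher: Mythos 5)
Your proposal is correct and matches the paper's own proof essentially verbatim: both expand $z^d_{i+2}$ via the telescoping identity, substitute the $x_i = 0$ branch $f(d^{i+1})\cdot 2b = \lfloor z^d_{i+1} + \tfrac{n^3}{n-i+2}\rfloor$, and apply the floor bound to conclude. Nothing further to add.
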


\begin{proof}
\begin{align*}
        z^d_{i + 2} & = \frac{b - \sum_{j = 1}^{i + 1} f(d^j) \cdot 2b} {n - i + 1} \\
        & = \frac{b - f(d^{i + 1})\cdot 2b - \sum_{j = 1}^{i} f(d^j) \cdot 2b} {n - i + 1} \\
        & = \frac{z^d_{i + 1} \cdot (n - i + 2) - f(d^{i+1}) \cdot 2b}{n - i + 1} \\
        & = z^d_{i+1} + \frac{z^d_{i+1} - f(d^{i+1}) \cdot 2b}{n - i + 1} \\
        & = z^d_{i+1} + \frac{z^d_{i+1} -  \left \lfloor  z^d_{i + 1} + \frac{n^3}{n - i + 2}   \right \rfloor}{n - i + 1} \\
        & \ge z^d_{i+1} + \frac{z^d_{i+1} -  \left (  z^d_{i + 1} + \frac{n^3}{n - i + 2}  \right )}{n - i + 1} \\
        & = z^d_{i+1} - \frac{n^3}{(n - i + 2)(n - i + 1)}\\
\end{align*}
\end{proof}

\begin{lemma} \label{zd0gap}
  If $x_i = 1$, $z^d_{i + 1} - z^d_{i + 2} \le \frac{1}{n - i + 1}$.
\end{lemma}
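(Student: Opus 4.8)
The plan is to mirror the computation already carried out in the proof of Lemma~\ref{zd1gap}, with the only change being that the rounding operation is now a ceiling rather than a floor-plus-$\frac{n^3}{n-i+2}$. The starting point is the same algebraic unfolding of the recursive definition of $z^d_{i+2}$: using that $z^d_{i+1}\cdot(n-i+2) = b - \sum_{j=1}^{i} f(d^j)\cdot 2b$, one rewrites
\[
z^d_{i+2} = \frac{z^d_{i+1}\cdot(n-i+2) - f(d^{i+1})\cdot 2b}{n-i+1} = z^d_{i+1} + \frac{z^d_{i+1} - f(d^{i+1})\cdot 2b}{n-i+1}.
\]

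Next I would invoke the hypothesis $x_i = 1$, which by definition gives $f(d^{i+1})\cdot 2b = \lceil z^d_{i+1}\rceil$. Substituting, $z^d_{i+2} = z^d_{i+1} + \frac{z^d_{i+1} - \lceil z^d_{i+1}\rceil}{n-i+1}$. Since $\lceil z^d_{i+1}\rceil < z^d_{i+1} + 1$, we have $z^d_{i+1} - \lceil z^d_{i+1}\rceil > -1$, hence
\[
z^d_{i+2} > z^d_{i+1} - \frac{1}{n-i+1},
\]
which rearranges to $z^d_{i+1} - z^d_{i+2} < \frac{1}{n-i+1}$, and in particular $\le \frac{1}{n-i+1}$ as claimed (the degenerate case where $z^d_{i+1}$ is already an integer simply gives $z^d_{i+1} = z^d_{i+2}$, for which the bound holds trivially).

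There is no real obstacle here: the argument is a two-line consequence of the defining recursion plus the elementary bound $\lceil t\rceil < t+1$. The only point requiring a moment of care is making sure the ceiling inequality is applied in the correct direction (it decreases, rather than increases, $z^d_{i+2}$ relative to $z^d_{i+1}$), which is exactly the mirror image of the floor bound used for the $x_i=0$ case in Lemma~\ref{zd1gap}.
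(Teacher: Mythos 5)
Your proposal is correct and follows essentially the same route as the paper: unfold the defining recursion to get $z^d_{i+2} = z^d_{i+1} + \frac{z^d_{i+1} - \lceil z^d_{i+1}\rceil}{n-i+1}$, then apply the elementary ceiling bound $\lceil t\rceil \le t+1$. The paper uses the non-strict form of that bound directly (so the separate remark about the integer case is unnecessary, since $\lceil t \rceil < t+1$ already holds for integers too), but this is a cosmetic difference.
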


\begin{proof}

\begin{align*}
        z^d_{i + 2} & = \frac{b - \sum_{j = 1}^{i + 1} f(d^j) \cdot 2b} {n - i + 1} \\
        & = \frac{b - f(d^{i + 1})\cdot 2b - \sum_{j = 1}^{i} f(d^j) \cdot 2b} {n - i + 1} \\
        & = \frac{z^d_{i + 1} \cdot (n - i + 2) - f(d^{i+1}) \cdot 2b}{n - i + 1} \\
        & = z^d_{i+1} + \frac{z^d_{i+1} - f(d^{i+1}) \cdot 2b}{n - i + 1} \\
        & = z^d_{i+1} + \frac{z^d_{i+1} -  \left \lceil z^d_{i + 1}   \right \rceil}{n - i + 1} \\
        & \ge z^d_{i+1} + \frac{z^d_{i+1} -  \left (  z^d_{i + 1} + 1  \right )}{n - i + 1} \\
        & = z^d_{i+1} - \frac{1}{n - i + 1}\\
\end{align*}
\end{proof}

\begin{lemma} \label{zdmonotone}
$z^d_{i + 2} < z^d_{i + 1}$ for all $i \in [1, n]$.
\end{lemma}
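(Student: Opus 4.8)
The plan is to follow the template already used for Lemma~\ref{zcmonotone}, splitting into the two cases $x_i = 0$ and $x_i = 1$. In both, the starting point is the identity (obtained exactly as in the displayed computation in the proof of Lemma~\ref{zcmonotone})
\[ z^d_{i+2} = z^d_{i+1} + \frac{z^d_{i+1} - f(d^{i+1})\cdot 2b}{n - i + 1}, \]
so that proving $z^d_{i+2} < z^d_{i+1}$ is equivalent to proving $f(d^{i+1})\cdot 2b > z^d_{i+1}$.

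When $x_i = 0$ we have $f(d^{i+1})\cdot 2b = \lfloor z^d_{i+1} + \tfrac{n^3}{n-i+2}\rfloor$, and the argument is verbatim that of Lemma~\ref{zcmonotone}: since $\lfloor y\rfloor > y - 1$ for every real $y$, we get $f(d^{i+1})\cdot 2b > z^d_{i+1} + \tfrac{n^3}{n-i+2} - 1 \ge z^d_{i+1}$, using $n-i+2 \le n+1 < n^3$. When $x_i = 1$ we have $f(d^{i+1})\cdot 2b = \lceil z^d_{i+1}\rceil$, and since $\lceil z\rceil \ge z$ always, the recurrence already yields the non-strict bound $z^d_{i+2}\le z^d_{i+1}$ for free; the strict inequality follows provided $z^d_{i+1}\notin\Z$, since then $\lceil z^d_{i+1}\rceil > z^d_{i+1}$.

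The main obstacle is exactly this last point. Writing $S_i := \sum_{j=1}^{i} f(d^j)\cdot 2b$, every probability is an integer multiple of $1/(2b)$, so $b - S_i$ is an integer and $z^d_{i+1} = (b - S_i)/(n-i+2)$; what must be ruled out is that $n-i+2$ divides $b-S_i$. I would establish this by carrying, as an auxiliary invariant proved by induction on $i$ alongside the construction, a statement pinning down the residue of $b - S_i$ modulo $n-i+2$ and checking it is never $0$: the base case is $b - S_1 = b - n^5 \equiv 11 \pmod{n+1}$ (using $b = 10n^6$ and $b/(10n) = n^5$), and the inductive step tracks how the floor in the $x_i=0$ case, resp.\ the ceiling in the $x_i=1$ case, perturbs the numerator. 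This divisibility bookkeeping, which leans on the exact polynomial choices in the construction, is the only delicate part; everything else mirrors the day-1 lemmas. (If one only needs Lemma~\ref{zdmonotone} through its gap-bound consequences --- e.g.\ to deduce the near-uniformity of Lemma~\ref{drange} from Lemmas~\ref{zd1gap} and~\ref{zd0gap} --- then the weaker conclusion $z^d_{i+2}\le z^d_{i+1}$, which needs no divisibility argument, already suffices.)
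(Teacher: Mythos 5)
Your proof follows the same case split the paper uses (mirroring Lemma~\ref{zcmonotone}), and your $x_i = 0$ branch matches the paper's algebra exactly. Where you differ is in the scrutiny you give the ceiling case. The paper's displayed chain passes from $z^d_{i+1} + \frac{z^d_{i+1} - \lceil z^d_{i+1}\rceil}{n-i+1}$ to $z^d_{i+1}$ via a strict inequality, which amounts to asserting $\lceil z^d_{i+1}\rceil > z^d_{i+1}$ --- true only when $z^d_{i+1}\notin\Z$, a fact the paper never verifies. (You are also right that the paper's proof has the case labels swapped: the ``If $x_i = 1$'' display uses the floor formula from the $x_i = 0$ branch of the construction, and conversely.) Your diagnosis of the gap is therefore correct. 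Of your two suggested repairs, the second is the right one. The divisibility bookkeeping you outline is not actually carried out --- the base-case residue $b - S_1 \equiv 11 \pmod{n+1}$ is computed correctly, but the inductive step is genuinely delicate because the residue of $b - S_i$ modulo $n-i+2$ depends both on which of floor/ceiling was applied at step $i$ (i.e.\ on $x_i$) and on the running residue itself, so you would be doing real number-theoretic work for a claim the paper treats as obvious. By contrast, the non-strict version $z^d_{i+2}\le z^d_{i+1}$ is immediate from $\lceil z\rceil\ge z$ and is all that the downstream lemmas actually use: Lemma~\ref{zdrange} only needs $z^d_{i+1}\le z^d_2 = a$, and Lemmas~\ref{d1bound} and~\ref{d0bound} only use monotonicity to replace $z^d_{i+2}$ by $z^d_{i+1}$ in a numerator, a $\le$-step. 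In short: same approach as the paper, but you have found a real (if harmless) slip in the paper's proof, and downgrading to the weak inequality is the cleanest fix; you should not pursue the number-theoretic route.
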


\begin{proof}
  If $x_i = 1$, then we have
\begin{align*}
        z^d_{i + 2} & = \frac{b - \sum_{j = 1}^{i + 1} f(d^j) \cdot 2b} {n - i + 1} \\
        & = \frac{b - f(d^{i + 1})\cdot 2b - \sum_{j = 1}^{i} f(d^j) \cdot 2b} {n - i + 1} \\
        & = \frac{z^d_{i + 1} \cdot (n - i + 2) - f(d^{i+1}) \cdot 2b}{n - i + 1} \\
        & = z^d_{i+1} + \frac{z^d_{i+1} - f(d^{i+1}) \cdot 2b}{n - i + 1} \\
        & = z^d_{i+1} + \frac{z^d_{i+1} -  \left \lfloor  z^d_{i + 1} + \frac{n^3}{n - i + 2}   \right \rfloor}{n - i + 1} \\
        & < z^d_{i+1} + \frac{z^d_{i+1} -  \left (  z^d_{i + 1} + \frac{n^3}{n - i + 2} - 1  \right )}{n - i + 1} \\
        & = z^d_{i+1} - \frac {\frac{n^3}{n - i + 2} - 1 }{n - i + 1} \\
        & < z^d_{i+1}.
\end{align*}
Otherwise, we have
\begin{align*}
        z^d_{i + 2} & = \frac{b - \sum_{j = 1}^{i + 1} f(d^j) \cdot 2b} {n - i + 1} \\
        & = \frac{b - f(d^{i + 1})\cdot 2b - \sum_{j = 1}^{i} f(d^j) \cdot 2b} {n - i + 1} \\
        & = \frac{z^d_{i + 1} \cdot (n - i + 2) - f(d^{i+1}) \cdot 2b}{n - i + 1} \\
        & = z^d_{i+1} + \frac{z^d_{i+1} - f(d^{i+1}) \cdot 2b}{n - i + 1} \\
        & = z^d_{i+1} + \frac{z^d_{i+1} -  \left \lceil  z^d_{i + 1}  \right \rceil}{n - i + 1} \\
        & < z^d_{i+1} + \frac{z^d_{i+1} -  z^d_{i + 1}  }{n - i + 1} \\
        & = z^d_{i+1},
\end{align*}
which concludes the proof.
\end{proof}

The following lemma bounds the range of $z^d$.
\begin{lemma}\label{zdrange}
$z^d_{i+1} \in [a - n^3, a]$, for all $i \in [1, n + 1]$.
\end{lemma}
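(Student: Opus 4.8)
\textbf{Proof proposal for Lemma~\ref{zdrange}.}
The plan is to mirror the proof of Lemma~\ref{zcrange} almost verbatim, after first reducing the two cases in the definition of $z^d$ to a single uniform gap bound. First I would record that $z^d_2 = a$: by definition $z^d_2 = \frac{b - f((v_1^1,2))\cdot 2b}{n+1} = \frac{b - \frac{b}{10n}}{n+1} = \frac{b - n^5}{n+1} = a$, using $f((v_1^1,2)) = \frac{1}{20n}$ and $b = 10n^6$ (so $\frac{b}{10n} = n^5$). Since Lemma~\ref{zdmonotone} already gives that $z^d_{i+2} < z^d_{i+1}$ for all $i \in [1,n]$, the sequence $z^d_2, z^d_3, \dots, z^d_{n+2}$ is strictly decreasing, which immediately yields the upper bound $z^d_{i+1} \le z^d_2 = a$ for all $i \in [1,n+1]$.

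For the lower bound, by monotonicity it suffices to bound the last term $z^d_{n+2}$, and I would write it as a telescoping sum $z^d_{n+2} = z^d_2 - \sum_{i=1}^n (z^d_{i+1} - z^d_{i+2})$. The key observation is that each gap $z^d_{i+1} - z^d_{i+2}$ is at most $\frac{n^3}{(n-i+2)(n-i+1)}$ \emph{regardless of the value of $x_i$}: if $x_i = 0$ this is exactly Lemma~\ref{zd1gap}, and if $x_i = 1$ then Lemma~\ref{zd0gap} gives the stronger bound $\frac{1}{n-i+1}$, which is $\le \frac{n^3}{(n-i+2)(n-i+1)}$ because $n - i + 2 \le n+1 \le n^3$ for $i \ge 1$. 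Plugging this in and reindexing by $j = n-i+1$ gives
\[
z^d_{n+2} \;\ge\; a - n^3 \sum_{j=1}^{n} \frac{1}{(j+1)j} \;=\; a - n^3\Big(1 - \tfrac{1}{n+1}\Big) \;>\; a - n^3,
\]
exactly as in Lemma~\ref{zcrange}. Combined with the upper bound, this gives $z^d_{i+1} \in [a - n^3, a]$ for all $i \in [1,n+1]$.

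I do not anticipate any real obstacle here; the only point requiring a small amount of care is unifying the two definitional cases, i.e.\ checking that the $x_i = 1$ gap bound from Lemma~\ref{zd0gap} is dominated by the generic bound used for $x_i = 0$, so that the telescoping estimate from the $z^c$ argument applies unchanged. Everything else is a direct transcription of the proof of Lemma~\ref{zcrange}, using Lemmas~\ref{zd1gap},~\ref{zd0gap}, and~\ref{zdmonotone} in place of their $z^c$ counterparts, together with the identity $z^d_2 = a$.
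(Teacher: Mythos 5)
Your proposal is correct and takes essentially the same route as the paper: both establish $z^d_2 = a$, use monotonicity (Lemma~\ref{zdmonotone}) for the upper bound, and bound $z^d_{n+2}$ via the telescoping sum with each gap dominated by $\frac{n^3}{(n-i+1)(n-i+2)}$, yielding $z^d_{n+2} > a - n^3$. The only cosmetic difference is that you unify the two cases by comparing the bounds in Lemmas~\ref{zd1gap} and~\ref{zd0gap}, whereas the paper re-derives the per-step gap bound inline by noting $\lceil z^d_{i+1}\rceil \le \lfloor z^d_{i+1} + \frac{n^3}{n-i+2}\rfloor$; the underlying estimate is identical.
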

\begin{proof}
 Since by Lemma \ref{zdmonotone} $z^d$ is monotone decreasing, and by definition $z^d_2 = a$, we only need to bound $z^d_{n+2}$:
\begin{align*}
    z^d_{n + 2} & = z^d_2 + \sum_{i = 1}^{n} (z^d_{i + 2} - z^d_{i+1})  \\
    & = z^d_2 +  \sum_{i = 1}^{n} \frac{z^d_{i+1} \cdot (n-i+2) - f(d^{i+1}) \cdot 2b}{n - i + 1} - z^d_{i+1} \\
    & = z^d_2 +  \sum_{i = 1}^{n} \frac{z^d_{i+1} - f(d^{i+1}) \cdot 2b}{n - i + 1} \\
    & = z^d_2 + \sum_{i = 1}^n \frac{z^d_{i + 1} - \left \lfloor  z^d_{i + 1} + \frac{n^3}{n - i + 2}   \right \rfloor \cdot [x_i = 0] - \lceil z^d_{i+1}\rceil \cdot [x_i = 1]}{n-i+1}   \\
    & \ge z^d_2 + \sum_{i = 1}^n \frac{z^d_{i + 1} - \left \lfloor  z^d_{i + 1} + \frac{n^3}{n - i + 2}   \right \rfloor }{n-i+1}   \\
    & \ge z^d_2 + \sum_{i = 1}^n \frac{z^d_{i + 1} - \left ( z^d_{i + 1} + \frac{n^3}{n - i + 2}   \right )}{n-i+1} \\
    & = z^d_2 - \sum_{i = 1}^n \frac{n^3} {(n-i+1)(n-i+2)} \\
    & = z^d_2 -  n^3\sum_{i = 1}^n\frac1 {(n-i+1)(n-i+2)} \\
    &  = z^d_2 - n^3 \left (\frac{1}{(n + 1)n} + \frac{1}{n(n - 1)} + \ldots + \frac{1}{1\cdot2} \right) \\
    &  = z^d_2 - n^3 \left [ \left (\frac{1}{n} - \frac{1}{n+1} \right) + \left (\frac{1}{n-1} - \frac{1}{n} \right) + \ldots +  \left(1 - \frac12\right) \right] \\
    & = z^d_2 - n^3 \left(1-\frac1{n+1} \right) \\
    & > a - n^3,
\end{align*}
where the last inequality comes from the fact that  $z^d_2 = a$.
\end{proof}

The following proof proves Lemma~\ref{lem:subprobs12}, which shows our distribution is valid.

\begin{proof}[Proof of Lemma~\ref{lem:subprobs12}]
\begin{align*}
\sum_{k=1}^{n+2} f(d^k) &= f(d^{n+2}) + \sum_{k=1}^{n+1} f(d^k)  \\
&= \frac{b - \sum_{k=1}^{n+1}  f(d^k) \cdot 2b}{2b} + \sum_{k=1}^{n+1} f(d^k) \\
&= 1/2 - \sum_{k=1}^{n+1} f(d^k) + \sum_{k=1}^{n+1} f(d^k)\\
&=1/2.
\end{align*}
\end{proof}

Similarly, utilizing the properties above, we conclude this subsection by proving a key property of our construction. It guarantees that Bidder One's day2 distribution is nearly-uniform over $v_1^2,\ldots, v_1^{n+2}$.

\begin{proof}[Proof of Lemma~\ref{drange}]

By \ref{zdrange} we have
\begin{align*}
    f(d^{i+1}) \cdot 2b & = \left \lceil  z^d_{i + 1} \right \rceil \cdot [x_i = 1] + \left \lfloor  z^d_{i + 1} + \frac{n^3}{n - i + 2}   \right \rfloor \cdot [x_i = 0] \\
    & \ge \left \lceil  z^d_{i + 1} \right \rceil \\
    & \ge z^d_{i + 1} \\
    & \ge a - n^3.
\end{align*}
On the other hand,  we have
\begin{align*}
     f(d^{i+1}) \cdot 2b & = \left \lceil  z^d_{i + 1} \right \rceil \cdot [x_i = 1] + \left \lfloor  z^d_{i + 1} + \frac{n^3}{n - i + 2}   \right \rfloor \cdot [x_i = 0] \\
    & \le \left \lfloor  z^d_{i + 1} + \frac{n^3}{n - i + 2}   \right \rfloor \\
    & \le z^d_{i + 1} + \frac{n^3}{n - i + 2} \\
    & \le z^d_{i + 1} + n^3 \\
    & \le a + n^3,
\end{align*}
which concludes the proof.

\end{proof}

\subsection{Bidder 2}

Lemmas below provide some useful properties of the helpers $z^e$s, at the end of this subsection, we will use these to prove the distribution we constructed is valid and nearly-uniform.

The following lemma shows the relationship between $f(e^{i+1})$ and $z^e_{i+2}$ when $y_i = 0$.
\begin{lemma} \label{fze}
If $y_i = 0, f(e^{i + 1}) \cdot b < z^e_{i + 2}$ for all $i \in [1, n]$.
\end{lemma}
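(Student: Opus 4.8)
The plan is to mirror the argument used for Lemma~\ref{fzd} (Bidder One, day2), now specialized to the $y_i = 0$ branch of Bidder Two's construction. First I would unwind the recursive definition of the helper. For $i \in [1,n]$, the definition gives $z^e_{i+1} = \frac{b - \sum_{j=1}^{i} f(e^j)\cdot b}{n-i+2}$ and $z^e_{i+2} = \frac{b - \sum_{j=1}^{i+1} f(e^j)\cdot b}{n-i+1}$; here I would note that the special (unnormalized) formula $z^e_{n+2} := b - \sum_{j=1}^{n+1} f(e^j)\cdot b$ is literally the $k=n+1$ instance of the general formula, since its denominator $n-(n+1)+2$ equals $1$, so all of $z^e_3,\ldots,z^e_{n+2}$ can be treated uniformly (the same subtlety already handled in Lemma~\ref{zdrange}). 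Writing $b - \sum_{j=1}^{i} f(e^j)\cdot b = z^e_{i+1}\cdot(n-i+2)$ and substituting into the numerator of $z^e_{i+2}$ yields the one-step recurrence
\[
z^e_{i+2} = \frac{z^e_{i+1}\cdot(n-i+2) - f(e^{i+1})\cdot b}{n-i+1} = z^e_{i+1} + \frac{z^e_{i+1} - f(e^{i+1})\cdot b}{n-i+1},
\]
exactly as in the proofs of Lemmas~\ref{zcmonotone}, \ref{zd0gap}, and~\ref{zdmonotone}.

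Next I would invoke the $y_i=0$ case of the definition, $f(e^{i+1})\cdot b = \left\lfloor z^e_{i+1} - 1 \right\rfloor \le z^e_{i+1} - 1$, which gives the key strict bound $z^e_{i+1} - f(e^{i+1})\cdot b \ge 1 > 0$. Plugging this into the recurrence above,
\[
z^e_{i+2} - f(e^{i+1})\cdot b = \bigl(z^e_{i+1} - f(e^{i+1})\cdot b\bigr)\cdot\frac{n-i+2}{n-i+1} > 0,
\]
which is precisely the assertion $f(e^{i+1})\cdot b < z^e_{i+2}$ for all $i \in [1,n]$.

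There is no real obstacle here: the computation is a two-line unwinding of the recursion plus the floor estimate, completely parallel to Lemma~\ref{fzd}. The only point requiring a moment of care is the boundary index $i=n$, where $z^e_{n+2}$ is defined by the unnormalized expression — but as noted above this coincides with the general formula, so the uniform argument applies without modification.
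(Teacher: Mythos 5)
Your proof is correct and follows essentially the same approach as the paper's: unwind the recursion to express $z^e_{i+2}$ in terms of $z^e_{i+1}$ and $f(e^{i+1})\cdot b$, then use the floor bound $f(e^{i+1})\cdot b = \lfloor z^e_{i+1}-1\rfloor < z^e_{i+1}$ to conclude. The only cosmetic difference is that you factor out the difference $z^e_{i+1} - f(e^{i+1})\cdot b$ while the paper substitutes the bound directly into the recurrence, and you spell out the $i=n$ boundary check which the paper leaves implicit.
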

\begin{proof}
   Recall that $f(e^{i+1}) = \left \lfloor z^e_{i+1} - 1 \right \rfloor /b$ when $y_i = 0$, then we have
  \begin{align*}
        z^e_{i + 2} & = \frac{b - \sum_{j = 1}^{i + 1} f(e^j) \cdot b} {n - i + 1} \\
        & = \frac{b - f(e^{i + 1})\cdot b - \sum_{j = 1}^{i} f(e^j) \cdot b} {n - i + 1} \\
        & = \frac{z^e_{i + 1} \cdot (n - i + 2) - f(e^{i+1}) \cdot b}{n - i + 1} \\
        & > \frac{f(e^{i + 1}) \cdot b \cdot (n - i + 2) - f(e^{i+1}) \cdot b}{n - i + 1} \\
        & = f(e^{i + 1}) \cdot b
  \end{align*}
\end{proof}

The following lemmas bound the gap between two consecutive $z^es$.

\begin{lemma} \label{ze1gap}
  If $y_i = 1$, $z^e_{i + 1} - z^e_{i + 2} \le \frac{n^2}{(n - i + 2)(n - i + 1)}$.
\end{lemma}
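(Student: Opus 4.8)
The plan is to follow the pattern of Lemmas~\ref{zcgap} and~\ref{zd1gap} essentially verbatim, substituting Bidder Two's quantities. First I would unfold the definition of $z^e_{i+2}$, peel off the $j=i+1$ term from the sum, and re-express the remaining partial sum through $z^e_{i+1}$ using the identity $\sum_{j=1}^{i} f(e^j)\cdot b = b - z^e_{i+1}\,(n-i+2)$, which is just a rearrangement of the definition of $z^e_{i+1}$. This gives
\[
z^e_{i+2} \;=\; \frac{z^e_{i+1}\,(n-i+2) - f(e^{i+1})\cdot b}{n-i+1} \;=\; z^e_{i+1} + \frac{z^e_{i+1} - f(e^{i+1})\cdot b}{n-i+1}.
\]

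Next, since $y_i = 1$, I would plug in $f(e^{i+1})\cdot b = \left\lfloor z^e_{i+1} + \frac{n^2}{n-i+2}\right\rfloor$ and use the trivial bound $\left\lfloor z^e_{i+1} + \frac{n^2}{n-i+2}\right\rfloor \le z^e_{i+1} + \frac{n^2}{n-i+2}$. Substituting,
\[
z^e_{i+2} \;\ge\; z^e_{i+1} + \frac{z^e_{i+1} - \left(z^e_{i+1} + \frac{n^2}{n-i+2}\right)}{n-i+1} \;=\; z^e_{i+1} - \frac{n^2}{(n-i+2)(n-i+1)},
\]
and rearranging yields exactly $z^e_{i+1} - z^e_{i+2} \le \frac{n^2}{(n-i+2)(n-i+1)}$, as claimed.

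I do not anticipate a real obstacle: this is a routine one-line computation identical in form to Lemma~\ref{zcgap}, the only cosmetic differences being the $n^2$ additive helper (in place of $n^3$) and the normalization by $b$ (in place of $2b$) in Bidder Two's construction. The only points worth a second look are bookkeeping ones --- that $f(e^{i+1})$ is governed by the floor-based branch (rather than the final $k=n+1$ clause), which holds precisely because the hypothesis restricts $i\in[1,n]$, and that the closed form $z^e_{i+2} = \frac{b - \sum_{j=1}^{i+1} f(e^j)\cdot b}{n-i+1}$ is valid for all such $i$ --- including $i=n$, where $n-i+1=1$ and it reduces to the definition of $z^e_{n+2}$.
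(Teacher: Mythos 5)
Your proof is correct and matches the paper's argument exactly: both unfold the recursion to get $z^e_{i+2} = z^e_{i+1} + \frac{z^e_{i+1} - f(e^{i+1})\cdot b}{n-i+1}$, substitute the floor expression for $f(e^{i+1})\cdot b$ in the $y_i=1$ case, and bound the floor from above. Your remark that the general closed form for $z^e_{i+2}$ remains valid at $i=n$ (where the denominator is $1$, reducing to the paper's separate $k=n+1$ clause) is a correct and worthwhile sanity check that the paper leaves implicit.
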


\begin{proof}
\begin{align*}
        z^e_{i + 2} & = \frac{b - \sum_{j = 1}^{i + 1} f(e^j) \cdot b} {n - i + 1} \\
        & = \frac{b - f(e^{i + 1})\cdot b - \sum_{j = 1}^{i} f(e^j) \cdot b} {n - i + 1} \\
        & = \frac{z^e_{i + 1} \cdot (n - i + 2) - f(e^{i+1}) \cdot b}{n - i + 1} \\
        & = z^e_{i+1} + \frac{z^e_{i+1} - f(e^{i+1}) \cdot b}{n - i + 1} \\
        & = z^e_{i+1} + \frac{z^e_{i+1} -  \left \lfloor  z^e_{i + 1} + \frac{n^2}{n - i + 2}   \right \rfloor}{n - i + 1} \\
        & \ge z^e_{i+1} + \frac{z^e_{i+1} -  \left (  z^e_{i + 1} + \frac{n^2}{n - i + 2}  \right )}{n - i + 1} \\
        & = z^e_{i+1} - \frac{n^2}{(n - i + 2)(n - i + 1)}\\
\end{align*}
\end{proof}

\begin{lemma} \label{ze0gap}
  If $y_i = 0$, $z^e_{i + 1} - z^e_{i + 2} \le -\frac{1}{n-i+1}$.
\end{lemma}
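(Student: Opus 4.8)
The plan is to follow the exact same template used in Lemmas~\ref{zcgap}, \ref{zd0gap}, and \ref{ze1gap}: unroll the defining recursion for $z^e_{i+2}$, substitute the case-specific formula for $f(e^{i+1})$, and bound the resulting floor term. Concretely, I would start from the definition
$$z^e_{i+2} = \frac{b - \sum_{j=1}^{i+1} f(e^j) \cdot b}{n-i+1},$$
peel off the $j = i+1$ term in the sum, and use the identity $z^e_{i+1} \cdot (n-i+2) = b - \sum_{j=1}^{i} f(e^j)\cdot b$ (which is just the definition of $z^e_{i+1}$ rearranged) to rewrite this as
$$z^e_{i+2} = z^e_{i+1} + \frac{z^e_{i+1} - f(e^{i+1})\cdot b}{n-i+1}.$$

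Next I would plug in the relevant branch of the construction: when $y_i = 0$, the definition of Bidder Two's distribution gives $f(e^{i+1}) = \lfloor z^e_{i+1} - 1 \rfloor / b$, i.e.\ $f(e^{i+1}) \cdot b = \lfloor z^e_{i+1} - 1 \rfloor$. Substituting yields
$$z^e_{i+2} = z^e_{i+1} + \frac{z^e_{i+1} - \lfloor z^e_{i+1} - 1 \rfloor}{n-i+1}.$$
Since $\lfloor z^e_{i+1} - 1 \rfloor \le z^e_{i+1} - 1$, the numerator is at least $1$, so $z^e_{i+2} \ge z^e_{i+1} + \frac{1}{n-i+1}$, which rearranges to exactly the claimed bound $z^e_{i+1} - z^e_{i+2} \le -\frac{1}{n-i+1}$.

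There is essentially no substantive obstacle here; the lemma is a one-line consequence of the recursion once the floor is bounded from above by its argument. The only things to be careful about are (i) using the correct $y_i = 0$ branch of the Bidder Two definition (which uses $\lfloor z^e_{i+1} - 1\rfloor$, not the $\lfloor \cdot + n^2/(n-i+2)\rfloor$ branch), and (ii) matching the index conventions so the displayed telescoping identity for $z^e_{i+1}$ is applied at the right subscript. I would present the argument as a single aligned chain of equalities/inequalities mirroring Lemma~\ref{ze1gap}, since reviewers will expect the parallel structure with the companion lemmas.
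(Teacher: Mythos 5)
Your proof is correct and follows essentially the same line of reasoning as the paper: unroll the recursion for $z^e_{i+2}$ to get $z^e_{i+2} = z^e_{i+1} + \frac{z^e_{i+1} - f(e^{i+1})\cdot b}{n-i+1}$, substitute $f(e^{i+1})\cdot b = \lfloor z^e_{i+1} - 1\rfloor$ from the $y_i = 0$ branch, and lower-bound the numerator by $1$ using $\lfloor z^e_{i+1} - 1\rfloor \le z^e_{i+1} - 1$.
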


\begin{proof}
\begin{align*}
        z^e_{i + 2} & = \frac{b - \sum_{j = 1}^{i + 1} f(e^j) \cdot b} {n - i + 1} \\
        & = \frac{b - f(e^{i + 1})\cdot b - \sum_{j = 1}^{i} f(e^j) \cdot b} {n - i + 1} \\
        & = \frac{z^e_{i + 1} \cdot (n - i + 2) - f(e^{i+1}) \cdot b}{n - i + 1} \\
        & = z^e_{i+1} + \frac{z^e_{i+1} - f(e^{i+1}) \cdot b}{n - i + 1} \\
        & = z^e_{i+1} + \frac{z^e_{i+1} -  \left \lfloor  z^e_{i + 1} - 1  \right \rfloor}{n - i + 1} \\
        & \ge z^e_{i+1} + \frac{z^e_{i+1} -  \left (  z^e_{i + 1}  - 1   \right )}{n - i + 1} \\
        & = z^e_{i+1} + \frac{1}{(n - i + 1)}
\end{align*}
\end{proof}

\begin{lemma} \label{zenearlymonotone}
$z^e_{i + 2} < z^e_{i + 1} + 2$ for all $i \in [1, n]$.

\end{lemma}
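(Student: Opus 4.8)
The plan is to mimic the proofs of Lemmas~\ref{zcmonotone} and~\ref{zdmonotone}: unfold the definition of $z^e_{i+2}$ into the recurrence
$$z^e_{i+2} = z^e_{i+1} + \frac{z^e_{i+1} - f(e^{i+1})\cdot b}{n-i+1},$$
and then control the numerator $z^e_{i+1} - f(e^{i+1})\cdot b$ by case-splitting on $y_i$. Since $i \le n$ we always have $n-i+1 \ge 1$, so it suffices to show $0 \le z^e_{i+1} - f(e^{i+1})\cdot b < 2$, since then $z^e_{i+2} < z^e_{i+1} + 2$ follows immediately by dividing by $n-i+1\ge 1$ and adding $z^e_{i+1}$.

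When $y_i = 1$, the construction sets $f(e^{i+1})\cdot b = \lfloor z^e_{i+1} + \tfrac{n^2}{n-i+2}\rfloor$, which is at least $z^e_{i+1} + \tfrac{n^2}{n-i+2} - 1 > z^e_{i+1} - 1$; hence $z^e_{i+1} - f(e^{i+1})\cdot b < 1 < 2$. (This is essentially the estimate already recorded in Lemma~\ref{ze1gap}, just read as an upper bound on $z^e_{i+2}$.) When $y_i = 0$, the construction sets $f(e^{i+1})\cdot b = \lfloor z^e_{i+1} - 1\rfloor > z^e_{i+1} - 2$, so $z^e_{i+1} - f(e^{i+1})\cdot b < 2$. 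In both cases the numerator is strictly below $2$, which completes the argument; the nonnegativity of the numerator (needed only if one also wants the trivial lower bound) follows from $f(e^{i+1})\cdot b \le z^e_{i+1}$ in the $y_i=0$ branch and from $\lfloor z^e_{i+1} + \tfrac{n^2}{n-i+2}\rfloor$ potentially exceeding $z^e_{i+1}$ in the $y_i=1$ branch — but for the stated inequality only the upper bound on the numerator is used.

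\textbf{Main obstacle.} There is no substantial obstacle; the only care needed is tracking the direction of the floor estimates (using $\lfloor t\rfloor > t-1$ consistently) and remembering that $n-i+1\ge 1$, so that a numerator lying in $[0,2)$ stays in $[0,2)$ after division. The one feature worth flagging is \emph{why} the bound here is ``$+2$'' rather than the ``$+1$'' monotonicity obtained for Bidder One in Lemmas~\ref{zcmonotone}--\ref{zdmonotone}: in the $y_i=0$ branch the construction deliberately removes roughly an extra $1/b$ of probability mass from $f(e^{i+1})$, so $z^e$ can genuinely increase from step to step. This is precisely the slack that later lets $y_k$ have a significant impact on $f_2((v_2^{k+1},1))$, and Lemma~\ref{zenearlymonotone} is just the quantitative statement that this increase is bounded.
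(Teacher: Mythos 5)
Your proof is correct and follows essentially the same route as the paper's: unfold $z^e_{i+2} = z^e_{i+1} + \frac{z^e_{i+1} - f(e^{i+1})\cdot b}{n-i+1}$, then case-split on $y_i$ and use $\lfloor t\rfloor > t-1$ to bound the numerator below $2$. The paper writes out the full chain of equalities rather than isolating the numerator-bound reformulation, and in the $y_i=1$ case it actually derives the stronger conclusion $z^e_{i+2} < z^e_{i+1}$, but the underlying estimates are identical to yours.
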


\begin{proof}
  If $y_i = 1$, then we have
\begin{align*}
        z^e_{i + 2} & = \frac{b - \sum_{j = 1}^{i + 1} f(e^j) \cdot b} {n - i + 1} \\
        & = \frac{b - f(e^{i + 1})\cdot b - \sum_{j = 1}^{i} f(e^j) \cdot b} {n - i + 1} \\
        & = \frac{z^e_{i + 1} \cdot (n - i + 2) - f(e^{i+1}) \cdot b}{n - i + 1} \\
        & = z^e_{i+1} + \frac{z^e_{i+1} - f(e^{i+1}) \cdot b}{n - i + 1} \\
        & = z^e_{i+1} + \frac{z^e_{i+1} -  \left \lfloor  z^e_{i + 1} + \frac{n^2}{n - i + 2}   \right \rfloor}{n - i + 1} \\
        & < z^e_{i+1} + \frac{z^e_{i+1} -  \left (  z^e_{i + 1} + \frac{n^2}{n - i + 2} - 1  \right )}{n - i + 1} \\
        & = z^e_{i+1} - \frac {\frac{n^2}{n - i + 2} - 1 }{n - i + 1} \\
        & < z^e_{i+1}.
\end{align*}
Otherwise, we have
\begin{align*}
        z^e_{i + 2} & = \frac{b - \sum_{j = 1}^{i + 1} f(e^j) \cdot b} {n - i + 1} \\
        & = \frac{b - f(e^{i + 1})\cdot b - \sum_{j = 1}^{i} f(e^j) \cdot b} {n - i + 1} \\
        & = \frac{z^e_{i + 1} \cdot (n - i + 2) - f(e^{i+1}) \cdot b}{n - i + 1} \\
        & = z^e_{i+1} + \frac{z^e_{i+1} - f(e^{i+1}) \cdot b}{n - i + 1} \\
        & = z^e_{i+1} + \frac{z^e_{i+1} -  \left \lfloor  z^e_{i + 1}  \right \rfloor + 1}{n - i + 1} \\
        & < z^e_{i+1} + \frac{z^e_{i+1} -  z^e_{i + 1} + 2  }{n - i + 1} \\
        & = z^e_{i+1} + \frac{2}{n - i + 1} \\
        & \le z^e_{i+1} + 2,
\end{align*}
which concludes the proof.
\end{proof}

\begin{lemma}\label{zerange}
$z^e_{i+1} \in [a - 2n^2, a + 2n^2]$, for all $i \in [1, n + 1]$.
\end{lemma}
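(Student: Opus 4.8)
The plan is to bound $z^e_{i+1}$ by telescoping $z^e_{i+1}=z^e_2+\sum_{j=1}^{i-1}\bigl(z^e_{j+2}-z^e_{j+1}\bigr)$, mirroring the proof of Lemma~\ref{zcrange}. The one wrinkle is that, unlike the sequence $z^c$, the sequence $z^e$ need not be monotone (it strictly increases whenever $y_j=0$), so I must control \emph{every} partial sum rather than only the endpoint $z^e_{n+2}$.

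First I would pin down the starting value: plugging $f_2((v_2^1,1))=\frac{b/(10n)-1}{b}$ and $b=10n^6$ into the definition of $z^e_2$ gives $f_2((v_2^1,1))\cdot b=n^5-1$, hence $z^e_2=\frac{b-(n^5-1)}{n+1}=a+\frac{1}{n+1}$, which already lies well inside $[a-2n^2,\,a+2n^2]$. Next I would collect the per-step bounds that are essentially already contained in the preceding lemmas. For the downward direction: Lemma~\ref{ze1gap} gives $z^e_{j+2}-z^e_{j+1}\ge-\frac{n^2}{(n-j+2)(n-j+1)}$ when $y_j=1$, and Lemma~\ref{ze0gap} gives $z^e_{j+2}-z^e_{j+1}\ge\frac{1}{n-j+1}>0$ when $y_j=0$, so in all cases $z^e_{j+2}-z^e_{j+1}\ge-\frac{n^2}{(n-j+2)(n-j+1)}$. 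For the upward direction: Lemma~\ref{zenearlymonotone} gives $z^e_{j+2}-z^e_{j+1}<2$ for every $j$ (indeed, strictly negative when $y_j=1$, and contributing positively only when $y_j=0$).

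For the lower bound I would sum the downward estimate over $j=1,\dots,i-1$ and invoke the telescoping identity $\sum_{m=1}^{n}\frac{1}{m(m+1)}=1-\frac{1}{n+1}$ after the substitution $m=n-j+1$ (exactly as in the proof of Lemma~\ref{zcrange}), obtaining $z^e_{i+1}\ge z^e_2-n^2\bigl(1-\frac{1}{n+1}\bigr)>z^e_2-n^2=a+\frac{1}{n+1}-n^2>a-2n^2$ for all $i\in[1,n+1]$. For the upper bound I would simply iterate $z^e_{j+2}<z^e_{j+1}+2$ to get $z^e_{i+1}<z^e_2+2(i-1)\le z^e_2+2n=a+\frac{1}{n+1}+2n<a+2n^2$, valid for every $i\in[1,n+1]$ whenever $n\ge 2$ (the degenerate case $n=1$, where $f_2((v_2^1,1))=0$, is checked directly from the explicit construction, where the bound is immediate). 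If extra slack were wanted, one could instead note that only the coordinates with $y_j=0$ contribute a positive amount, each at most $\frac{2}{n-j+1}$, so the cumulative upward drift is only $O(\log n)$.

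I do not expect a genuine obstacle here: given Lemmas~\ref{ze1gap}, \ref{ze0gap}, and \ref{zenearlymonotone}, the argument is routine telescoping. The only point requiring care is the non-monotonicity of $z^e$ — one cannot conclude as in Lemma~\ref{zcrange} by comparing endpoints alone, but must bound the maximum over all prefixes — together with confirming that the crude constant $2n^2$ comfortably absorbs both the $\frac{1}{n+1}$ offset of $z^e_2$ and the accumulated upward drift.
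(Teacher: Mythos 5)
Your proof is correct and follows essentially the same approach as the paper's: the same anchor $z^e_2=a+\tfrac{1}{n+1}$, the same per-step bounds from Lemmas~\ref{ze1gap}, \ref{ze0gap}, \ref{zenearlymonotone}, and the same telescoping sum $\sum_m\tfrac{1}{m(m+1)}=1-\tfrac{1}{n+1}$. The only organizational difference is in the lower bound — you bound each prefix sum $z^e_{i+1}$ directly, whereas the paper lower bounds only the endpoint $z^e_{n+2}$ and then propagates to intermediate indices via the near-monotonicity inequality $z^e_{i+1}>z^e_{n+2}-2n$ — but both rest on identical ingredients and yield the same constants.
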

\begin{proof}
  Since by Lemma \ref{zenearlymonotone}, $z^e_{i + 2} < z^e_{i + 1} + 2$, which indicates $z^e_{n+2} - 2n < z^e_{i + 1} < z^e_2 + 2n$ for all $i \in [1, n + 1]$, and by definition $z^e_2 = \frac{b - f(e^1) \cdot b}{n + 1} = \frac{b - \frac{b}{10n} + 1}{n + 1} = a + \frac{1}{n + 1}$, we only need to lower bound $z^e_{n+2}$. Again by the fact that $z^e_2 = a + \frac{1}{n + 1}$ we have
\begin{align*}
    z^e_{n + 2} & = z^e_2 + \sum_{i = 1}^{n} (z^e_{i + 2} - z^e_{i+1})  \\
    & = z^e_2 +  \sum_{i = 1}^{n} \frac{z^e_{i+1} \cdot (n-i+2) - f(e^{i+1}) \cdot b}{n - i + 1} - z^e_{i+1} \\
    & = z^e_2 +  \sum_{i = 1}^{n} \frac{z^e_{i+1} - f(e^{i+1}) \cdot b}{n - i + 1} \\
    & = z^e_2 + \sum_{i = 1}^n \frac{z^e_{i + 1} - \left \lfloor  z^e_{i + 1} + \frac{n^2}{n - i + 2}   \right \rfloor \cdot [y_i = 1] - \lfloor z^e_{i+1} - 1 \rfloor \cdot [y_i = 0]}{n-i+1}   \\
    & \ge z^e_2 + \sum_{i = 1}^n \frac{z^e_{i + 1} - \left \lfloor  z^e_{i + 1} + \frac{n^2}{n - i + 2}   \right \rfloor }{n-i+1}   \\
    & \ge z^e_2 + \sum_{i = 1}^n \frac{z^e_{i + 1} - \left ( z^e_{i + 1} + \frac{n^2}{n - i + 2}   \right )}{n-i+1} \\
    & = z^e_2 - \sum_{i = 1}^n \frac{n^2} {(n-i+1)(n-i+2)} \\
    & = z^e_2 -  n^2\sum_{i = 1}^n\frac1 {(n-i+1)(n-i+2)} \\
    &  = z^e_2 - n^2 \left (\frac{1}{(n + 1)n} + \frac{1}{n(n - 1)} + \ldots + \frac{1}{1\cdot2} \right) \\
    &  = z^e_2 - n^2 \left [ \left (\frac{1}{n} - \frac{1}{n+1} \right) + \left (\frac{1}{n-1} - \frac{1}{n} \right) + \ldots +  \left(1 - \frac12\right) \right] \\
    & = z^e_2 - n^2 \left(1-\frac1{n+1} \right) \\
    & > a + \frac{1}{n + 1} - n^2,
\end{align*}
which concludes the proof.

\end{proof}

The following proof proves Lemma~\ref{lem:subprobs2} which ensures our distribution is valid.

\begin{proof}[Proof of Lemma~\ref{lem:subprobs2}]
\begin{align*}
\sum_{k=1}^{n+2} f(e^k) &= f(e^{n+2}) + \sum_{k=1}^{n+1} f(e^k)  \\
&= \frac{b - \sum_{k=1}^{n+1}  f(e^k) \cdot b}{b} + \sum_{k=1}^{n+1} f(e^k) \\
&= 1/2 - \sum_{k=1}^{n+1} f(e^k) + \sum_{k=1}^{n+1} f(e^k)\\
&=1/2.
\end{align*}
\end{proof}

Again, utilizing the properties above, we conclude this subsection by proving a key property of our construction. It guarantees that Bidder Two's distribution is nearly-uniform over $v_1^2,\ldots, v_1^{n+2}$.

\begin{proof}[Proof of Lemma~\ref{erange}]

By Lemma \ref{zerange} we have
\begin{align*}
    f(e^{i+1}) \cdot b & = \left \lfloor  z^e_{i + 1} - 1 \right \rfloor \cdot [y_i = 0] + \left \lfloor  z^e_{i + 1} + \frac{n^2}{n - i + 2}   \right \rfloor \cdot [y_i = 1] \\
    & \ge \left \lfloor  z^e_{i + 1} - 1 \right \rfloor \\
    & > z^e_{i + 1} - 2 \\
    & \ge a - 2n^2  - 2.
\end{align*}
On the other hand, we have
\begin{align*}
     f(e^{i+1}) \cdot b & = \left \lceil  z^e_{i + 1} \right \rceil \cdot [y_i = 0] + \left \lfloor  z^e_{i + 1} + \frac{n^2}{n - i + 2}   \right \rfloor \cdot [y_i = 1] \\
    & \le \left \lfloor  z^e_{i + 1} + \frac{n^2}{n - i + 2}   \right \rfloor \\
    & \le z^e_{i + 1} + \frac{n^2}{n - i + 2} \\
    & \le z^e_{i + 1} + n^2 \\
    & \le a + 3n^2,
\end{align*}
which concludes the proof.

\end{proof}

\section{Omitted Proofs from Section~\ref{sec:flow}}\label{app:flow}

\subsection{Analyzing the Canonical Flow}

Here we are going to use properties proved in Appendix~\ref{app:reduction} to analyze the canonical flow. First we will present some technical lemmas which asymptotically bound the gaps between virtual values. By comparing the orders of those gaps, we then wrap them up to obtain Proposition~\ref{prop:Myerson}.

The following lemmas establish some useful bounds for Myerson virtual values in our reduction.

\begin{lemma} \label{cbound}
  $ v^{i+1} - (n - i + 1) + \frac{\frac12 n^3 - n }{a + 3n^3} \le \Phi^{\alpha^M,\lambda^M}(c^{i+1}) \le v^{i+1} - (n - i + 1) + \frac{2n^3}{a - 2n^3}$ for all $i \in [1,n]$  under the canonical flow.
\end{lemma}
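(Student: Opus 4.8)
The plan is to reduce the whole statement to elementary estimates on the helper variables $z^c$. Using $v_1^{k+1}-v_1^k=1$ for all $k$ and the closed form for $\Phi^{\alpha^M,\lambda^M}$ recorded right after the Canonical Flow definition, $\Phi^{\alpha^M,\lambda^M}(c^{i+1}) = v^{i+1} - \tfrac{R_1((v_1^{i+2},1))}{f(c^{i+1})}$. By Lemma~\ref{lem:subprobs11}, $\sum_{k=1}^{n+2}f(c^k)\cdot 2b = b$, and unwinding the definition of $z^c_{i+2}$ yields the identity $R_1((v_1^{i+2},1))\cdot 2b = (n-i+1)\,z^c_{i+2}$ (this also holds in the edge case $i=n$, where $z^c_{n+2}=b-\sum_{k=1}^{n+1}f(c^k)\cdot 2b$ and $n-i+1=1$). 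Writing $g := f(c^{i+1})\cdot 2b = \lfloor z^c_{i+1} + \tfrac{n^3}{n-i+2}\rfloor$, this gives
\[
\Phi^{\alpha^M,\lambda^M}(c^{i+1}) \;=\; v^{i+1} - \frac{(n-i+1)\,z^c_{i+2}}{g} \;=\; v^{i+1} - (n-i+1) + (n-i+1)\cdot\frac{g - z^c_{i+2}}{g},
\]
so it suffices to sandwich $(n-i+1)\cdot\tfrac{g - z^c_{i+2}}{g}$ between $\tfrac{\frac12 n^3 - n}{a+3n^3}$ and $\tfrac{2n^3}{a-2n^3}$.

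Next I would bound $g - z^c_{i+2}$ on both sides via the floor inequality together with the $z^c$-estimates from Appendix~\ref{app:reduction}. From $g \le z^c_{i+1}+\tfrac{n^3}{n-i+2}$ and the gap bound $z^c_{i+1}-z^c_{i+2}\le\tfrac{n^3}{(n-i+2)(n-i+1)}$ (Lemma~\ref{zcgap}) one gets $g - z^c_{i+2} \le \tfrac{n^3}{n-i+2}\bigl(1+\tfrac{1}{n-i+1}\bigr) = \tfrac{n^3}{n-i+1}$. From $g > z^c_{i+1}+\tfrac{n^3}{n-i+2}-1 > z^c_{i+2}+\tfrac{n^3}{n-i+2}-1$ (monotonicity of $z^c$, Lemma~\ref{zcmonotone}) one gets $g - z^c_{i+2} > \tfrac{n^3}{n-i+2}-1 > 0$. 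Finally Lemma~\ref{crange} gives $g\in[a-2n^3,\,a+2n^3]$. The upper bound then follows at once: $(n-i+1)\cdot\tfrac{g-z^c_{i+2}}{g} \le \tfrac{n^3}{g} \le \tfrac{n^3}{a-2n^3} \le \tfrac{2n^3}{a-2n^3}$. For the lower bound, $(n-i+1)\cdot\tfrac{g-z^c_{i+2}}{g} \ge \tfrac{(n-i+1)n^3/(n-i+2) - (n-i+1)}{g}$, and since $\tfrac{n-i+1}{n-i+2}\ge\tfrac12$ and $n-i+1 \le n$ for $i\in[1,n]$, while $g \le a+2n^3 \le a+3n^3$, this is at least $\tfrac{\frac12 n^3 - n}{a+3n^3}$.

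I expect no conceptual obstacle here; the only real work is the bookkeeping, and the point of care is routing the $\pm1$ rounding slack and the two different ``$\tfrac{n-i+1}{n-i+2}$'' factors to the correct side of each inequality, and confirming that the only range facts about $g$ ever needed are $a-2n^3 \le g \le a+3n^3$ — both supplied by Lemma~\ref{crange}. No structural input beyond Lemmas~\ref{lem:subprobs11},~\ref{crange},~\ref{zcgap},~\ref{zcmonotone}, and~\ref{zcrange} is required (in fact Lemma~\ref{zcrange} can replace Lemma~\ref{crange}, via $g > z^c_{i+1}-1 \ge a-n^3-1$ and $g \le z^c_{i+1}+n^3 \le a+n^3$).
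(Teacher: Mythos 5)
Your proof is correct and follows essentially the same approach as the paper's: both express $\Phi^{\alpha^M,\lambda^M}(c^{i+1})$ as $v^{i+1} - (n-i+1)\tfrac{z^c_{i+2}}{f(c^{i+1})\cdot 2b}$, then use the floor inequalities together with Lemmas~\ref{zcgap}, \ref{zcmonotone}, \ref{zcrange}/\ref{crange} to sandwich the correction term. Your reorganization into isolating $(n-i+1)\tfrac{g - z^c_{i+2}}{g}$ and bounding $g - z^c_{i+2}$ directly is a slightly cleaner bookkeeping of the same chain of inequalities, not a different method.
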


\begin{proof}
  Recall that, by definition $z^c_{i + 1} = \frac{b - \sum_{j = 1}^{i} f(c^j) \cdot 2b}{n - i + 2} = \frac{\sum_{j = i+1}^{n + 2} f(c^j) \cdot 2b}{n - i + 2}$.
  Let us first lower bound $\Phi^{\alpha^M,\lambda^M}(c^{i+1})$. By Lemma \ref{zcmonotone} and \ref{zcrange}, we know $z^c_{i + 2} < z^c_{i + 1}$ and $z^c_{i+1} \le a + 2n^3 $ for all $i \in [1, n]$, we then have the following inequality,
\begin{align*}
    \Phi^{\alpha^M,\lambda^M}(c^{i+1}) & = v^{i+1} - \frac{\sum_{k=i+2}^{n+2} f(c^k)}{f(c^{i+1})} \\
    & = v^{i+1} - (n - i + 1)\frac{z^c_{i+2}}{f(c^{i+1})\cdot 2b} \\
    & = v^{i+1} - (n - i + 1)\frac{z^c_{i+2}}{\left \lfloor z^c_{i+1} + \frac{n^3}{n-i+2}\right \rfloor} \\
    & \ge v^{i+1} - (n - i + 1)\frac{z^c_{i+2}}{z^c_{i+1} + \frac{n^3}{n-i+2} - 1}  \\
    & \ge v^{i+1} - (n - i + 1)\frac{z^c_{i+1} }{z^c_{i+1} + \frac{n^3}{n-i+2} - 1}  \\
    & = v^{i+1} - (n - i + 1) \left ( 1 - \frac{\frac{n^3}{n-i+2} - 1 }{z^c_{i+1} + \frac{n^3}{n-i+2} - 1} \right) \\
    & = v^{i+1} - (n - i + 1) + (n - i + 1) \frac{\frac{n^3}{n-i+2} -1 }{z^c_{i+1} + \frac{n^3}{n-i+2} - 1} \\
    & \ge v^{i+1} - (n - i + 1) + \frac{\frac12 n^3 - n }{z^c_{i+1} + \frac{n^3}{n-i+2}} \\
    & \ge v^{i+1} - (n - i + 1) + \frac{\frac12 n^3 - n }{a + 3n^3}.
\end{align*}
On the other hand, by Lemma \ref{zcgap} we know $z^c_{i + 1} - z^c_{i + 2} \le \frac{n^3}{(n - i + 2)(n - i + 1)}$ for all $i \in [1, n]$, then
\begin{align*}
    \Phi^{\alpha^M,\lambda^M}(c^{i+1}) & = v^{i+1} - \frac{\sum_{k=i+2}^{n+2} f(c^k)}{f(c^{i+1})} \\
    & = v^{i+1} - (n - i + 1)\frac{z^c_{i+2}}{f(c^{i+1})\cdot 2b} \\
    & = v^{i+1} - (n - i + 1)\frac{z^c_{i+2}}{\left \lfloor z^c_{i+1} + \frac{n^3}{n-i+2}\right \rfloor} \\
    & \le v^{i+1} - (n - i + 1)\frac{z^c_{i+2}}{z^c_{i+1} + \frac{n^3}{n-i+2} }  \\
    & \le v^{i+1} - (n - i + 1)\frac{z^c_{i+1} - \frac{n^3}{(n - i + 1)(n - i + 2)} }{z^c_{i+1} + \frac{n^3}{n-i+2} }  \\
    & = v^{i+1} - (n - i + 1) \left ( 1 - \frac{\frac{n^3}{n-i+2} + \frac{n^3}{(n - i + 1)(n - i + 2)}}{z^c_{i+1} + \frac{n^3}{n-i+2}} \right) \\
    & = v^{i+1} - (n - i + 1) + (n - i + 1) \frac{\frac{n^3}{n-i+2} + \frac{n^3}{(n - i + 1)(n - i + 2) }}{z^c_{i+1} + \frac{n^3}{n-i+2}} \\
    & \le v^{i+1} - (n - i + 1) + (n - i + 1) \frac{\frac{n^3}{n-i+1} + \frac{n^3}{(n - i + 1)(n - i + 2) }}{z^c_{i+1}} \\
    & \le v^{i+1} - (n - i + 1) + \frac{n^3 + \frac{n^3}{n - i + 2 }}{z^c_{i+1}} \\
    & \le v^{i+1} - (n - i + 1) + \frac{2n^3}{a - 2n^3}.
\end{align*}
\end{proof}

\begin{lemma} \label{d1bound}
   If $x_i = 0$, $ v^{i+1} - (n - i + 1) + \frac{\frac12 n^3 - n }{a + 3n^3} \le \Phi^{\alpha^M,\lambda^M}(d^{i+1}) \le v^{i+1} - (n - i + 1) + \frac{2n^3}{a - 2n^3}$ for all $i \in [1,n]$  under the canonical flow.
\end{lemma}

\begin{proof}
  Recall that, by definition $z^d_{i + 1} = \frac{b - \sum_{j = 1}^{i} f(d^j) \cdot 2b}{n - i + 2} = \frac{\sum_{j = i+1}^{n + 2} f(d^j) \cdot 2b}{n - i + 2}$.
  Let us first lower bound $\Phi^{\alpha^M,\lambda^M}(d^{i+1})$. By Lemma \ref{zdmonotone} and \ref{zdrange} we have the following inequality,
\begin{align*}
    \Phi^{\alpha^M,\lambda^M}(d^{i+1}) & = v^{i+1} - \frac{\sum_{k=i+2}^{n+2} f(d^k)}{f(d^{i+1})} \\
    & = v^{i+1} - (n - i + 1)\frac{z^d_{i+2}}{f(d^{i+1})\cdot 2b} \\
    & = v^{i+1} - (n - i + 1)\frac{z^d_{i+2}}{\left \lfloor z^d_{i+1} + \frac{n^3}{n-i+2}\right \rfloor} \\
    & \ge v^{i+1} - (n - i + 1)\frac{z^d_{i+2}}{z^d_{i+1} + \frac{n^3}{n-i+2} - 1}  \\
    & \ge v^{i+1} - (n - i + 1)\frac{z^d_{i+1} }{z^d_{i+1} + \frac{n^3}{n-i+2} - 1}  \\
    & = v^{i+1} - (n - i + 1) \left ( 1 - \frac{\frac{n^3}{n-i+2} - 1 }{z^d_{i+1} + \frac{n^3}{n-i+2} - 1} \right) \\
    & = v^{i+1} - (n - i + 1) + (n - i + 1) \frac{\frac{n^3}{n-i+2} -1 }{z^d_{i+1} + \frac{n^3}{n-i+2} - 1} \\
    & \ge v^{i+1} - (n - i + 1) + \frac{\frac12 n^3 - n }{z^d_{i+1} + \frac{n^3}{n-i+2}} \\
    & \ge v^{i+1} - (n - i + 1) + \frac{\frac12 n^3 - n }{a + 3n^3}.
\end{align*}
On the other hand, by Lemma \ref{zd1gap} and \ref{zdrange} we have
\begin{align*}
    \Phi^{\alpha^M,\lambda^M}(d^{i+1}) & = v^{i+1} - \frac{\sum_{k=i+2}^{n+2} f(d^k)}{f(d^{i+1})} \\
    & = v^{i+1} - (n - i + 1)\frac{z^d_{i+2}}{f(d^{i+1})\cdot 2b} \\
    & = v^{i+1} - (n - i + 1)\frac{z^d_{i+2}}{\left \lfloor z^d_{i+1} + \frac{n^3}{n-i+2}\right \rfloor} \\
    & \le v^{i+1} - (n - i + 1)\frac{z^d_{i+2}}{z^d_{i+1} + \frac{n^3}{n-i+2} }  \\
    & \le v^{i+1} - (n - i + 1)\frac{z^d_{i+1} - \frac{n^3}{(n - i + 1)(n - i + 2)} }{z^d_{i+1} + \frac{n^3}{n-i+2} }  \\
    & = v^{i+1} - (n - i + 1) \left ( 1 - \frac{\frac{n^3}{n-i+2} + \frac{n^3}{(n - i + 1)(n - i + 2)}}{z^d_{i+1} + \frac{n^3}{n-i+2}} \right) \\
    & = v^{i+1} - (n - i + 1) + (n - i + 1) \frac{\frac{n^3}{n-i+2} + \frac{n^3}{(n - i + 1)(n - i + 2) }}{z^d_{i+1} + \frac{n^3}{n-i+2}} \\
    & \le v^{i+1} - (n - i + 1) + (n - i + 1) \frac{\frac{n^3}{n-i+1} + \frac{n^3}{(n - i + 1)(n - i + 2) }}{z^d_{i+1}} \\
    & \le v^{i+1} - (n - i + 1) + \frac{n^3 + \frac{n^3}{n - i + 2 }}{z^d_{i+1}} \\
    & \le v^{i+1} - (n - i + 1) + \frac{2n^3}{a - 2n^3}.
\end{align*}
\end{proof}

\begin{lemma} \label{d0bound}
  If $x_i = 1$, $ v^{i+1} - (n - i + 1) \le \Phi^{\alpha^M,\lambda^M}(d^{i+1}) \le v^{i+1} - (n - i + 1) + \frac{2n}{a - 2n^3}$ for all $i \in [1,n]$  under the canonical flow.
\end{lemma}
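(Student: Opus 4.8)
The plan is to follow the same template as the proofs of Lemmas~\ref{cbound} and~\ref{d1bound}, with two differences dictated by the hypothesis $x_i=1$: we substitute the \emph{ceiling} formula $f(d^{i+1})\cdot 2b = \lceil z^d_{i+1}\rceil$ (from the definition of Bidder One's day2 distribution in the case $x_i=1$) rather than the floor formula, and we use the sharper gap bound Lemma~\ref{zd0gap} (which is exactly the $x_i=1$ case) in place of Lemma~\ref{zd1gap}. As in those proofs, first I would use the identity $z^d_{i+1} = \frac{\sum_{j=i+1}^{n+2} f(d^j)\cdot 2b}{n-i+2}$ (valid by Lemma~\ref{lem:subprobs12}), applied at both $i$ and $i+1$, to rewrite
\[
\Phi^{\alpha^M,\lambda^M}(d^{i+1}) = v^{i+1} - \frac{\sum_{k=i+2}^{n+2} f(d^k)}{f(d^{i+1})} = v^{i+1} - (n-i+1)\,\frac{z^d_{i+2}}{\lceil z^d_{i+1}\rceil}.
\]

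For the lower bound it suffices to check that $\frac{z^d_{i+2}}{\lceil z^d_{i+1}\rceil}\le 1$, which is immediate: Lemma~\ref{zdmonotone} gives $z^d_{i+2} < z^d_{i+1} \le \lceil z^d_{i+1}\rceil$, and all quantities are positive by Lemma~\ref{zdrange}. Hence $\Phi^{\alpha^M,\lambda^M}(d^{i+1}) \ge v^{i+1}-(n-i+1)$. For the upper bound I would bound the ratio from below using $z^d_{i+2} \ge z^d_{i+1} - \tfrac{1}{n-i+1}$ (Lemma~\ref{zd0gap}) in the numerator and $\lceil z^d_{i+1}\rceil \le z^d_{i+1}+1$ in the denominator:
\[
(n-i+1)\,\frac{z^d_{i+2}}{\lceil z^d_{i+1}\rceil} \ge (n-i+1)\cdot\frac{z^d_{i+1} - \tfrac{1}{n-i+1}}{z^d_{i+1}+1} = (n-i+1) - \frac{n-i+2}{z^d_{i+1}+1}.
\]
Then $n-i+2 \le n+1 \le 2n$, and $z^d_{i+1}+1 > z^d_{i+1} \ge a-n^3 > a-2n^3$ by Lemma~\ref{zdrange}, so $\frac{n-i+2}{z^d_{i+1}+1} \le \frac{2n}{a-2n^3}$, yielding $\Phi^{\alpha^M,\lambda^M}(d^{i+1}) \le v^{i+1}-(n-i+1)+\frac{2n}{a-2n^3}$.

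There is no real obstacle here; the lemma is a short computation. The only points requiring care are invoking the gap bound appropriate to $x_i=1$ (Lemma~\ref{zd0gap}, not Lemma~\ref{zd1gap}, since the floor-vs-ceiling behaviour of $f(d^{i+1})$ flips with $x_i$), and tracking that the slack in the numerator is now only $O(n)$ — this is precisely why the error term is $\frac{2n}{a-2n^3}$ here rather than the $\frac{2n^3}{a-2n^3}$ appearing in Lemmas~\ref{cbound} and~\ref{d1bound}, and this improvement will be what makes the ``boosted'' comparison in Proposition~\ref{prop:Myerson} (bullet four) go through when $x_k=y_k=1$.
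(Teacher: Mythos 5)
Your proof is correct and follows essentially the same route as the paper's: same rewriting $\Phi^{\alpha^M,\lambda^M}(d^{i+1}) = v^{i+1} - (n-i+1)\frac{z^d_{i+2}}{\lceil z^d_{i+1}\rceil}$, same use of $\lceil z^d_{i+1}\rceil \ge z^d_{i+1}$ with Lemma~\ref{zdmonotone} for the lower bound, and same use of Lemma~\ref{zd0gap} with $\lceil z^d_{i+1}\rceil \le z^d_{i+1}+1$ for the upper bound. The paper performs one extra intermediate step (dropping the $+1$ in the denominator before bounding $n-i+2\le 2n$ and $z^d_{i+1}\ge a-2n^3$), but your more direct chain is equivalent.
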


\begin{proof}
  Recall that, by definition $z^d_{i + 1} = \frac{b - \sum_{j = 1}^{i} f(d^j) \cdot 2b}{n - i + 2} = \frac{\sum_{j = i+1}^{n + 2} f(d^j) \cdot 2b}{n - i + 2}$.
  Let us first lower bound $\Phi^{\alpha^M,\lambda^M}(d^{i+1})$. By Lemma \ref{zdmonotone} and \ref{zdrange} we have the following inequality,
\begin{align*}
    \Phi^{\alpha^M,\lambda^M}(d^{i+1}) & = v^{i+1} - \frac{\sum_{k=i+2}^{n+2} f(d^k)}{f(d^{i+1})} \\
    & = v^{i+1} - (n - i + 1)\frac{z^d_{i+2}}{f(d^{i+1})\cdot 2b} \\
    & = v^{i+1} - (n - i + 1)\frac{z^d_{i+2}}{\left \lceil z^d_{i+1} \right \rceil} \\
    & \ge v^{i+1} - (n - i + 1)\frac{z^d_{i+2}}{z^d_{i+1}}  \\
    & \ge v^{i+1} - (n - i + 1)\frac{z^d_{i+1} }{z^d_{i+1}}  \\
    & \ge v^{i+1} - (n - i + 1).
\end{align*}
On the other hand, by Lemma \ref{zd0gap} and \ref{zdrange} we have
\begin{align*}
    \Phi^{\alpha^M,\lambda^M}(d^{i+1}) & = v^{i+1} - \frac{\sum_{k=i+2}^{n+2} f(d^k)}{f(d^{i+1})} \\
    & = v^{i+1} - (n - i + 1)\frac{z^d_{i+2}}{f(d^{i+1})\cdot 2b} \\
    & = v^{i+1} - (n - i + 1)\frac{z^d_{i+2}}{\left \lceil z^d_{i+1} \right \rceil} \\
    & \le v^{i+1} - (n - i + 1)\frac{z^d_{i+2}}{z^d_{i+1} + 1 }  \\
    & \le v^{i+1} - (n - i + 1)\frac{z^d_{i+1} - \frac{1}{n - i + 1} }{z^d_{i+1} + 1 }  \\
    & = v^{i+1} - (n - i + 1) \left (1 - \frac{1 + \frac{1}{n - i + 1}}{z^d_{i+1} + 1} \right) \\
    & = v^{i+1} - (n - i + 1) + (n - i + 1) \frac{1 + \frac{1}{n - i + 1 }}{z^d_{i+1} + 1} \\
    & \le v^{i+1} - (n - i + 1) + (n - i + 1) \frac{1 + \frac{1}{n - i + 1 }}{z^d_{i+1}} \\
    & \le v^{i+1} - (n - i + 1) + \frac{n - i + 2}{z^d_{i+1}} \\
    & \le v^{i+1} - (n - i + 1) + \frac{2n}{a - 2n^3}.
\end{align*}
\end{proof}

\begin{lemma} \label{e1bound}
  If $y_i = 1$, $ v^{i+1} - (n - i + 1) + \frac{\frac12 n^2 - 3n }{a + 3n^3} \le \Phi^{\alpha^M,\lambda^M}(e^{i+1}) \le v^{i+1} - (n - i + 1) + \frac{2n^2}{a - 2n^3}$ for all $i \in [1,n]$  under the canonical flow.
\end{lemma}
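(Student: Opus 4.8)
The plan is to follow the template of Lemma~\ref{cbound} (and Lemma~\ref{d1bound}) almost verbatim, replacing the Bidder One day1 estimates by the corresponding Bidder Two estimates and the slack parameter $n^3$ by $n^2$. Under the canonical flow, $\Phi^{\alpha^M,\lambda^M}(e^{i+1}) = v^{i+1} - \frac{\sum_{k=i+2}^{n+2} f(e^k)}{f(e^{i+1})}$, and the identity $\sum_{k=i+2}^{n+2} f(e^k)\cdot b = (n-i+1)\,z^e_{i+2}$ (immediate from the definition of $z^e_{i+2}$), together with $f(e^{i+1})\cdot b = \lfloor z^e_{i+1} + \tfrac{n^2}{n-i+2}\rfloor$ (valid since $y_i=1$), reduce the lemma to two-sided control of the ratio $z^e_{i+2}/\lfloor z^e_{i+1} + \tfrac{n^2}{n-i+2}\rfloor$.

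For the lower bound on $\Phi^{\alpha^M,\lambda^M}(e^{i+1})$ I would bound the denominator below by $z^e_{i+1} + \tfrac{n^2}{n-i+2} - 1$ (using $\lfloor x\rfloor > x-1$) and use $z^e_{i+2} < z^e_{i+1}$ from the $y_i=1$ case of Lemma~\ref{zenearlymonotone} (equivalently Lemma~\ref{ze1gap}); writing $\tfrac{z^e_{i+1}}{z^e_{i+1}+n^2/(n-i+2)-1} = 1 - \tfrac{n^2/(n-i+2)-1}{z^e_{i+1}+n^2/(n-i+2)-1}$, multiplying the leftover by $(n-i+1)$, and using $\tfrac{n-i+1}{n-i+2}\ge\tfrac12$ and $n-i+1\le n$ in the numerator together with $z^e_{i+1}\le a+2n^2$ (Lemma~\ref{zerange}) in the denominator, I get $\Phi^{\alpha^M,\lambda^M}(e^{i+1})\ge v^{i+1}-(n-i+1)+\tfrac{\frac12 n^2-n}{a+3n^3}$, which is stronger than the claimed bound. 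For the upper bound I would bound the denominator above by $z^e_{i+1}+\tfrac{n^2}{n-i+2}$ and use Lemma~\ref{ze1gap} in the form $z^e_{i+2}\ge z^e_{i+1}-\tfrac{n^2}{(n-i+1)(n-i+2)}$; the leftover term, after multiplication by $(n-i+1)$, has numerator exactly $n^2$ (it collapses to $\tfrac{(n-i+2)n^2}{n-i+2}$), and bounding the denominator below by $z^e_{i+1}\ge a-2n^2\ge a-2n^3$ (Lemma~\ref{zerange}) yields $\Phi^{\alpha^M,\lambda^M}(e^{i+1})\le v^{i+1}-(n-i+1)+\tfrac{2n^2}{a-2n^3}$, as claimed.

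I do not expect a real obstacle: the computation is routine and mirrors Lemma~\ref{cbound} line for line. The only two points deserving care are (i) that one must invoke the $y_i=1$ branch of the near-monotonicity estimate — when $y_i=0$ the sequence $z^e$ can increase (Lemma~\ref{ze0gap}), so the step $z^e_{i+2}<z^e_{i+1}$ would fail, but that regime is covered by a companion lemma and is irrelevant here — and (ii) that the $\pm 1$ rounding errors coming from the floor must be carried through; they are harmless because $z^e_{i+1}=\Theta(a)=\Theta(n^5)$ dwarfs every $O(n^2)$ quantity appearing in the estimates.
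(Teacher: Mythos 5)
Your proposal is correct and mirrors the paper's own proof of Lemma~\ref{e1bound} almost line for line, with two small improvements that yield strictly tighter constants (which a fortiori imply the stated bounds). First, the paper's proof invokes Lemma~\ref{zenearlymonotone} in its stated form $z^e_{i+2} < z^e_{i+1}+2$, which holds for either value of $y_i$ but forces the looser numerator $\tfrac{n^2}{n-i+2}-3$ and hence the constant $\tfrac12 n^2 - 3n$; you instead observe that in the $y_i=1$ regime one actually has $z^e_{i+2} < z^e_{i+1}$ (established inside the proof of Lemma~\ref{zenearlymonotone}, though not in its statement), giving $\tfrac12 n^2 - n$. Second, for the upper bound you notice that $(n-i+1)\left(\tfrac{n^2}{n-i+2}+\tfrac{n^2}{(n-i+1)(n-i+2)}\right)$ collapses exactly to $n^2$, whereas the paper replaces one denominator and obtains the looser $n^2+\tfrac{n^2}{n-i+2}$; both suffice. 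One small nit: your parenthetical ``(equivalently Lemma~\ref{ze1gap})'' for the upper bound $z^e_{i+2}<z^e_{i+1}$ is imprecise---Lemma~\ref{ze1gap} bounds $z^e_{i+1}-z^e_{i+2}$ from above (a lower bound on $z^e_{i+2}$), not from below; the fact you need is established in the $y_i=1$ branch of Lemma~\ref{zenearlymonotone}'s proof. You also silently correct a typo in the paper's proof, which writes $\tfrac{n^3}{n-i+2}$ inside the floor where $\tfrac{n^2}{n-i+2}$ is intended for Bidder Two.
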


\begin{proof}
  Recall that, by definition $z^e_{i + 1} = \frac{b - \sum_{j = 1}^{i} f(e^j) \cdot b}{n - i + 2} = \frac{\sum_{j = i+1}^{n + 2} f(e^j) \cdot b}{n - i + 2}$.
  Let us first lower bound $\Phi^{\alpha^M,\lambda^M}(e^{i+1})$. By Lemma \ref{zenearlymonotone} and \ref{zerange}, $z^e_{i + 2} < z^e_{i + 1} + 2$ for all $i \in [1, n]$, and $z^e_{i+1} \le a + 2n^2$, for all $i \in [1, n + 1]$, we have the following inequality,
\begin{align*}
    \Phi^{\alpha^M,\lambda^M}(e^{i+1}) & = v^{i+1} - \frac{\sum_{k=i+2}^{n+2} f(e^k)}{f(e^{i+1})} \\
    & = v^{i+1} - (n - i + 1)\frac{z^e_{i+2}}{f(e^{i+1})\cdot b} \\
    & = v^{i+1} - (n - i + 1)\frac{z^e_{i+2}}{\left \lfloor z^e_{i+1} + \frac{n^3}{n-i+2}\right \rfloor} \\
    & \ge v^{i+1} - (n - i + 1)\frac{z^e_{i+2}}{z^e_{i+1} + \frac{n^2}{n-i+2} - 1}  \\
    & \ge v^{i+1} - (n - i + 1)\frac{z^e_{i+1} + 2 }{z^e_{i+1} + \frac{n^2}{n-i+2} - 1}  \\
    & = v^{i+1} - (n - i + 1) \left ( 1 - \frac{\frac{n^2}{n-i+2} - 3 }{z^e_{i+1} + \frac{n^2}{n-i+2} - 1} \right) \\
    & = v^{i+1} - (n - i + 1) + (n - i + 1) \frac{\frac{n^2}{n-i+2} -3 }{z^e_{i+1} + \frac{n^2}{n-i+2} - 1} \\
    & \ge v^{i+1} - (n - i + 1) + \frac{\frac12 n^2 - 3n }{z^e_{i+1} + \frac{n^2}{n-i+2}} \\
    & \ge v^{i+1} - (n - i + 1) + \frac{\frac12 n^2 - 3n }{a + 3n^3}.
\end{align*}
On the other hand, by Lemma \ref{ze1gap}, $z^e_{i + 1} - z^e_{i + 2} \le \frac{n^2}{(n - i + 2)(n - i + 1)}$, so we have
\begin{align*}
    \Phi^{\alpha^M,\lambda^M}(e^{i+1}) & = v^{i+1} - \frac{\sum_{k=i+2}^{n+2} f(e^k)}{f(e^{i+1})} \\
    & = v^{i+1} - (n - i + 1)\frac{z^e_{i+2}}{f(e^{i+1})\cdot b} \\
    & = v^{i+1} - (n - i + 1)\frac{z^e_{i+2}}{\left \lfloor z^e_{i+1} + \frac{n^2}{n-i+2}\right \rfloor} \\
    & \le v^{i+1} - (n - i + 1)\frac{z^e_{i+2}}{z^e_{i+1} + \frac{n^2}{n-i+2} }  \\
    & \le v^{i+1} - (n - i + 1)\frac{z^e_{i+1} - \frac{n^2}{(n - i + 1)(n - i + 2)} }{z^e_{i+1} + \frac{n^2}{n-i+2} }  \\
    & = v^{i+1} - (n - i + 1) \left ( 1 - \frac{\frac{n^2}{n-i+2} + \frac{n^2}{(n - i + 1)(n - i + 2)}}{z^e_{i+1} + \frac{n^2}{n-i+2}} \right) \\
    & = v^{i+1} - (n - i + 1) + (n - i + 1) \frac{\frac{n^2}{n-i+2} + \frac{n^2}{(n - i + 1)(n - i + 2) }}{z^e_{i+1} + \frac{n^2}{n-i+2}} \\
    & \le v^{i+1} - (n - i + 1) + (n - i + 1) \frac{\frac{n^2}{n-i+1} + \frac{n^2}{(n - i + 1)(n - i + 2) }}{z^e_{i+1}} \\
    & \le v^{i+1} - (n - i + 1) + \frac{n^2 + \frac{n^2}{n - i + 2 }}{z^e_{i+1}} \\
    & \le v^{i+1} - (n - i + 1) + \frac{2n^2}{a - 2n^3}.
\end{align*}
\end{proof}

\begin{lemma} \label{e0bound}
  If $y_i = 0$, $ v^{i+1} - (n - i + 1) - \frac{4n}{a - 3n^3} \le \Phi^{\alpha^M,\lambda^M}(e^{i+1}) \le v^{i+1} - (n - i + 1) - \frac{1}{a + 2n^3}$ for all $i \in [1,n]$  under the canonical flow.
\end{lemma}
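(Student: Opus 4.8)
The plan is to mimic exactly the template of Lemmas~\ref{cbound}--\ref{e1bound}. Starting from Definition~\ref{def:vv} (with $\alpha^M_i(k)=0$ and $\lambda_i^1(k)=R_i((v_i^k,1))$), I would write
$$\Phi^{\alpha^M,\lambda^M}(e^{i+1}) = v^{i+1} - \frac{\sum_{k=i+2}^{n+2} f(e^k)}{f(e^{i+1})} = v^{i+1} - (n-i+1)\,\frac{z^e_{i+2}}{f(e^{i+1})\cdot b},$$
using $\sum_{k=i+2}^{n+2} f(e^k)\cdot b = (n-i+1)\, z^e_{i+2}$ from the definition of the helper $z^e_{i+2}$. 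In the branch $y_i = 0$ the construction gives $f(e^{i+1})\cdot b = \lfloor z^e_{i+1} - 1\rfloor$, so the whole statement reduces to two-sided control of $(n-i+1)\,\frac{z^e_{i+2}}{\lfloor z^e_{i+1}-1\rfloor}$.

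First I would record the elementary two-sided estimate for $z^e_{i+2}$ in this branch, which is already implicit in the proof of Lemma~\ref{ze0gap}: from $z^e_{i+2} = z^e_{i+1} + \frac{z^e_{i+1} - \lfloor z^e_{i+1}-1\rfloor}{n-i+1}$ together with $z^e_{i+1} - 2 < \lfloor z^e_{i+1}-1\rfloor \le z^e_{i+1} - 1$ one gets $z^e_{i+1} + \frac{1}{n-i+1} \le z^e_{i+2} < z^e_{i+1} + \frac{2}{n-i+1}$. For the lower bound on $\Phi^{\alpha^M,\lambda^M}(e^{i+1})$ I would upper-bound the fraction via $z^e_{i+2} < z^e_{i+1} + \frac{2}{n-i+1}$ and $\lfloor z^e_{i+1}-1\rfloor > z^e_{i+1}-2$:
$$(n-i+1)\,\frac{z^e_{i+2}}{\lfloor z^e_{i+1}-1\rfloor} < (n-i+1)\,\frac{z^e_{i+1} + \frac{2}{n-i+1}}{z^e_{i+1}-2} = (n-i+1) + \frac{2(n-i+1)+2}{z^e_{i+1}-2},$$
and then finish with $2(n-i+1)+2 \le 4n$ and, via Lemma~\ref{zerange}, $z^e_{i+1}-2 \ge a - 2n^2 - 2 \ge a - 3n^3$, so the error term is at most $\frac{4n}{a-3n^3}$. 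For the upper bound on $\Phi^{\alpha^M,\lambda^M}(e^{i+1})$ I would lower-bound the fraction via $z^e_{i+2} \ge z^e_{i+1} + \frac{1}{n-i+1}$ and $\lfloor z^e_{i+1}-1\rfloor \le z^e_{i+1}-1$:
$$(n-i+1)\,\frac{z^e_{i+2}}{\lfloor z^e_{i+1}-1\rfloor} \ge (n-i+1)\,\frac{z^e_{i+1} + \frac{1}{n-i+1}}{z^e_{i+1}-1} = (n-i+1) + \frac{(n-i+1)+1}{z^e_{i+1}-1} \ge (n-i+1) + \frac{1}{a+2n^3},$$
using $(n-i+1)+1 \ge 1$ and $z^e_{i+1}-1 \le a + 2n^2 \le a + 2n^3$ from Lemma~\ref{zerange}. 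Substituting both estimates back into the displayed formula for $\Phi^{\alpha^M,\lambda^M}(e^{i+1})$ yields the two claimed inequalities.

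There is essentially no obstacle: the computation is completely parallel to that of Lemma~\ref{e1bound}. The only thing requiring a moment of care is bookkeeping the \emph{direction} of the inequalities — because in the $y_i=0$ branch the denominator $\lfloor z^e_{i+1}-1\rfloor$ lies strictly \emph{below} $z^e_{i+1}$ (unlike the $y_i=1$ branch), the correction term flips sign and pushes $\Phi^{\alpha^M,\lambda^M}(e^{i+1})$ strictly below $v^{i+1}-(n-i+1)$, which is precisely the point of this lemma (it will later be contrasted with the $y_i=1$ estimate to compare Bidder One's and Bidder Two's virtual values in Proposition~\ref{prop:Myerson}). One should also double-check the crude polynomial comparisons ($2(n-i+1)+2 \le 4n$, $2n^2+2 \le 3n^3$, $2n^2 \le 2n^3$), all of which hold once $n$ exceeds a small constant, consistent with the ``sufficiently large polynomials'' convention already in force.
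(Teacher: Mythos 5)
Your proof is correct and follows essentially the same route as the paper: expand $\Phi^{\alpha^M,\lambda^M}(e^{i+1})$ via $z^e_{i+2}$, substitute $f(e^{i+1})\cdot b = \lfloor z^e_{i+1}-1\rfloor$, and control the ratio using the gap and range bounds on $z^e$ from Lemmas~\ref{ze0gap}, \ref{zenearlymonotone}, and \ref{zerange}. The only (immaterial) differences are that you use the slightly sharper estimates $z^e_{i+2} < z^e_{i+1} + \frac{2}{n-i+1}$ and $\lfloor z^e_{i+1}-1\rfloor \le z^e_{i+1}-1$ where the paper uses $z^e_{i+2} < z^e_{i+1} + 2$ and $\lfloor z^e_{i+1}-1\rfloor \le z^e_{i+1}$, yielding the same final bounds.
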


\begin{proof}
  Recall that, by definition $z^e_{i + 1} = \frac{b - \sum_{j = 1}^{i} f(e^j) \cdot b}{n - i + 2} = \frac{\sum_{j = i+1}^{n + 2} f(e^j) \cdot b}{n - i + 2}$.
  Let us first lower bound $\Phi^{\alpha^M,\lambda^M}(e^{i+1})$. By Lemma \ref{zenearlymonotone} and \ref{zerange}, $z^e_{i + 2} < z^e_{i + 1} + 2$ for all $i \in [1, n]$, and $z^e_{i+1} \ge a - 2n^2$, for all $i \in [1, n + 1]$, we have the following inequality,
\begin{align*}
    \Phi^{\alpha^M,\lambda^M}(e^{i+1}) & = v^{i+1} - \frac{\sum_{k=i+2}^{n+2} f(e^k)}{f(e^{i+1})} \\
    & = v^{i+1} - (n - i + 1)\frac{z^e_{i+2}}{f(e^{i+1})\cdot b} \\
    & = v^{i+1} - (n - i + 1)\frac{z^e_{i+2}}{\left \lfloor z^e_{i+1}  - 1\right \rfloor} \\
    & \ge v^{i+1} - (n - i + 1)\frac{z^e_{i+2}}{z^e_{i+1} - 2}  \\
    & \ge v^{i+1} - (n - i + 1)\frac{z^e_{i+1} + 2 }{z^e_{i+1} - 2}  \\
    & = v^{i+1} - (n - i + 1)\left ( 1 + \frac{4}{z^e_{i+1} - 2} \right )  \\
    & \ge v^{i+1} - (n - i + 1) - \frac{4n}{a - 3n^3}.
\end{align*}
On the other hand, by Lemma \ref{ze0gap} $z^e_{i + 1} - z^e_{i + 2} \le -\frac{1}{n-i+1}$, we have
\begin{align*}
    \Phi^{\alpha^M,\lambda^M}(e^{i+1}) & = v^{i+1} - \frac{\sum_{k=i+2}^{n+2} f(e^k)}{f(e^{i+1})} \\
    & = v^{i+1} - (n - i + 1)\frac{z^e_{i+2}}{f(e^{i+1})\cdot b} \\
    & = v^{i+1} - (n - i + 1)\frac{z^e_{i+2}}{\left \lfloor z^e_{i+1}  - 1\right \rfloor} \\
    & \le v^{i+1} - (n - i + 1)\frac{z^e_{i+2}}{z^e_{i+1} }  \\
    & \le v^{i+1} - (n - i + 1)\frac{z^e_{i+1} + \frac{1}{n - i + 1} }{z^e_{i+1} }  \\
    & = v^{i+1} - (n - i + 1) \left (1 + \frac{\frac{1}{n - i + 1}}{z^e_{i+1}} \right) \\
    & = v^{i+1} - (n - i + 1) - \frac{1}{z^e_{i+1}} \\
    & \le v^{i+1} - (n - i + 1) - \frac{1}{a + 2n^3}.
\end{align*}
\end{proof}

We now have enough tools to prove Proposition~\ref{prop:Myerson}. Firstly we will prove the monotonicity of virtual values, which establishes the first two bullets of Proposition~\ref{prop:Myerson}.
\begin{lemma} \label{monotone}
For all $x, y \in \{0, 1\}^n$, we have \[
\max \left (\Phi^{\alpha^M,\lambda^M}(c^{i}) , \Phi^{\alpha^M,\lambda^M}(d^{i}),  \Phi^{\alpha^M,\lambda^M}(e^{i})  \right) <  \min \left (\Phi^{\alpha^M,\lambda^M}(c^{i + 1}) , \Phi^{\alpha^M,\lambda^M}(d^{i + 1}),  \Phi^{\alpha^M,\lambda^M}(e^{i + 1})  \right) \] for $i \in [1, n + 1]$.
\end{lemma}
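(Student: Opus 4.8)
\textbf{Proof plan for Lemma~\ref{monotone}.}
The plan is to reduce the statement almost entirely to the five bounding lemmas (Lemmas~\ref{cbound}, \ref{d1bound}, \ref{d0bound}, \ref{e1bound}, \ref{e0bound}), which already pin every virtual value $\Phi^{\alpha^M,\lambda^M}(\cdot^{k})$ with $k\in[2,n+1]$ to within an additive error of $O(n^3/a)=O(1/n^2)$ around the ``main term'' $v_i^{k}-(n-k+2)$, and then to handle the two boundary types $k=1$ and $k=n+2$ by direct computation. The key arithmetic fact is that, since $v_i^{k+1}-v_i^{k}=1$ for all $k$, the main terms of $\Phi(\cdot^{i})$ and $\Phi(\cdot^{i+1})$ differ by exactly $2$ (the value goes up by $1$ and the coefficient $-(n-k+2)$ also goes up by $1$), whereas the error terms are all bounded in absolute value by quantities that sum to strictly less than $2$; so the desired ordering is forced.

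First I would split on the index $i$ appearing in the statement and treat the ``bulk'' range $i\in[2,n]$, where both $\Phi(\cdot^{i})$ and $\Phi(\cdot^{i+1})$ are covered by the bounding lemmas (applied with dummy variable $i-1$ and $i$ respectively, branching on $x_{i-1},y_{i-1}$ for the $d$- and $e$-types, and similarly for $i$). Collecting the worst cases across the branches, one gets a single clean upper bound, namely each of $\Phi(c^{i}),\Phi(d^{i}),\Phi(e^{i})\le v_1^{i}-(n-i+2)+\tfrac{2n^3}{a-2n^3}$ (the $c$- or $d$-type with $x=0$ realizes the largest excursion), and a single clean lower bound, each of $\Phi(c^{i+1}),\Phi(d^{i+1}),\Phi(e^{i+1})\ge v_1^{i+1}-(n-i+1)-\tfrac{4n}{a-3n^3}$ (only the $e$-type with $y=0$ dips below its main term). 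It then remains to verify $2=(v^{i+1}-v^{i})+\big((n-i+2)-(n-i+1)\big)>\tfrac{2n^3}{a-2n^3}+\tfrac{4n}{a-3n^3}$, which holds because $a=\Theta(n^5)$ forces the right-hand side to be $O(1/n^2)<1$.

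Next I would dispatch the two endpoints. For $i=n+1$: the convention $v_i^{n_i+1}:=v_i^{n_i}$ makes the correction term in Definition~\ref{def:vv} vanish at $k=n+2$, so $\Phi(c^{n+2})=\Phi(d^{n+2})=\Phi(e^{n+2})=v_i^{n+2}=n^2+n+2$, while the bounding lemmas (dummy variable $n$) give each $\Phi(\cdot^{n+1})\le v^{n+1}-1+\tfrac{2n^3}{a-2n^3}<n^2+n+1$, so the comparison has slack more than $1$. For $i=1$: I would evaluate $\Phi(c^1),\Phi(d^1),\Phi(e^1)$ directly, using $\lambda_i^j(2)=R_i((v_i^2,j))$ together with the total-mass identities (Lemmas~\ref{lem:subprobs11}, \ref{lem:subprobs12}, \ref{lem:subprobs2}) and the explicit small values $f_1((v_1^1,\cdot))=\tfrac1{20n}$ and $f_2((v_2^1,1))=\tfrac1{10n}-\tfrac1b$; this yields $\Phi(c^1)=\Phi(d^1)=n^2-10n+2$ and $\Phi(e^1)\in(n^2-10n,\,n^2-10n+2)$, whereas each $\Phi(\cdot^2)\ge v^2-n-\tfrac{4n}{a-3n^3}>n^2-n+1$, a gap of $\Theta(n)$.

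The main obstacle is purely organizational: one has to verify that, over the four-way case split on $(x_{i-1},y_{i-1})$ and the four-way split on $(x_i,y_i)$, the slightly different error terms in the five lemmas really do collapse into the single upper bound $v^{i}-(n-i+2)+\tfrac{2n^3}{a-2n^3}$ and the single lower bound $v^{i+1}-(n-i+1)-\tfrac{4n}{a-3n^3}$, and that the chosen polynomial magnitudes $a=\Theta(n^5)$, $b=10n^6$ make those errors small enough to respect the $2$-gap. I would also remark explicitly that the claim need only hold for $n$ above a fixed constant (which is in any case forced by the size constraints built into Select-Outcome Problem), so there is no genuine analytic difficulty once the bookkeeping is arranged.
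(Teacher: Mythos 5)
Your plan is correct and rests on the same core observation the paper uses: the ``main terms'' $v^{k}-(n-k+2)$ jump by exactly $2$ from level $k$ to level $k+1$, while the deviation of each virtual value from its main term is $O(n^3/a)=O(1/n^2)$, so the ordering is forced once $n$ is large. The route differs in which intermediate lemmas carry the load and in the case decomposition. You derive the per-type error window from the five refined bounding lemmas (\ref{cbound}, \ref{d1bound}, \ref{d0bound}, \ref{e1bound}, \ref{e0bound}), which forces a four-way split on $(x_{i-1},y_{i-1})$ and another on $(x_i,y_i)$, and also forces you to treat $i=n+1$ separately because those lemmas only cover types $2,\dots,n+1$. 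The paper instead bounds the ratio $R_i((v^{k+1},j))/f_i((v^k,j))$ directly from the near-uniformity lemmas (\ref{crange}, \ref{drange}, \ref{erange}), yielding the single inequality
\[
L_{i+1}-R_i \;\ge\; 1 - (n-i+1)\tfrac{a+2n^3}{a-2n^3} + (n-i+2)\tfrac{a-2n^3}{a+2n^3}\;\ge\;1
\]
for all $i>1$, with no branching on $x,y$ and with $i=n+1$ handled for free (the coefficient $n-i+1$ vanishes there, reproducing $\Phi(\cdot^{n+2})=v^{n+2}$ exactly). Both treat $i=1$ by direct evaluation in the same way. What you gain from the refined lemmas is somewhat tighter bookkeeping of which branch realizes the extremal error; what the paper's coarser route buys is a shorter, uniform calculation with no additional case at the top of the range. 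Either is a valid proof of the lemma.
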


\begin{proof}
For convenience, let \[L_i = \min \left (\Phi^{\alpha^M,\lambda^M}(c^{i + 1}) , \Phi^{\alpha^M,\lambda^M}(d^{i + 1}),  \Phi^{\alpha^M,\lambda^M}(e^{i + 1})  \right),\] and \[R_i = \max \left (\Phi^{\alpha^M,\lambda^M}(c^{i + 1}) , \Phi^{\alpha^M,\lambda^M}(d^{i + 1}),  \Phi^{\alpha^M,\lambda^M}(e^{i + 1})  \right).\]
We prove this lemma by showing the gap between $R_i$ and $L_{i+1}$ is at least 1 for all $i$ under the canonical flow.

Firstly, $R_1 = \Phi^{\alpha^M,\lambda^M}(c^{1}) = n^2 - 10n + 2$, and by Lemma \ref{crange}, \ref{drange} and \ref{erange}, $L_2 \ge n^2 + 2 - n \cdot \frac{a + 2n^3}{a-2n^3}$. This lemma clearly holds for $i = 1$.

For $i > 1$, again by  Lemma \ref{crange}, \ref{drange} and \ref{erange},  we have
\begin{align*}
    L_{i + 1} - R_{i} & \ge 1 - (n - i + 1)\frac{a + 2n^3}{a - 2n^3} + (n - i + 2)\frac{a - 2n^3}{a + 2n^3} \\
    & = 1 - (n - i + 1) \left ( 1 + \frac{4n^3}{a - 2n^3} \right) + (n - i + 2) \left ( 1 - \frac{4n^3}{a + 2n^3} \right) \\
    & = 2 -  (n - i + 1) \frac{4n^3}{a - 2n^3}  + (n - i + 2)  \frac{4n^3}{a + 2n^3} \\
    & \ge 1
\end{align*}

\end{proof}

The next three lemmas will establish bullet three and four of Proposition~\ref{prop:Myerson}.

\begin{lemma} \label{le}
If either $x_{i} = 0$ or $y_{i} = 0$, then $\Phi^{\alpha^M,\lambda^M}(d^{i+1}) > \Phi^{\alpha^M,\lambda^M}(e^{i+1})$  for all $i \in [1,n]$ for the canonical flow.
\end{lemma}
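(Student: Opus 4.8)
The plan is a short case analysis driven entirely by the four bounding lemmas just proved (Lemmas~\ref{d1bound}, \ref{d0bound}, \ref{e1bound}, \ref{e0bound}). The key observation is that in every case the virtual values $\Phi^{\alpha^M,\lambda^M}(d^{i+1})$ and $\Phi^{\alpha^M,\lambda^M}(e^{i+1})$ share the common leading term $v^{i+1} - (n-i+1)$, so the inequality is decided purely by the lower-order correction terms, which I will bound using $a = \frac{10n^6 - n^5}{n+1} = \Theta(n^5) \gg n^3$.

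I would split on the hypothesis ``$x_i = 0$ \textbf{or} $y_i = 0$'' into three cases. If $x_i = 1$ (hence $y_i = 0$), Lemma~\ref{d0bound} gives $\Phi^{\alpha^M,\lambda^M}(d^{i+1}) \ge v^{i+1} - (n-i+1)$, while Lemma~\ref{e0bound} gives $\Phi^{\alpha^M,\lambda^M}(e^{i+1}) \le v^{i+1} - (n-i+1) - \frac{1}{a+2n^3} < v^{i+1} - (n-i+1)$, so the claim is immediate (once we note $a+2n^3 > 0$). If $x_i = 0$ and $y_i = 0$, the same sign comparison works: Lemma~\ref{d1bound} gives $\Phi^{\alpha^M,\lambda^M}(d^{i+1}) \ge v^{i+1} - (n-i+1) + \frac{\frac12 n^3 - n}{a+3n^3} > v^{i+1} - (n-i+1)$, which again strictly exceeds the $e$-upper bound from Lemma~\ref{e0bound}. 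The only case that needs an actual numerical inequality is $x_i = 0$, $y_i = 1$: here Lemma~\ref{d1bound} lower-bounds $\Phi^{\alpha^M,\lambda^M}(d^{i+1})$ by $v^{i+1} - (n-i+1) + \frac{\frac12 n^3 - n}{a+3n^3}$ and Lemma~\ref{e1bound} upper-bounds $\Phi^{\alpha^M,\lambda^M}(e^{i+1})$ by $v^{i+1} - (n-i+1) + \frac{2n^2}{a-2n^3}$, so it remains to verify $\frac{\frac12 n^3 - n}{a+3n^3} > \frac{2n^2}{a-2n^3}$. Since $a = \Theta(n^5)$, both denominators are positive and $\Theta(n^5)$, so the left side is $\Theta(1/n^2)$ while the right side is $\Theta(1/n^3)$; cross-multiplying turns this into a polynomial inequality in $n$ that holds for all sufficiently large $n$, which suffices for the asymptotic statement of Theorem~\ref{thm:main}.

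I do not expect a genuine obstacle here; the lemma is essentially bookkeeping on top of the previously-established bounds. The one point requiring care is the $x_i = 0$, $y_i = 1$ case, where the $\Theta(n^3)$ ``slack'' that the construction builds into Bidder One's day-2 distribution (the floor-shift $\frac{n^3}{n-i+2}$) must genuinely dominate the $\Theta(n^2)$ slack built into Bidder Two's distribution (the shift $\frac{n^2}{n-i+2}$) — this is exactly why those two polynomials were chosen with different exponents, and it is the source of the robustness of the separation. A secondary bit of care is simply confirming the positivity of $a - 2n^3$, $a - 3n^3$, and $a + 3n^3$, which again follows from $a = \Theta(n^5)$.
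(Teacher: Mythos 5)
Your proposal is correct and follows essentially the same route as the paper: dispatch the $y_i = 0$ sub-cases by the clean sign comparison ($d$-lower bound $\ge v^{i+1}-(n-i+1) \ge$ $e$-upper bound, with strictness from $e$0-bound), then resolve the remaining $x_i=0,\ y_i=1$ case by the asymptotic comparison $\Theta(1/n^2)$ versus $\Theta(1/n^3)$ drawn from Lemmas~\ref{d1bound} and~\ref{e1bound}. The only substantive difference is that you split the $y_i=0$ branch into two sub-cases depending on $x_i$, while the paper treats them together; the mathematical content is identical. Note incidentally that the paper's text contains a typo — it writes ``the only case left is $x_i=1$ and $y_i=1$'' when it means $x_i=0,\ y_i=1$, exactly the case you correctly identify as the one needing the numerical inequality.
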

\begin{proof}
By Lemma \ref{e0bound}, \ref{d0bound} and \ref{d1bound}, we know  $\Phi^{\alpha^M,\lambda^M}(d^{i+1}) \ge v^{i+1} - (n - i + 1)$ whereas  $\Phi^{\alpha^M,\lambda^M}(e^{i+1}) < v^{i+1} - (n - i + 1)$ when $y_i = 0$. Thus the only case left is $x_i = 0$ and $y_i = 1$, again by Lemma \ref{d1bound} and \ref{e1bound}, we have
\begin{align*}
  \Phi^{\alpha^M,\lambda^M}(d^{i+1}) - \Phi^{\alpha^M,\lambda^M}(e^{i+1}) & \ge \frac{\frac12 n^3 - 3n }{a + 3n^3} -  \frac{2n^2}{a - 2n^3} \\
  & = \Omega(1/n^2),
\end{align*}
where the last equality is from the fact that $a = \Theta(n^5)$.

\end{proof}

\begin{lemma} \label{upper}
 $\Phi^{\alpha^M,\lambda^M}(c^{i+1}) > \Phi^{\alpha^M,\lambda^M}(e^{i+1})$ for all $i \in [1,n]$ under the canonical flow. Furthermore,
 \[\min_{i \in [1,n]} \left (\Phi^{\alpha^M,\lambda^M}(c^{i+1}) - \Phi^{\alpha^M,\lambda^M}(e^{i+1}) \right ) = \Theta(1/n^2),\] and \[\max_{i \in [1,n]} \left (\Phi^{\alpha^M,\lambda^M}(c^{i+1}) - \Phi^{\alpha^M,\lambda^M}(e^{i+1}) \right ) = \Theta(1/n^2).\]
\end{lemma}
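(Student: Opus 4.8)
The plan is to reduce this entirely to the two-sided estimates on $\Phi^{\alpha^M,\lambda^M}(c^{i+1})$ and $\Phi^{\alpha^M,\lambda^M}(e^{i+1})$ already established in Lemmas~\ref{cbound}, \ref{e1bound} and~\ref{e0bound}. The crucial structural point is that all three of these lemmas write the virtual value of the value $v^{i+1}=n^2+i+1$ in the form $v^{i+1}-(n-i+1)+(\text{correction})$, where the leading part $v^{i+1}-(n-i+1)$ is \emph{identical} for $c^{i+1}$ and $e^{i+1}$. Hence the gap $\Phi^{\alpha^M,\lambda^M}(c^{i+1})-\Phi^{\alpha^M,\lambda^M}(e^{i+1})$ equals exactly the difference of the two correction terms, which I can bound from both sides uniformly in $i$.

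I would then split on the value of $y_i$. If $y_i=1$, Lemma~\ref{cbound} confines the $c$-correction to $[\tfrac{\frac12 n^3-n}{a+3n^3},\,\tfrac{2n^3}{a-2n^3}]$ and Lemma~\ref{e1bound} confines the $e$-correction to $[\tfrac{\frac12 n^2-3n}{a+3n^3},\,\tfrac{2n^2}{a-2n^3}]$, so subtracting puts the gap in $[\tfrac{\frac12 n^3-n}{a+3n^3}-\tfrac{2n^2}{a-2n^3},\ \tfrac{2n^3}{a-2n^3}-\tfrac{\frac12 n^2-3n}{a+3n^3}]$. If $y_i=0$, the $c$-correction bound is unchanged while Lemma~\ref{e0bound} puts the (now negative) $e$-correction in $[-\tfrac{4n}{a-3n^3},\,-\tfrac{1}{a+2n^3}]$, so the gap lies in $[\tfrac{\frac12 n^3-n}{a+3n^3}+\tfrac{1}{a+2n^3},\ \tfrac{2n^3}{a-2n^3}+\tfrac{4n}{a-3n^3}]$. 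Recalling that $a=\tfrac{b-n^5}{n+1}$ with $b=10n^6$, so $a=\Theta(n^5)$, in both cases the lower endpoint equals $\tfrac{\Theta(n^3)}{\Theta(n^5)}$ up to strictly lower-order corrections, hence is positive for all large $n$ and of order $\Theta(1/n^2)$; the upper endpoint is likewise $\Theta(1/n^2)$.

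This already yields $\Phi^{\alpha^M,\lambda^M}(c^{i+1})>\Phi^{\alpha^M,\lambda^M}(e^{i+1})$ for every $i\in[1,n]$. Moreover, the argument produces fixed constants $c_1,c_2>0$ (independent of $i$) with $c_1/n^2\le \Phi^{\alpha^M,\lambda^M}(c^{i+1})-\Phi^{\alpha^M,\lambda^M}(e^{i+1})\le c_2/n^2$ for \emph{every} $i\in[1,n]$; since the same bounds hold for all indices, both $\min_i$ and $\max_i$ of the gap are sandwiched in $[c_1/n^2,\,c_2/n^2]$, which is exactly the two claimed $\Theta(1/n^2)$ statements. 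The only point needing care — and the closest this comes to an obstacle — is checking in each $y_i$-case that the $\Theta(1/n^2)$ lower bound on the $c$-correction genuinely dominates the worst-case $e$-correction (of order $\Theta(1/n^3)$ when $y_i=1$, and negative of magnitude $O(1/n^4)$ when $y_i=0$), so that the gap is simultaneously positive and of order exactly $1/n^2$; this is immediate once $a=\Theta(n^5)$ is substituted.
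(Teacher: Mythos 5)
Your proposal is correct and follows the same strategy as the paper: cancel the common leading term $v^{i+1}-(n-i+1)$ and bound the gap by the difference of the correction terms from Lemmas~\ref{cbound}, \ref{e1bound}, and~\ref{e0bound}, then observe that $a=\Theta(n^5)$ makes every endpoint $\Theta(1/n^2)$. In fact your version is slightly more careful than the paper's: the paper's proof cites only Lemma~\ref{e0bound} (the $y_i=0$ bound) and writes the lower estimate $\frac{\frac12 n^3-n}{a+3n^3}+\frac{1}{a+2n^3}$ as though it applied to all $i$, which it does not when $y_i=1$, since Lemma~\ref{e1bound} gives a weaker (though still $\Theta(1/n^2)$) lower bound $\frac{\frac12 n^3-n}{a+3n^3}-\frac{2n^2}{a-2n^3}$ in that case. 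Your explicit split on $y_i$ closes that small gap while reaching the same uniform $\Theta(1/n^2)$ conclusion.
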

\begin{proof}
By lemma \ref{cbound} and \ref{e0bound} we have
\begin{align*}
    \Phi^{\alpha^M,\lambda^M}(c^{i+1}) - \Phi^{\alpha^M,\lambda^M}(e^{i+1}) & \ge \frac{\frac12 n^3 - n }{a + 3n^3} + \frac{1}{a + 2n^3} \\
    & = \Omega(1/n^2),
\end{align*}
where the last equality is from the fact that $a = \Theta(n^5)$.
On the other hand, we have
\begin{align*}
    \Phi^{\alpha^M,\lambda^M}(c^{i+1}) - \Phi^{\alpha^M,\lambda^M}(e^{i+1}) & \le \frac{2n^3}{a - 2n^3} + \frac{4n}{a - 3n^3} \\
    & = O(1/n^2),
\end{align*}
where the last equality is from the fact that $a = \Theta(n^5)$.
\end{proof}

\begin{lemma}\label{ge}
If $x_{i} = y_{i} = 1$, then $\Phi^{\alpha^M,\lambda^M}(d^{i+1}) < \Phi^{\alpha^M,\lambda^M}(e^{i+1})$ for the canonical flow. Furthermore, \[\min_{i : x_i = y_i = 1} \left (\Phi^{\alpha^M,\lambda^M}(e^{i+1}) -\Phi^{\alpha^M,\lambda^M}(d^{i+1})  \right ) = \Theta(1/n^3),\] and \[\max_{i: x_i = y_i = 1} \left (\Phi^{\alpha^M,\lambda^M}(e^{i+1}) -\Phi^{\alpha^M,\lambda^M}(d^{i+1}) \right ) = \Theta(1/n^3).\]
\end{lemma}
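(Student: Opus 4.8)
The plan is to mirror the proofs of Lemmas~\ref{le} and~\ref{upper}, combining two-sided estimates already established for the two virtual values in play. Since we are in the case $x_i = y_i = 1$, I would invoke Lemma~\ref{d0bound} (which applies precisely when $x_i = 1$) to get
\[
v^{i+1} - (n-i+1) \;\le\; \Phi^{\alpha^M,\lambda^M}(d^{i+1}) \;\le\; v^{i+1} - (n-i+1) + \frac{2n}{a - 2n^3},
\]
and Lemma~\ref{e1bound} (which applies precisely when $y_i = 1$) to get
\[
v^{i+1} - (n-i+1) + \frac{\tfrac12 n^2 - 3n}{a + 3n^3} \;\le\; \Phi^{\alpha^M,\lambda^M}(e^{i+1}) \;\le\; v^{i+1} - (n-i+1) + \frac{2n^2}{a - 2n^3}.
\]

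First I would establish the strict inequality $\Phi^{\alpha^M,\lambda^M}(d^{i+1}) < \Phi^{\alpha^M,\lambda^M}(e^{i+1})$ by subtracting the upper bound on $\Phi^{\alpha^M,\lambda^M}(d^{i+1})$ from the lower bound on $\Phi^{\alpha^M,\lambda^M}(e^{i+1})$:
\[
\Phi^{\alpha^M,\lambda^M}(e^{i+1}) - \Phi^{\alpha^M,\lambda^M}(d^{i+1}) \;\ge\; \frac{\tfrac12 n^2 - 3n}{a + 3n^3} - \frac{2n}{a - 2n^3}.
\]
Since $a = \Theta(n^5)$, the first term is $\Theta(1/n^3)$ while the second is $\Theta(1/n^4)$, so the right-hand side is $\Omega(1/n^3) > 0$; this simultaneously proves the strict comparison and the claimed lower bound on $\min_{i : x_i = y_i = 1}\big(\Phi^{\alpha^M,\lambda^M}(e^{i+1}) - \Phi^{\alpha^M,\lambda^M}(d^{i+1})\big)$. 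For the matching upper bound I would instead subtract the lower bound on $\Phi^{\alpha^M,\lambda^M}(d^{i+1})$ from the upper bound on $\Phi^{\alpha^M,\lambda^M}(e^{i+1})$, obtaining
\[
\Phi^{\alpha^M,\lambda^M}(e^{i+1}) - \Phi^{\alpha^M,\lambda^M}(d^{i+1}) \;\le\; \frac{2n^2}{a - 2n^3} \;=\; O(1/n^3),
\]
again using $a = \Theta(n^5)$. Combining the two estimates shows both the minimum and the maximum of the gap over indices $i$ with $x_i = y_i = 1$ are $\Theta(1/n^3)$, as claimed.

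There is no substantial obstacle: the whole argument is a direct consequence of Lemmas~\ref{d0bound} and~\ref{e1bound} together with the scale $a = \Theta(n^5)$. The only point requiring minor care is checking that the $\tfrac12 n^2$ term in the numerator of the lower bound on $\Phi^{\alpha^M,\lambda^M}(e^{i+1})$ genuinely dominates the $O(n)$ corrections coming from both virtual values, so that the $\Theta(1/n^3)$ term is not cancelled; this is immediate for $n$ large enough, and (as elsewhere in the paper) the polynomial exponents in $a$, $n^2$, $n^3$ were chosen with exactly this slack in mind.
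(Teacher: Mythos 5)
Your proposal is correct and follows essentially the same route as the paper: it invokes Lemmas~\ref{d0bound} and~\ref{e1bound} to get two-sided estimates on $\Phi^{\alpha^M,\lambda^M}(d^{i+1})$ and $\Phi^{\alpha^M,\lambda^M}(e^{i+1})$, takes the appropriate differences to obtain a lower bound of $\frac{\tfrac12 n^2 - 3n}{a + 3n^3} - \frac{2n}{a - 2n^3}$ and an upper bound of $\frac{2n^2}{a - 2n^3}$, and concludes both are $\Theta(1/n^3)$ from $a = \Theta(n^5)$, exactly as in the paper.
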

\begin{proof}
When $x_i = y_i = 1$, by Lemma \ref{d0bound} and \ref{e1bound} we have
\begin{align*}
\Phi^{\alpha^M,\lambda^M}(e^{i+1}) - \Phi^{\alpha^M,\lambda^M}(d^{i+1}) & \ge \frac{\frac12 n^2 - 3n }{a + 3n^3} - \frac{2n}{a - 2n^3} \\
& = \Omega(1/n^3),
\end{align*}
where the last equality is from the fact that $a = \Theta(n^5)$.
On the other hand, we have
\begin{align*}
\Phi^{\alpha^M,\lambda^M}(e^{i+1}) - \Phi^{\alpha^M,\lambda^M}(d^{i+1}) & \le \frac{2n^2}{a - 2n^3} - 0 \\
& = O(1/n^3),
\end{align*}
where the last equality is from the fact that $a = \Theta(n^5)$.
\end{proof}

Now we would like to show the relationship of virtual values of the lowest type for the canonical flow, which establishes bullet five of Proposition~\ref{prop:Myerson}.

\begin{lemma} \label{before}
$ \Phi^{\alpha^M,\lambda^M}(d^1) = \Phi^{\alpha^M,\lambda^M}(c^1)  > \Phi^{\alpha^M,\lambda^M}(e^1)$ for the canonical flow.
\end{lemma}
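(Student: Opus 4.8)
The plan is to evaluate all three virtual values in closed form using the Observation preceding Proposition~\ref{prop:Myerson}, namely $\Phi^{\alpha^M,\lambda^M}_i((v_i^k,j)) = v_i^k - R_i((v_i^{k+1},j))/f_i((v_i^k,j))$, specialized to $k=1$ (where $v_i^{2}-v_i^{1}=1$ and $v_i^1 = n^2+1$ for both bidders). Note that none of the three quantities depends on $x$ or $y$, since the masses $f_1((v_1^1,1))$, $f_1((v_1^1,2))$, $f_2((v_2^1,1))$ are all fixed numbers in the construction.

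First I would handle Bidder One. By construction $f_1((v_1^1,1)) = f_1((v_1^1,2)) = \tfrac{1}{20n}$, and Lemmas~\ref{lem:subprobs11} and~\ref{lem:subprobs12} say that the total day1 mass and total day2 mass of Bidder One are each $\tfrac12$; hence $R_1((v_1^2,1)) = R_1((v_1^2,2)) = \tfrac12 - \tfrac1{20n}$. Therefore
\[
\Phi^{\alpha^M,\lambda^M}(c^1) = \Phi^{\alpha^M,\lambda^M}(d^1) = n^2 + 1 - \frac{\tfrac12 - \tfrac1{20n}}{\tfrac1{20n}} = n^2 + 1 - (10n-1) = n^2 - 10n + 2,
\]
which already establishes the equality $\Phi^{\alpha^M,\lambda^M}(d^1) = \Phi^{\alpha^M,\lambda^M}(c^1)$.

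Next I would handle Bidder Two. Writing $p := f_2((v_2^1,1)) = \tfrac{b/(10n)-1}{b} = \tfrac1{10n} - \tfrac1b$, Lemma~\ref{lem:subprobs2} gives $R_2((v_2^2,1)) = 1 - p$, so
\[
\Phi^{\alpha^M,\lambda^M}(e^1) = n^2 + 1 - \frac{1-p}{p} = n^2 + 1 - \Big(\tfrac1p - 1\Big).
\]
Comparing with $n^2 - 10n + 2$ reduces to checking $\tfrac1p - 1 > 10n - 1$, i.e. $p < \tfrac1{10n}$, which holds because $p = \tfrac1{10n} - \tfrac1b < \tfrac1{10n}$ (as $b>0$). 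Hence $\Phi^{\alpha^M,\lambda^M}(e^1) < n^2 - 10n + 2 = \Phi^{\alpha^M,\lambda^M}(c^1)$, completing the argument.

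There is no real obstacle here: the computation is elementary once one reads off each $R_i((v_i^2,j))$ correctly as (total day-$j$ mass)$\,-\,$(mass of the lowest type) via the appropriate total-mass lemma. The only design ingredient being exploited is that Bidder Two's lowest type is deliberately given mass exactly $\tfrac1b$ below $\tfrac1{10n}$, which is precisely what forces its lowest virtual value strictly below the (equal) lowest virtual values of Bidder One's two interests.
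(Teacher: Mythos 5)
Your proof is correct and follows essentially the same approach as the paper's: the equality comes from $f_1((v_1^1,1))=f_1((v_1^1,2))$ together with equal total day-1 and day-2 masses, and the strict inequality comes from Bidder Two's lowest type being given mass exactly $\tfrac1b$ below $\tfrac1{10n}$. The only cosmetic difference is that you compute the virtual values in closed form while the paper expresses them via the helper $z^c_2, z^e_2$ and compares the resulting fractions directly, but the underlying ingredients are identical.
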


\begin{proof}
The first equality holds since $f(c^1) = f(d^1)$, and the probability for day $1$ and day $2$ are both $1/2$. The inequality holds since
\begin{align*}
    \Phi^{\alpha^M,\lambda^M}(e^1) & = v^1 - (n + 1) \frac{z^e_2}{f(e^1) \cdot b} \\
    & =  v^1 - (n + 1) \frac{\frac{b - \frac{b}{10n} + 1}{n + 1}}{f(e^1) \cdot b} \\
    & = v^1 - (n + 1) \frac{a + \frac{1}{n + 1}}{\frac{b}{10n} - 1} \\
    & < v^1 - (n + 1) \frac{a}{\frac{b}{10n}} \\
    & =  \Phi^{\alpha^M,\lambda^M}(c^1).
\end{align*}
\end{proof}

Bullet six is easy to verify, by the definition of virtual values, $\Phi^{\alpha^M,\lambda^M}_1((v_1^{n+2},j)) = \Phi^{\alpha^M,\lambda^M}_2((v_2^{n+2},1)) = n^2 + n + 2$.

Finally, the following lemma shows all virtual values are positive which implies the last bullet of Proposition~\ref{prop:Myerson}.
\begin{lemma} \label{positive}
  By Lemma~\ref{monotone} and Lemma~\ref{before}, it is sufficient to show $\Phi^{\alpha^M,\lambda^M}(e^1) > 0$. To this end, observe that:
  \begin{align*}
    \Phi^{\alpha^M,\lambda^M}(e^1) & = v^1 - (n + 1) \frac{z^e_2}{f(e^1) \cdot b} \\
    & =  v^1 - (n + 1) \frac{\frac{b - \frac{b}{10n} + 1}{n + 1}}{f(e^1) \cdot b} \\
    & = v^1 - (n + 1) \frac{a + \frac{1}{n + 1}}{\frac{b}{10n} - 1} \\
    & = n^2 + 1 - (n + 1) \frac{a + \frac{1}{n + 1}}{\frac{b}{10n} - 1} \\
    & > n^2 + 1 - 20(n + 1) \\
    & > 0
  \end{align*}
\end{lemma}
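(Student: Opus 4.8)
The plan is to reduce the claim---that every virtual value $\Phi_i^{\alpha^M,\lambda^M}((v_i^k,j))$ under the canonical flow is strictly positive---to the single scalar inequality $\Phi^{\alpha^M,\lambda^M}(e^1) > 0$, and then verify the latter by direct substitution of the constants fixed in Section~\ref{sec:reduction}. The reduction step is immediate from the two preceding lemmas: Lemma~\ref{monotone} shows that at every level $i$ the three virtual values $\Phi^{\alpha^M,\lambda^M}(c^i),\Phi^{\alpha^M,\lambda^M}(d^i),\Phi^{\alpha^M,\lambda^M}(e^i)$ lie strictly below the three virtual values at level $i+1$, so the global minimum over all types, both bidders, and both days is attained at level $1$; and Lemma~\ref{before} identifies that minimum as $\Phi^{\alpha^M,\lambda^M}(e^1)$, since $\Phi^{\alpha^M,\lambda^M}(d^1) = \Phi^{\alpha^M,\lambda^M}(c^1) > \Phi^{\alpha^M,\lambda^M}(e^1)$. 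Hence it suffices to bound $\Phi^{\alpha^M,\lambda^M}(e^1)$ from below by a positive quantity.

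For that bound I would first rewrite $\Phi^{\alpha^M,\lambda^M}(e^1)$ in closed form. From Definition~\ref{def:vv} and $v_2^2 - v_2^1 = 1$ we get $\Phi^{\alpha^M,\lambda^M}(e^1) = v_2^1 - R_2((v_2^2,1))/f_2(e^1)$, and since $z^e_2\cdot(n+1) = b\cdot R_2((v_2^2,1))$ this equals $v_2^1 - (n+1) z^e_2 / (f_2(e^1)\cdot b)$. Next I would plug in the explicit values: $v_2^1 = n^2+1$, $f_2(e^1)\cdot b = \frac{b}{10n} - 1 = n^5 - 1$, and $z^e_2 = \frac{b - f_2(e^1)\cdot b}{n+1} = a + \frac{1}{n+1}$, using $b = 10n^6$ and $a = \frac{b - n^5}{n+1}$. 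This yields $\Phi^{\alpha^M,\lambda^M}(e^1) = n^2 + 1 - (n+1)\frac{a + \frac{1}{n+1}}{n^5 - 1}$.

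Finally I would estimate the subtracted term: since $(n+1)\bigl(a + \frac{1}{n+1}\bigr) = n^5(10n-1) + 1 < 10n^6$ and $n^5 - 1 \geq n^5/2$ for $n \geq 2$, the subtracted term is at most $20n < 20(n+1)$, so $\Phi^{\alpha^M,\lambda^M}(e^1) > n^2 + 1 - 20(n+1) > 0$ for all sufficiently large $n$. There is no real obstacle here---the argument is bookkeeping once the reduction to $\Phi^{\alpha^M,\lambda^M}(e^1)$ is in place---so the only thing to be careful about is keeping the constants $b = 10n^6$ and $a = (b-n^5)/(n+1)$ straight, so that the ratio $(n+1)z^e_2/(f_2(e^1)\cdot b)$ comes out $\Theta(n)$ rather than something larger, which is exactly what makes $v_2^1 = n^2+1$ dominate.
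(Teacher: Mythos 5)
Your proof is correct and follows essentially the same route as the paper: reduce to positivity of $\Phi^{\alpha^M,\lambda^M}(e^1)$ via Lemmas~\ref{monotone} and~\ref{before}, expand $\Phi^{\alpha^M,\lambda^M}(e^1) = v_2^1 - (n+1)z^e_2/(f_2(e^1)\cdot b)$, substitute the explicit constants $b=10n^6$, $a=(b-n^5)/(n+1)$, and bound the subtracted term by $20(n+1)$. The paper compresses the final estimate into the single inequality $\frac{a+\frac{1}{n+1}}{\frac{b}{10n}-1}<20$ while you derive the same bound slightly more explicitly, and you correctly note (which the paper leaves implicit) that $n^2+1-20(n+1)>0$ holds only for sufficiently large $n$.
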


\subsection{Analyzing the Modified Flow}

First we bound the flow we need for the boosting operation:
\begin{lemma} \label{boundofflow}
When flow is needed, $\varepsilon = \max \left (\frac{\Phi^{\alpha^M,\lambda^M}(e^{i}) - \Phi^{\alpha^M,\lambda^M}(d^{i})}{f(d^i)} \right) = \Theta(1/n^4)$.
\end{lemma}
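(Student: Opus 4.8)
The plan is to pin down $\varepsilon$ in closed form from the definition of the modified flow and then substitute the asymptotic virtual-value gaps already established earlier in this appendix. First I would record the effect of the single boost. Boosting $(\alpha^M,\lambda^M)$ at $n+2$ for Bidder One by $\varepsilon$ replaces $\lambda_1^2(k')$ with $\lambda_1^2(k')-\varepsilon$ for every $k'\le n+2$ and does not touch Bidder Two; since $v_1^{k+1}-v_1^k=1$ for $k\le n+1$, Definition~\ref{def:vv} gives $\Phi^{\alpha^*,\lambda^*}(d^k)=\Phi^{\alpha^M,\lambda^M}(d^k)+\varepsilon/f(d^k)$ for every $k\in[1,n+1]$, while $\Phi^{\alpha^*,\lambda^*}(d^{n+2})$ and all of Bidder Two's virtual values stay at their canonical values. (This is exactly the behavior of Boosting imported from~\cite{DevanurW17}, so it can be cited rather than recomputed.)

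Next I would turn the definition of $\varepsilon$ — the least boost for which $\Phi^{\alpha^*,\lambda^*}(d^k)\ge \Phi^{\alpha^*,\lambda^*}(e^k)$ for all $k$ — into a maximum. By bullets five and six of Proposition~\ref{prop:Myerson} the canonical flow already satisfies $\Phi^{\alpha^M,\lambda^M}(d^1)>\Phi^{\alpha^M,\lambda^M}(e^1)$ and $\Phi^{\alpha^M,\lambda^M}(d^{n+2})=\Phi^{\alpha^M,\lambda^M}(e^{n+2})$, and since the boost never decreases any $\Phi(d^k)$, only the indices $k\in[2,n+1]$ can be binding. For such $k$ the target inequality rearranges to $\varepsilon\ge f(d^k)\bigl(\Phi^{\alpha^M,\lambda^M}(e^k)-\Phi^{\alpha^M,\lambda^M}(d^k)\bigr)$, so $\varepsilon$ is the largest of these right-hand sides over $k\in[2,n+1]$ (and the maximizing index is the $k^*$ used by the careful-tie-breaking auction); it is positive exactly ``when flow is needed''.

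Finally I would plug in the estimates. Lemma~\ref{le} shows $\Phi^{\alpha^M,\lambda^M}(d^{i+1})>\Phi^{\alpha^M,\lambda^M}(e^{i+1})$ whenever $x_i=0$ or $y_i=0$, so the maximum above runs only over indices $k=i+1$ with $x_i=y_i=1$; for each of those Lemma~\ref{ge} gives $\Phi^{\alpha^M,\lambda^M}(e^{i+1})-\Phi^{\alpha^M,\lambda^M}(d^{i+1})=\Theta(1/n^3)$, both as a lower and an upper bound over such $i$. Meanwhile Lemma~\ref{drange} together with $a=\Theta(n^5)$ and $b=\Theta(n^6)$ gives $f(d^k)=\tfrac{a\pm n^3}{2b}=\Theta(1/n)$ for every $k\in[2,n+2]$. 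Multiplying the two estimates yields $\varepsilon=\Theta(1/n)\cdot\Theta(1/n^3)=\Theta(1/n^4)$; in passing this is far below $\min_{k\le n+2}\lambda_1^2(k)=\Theta(1/n)$, which confirms the boost is valid.

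Essentially everything after the first step is bookkeeping layered on the appendix lemmas. The one point that needs care is that first step — getting the sign of the boost right and correctly carrying the $1/f(d^k)$ factor that converts a virtual-value gap into the required value of $\varepsilon$ — together with making sure the estimates invoked are two-sided ($\Theta$, not merely $O$), which is exactly why Lemmas~\ref{ge} and~\ref{drange} were stated with matching upper and lower bounds.
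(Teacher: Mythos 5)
Your proof is correct and follows essentially the same approach as the paper's own one-line proof, which simply cites Lemmas~\ref{drange} and~\ref{ge}; you have merely unpacked the bookkeeping that the paper leaves implicit (the boost's effect on $\Phi(d^k)$, the rearrangement into a maximum, the restriction to indices with $x_i=y_i=1$ via Lemma~\ref{le}, and the validity check $\varepsilon\le\min_k\lambda_1^2(k)$). One small point worth flagging: your derivation gives $\varepsilon=\max_k f(d^k)\bigl(\Phi^{\alpha^M,\lambda^M}(e^k)-\Phi^{\alpha^M,\lambda^M}(d^k)\bigr)$, with $f(d^k)$ as a multiplicative factor, which is what makes $\Theta(1/n)\cdot\Theta(1/n^3)=\Theta(1/n^4)$ come out right; the lemma as printed places $f(d^i)$ in the denominator (and has a misplaced brace in the superscript of $\Phi$), which appears to be a typo, since division would instead give $\Theta(1/n^2)$ and contradict the stated asymptotic.
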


\begin{proof}
This is directly from Lemma \ref{drange} and \ref{ge}.
\end{proof}

After knowing the order of flow we need, we can compare it with the bounds for gaps between virtual values we have already established. We will use it and all properties we proved above to establish Proposition~\ref{prop:valid}. The following lemma guarantees that monotonicity still hold after boosting, which in particular implies the first two bullets of Proposition~\ref{prop:valid}.

\begin{lemma} \label{monotone1}
For all $x, y \in \{0, 1\}^n$, we have \[
\max \left (\Phi^{\alpha^*,\lambda^*}(c^{i}) , \Phi^{\alpha^*,\lambda^*}(d^{i}),  \Phi^{\alpha^*,\lambda^*}(e^{i})  \right) <  \min \left (\Phi^{\alpha^*,\lambda^*}(c^{i + 1}) , \Phi^{\alpha^*,\lambda^*}(d^{i + 1}),  \Phi^{\alpha^*,\lambda^*}(e^{i + 1})  \right) \] for $i \in [1, n + 1]$.
\end{lemma}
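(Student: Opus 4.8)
The plan is to observe that the modified flow differs from the canonical flow only by a tiny perturbation of the virtual values --- of order $1/n^3$ --- whereas Lemma~\ref{monotone} already provides a separation of at least $1$ between consecutive levels of virtual values, so monotonicity survives. First I would pin down exactly how boosting changes virtual values. Since $v_i^{k+1}-v_i^k=1$ throughout our construction, Definition~\ref{def:vv} reads $\Phi_i^{\alpha,\lambda}((v_i^k,j))=v_i^k-\lambda_i^j(k+1)/f_i((v_i^k,j))$, which depends only on $\lambda$, not on $\alpha$. The modified flow boosts $(\alpha^M,\lambda^M)$ at $n+2$ for Bidder One, which replaces $\lambda_1^1(k')$ by $\lambda_1^1(k')+\varepsilon$ and $\lambda_1^2(k')$ by $\lambda_1^2(k')-\varepsilon$ for all $k'\le n+2$, and leaves every other multiplier unchanged (in particular all of Bidder Two's multipliers, and $\lambda_1^j(n+3)=0$). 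Consequently
\[
\Phi_1^{\alpha^*,\lambda^*}((v_1^k,2))=\Phi_1^{\alpha^M,\lambda^M}((v_1^k,2))+\frac{\varepsilon}{f_1((v_1^k,2))},\qquad
\Phi_1^{\alpha^*,\lambda^*}((v_1^k,1))=\Phi_1^{\alpha^M,\lambda^M}((v_1^k,1))-\frac{\varepsilon}{f_1((v_1^k,1))}
\]
for all $k\in[1,n+1]$, while $\Phi_1^{\alpha^*,\lambda^*}((v_1^{n+2},j))=\Phi_1^{\alpha^M,\lambda^M}((v_1^{n+2},j))=n^2+n+2$ and $\Phi_2^{\alpha^*,\lambda^*}(\cdot)=\Phi_2^{\alpha^M,\lambda^M}(\cdot)$. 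This is the fact quoted from \cite{DevanurW17} just after the definition of boosting, and is a one-line check from the formula above.

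Next I would bound the size of these perturbations. By Lemma~\ref{boundofflow}, $\varepsilon=O(1/n^4)$ (and if $\varepsilon=0$ then $(\alpha^*,\lambda^*)=(\alpha^M,\lambda^M)$ and the claim is literally Lemma~\ref{monotone}). By Lemmas~\ref{crange} and~\ref{drange}, $f_1((v_1^k,j))\ge (a-2n^3)/(2b)=\Omega(1/n)$ for every $k\in[2,n+2]$, and $f_1((v_1^1,j))=1/(20n)$ exactly, so $f_1((v_1^k,j))=\Omega(1/n)$ for every $k$ that appears. Therefore $\varepsilon/f_1((v_1^k,j))=O(1/n^3)$: in passing from the canonical to the modified flow, every virtual value moves by at most $c/n^3$ for some absolute constant $c$.

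Finally I would combine this with Lemma~\ref{monotone}, whose proof in fact establishes more than the stated strict inequality: for every $i\in[1,n+1]$, the minimum of $\Phi^{\alpha^M,\lambda^M}$ over the three types $c^{i+1},d^{i+1},e^{i+1}$ exceeds the maximum of $\Phi^{\alpha^M,\lambda^M}$ over $c^{i},d^{i},e^{i}$ by at least $1$. Since each of these six virtual values changes by at most $c/n^3$ under the modification, the analogous difference for $(\alpha^*,\lambda^*)$ is at least $1-2c/n^3$, which is strictly positive for all sufficiently large $n$ (the finitely many small $n$ are a direct finite verification, or are outside the asymptotic regime of interest). This is precisely the inequality claimed, for every $i\in[1,n+1]$ and all $x,y$.

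There is no real obstacle here: the content is entirely in supplying the two quantitative inputs ($\varepsilon=O(1/n^4)$ from Lemma~\ref{boundofflow}, and $f_1((v_1^k,j))=\Omega(1/n)$ from Lemmas~\ref{crange} and~\ref{drange}) and in correctly identifying which virtual values move under boosting and in which direction. The only point to keep in mind is the two boundary indices: at $k=1$ one must use $f_1((v_1^1,j))=1/(20n)$ rather than the near-uniformity bound, and at $k=n+2$ the virtual values are frozen at $n^2+n+2$ and do not move at all.
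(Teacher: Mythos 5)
Your proposal is correct and takes essentially the same route as the paper's proof: bound the boost $\varepsilon=O(1/n^4)$ via Lemma~\ref{boundofflow}, bound $f_1(\cdot)=\Omega(1/n)$ via Lemmas~\ref{crange} and~\ref{drange}, conclude every virtual value shifts by $O(1/n^3)$, and invoke the gap of at least $1$ established in Lemma~\ref{monotone}. Your write-up is actually slightly more careful than the paper's: you correctly note that Bidder Two's virtual values are unchanged under the boost (whereas the paper's one-line argument also cites Lemma~\ref{erange}, which is not needed for this step), and you handle the boundary indices $k=1$ and $k=n+2$ explicitly.
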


\begin{proof}
By Lemma~\ref{monotone}, we know the gap under canonical flow is at least $1$, and we are only adding negligible amount of flow ($O(\frac{1}{n^4})$), so the overall perturbation on virtual values will be no more than $\frac{\varepsilon}{f(\cdot)}$, which is $O(\frac{1}{n^3})$ by Lemma~\ref{crange}, Lemma~\ref{drange} and Lemma~\ref{erange}. Thus the monotonicity still holds.
\end{proof}

The next lemma together with the definition of modified flow establishes the bullet three of Proposition~\ref{prop:valid}.

\begin{lemma} \label{ce}
For all $x, y$, $\Phi^{\alpha^*,\lambda^*}(c^{i + 1}) > \Phi^{\alpha^*,\lambda^*}(e^{i + 1})$ for all $i \in [1, n]$ in the new flow.
\end{lemma}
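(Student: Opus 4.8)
The plan is to track exactly how the single boost that defines $(\alpha^*,\lambda^*)$ perturbs $\Phi(c^{i+1})$ and $\Phi(e^{i+1})$, and then to check that the perturbation is an order of magnitude smaller than the gap between these two virtual values already established under the canonical flow in Lemma~\ref{upper}.

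First I would pin down the perturbation. By construction, $(\alpha^*,\lambda^*)$ is obtained from $(\alpha^M,\lambda^M)$ by boosting at $n+2$ for Bidder One by some $\varepsilon \ge 0$, which adds $\varepsilon$ to $\lambda_1^1(k')$ for every $k' \le n+2$ and leaves every one of Bidder Two's multipliers untouched. Since $c^{i+1} = (v_1^{i+1},1)$ with $i+2 \le n+2$ for all $i \in [1,n]$, and since consecutive values differ by exactly $1$ (recall $v_1^k = n^2+k$), Definition~\ref{def:vv} gives
\[
\Phi^{\alpha^*,\lambda^*}(c^{i+1}) = \Phi^{\alpha^M,\lambda^M}(c^{i+1}) - \frac{\varepsilon}{f(c^{i+1})},
\qquad
\Phi^{\alpha^*,\lambda^*}(e^{i+1}) = \Phi^{\alpha^M,\lambda^M}(e^{i+1}).
\]
So all the boost does to this pair is pull $\Phi(c^{i+1})$ down by $\varepsilon/f(c^{i+1})$ while fixing $\Phi(e^{i+1})$.

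Next I would bound $\varepsilon/f(c^{i+1})$ uniformly in $i$. Lemma~\ref{boundofflow} gives $\varepsilon = \Theta(1/n^4)$ whenever a boost is actually required (and $\varepsilon = 0$ otherwise, in which case the claim is immediate from Lemma~\ref{upper}); this $\Theta(1/n^4)$ is a probability mass $f(d^{k'}) = \Theta(1/n)$ (Lemma~\ref{drange}) times a virtual-value gap $\Phi(e^{k'}) - \Phi(d^{k'}) = \Theta(1/n^3)$ (Lemma~\ref{ge}). Meanwhile Lemma~\ref{crange} gives $f(c^{i+1}) \cdot 2b \in [a - 2n^3, a + 2n^3]$ with $a = \Theta(n^5)$ and $b = 10n^6$, so $f(c^{i+1}) = \Theta(1/n)$. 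Hence $\varepsilon / f(c^{i+1}) = O(1/n^3)$.

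Finally I would combine with Lemma~\ref{upper}, which gives $\Phi^{\alpha^M,\lambda^M}(c^{i+1}) - \Phi^{\alpha^M,\lambda^M}(e^{i+1}) = \Theta(1/n^2)$, and in particular this is $\ge \Omega(1/n^2)$ for every $i \in [1,n]$. Therefore
\[
\Phi^{\alpha^*,\lambda^*}(c^{i+1}) - \Phi^{\alpha^*,\lambda^*}(e^{i+1})
= \Bigl(\Phi^{\alpha^M,\lambda^M}(c^{i+1}) - \Phi^{\alpha^M,\lambda^M}(e^{i+1})\Bigr) - \frac{\varepsilon}{f(c^{i+1})}
\ge \Omega(1/n^2) - O(1/n^3) > 0
\]
for all sufficiently large $n$, which is the lemma. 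The only place needing care — and it is bookkeeping rather than a genuine obstacle — is the order-of-magnitude accounting: one must remember that $\varepsilon$ is itself already a mass times a gap, so that $\varepsilon/f(c^{i+1}) = \Theta(1/n^3)$ sits a full factor of $n$ below the $\Theta(1/n^2)$ gap being protected, and the strict inequality survives comfortably.
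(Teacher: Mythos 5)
Your proposal is correct and matches the paper's proof: both compare the $\Theta(1/n^2)$ gap from Lemma~\ref{upper} against a decrease of $\varepsilon/f(c^{i+1}) = \Theta(1/n^3)$ caused by the boost (using Lemma~\ref{boundofflow}), and conclude the sign of the gap is preserved. Your write-up is slightly more careful than the paper's in two ways that improve clarity: you make explicit that the boost changes $\Phi(c^{i+1})$ by exactly $-\varepsilon/f(c^{i+1})$ and leaves $\Phi(e^{i+1})$ fixed, and you handle the degenerate case $\varepsilon = 0$ explicitly.
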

\begin{proof}
By Lemma \ref{upper}, we know  \[\min_{i \in [1,n]} \left (\Phi^{\alpha^*,\lambda^*}(c^{i+1}) - \Phi^{\alpha^*,\lambda^*}(e^{i+1}) \right ) = \Theta(1/n^2).\] And by Lemma \ref{boundofflow}, we know that $\varepsilon = \Theta(1/n^4)$. Thus for $i \in [1, n]$ the virtual value of $c^{i+1}$ decreases only by $\Theta(1/n^3) < \Theta(1/n^2)$, which completes the proof.
\end{proof}

Note that the bullet four of Proposition~\ref{prop:valid} is directly from the bullet four of Proposition~\ref{prop:Myerson}, which guarantees that if $\disj = \mathsf{no}$, there must be some $k$ such that $\Phi^{\alpha^M,\lambda^M}_1((v_1^{k + 1},1)) > \Phi^{\alpha^M,\lambda^M}_2((v_2^{k + 1},1)) > \Phi^{\alpha^M,\lambda^M}_1((v_1^{k + 1},2))$. Then by the definition of modified flow, there must be some $k^*$ satisfies the property.

The following lemma shows bullet five of Proposition~\ref{prop:valid}.

\begin{lemma} \label{after}
If additional flow in step $2$ is needed, $ \Phi^{\alpha^*,\lambda^*}(d^1) > \Phi^{\alpha^*,\lambda^*}(e^1) > \Phi^{\alpha^*,\lambda^*}(c^1) $ for the new flow.
\end{lemma}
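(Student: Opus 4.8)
The plan is to compute explicitly how the single boosting operation used to build $(\alpha^*,\lambda^*)$ perturbs the three virtual values $\Phi(c^1)=\Phi_1((v_1^1,1))$, $\Phi(d^1)=\Phi_1((v_1^1,2))$, and $\Phi(e^1)=\Phi_2((v_2^1,1))$, and then to compare that perturbation against the (already-known) canonical gap between $\Phi(c^1)$ and $\Phi(e^1)$. The boost is applied at $n+2$ for Bidder One by $\varepsilon$, so it replaces $\lambda_1^1(2)$ by $\lambda_1^1(2)+\varepsilon$ and $\lambda_1^2(2)$ by $\lambda_1^2(2)-\varepsilon$ (using $2\le n+2$), and leaves every multiplier of Bidder Two untouched. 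Plugging this into Definition~\ref{def:vv} and using $v_1^{2}-v_1^{1}=1$ (recall $v_1^k=n^2+k$) yields
\[
\Phi^{\alpha^*,\lambda^*}(e^1)=\Phi^{\alpha^M,\lambda^M}(e^1),\qquad
\Phi^{\alpha^*,\lambda^*}(c^1)=\Phi^{\alpha^M,\lambda^M}(c^1)-\tfrac{\varepsilon}{f_1(c^1)},\qquad
\Phi^{\alpha^*,\lambda^*}(d^1)=\Phi^{\alpha^M,\lambda^M}(d^1)+\tfrac{\varepsilon}{f_1(d^1)}.
\]

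The first inequality $\Phi^{\alpha^*,\lambda^*}(d^1)>\Phi^{\alpha^*,\lambda^*}(e^1)$ is then immediate: boosting only raises $\Phi(d^1)$, and $\Phi^{\alpha^M,\lambda^M}(d^1)>\Phi^{\alpha^M,\lambda^M}(e^1)$ already holds by Lemma~\ref{before}. So the entire content of the lemma is the second inequality $\Phi^{\alpha^*,\lambda^*}(e^1)>\Phi^{\alpha^*,\lambda^*}(c^1)$, which after the substitution above is exactly $\frac{\varepsilon}{f_1(c^1)}>\Phi^{\alpha^M,\lambda^M}(c^1)-\Phi^{\alpha^M,\lambda^M}(e^1)$.

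To prove this I would bound the two sides separately. For the right-hand side, I reuse the closed forms $\Phi^{\alpha^M,\lambda^M}(c^1)=v^1-(n+1)\tfrac{a}{b/(10n)}$ and $\Phi^{\alpha^M,\lambda^M}(e^1)=v^1-(n+1)\tfrac{a+1/(n+1)}{b/(10n)-1}$ worked out in the proof of Lemma~\ref{before}; subtracting (the $v^1$ terms cancel) and simplifying collapses the difference to $\frac{b/(10n)+(n+1)a}{(b/(10n))(b/(10n)-1)}$, and substituting $b=10n^6$ and $(n+1)a=b-n^5$ turns this into $\frac{10n^6}{n^5(n^5-1)}=\Theta(1/n^4)$, in particular at most $20/n^4$ for $n\ge2$. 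For the left-hand side, $f_1(c^1)=\tfrac1{20n}$ by construction, while Lemma~\ref{boundofflow} gives that whenever the step-2 boost is actually needed $\varepsilon=\Theta(1/n^4)$, hence $\varepsilon\ge c/n^4$ for an absolute constant $c>0$; therefore $\tfrac{\varepsilon}{f_1(c^1)}=20n\varepsilon=\Theta(1/n^3)$. Comparing orders, $\Theta(1/n^3)$ dominates $\Theta(1/n^4)$ once $n$ exceeds a fixed constant (and for smaller $n$ the $\Omega(n)$ hardness statement is vacuous), which gives the desired strict inequality; combined with the first inequality this establishes $\Phi^{\alpha^*,\lambda^*}(d^1)>\Phi^{\alpha^*,\lambda^*}(e^1)>\Phi^{\alpha^*,\lambda^*}(c^1)$.

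The only delicate point is the bookkeeping inside the boosting step: one must check that boosting at $n+2$ genuinely touches $\lambda_1^1(2)$ and $\lambda_1^2(2)$, and that the induced shifts of $\Phi_1$ at value $v_1^1$ are exactly $\mp\,\varepsilon/f_1((v_1^1,\cdot))$ with no extra value-gap factor — which is precisely where $v_1^2-v_1^1=1$ is used. Beyond that, the proof amounts to the observation that the induced boost $\varepsilon/f_1(c^1)=\Theta(1/n^3)$ overwhelms the tiny canonical gap $\Phi^{\alpha^M,\lambda^M}(c^1)-\Phi^{\alpha^M,\lambda^M}(e^1)=\Theta(1/n^4)$, so nothing more subtle is required, and all the magnitudes invoked ($\varepsilon=\Theta(1/n^4)$, $f_1(c^1)=\Theta(1/n)$, and the canonical gap $=\Theta(1/n^4)$) are already available from Lemma~\ref{boundofflow} and the proof of Lemma~\ref{before}.
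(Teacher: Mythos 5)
Your proof is correct and follows essentially the same route as the paper: you derive the same closed form for the canonical gap $\Phi^{\alpha^M,\lambda^M}(c^1)-\Phi^{\alpha^M,\lambda^M}(e^1)=\Theta(1/n^4)$, invoke Lemma~\ref{boundofflow} for $\varepsilon=\Theta(1/n^4)$, and compare orders; the only cosmetic difference is that you normalize by $f_1(c^1)$ (comparing $\varepsilon/f_1(c^1)=\Theta(1/n^3)$ against the gap) whereas the paper multiplies the gap by $f(c^1)$ (comparing $\varepsilon$ against a threshold of $O(1/n^5)$), which are equivalent. Your explicit accounting of how the boost at $n+2$ shifts $\Phi(c^1)$, $\Phi(d^1)$, $\Phi(e^1)$, and your separate handling of the easy inequality $\Phi(d^1)>\Phi(e^1)$, are both correct and slightly more spelled out than the paper's treatment.
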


\begin{proof}
We prove this lemma by calculating the gap between $\Phi^{\alpha^*,\lambda^*}(c^1)$ and $\Phi^{\alpha^*,\lambda^*}(e^1)$  under the canonical flow, from which we are able to get the minimum amount of the additional flow needed to let the flip happen. Finally we show this minimum amount of flow is smaller than $\varepsilon$ for sufficiently large $n$.

As we calculated previously, under the canonical flow,

\begin{align*}
    \Phi^{\alpha^*,\lambda^*}(c^1) -  \Phi^{\alpha^*,\lambda^*}(e^1) & = (n + 1) \left( \frac{a + \frac{1}{n + 1}}{\frac{b}{10n} - 1}  -   \frac{a}{\frac{b}{10n}} \right) \\
    & = (n + 1) \left ( \frac{\frac{ab}{10n} + \frac{b}{10n^2 + 10n}}{\frac{b^2}{100n^2} - \frac{b}{10n}} - \frac{\frac{ab}{10n} - a}{\frac{b^2}{100n^2} - \frac{b}{10n}}  \right)\\
    & = (n + 1) \frac{\frac{b}{10n^2 + 10n} + a}{\frac{b^2}{100n^2} - \frac{b}{10n}}\\
    & = \Theta(1/n^4),
\end{align*}

where the last equality is by the fact that $b = \Theta(n^6)$ and $a = \Theta(n^5)$. Thus, combining with the fact that $f(c^1) = \Theta(1/n)$ ,the minimum amount of flow needed is  $(\Phi^{\alpha^*,\lambda^*}(c^1) - \Phi^{\alpha^*,\lambda^*}(e^1)) \cdot f(c^1) = \frac{1}{20n} = O(1/n^5)$. From Lemma \ref{boundofflow} we konw $\varepsilon = \Theta(1/n^4)$ which is larger than the $O(1/n^5)$ threshold.

\end{proof}

Bullet six is easy to verify, by the definition of virtual values, $\Phi^{\alpha^*,\lambda^*}_1((v_1^{n+2},j)) = \Phi^{\alpha^*,\lambda^*}_2((v_2^{n+2},1)) = n^2 + n + 2$.

Finally, we prove the last lemma of the paper.

\begin{proof}[Proof of Lemma~\ref{lem:spaBIC}]
We first confirm that the second-price auction with careful tie-breaking at $k^*$ is BIC. Observe first that, because the second-price auction with careful tie-breaking has a monotone allocation rule (and satisfies the payment identity) that no bidder can ever benefit by misreporting their value, but honestly reporting their interest. This follows immediately from the definition of the payment identity, and is well-known~\cite{Myerson81}. In particular, this implies that the auction is BIC for Bidder Two.

Now let's consider the utility of Bidder One. For notational convenience, define $u(t, t')$ to be the utility that Bidder One gets with type $t$ for reporting $t'$. Because the second-price auction yields a monotone allocation rule, it is well-known (and follows from straight-forward calculations) that $u( (v_1^k,j),(v_1^k,j)) \geq u( (v_1^k,j),(v_1^\ell,j))$ for both $j$ and all $k$~\cite{Myerson81}. This means that if Bidder One can benefit from misreporting, it must be because they misreport their interest. Clearly, Bidder One cannot benefit by misreporting their interest when their interest is day1 (because then they guarantee non-positive utility), so the only consideration is when their interest is day2.

Now, it is also easy to see (and observed in~\cite{FiatGKK16}) that $u( (v_1^k,2),(v_1^\ell,1)) = u( (v_1^k,1),(v_1^\ell,1))$ for all $k,\ell$. This means that if $u( (v_1^k,2),(v_1^k,2)) \geq u( (v_1^k,1),(v_1^k,1))$, we can conclude that $u( (v_1^k,2),(v_1^k,2)) \geq u( (v_1^k,1),(v_1^k,1)) \geq u( (v_1^k,1),(v_1^\ell,1)) =  u( (v_1^k,2),(v_1^\ell,1))$ for all $\ell$. So we need only establish that $u( (v_1^k,2),(v_1^k,2)) \geq u( (v_1^k,1),(v_1^k,1))$ for all $k$ to conclude that the auction is BIC. To this end, observe that:

\begin{align}
u((v_1^k,j),(v_1^k,j)) &= v_1^k \cdot \pi_1((v_1^k,j)) - p_1((v_1^k,j))\nonumber\\
&=v_1^k \cdot \pi_1((v_1^k,j)) - \sum_{\ell =1}^{ k} v_1^\ell \cdot (\pi_1( (v_1^\ell,j)) - \pi_1((v_1^{\ell-1},j)))\nonumber\\
&= \sum_{\ell =1}^{ k} (v_1^\ell-v_1^{\ell-1}) \cdot \pi_1 ((v_1^{\ell-1},j))\nonumber\\
&= \sum_{\ell=0}^{k-1} \pi_1 ((v_1^{\ell},j)).\label{eq:utility}
\end{align}

Now, recalling the format of our auction, we have that:
\begin{itemize}
\item When $k > 1$: $\pi_1((v_1^{k},1)) = \sum_{\ell \leq k} f_2((v_2^{\ell},1))$.
\item When $k = 1$: $\pi_1((v_1^{1},1)) = 0$.
\item When $k \neq k^*$: $\pi_1((v_1^k,2)) = \sum_{\ell \leq k} f_2((v_2^{\ell},1))$.
\item When $k = k^*$: $\pi_1((v_1^{k^*},1)) = \sum_{\ell \leq k^*} f_2((v_2^{\ell},1)) - f_2((v_2^1,1))$.\footnote{To see this, observe: $\pi_1((v_1^{k^*},2)) = f_2((v_2^{k^*},1))\cdot \frac{f_2((v_2^{k^*},1))-f_2((v_2^1,1))}{f_2((v_2^{k^*},1))}+\sum_{\ell <k^*} f_2((v_2^{\ell},1)) = f_2((v_2^{k^*},1))-f_2((v_2^1,1)) +\sum_{\ell < k^*} f_2((v_2^\ell,1)) = \sum_{\ell \leq k^*} f_2((v_2^{\ell},1)) - f_2((v_2^1,1))$.}
\end{itemize}

In particular, it is easy to conclude the following:
\begin{align}
\text{For all $k \notin \{1,k^*\}$: } \pi_1((v_1^k,2)) &= \pi_1((v_1^k,1)).\label{eq:equal}\\
\pi_1((v_1^1,2)) &> \pi_1((v_1^1,1)).\label{eq:1isgood}\\
\pi_1((v_1^1,2)) + \pi_1((v_1^{k^*},2)) &= \pi_1((v_1^1,1)) + \pi_1((v_1^{k^*},1)).\label{eq:kstargood}
\end{align}

Equation~\eqref{eq:utility} combined with Equations~\eqref{eq:equal} and~\eqref{eq:1isgood} implies that $u((v_1^k,2),(v_1^k,2)) > u((v_1^k,1),(v_1^k,1))$ for all $k \leq k^*$. Combined instead with Equations~\eqref{eq:equal} and~\eqref{eq:kstargood}, it implies that $u((v_1^k,2),(v_1^k,2))= u((v_1^k,1),(v_1^k,1))$ for all $k > k^*$. This completes the proof.
\end{proof}

\section{Omitted Proof of Proposition~\ref{prop:singlesimple}}
\label{app:onebidder}

\begin{proof}[Proof of Proposition~\ref{prop:singlesimple}]
We will show that the single-bidder auction which sets a take-it-or-leave-it price of $n^2 + 1$ on one-day shipping witnesses optimality for $(\alpha^M, \lambda^M)$, where $(\alpha^M, \lambda^M)$ is the canonical Myerson flow (see Definition~\ref{canonical}).

We first confirm that the single-bidder auction which sets a take-it-or-leave-it price of $n^2 + 1$ on one-day shipping is BIC for $D_1$ and $D_2$. Observe that this mechanism has a monotone allocation rule and satisfies the payment identity, so no bidder has incentive to to misreport their value but truthfully report their interest. Furthermore, the auctioneer in this mechanism always chooses the one-day shipping. Also, allocation probability and payment only depend on the value. Therefore, no bidder can benefit by misreporting their type. Thus the first two bullets of Definition~\ref{def:witness} hold.

To see the final bullet holds for $D_1$ and $D_2$, observe that the mechanism awards the item to the only bidder if and only if their value is larger than or equal to $n^2 + 1$. Since $n^2 + 1$ is the smallest possible non-zero value for the bidder (see the type space defined in Section~\ref{sec:reduction}), the final bullet holds if and only if: all virtual values of non-zero types are non-negative (which directly follows from the last bullet of Proposition~\ref{prop:Myerson}).

This concludes the proof of Proposition~\ref{prop:singlesimple}.

 \end{proof}

\section{Bidders with Public Budgets}\label{sec:budget}

\subsection{Linear Programs for Public Budget Constraints} \label{sec:budgetprelimfull}

The revenue-optimal BIC auction for the public budget setting is the following LP. In the LP, the variables are $X, P, \pi, p$. $X$ and $P$ refer to the ex-post allocation/price rules, as defined in Section~\ref{sec:fedexprelim}. $\pi, p$ refer to the interim allocation/price rules, which satisfy the equalities in Equations~\eqref{eq:budgetinterim} and~\eqref{eq:budgetprice}.

\begin{align}
\max_{X,P,\pi, p} \quad & \sum_{i} \sum_{j = 1}^{n_i} f_i(v^j_i)\cdot p_i(v^j_i)\nonumber\\
\textrm{subject to} \quad & X_i(v^j_1,v^\ell_2) \in [0,1]\text{ for all bidders $i$ and all $j,\ell$}.\nonumber\\
&X_1(v_1^0,v_2^\ell) = P_1(v_1^0,v_2^\ell) =0\text{ for all $\ell \in [0,n_2]$}.\nonumber\\
&X_2(v_1^k,v_2^0) = P_2(v_1^k,v_2^0) =0\text{ for all $k \in [0,n_1]$.}\nonumber\\
&X_1(v_1^j, v_2^\ell) + X_2(v_1^j,v_2^\ell) \leq 1\text{ for all $j \in [n_1],\ell \in [n_2]$.}\nonumber\\
\quad & \pi_i(v^j_i)=\sum_{\ell=1}^{n_{3-i}}f_{3-i}(v^\ell_{3-i}) \cdot X_i(v^j_i; v^\ell_{3-i}) \text{ for all bidders $i$ and $j \in [0,n_i]$ }. \label{eq:budgetinterim}\\
 \quad & p_i(v^j_i)=\sum_{\ell=1}^{n_{3-i}}f_{3-i}(t^\ell_{3-i}) \cdot P_i(v^j_i; v^\ell_{3-i})\text{ for all bidders $i$ and $j \in [0,n_i]$}.\label{eq:budgetprice}\\
 & \pi_i(v_i^j) \cdot v^j_i - p_i(v_i^j) \ge \pi_i(v_i^{j'}) \cdot v^j_i - p_i(v_i^{j'}) \text{ for all bidders $i$ and $j,j' \in [0,n_i]$}\label{eq:budgetbic}\\
 & \pi_1(v_1^{n_1}) \cdot B \ge p_1(v_1^{n_1}), \label{eq:budgetrespecting} 
\end{align}

The objective is simply the expected revenue. Constraints~\eqref{eq:budgetinterim} and~\eqref{eq:budgetprice} simply confirm that the interim rules are computed correctly. Equation~\eqref{eq:budgetbic} guarantees the auction is BIC. Equation~\eqref{eq:budgetrespecting} guarantees that the mechanism is budget-respecting. 

To see the correctness of Equation~\eqref{eq:budgetrespecting}, consider the following payment rule: whenever Bidder One receives the item with type $v_1^j$, they will pay $p_1(v_1^j)/\pi_1(v_1^j)$. Observe first that this payment rule satisfies Equation~\eqref{eq:budgetprice}. Second, observe that interim individual rationality guarantees that $p_1(v_1^j) \leq \pi_1(v_1^j)\cdot v_1^j$, so therefore $p_1(v_1^j)/\pi_1(v_1^j) \leq v_1^j$, and the auction is ex-post IR. Moreover, Equation~\eqref{eq:budgetrespecting} implies that $p_1(v_1^j)/\pi_1(v_1^j) \leq B$ (because $p_1(v_1^j)/\pi_1(v_1^j)$ is monotone non-decreasing in $j$ for any BIC auction), so it is also ex-post budget-respecting. Therefore, any auction satisfying~\eqref{eq:budgetbic} and~\eqref{eq:budgetrespecting} is BIC and can be implemented as ex-post IR. To see the other direction, observe that if bidder one pays at most $B$ whenever they win the item (necessary for ex-post IR), then we must have Equation~\eqref{eq:budgetrespecting}.

We'll refer to this particular ex-post payment rule as ``canonical,'' and restrict attention to auctions that have this particular ex-post payment rule, as the above paragraph shows it is without loss.

\begin{definition}[Canonical Ex-Post Payments] We say that $(X,P)$ has \emph{canonical ex-post payments} if for all $i,j_1,j_2$: $X_i(v_1^{j_1},v_2^{j_2})/P(v_1^{j_1},v_2^{j_2}) = \pi_i(v_i^{j_i})/p_i(v_i^{j_1})$.
\end{definition}

\subsection{Lagrangian Duality}\label{sec:budgetdualityprelim}

The purpose of this section is to build up Lagrangian dual we need for the public budget setting.

\begin{enumerate}
      \item[(i)] For constraint $\pi_1(v^{n_1 - 1}_1) \cdot v^{n_1 - 1}_1 - p_1(v^{n_1 - 1}_{1}) \ge \pi_1(v^{n_1}_1) \cdot v^{n_1 - 1}_1 - p_1(v^{n_1}_1)$, we use a Lagrangian multiplier of $\alpha$.
      \item[(ii)] For constraint $\pi_1(v_1^{n_1}) \cdot B \ge p_1(v_1^{n_1})$, we use a Lagrangian multiplier of $\gamma$.
    \item[(iii)] For constraints of the form: $\pi_i(v^k_i) \cdot v^{k}_i - p_i(v^k_i) \ge \pi_i(v^{k-1}_i) \cdot v^{k}_i - p_i(v^{k-1}_i)$, we use a Lagrangian multiplier of $\lambda_i(k)$ (for all bidders $i$, and $k \in [1,n_i]$).
\item [(iv)] For all remaining BIC constraints, we use a Lagrangian multiplier of $0$.
\item [(v)] To emphasize: for all other constraints (i.e. all the constraints which are unrelated to BIC), we don't use Lagrangian multipliers, and keep them as constraints.
\end{enumerate}

Every choice of Lagrangian multipliers $(\alpha,\lambda)$ induces a Lagrangian relaxation with objective function:
\begin{align*}
 \mathcal{L}(\alpha, \gamma ,\lambda)&:=\sum_i \sum_{j=1}^{n_i} f_i(t_i^j) \cdot p_i(v_i^j)+ \alpha \cdot \left ( \pi_1(v^{n_1 - 1}_1) \cdot v^{n_1 - 1}_1 - p_1(v^{n_1 - 1}_{1}) - \pi_1(v^{n_1}_1) \cdot v^{n_1 - 1}_1 - p_1(v^{n_1}_1) \right) \\
 & + \gamma \cdot \left ( \pi_1(v_1^{n_1}) \cdot B - p_1(v_1^{n_1}) \right) + \sum_i \sum_{k=1}^{n_i} \lambda_i(k) \cdot \left( \pi_i(v_i^k,j) \cdot v^k_i - p_i(v_i^k) - \pi_i(v_i^{k-1}) \cdot v^k_i + p_i(v_i^{k-1})\right).
\end{align*}

Now we define the flow in this setting:

\begin{definition}[Flow]\label{def:budgetflow} A set of Lagrangian multipliers form a \emph{flow} if the following hold for all $i$:
\begin{itemize}
\item $f_i(v_i^{k}) +\lambda_i(k+1)  = \lambda_i(k)$, for all $k \in [1,n_i-2]$.
\item $f_1(v_1^{n_1 - 1}) + \lambda_1(n_1)  = \lambda_1(n_1 - 1) + \alpha$.
\item $f_1(v_1^{n_1}) + \alpha   = \lambda_1(n_1) + \gamma$.
\end{itemize}
\end{definition}

\begin{definition}[Virtual Values]\label{def:budgetvv} For a given set of Lagrangian multipliers $\alpha, \gamma, \lambda$, define:\footnote{For simplicity of notation, denote by $\lambda_i(n_i+1) := 0$, $v_i^{n_i + 1}:=v_i^{n_i}$.}

\begin{align*}
\Phi_{1}^{\alpha, \gamma, \lambda}(v_1^k) & :=v_1^k - \frac{(v_1^{k+1}-v_1^k)\cdot \lambda_1(k+1)}{f_1(v_i^k)}, \text{ for all $k < n_1$}. \\
\Phi_{1}^{\alpha, \gamma, \lambda}(v_1^{n_1}) & :=v_1^{n_1} - \frac{-\alpha + \gamma \cdot (v_1^{n_1} - B)}{f_1(v_i^{n_1})}. \\
\Phi_{2}^{\alpha, \gamma, \lambda}(v_2^k) & :=v_2^k - \frac{(v_2^{k+1}-v_2^k)\cdot \lambda_2(k+1)}{f_2(v_i^k)}. \\
\end{align*}
\end{definition}

\begin{observation}[\cite{CaiDW16}]\label{obs:budgetvv} For any $(\alpha, \gamma, \lambda)$ which form a flow: $\mathcal{L}(\alpha, \gamma, \lambda)= \sum_i \sum_{j=1}^{n_i} f_i(v_i^j) \cdot \pi_i(v_i^j) \cdot \Phi_i^{\alpha, \gamma ,\lambda}(v_i^j).$
\end{observation}


\begin{definition}[Witness Optimality]\label{def:budgetwitness} Let $(\alpha, \gamma, \lambda)$ be a flow and $(X,P)$ be a BIC auction such that:
\begin{itemize}
\item $p$ satisfies the payment identity for $\pi$.
\item $(X, P)$ has canonical ex-post payments.
\item $\alpha > 0 \Rightarrow \pi_1(v^{n_1 - 1}_1) \cdot v^{n_1 - 1}_1 - p_1(v^{n_1 - 1}_{1}) = \pi_1(v^{n_1}_1) \cdot v^{n_1 - 1}_1 - p_1(v^{n_1}_1)$.
\item $\gamma > 0  \Rightarrow \pi_1(v_1^{n_1}) \cdot B = p_1(v_1^{n_1})$.
\item On all $(t_1^k,t_2^{k'})$, $X$ awards the item to a bidder with highest non-negative virtual value.
\end{itemize}
Then we say that $(\alpha, \gamma ,\lambda)$ \emph{witnesses optimality} for $(X,P)$, and $(X,P)$ \emph{witnesses optimality} for $(\alpha, \gamma ,\lambda)$.
\end{definition}

\begin{theorem}[\cite{CaiDW16, DevanurW17}]\label{thm:CDWbudget} Let $(\alpha, \gamma, \lambda)$ witness optimality for $(X,P)$. Then $(X,P)$ is a revenue-optimal BIC auction. Moreover, all revenue-optimal auctions witness optimality for $(\alpha, \gamma, \lambda)$.
\end{theorem}

We now define the type space. Throughout this section, $n$ denotes the size of the input to \disj.\\

\noindent\textbf{The type space.} In our reduction, the type space does not depend on $x, y$ (only the distribution does). For every input, $n_1 = n_2 = n + 2$ (meaning that each bidder has a total of $n+2$ non-zero types). For all $k \in [n + 1]$ and both $i$, $v_i^k := n^2+k$. We then define  $v_1^{n+2} := n^2 + n + 2$, and $v_2^{n+2} := n^2 + n + 1.9$.

Now we define bidder one's public budget. Note that this budget is invariant throughout all inputs $x, y$.

\begin{definition}[Bidder One's Budget] Define bidder one's public budget $B$ as follows:  
\[
B:= \frac{1}{2n} + \frac{a}{b} \sum_{i =1}^{n + 1} v_1^i. 
\]
\end{definition}

\noindent\textbf{The distribution.} The distribution in our construction depends on $x,y$, but in all cases is nearly uniform. Below, for simplicity of notation let $b := 10n^8, a := \frac{b - 1}{n+1}$. All probabilities will be an integer multiple of $\frac{1}{b}$. 

\begin{definition}[Bidder One's  distribution] Define $f_1(v_1^k)$ (as a function of $x$) as follows:
\begin{enumerate}
    \item Set $f_1(v_1^{n+2}) := \frac{1}{b} $, for all $x$.
    \item For $k=1$ to $n $, define helper $z_{k} = \frac{b - 1 - \sum_{j = 1}^{k-1} f_1(v_1^j) \cdot b}{n-k+2}$.
\begin{itemize}
\item If $x_k=0$, then set $f_1(v_1^{k}) := \frac{\left \lfloor z_{k} + \frac{n^3}{n - k + 2}  \right \rfloor}{b}$.
\item Otherwise ($x_k=1$), set $f_1(v_1^{k}):= \frac{\left \lceil z_{k} \right \rceil}{b}$.
\end{itemize}
    \item For $k=n$, define helper $z_{n+1} := b - 1 - \sum_{j = 1}^{n} f_1(v_1^{j}) \cdot b$. Set $f_1(v_1^{n+1}) := \frac{z_{n+1}}{b}$.
\end{enumerate}

\end{definition}

The two lemmas below establish that the total mass of Bidder One is always $1$, and that Bidder One's distribution is always nearly-uniform over $v_1^1,\ldots, v_1^{n+1}$.

\begin{lemma}\label{lem:budget1mass}
$\sum_{k=1}^{n+2} f_1(v_1^k) = 1$.
\end{lemma}

\begin{lemma} \label{budgetdrange}
For all $x$, $f_1(v_1^k) \cdot b \in [a - n^3, a + n^3]$ for all $k \in [1,n+1]$.
\end{lemma}

Finally, we define the distribution for Bidder Two. Bidder Two's distribution will be truly single-parameter since they does not have a budget. Bidder Two's distribution depends on $y$, and is constructed so that $y_k$ has a significant impact on $f_2(v_2^k)$.

\begin{definition}[Bidder Two's distribution] Define $f_2(v_2^k)$ (as a function of $y$) as follows:
\begin{enumerate}
    \item Set $f_2(v_2^{n+2}) := \frac{n^4}{b}$.
    \item For $k=1$ to $n$, define helper $z_{k} := \frac{b - n^4 - \sum_{j = 1}^{k - 1} f_2(v_2^j) \cdot b}{n - k + 2}$.
\begin{itemize}
\item If $y_k = 1$, then set $f_2(v_2^k) :=  \frac { \left \lfloor z_k + \frac{n^2}{n-k+2} \right  \rfloor} {b}$.
\item Otherwise, set $f_2(v_2^k) := \frac{\left \lfloor z_k - 1\right  \rfloor}{b} $.
\end{itemize}
    \item For $k=n+1$, define helper $z_{n+1} := b - n^4 - \sum_{j = 1}^{n} f_2(v_2^{j}) \cdot b$, set $f_2(v_2^{n+1}) := \frac{z_{n+1}} {b}$.
\end{enumerate}

\end{definition}

The two lemmas below similarly establish that the total mass of Bidder Two is always $1$, and that Bidder Two's distribution is always nearly-uniform over $v_2^1,\ldots, v_2^{n+1}$.

\begin{lemma}\label{lem:budget2mass}
$\sum_{k=1}^{n+2} f_2(v_2^k) = 1$.
\end{lemma}

\begin{lemma} \label{budgeterange}
For all $y$, $f_2(v_2^k) \cdot b \in [a - 2n^3, a + 2n^3]$ for all $k \in [1,n+1]$.
\end{lemma}

Finally, we quickly state that the optimal \emph{single-bidder} auction for any distribution considered in our reduction is especially simple: it sets the same take-it-or-leave-it price of $n^2+1$.

\begin{proposition}\label{prop:budgetsinglesimple} For all $x$ (resp., $y$), the revenue-optimal single-bidder auction for the resulting distribution $D_1$ (resp. $D_2$) simply sets a take-it-or-leave-it price of $n^2+1$.
\end{proposition}

\section{Constructing a Flow for Budget Constraints} \label{sec:budgetflow}
In this section, we construct a flow which is optimal for all instances of our construction. We proceed in two steps. First, we consider a canonical flow and establish that this flow is optimal if and only if $\disj(x,y)=\mathsf{yes}$. Next, we show how to modify the flow to be optimal when $\disj(x,y) = \mathsf{no}$.

\subsection{A Canonical Flow}
We first define a canonical flow, and then argue it is optimal when $\disj(x,y) = \mathsf{yes}$.

\begin{definition}[Canonical Flow] $(\alpha^M,\gamma^M,\lambda^M)$ is the canonical Myerson flow, where: 
\label{budgetcanonical}
\begin{itemize}
\item $\gamma = 0$.
\item $\alpha = 0$.
\item $\lambda_i(k) = R_i(v_i^k)$ for both bidders $i$, and all $k \in [n+2]$.
\end{itemize}
\end{definition}

It is easy to confirm that $(\alpha^M,\gamma^M,\lambda^M)$ is a flow. We can also quickly execute Definition~\ref{def:budgetvv} to compute $\Phi^{\alpha^M,\gamma^M,\lambda^M}$:

\begin{observation} 
For both bidders $i$, and all $k \in [1,n+2]$, $\Phi^{\alpha^M,\gamma^M,\lambda^M}_i(v_i^k)= v_i^k - \frac{R_i(v_i^{k+1})}{f_i(v_i^k)}$.
\end{observation}

Proposition~\ref{prop:budgetMyerson} below captures the key properties of our construction and this flow. These properties are motivated by bullet five of Definition~\ref{def:budgetwitness}: we need to compare virtual values of types of Bidder One with those for types of Bidder Two to determine if a certain allocation is optimal. The proof is in Appendix~\ref{app:budgetflow}.

\begin{proposition}\label{prop:budgetMyerson}
For all $x,y$, the flow $(\alpha^M,\gamma^M,\lambda^M)$ satisfies the following:
\begin{itemize}
\item $ k > k' \Rightarrow \Phi^{\alpha^M,\gamma^M,\lambda^M}_1(v_1^k) > \Phi^{\alpha^M,\gamma^M,\lambda^M}_2(v_2^{k'})$.
\item $k < k' \Rightarrow \Phi^{\alpha^M,\gamma^M,\lambda^M}_1(v_1^k) < \Phi^{\alpha^M,\gamma^M,\lambda^M}_2(v_2^{k'})$.
\item For all $k \in [n]$: if $x_{k} = 0 \text{ OR } y_{k} = 0$, then $\Phi^{\alpha^M,\gamma^M,\lambda^M}_1(v_1^{k}) > \Phi^{\alpha^M,\gamma^M,\lambda^M}_2(v_2^{k})$.
\item For all $k \in [n]$, if $x_k = 1 \text{ AND } y_k = 1$, then $\Phi^{\alpha^M,\gamma^M,\lambda^M}_2(v_2^{k}) > \Phi^{\alpha^M,\gamma^M,\lambda^M}_1(v_1^{k})$.
\item $\Phi^{\alpha^M,\gamma^M,\lambda^M}_1(v_1^{n+1}) > \Phi^{\alpha^M,\gamma^M,\lambda^M}_2(v_2^{n+1})$.
\item $\Phi^{\alpha^M,\gamma^M,\lambda^M}_1(v_1^{n+2}) > \Phi^{\alpha^M,\gamma^M,\lambda^M}_2(v_2^{n+2})$.
\item For both $i$, and all $k \geq 1$: $\Phi_i^{\alpha^M,\gamma^M,\lambda^M}(v_i^k)>0$.
\end{itemize}
\end{proposition}

The first two bullets assert that a bidder with strictly higher value has strictly higher virtual value as well. The next two bullets concern virtual values when both bidders' values are the same. Importantly, they assert that the relative comparison of virtual values when both bidders have value $n^2+k$ depends \emph{only on $x_k$ and $y_k$, and not on $x_{-k}$ or $y_{-k}$}. Bullet seven implies that any allocation rule that witnesses optimality for $(\alpha^M,\gamma^M,\lambda^M)$ must learn which bidder has higher virtual value. We analyze a potential such auction next.

\begin{definition}[Second-Price Auction, tie-breaking for Bidder One]\label{def:budgetSPA} The \emph{second-price auction, tie-breaking for Bidder One}, gives the item to the bidder with highest value and breaks ties in favor of Bidder One. Payments are charged to satisfy the payment identity, and to be ex-post canonical.
\end{definition}

\begin{lemma}\label{lem:budgetrespectingdisjoint}
The second-price auction, tie-breaking for Bidder One is budget-respecting for Bidder One.
\end{lemma}

\begin{proof}
By the definition of payment identity, and Lemma~\ref{budgeterange}, the payment $P_1(v_1^{n+2})$ can be calculated as follows: 
\begin{align*}
P_1(v_1^{n+2}) & = \sum_{k = 1}^{n+2} v_1^k \cdot (\pi_1(v_1^k) - \pi_1((v_1^{k-1})))\\
& = \sum_{k = 1}^{n+2} v_1^k \cdot f_2(v_2^k) \\
& = v_1^{n+2} \cdot f_2(v_2^{n+2}) + \sum_{k = 1}^{n+1} v_1^k \cdot f_2(v_2^k) \\
& < \frac{v_1^{n+2}  \cdot n^4}{b} + \sum_{k = 1}^{n+1} v_1^k \cdot \frac{a + 2n^3}{b}  \\
& < \frac{3n^6}{b} + \frac{2n^3(n^2+2)(n+1)}{b} + \frac{a}{b}\sum_{k = 1}^{n+1} v_1^k \\
& < B. 
\end{align*} 
The last inequality comes from the fact that $b = \Theta(n^8)$.  We can then easily see that the second-price auction, tie-breaking for Bidder One, with the payment identity is budget-respecting for Bidder One since $\pi_1(v_1^{n+2}) = 1$.
\end{proof}

\begin{theorem}\label{thm:budgetdisjoint} 
The second-price auction, tie-breaking for Bidder One, witnesses optimality for $(\alpha^M, \gamma^M, \lambda^M)$ \emph{if and only if} $\disj(x,y) = \mathsf{yes}$.
\end{theorem}

\begin{proof}

First, it is well-known (and easy to see) that the second-price auction, tie-breaking for Bidder One, with the payment identity is BIC (and budget-respecting for Bidder One, by Lemma~\ref{lem:budgetrespectingdisjoint}). After this, there are five bullets to check in Definition~\ref{def:budgetwitness}. We claim that the first two hold for all $x,y$, and the third holds if and only if $\disj(x,y) = \mathsf{yes}$.

The first bullet holds trivially, as payments are specifically defined to satisfy the payment identity. The second bullet also holds trivially, as payments are specified to be ex-post canonical.

The third and fourth bullet hold vacuously, as $\alpha = \gamma = 0$.

To see that final bullet holds if and only if $\disj(x,y) = \mathsf{yes}$, observe that the second-price auction awards the item to the bidder with highest value, tie-breaking for Bidder One. So the final bullet holds if and only if: (a) a higher value implies a higher virtual value (which immediately follows from the first two bullets of Proposition~\ref{prop:budgetMyerson}), (b) all virtual values are non-negative (which immediately follows from bullet seven of Proposition~\ref{prop:budgetMyerson}), and (c) Bidder One has a higher virtual value whenever both bidders have the same value (which holds \emph{if and only if} $\disj(x,y) = \mathsf{yes}$, by bullets three and four of Proposition~\ref{prop:budgetMyerson}).

This completes the proof: Definition~\ref{def:budgetwitness} is satisfied if and only if $\disj(x,y) = \mathsf{yes}$.
\end{proof}

Observe that Theorem~\ref{thm:budgetdisjoint} implies that the Second-Price Auction, tie-breaking for Bidder One is one optimal auction for $(D_1,D_2)$ when $\disj(x,y) = \mathsf{yes}$. We now conclude the following simple corollary:

\begin{corollary}\label{cor:budgetdisjoint} If $\disj(x,y) = \mathsf{yes}$, \emph{every} optimal BIC auction $(X,P)$ for $D_1,D_2$ has $X_1(v_1^{n+2}, v_2^{n+2}) = 1$.
\end{corollary}
\begin{proof}
Because a BIC auction witnesses optimality for $(\alpha^M, \gamma^M, \lambda^M)$ (by Theorem~\ref{thm:budgetdisjoint}), every optimal BIC auction for $D_1,D_2$ witnesses optimality for $(\alpha^M, \gamma^M, \lambda^M)$. Because $\Phi^{\alpha^M, \gamma^M , \lambda^M}_1(v_1^{n+2})>\Phi^{\alpha^M, \gamma^M, \lambda^M}_2(v_2^{n+2})$ by Proposition~\ref{prop:budgetMyerson}, bullet five of Definition~\ref{def:budgetwitness} asserts that every optimal BIC auction satisfies $X_1(v_1^{n+2},v_2^{n+2}) = 1$.
\end{proof}

Corollary~\ref{cor:budgetdisjoint} proves half of Theorem~\ref{thm:mainbudget}: that Bidder One wins the item in all optimal auctions on $(v_1^{n+2},v_2^{n+2})$ when $\disj(x,y) = \mathsf{yes}$. Theorem~\ref{thm:budgetdisjoint} makes clear the key distinction when $\disj(x,y) = \mathsf{no}$: $(\alpha^M, \gamma^M ,\lambda^M)$ does not witness optimality, so we need a new flow.

\subsection{Modifying the Canonical Flow}
We now modify the canonical flow to find an optimal $(\alpha', \gamma', \lambda')$ in the case when $\disj(x,y) = \mathsf{no}$. Fortunately, the necessary modification is simple to describe (although verifying the desired properties is comlpex). We will make two modifications, defined below, and also used in~\cite{DevanurW17}.

\begin{definition}[Boosting, \cite{DevanurW17}] Beginning with a flow $(\alpha, \gamma, \lambda)$, \emph{boosting $(\alpha,  \gamma, \lambda)$ for Bidder One by $\varepsilon$}, produces a new flow $(\alpha', \gamma' ,\lambda')$ with:
\begin{itemize}
\item  $\gamma' :=  \gamma + \varepsilon$.
\item  $(\lambda')_1(k):=\lambda_1(k) - \varepsilon$, for all $k \leq n + 2$.
\item If not already specified, then $\alpha' = \alpha$ and $\lambda'=\lambda$.
\end{itemize}
\end{definition}

It is not hard to see that Boosting preserves the flow conditions (provided that $\varepsilon \leq \lambda_1(k)$ for all $k \leq n + 2$). It is also not hard to see that Boosting for Bidder One \emph{increases} the virtual value for all $v_1^{k}$ for all $k <  n + 2$, \emph{decreases} the virtual value for $v_1^{n+2}$, and leaves all other virtual values unchanged.

The second operation is ironing.

\begin{definition}[Ironing] Beginning with a flow $(\alpha, \gamma, \lambda)$, \emph{ironing $(\alpha,  \gamma, \lambda)$ for Bidder One by $\delta$},  produces a new flow $(\alpha', \gamma' ,\lambda')$ with:
\begin{itemize}
\item  $\alpha := \delta$.
\item  $(\lambda')_1(n+2):=\lambda_1(n+2) + \delta$.
\item If not already specified, then $\gamma' = \gamma$ and $\lambda'=\lambda$.
\end{itemize}
\end{definition}

\begin{definition}[Modified Flow] The modified flow $(\alpha^*, \gamma^*,\lambda^*)$ proceeds as follows:
\begin{enumerate}
\item Begin with $(\alpha, \gamma, \lambda) =(\alpha^M, \gamma^M, \lambda^M)$.
\item Boost $(\alpha, \gamma, \lambda)$ for Bidder One by $\varepsilon$. Here, $\varepsilon$ is the minimum boost which results in $\Phi_1^{\alpha',  \gamma', \lambda'}(v_1^k)\geq \Phi_2^{\alpha', \gamma',\lambda'}(v_2^k)$ for all $k \in [1, n]$.
\item Iron $(\alpha, \gamma, \lambda)$ for Bidder One by $\delta$. Here $\delta$ is the ironing which results in  $\Phi_1^{\alpha',  \gamma', \lambda'}(v_1^{n+1}) = \Phi_1^{\alpha',  \gamma', \lambda'}(v_1^{n+2})$.
\end{enumerate}
\end{definition}

Note that $\lambda_1(n+2)$ will be negative after boosting (step two), but the ironing in step three will ensure that $\lambda_1(n+2)$ becomes non-negative again.

The rest of our analysis proceeds as follows. First, we need to establish that this modified flow indeed exists, because the required boost for Bullet 2 is small enough to be valid. Proposition~\ref{prop:budgetvalid} states this, and also several useful properties of this flow. The proof of Proposition~\ref{prop:budgetvalid} is in Appendix~\ref{app:budgetflow}, and this relies on many of the precise choices in defining our instance.

\begin{proposition} \label{prop:budgetvalid}
 For all $x,y$, $(\alpha^*, \gamma^* ,\lambda^*)$ is a valid flow. Moreover, it satisfies the following properties:
\begin{itemize}
\item $k > k' \Rightarrow \Phi^{\alpha^*,\gamma^*,\lambda^*}_1(v_1^k) > \Phi^{\alpha^*,\gamma^*,\lambda^*}_2(v_2^{k'})$.
\item $k < k' \Rightarrow \Phi^{\alpha^*,\gamma^*,\lambda^*}_1(v_1^k) < \Phi^{\alpha^*,\gamma^*,\lambda^*}_2(v_2^{k'})$.
\item For all $k \in [n]$:  $\Phi^{\alpha^*,\gamma^*,\lambda^*}_1(v_1^{k}) \geq  \Phi^{\alpha^*,\gamma^*,\lambda^*}_2(v_2^{k})$.
\item When $\disj(x,y) = \mathsf{no}$, there exists a $k^* \in [1,n]$ such that: $\Phi_1^{\alpha^*, \gamma^*, \lambda^*}(v_1^{k^*}) = \Phi_2^{\alpha^*, \gamma^*, \lambda^*}(v_2^{k^*})$.
\item When $\disj(x,y) = \mathsf{no}$, then: $\Phi^{\alpha^*, \gamma^* ,\lambda^*}_2(v_2^{n+2}) > \Phi^{\alpha^*, \gamma^*, \lambda^*}_1(v_1^{n +2}) = \Phi^{\alpha^*, \gamma^*, \lambda^*}_1(v_1^{n + 1}) > \Phi^{\alpha^*,\lambda^*}_2(v_2^{n+1})$.
\item For both $i$, and all $k \geq 1$: $\Phi_i^{\alpha^*,\gamma^*,\lambda^*}(v_i^k)>0$.

 \end{itemize}
\end{proposition}

We now define an auction that witnesses optimality for $(\alpha^*, \gamma*, \lambda^*)$, and conclude implications for $X_1(v_1^{n+2},v_2^{n+2})$.

For convenience, we define 
\[ \Delta =  \left ( 1 - \frac{n^4}{b} \right) \cdot B -  \sum_{i = 1}^{n+1} f_2(v_2^i) \cdot v_1^i. \]

 Note that $\sum_{i = 1}^{n+1} f_2(v_2^i) \cdot v_1^i$ is exactly the payment of $v_1^{n+2}$ when the auction is awarding the item to the bidder with highest virtual value, tie-breaking always for bidder one. The idea here is that we'd like to carefully breaks ties for type $k*$ to increase the payment for  $v_1^{n+2}$ to make $\Delta$ vanish. 

The following lemma gives a bound of $\Delta$.

\begin{lemma} 
If $\disj(x,y) = \mathsf{no}$, $\Delta = \frac1{2n} + O(1/n^2)$.
\end{lemma}

\begin{proof}
By Proposition~\ref{prop:budgetvalid} and Lemma~\ref{budgeterange} we have
\begin{align*}
\Delta & =   \left ( 1 - \frac{n^4}{b} \right) \cdot B -  \sum_{i = 1}^{n+1} f_2(v_2^i) \cdot v_1^i\\
& = B - \frac{n^4 \cdot B}{b} -  \frac{a}{b} \sum_{i = 1}^{n+1} v_1^i  -  \sum_{i = 1}^{n+1} v_1^i \cdot \left (f_2(v_2^i) - \frac{a}{b} \right )\\
& = \frac1{2n} - \Theta(1/n^2) + \Theta(1/n^2) \\
& = \frac1{2n} + O(1/n^2).
\end{align*}
\end{proof}

\begin{definition}[Second-Price Auction, Except for $n+2$, careful tie-breaking at $k^*$]\label{def:budgetcareful} The second-price auction except for $n+2$ with careful tie-breaking at $k^* \in [1,n]$ gives the item to the bidder with highest value (with an exception of always giving the item to Bidder Two when her value is $v_2^{n+2}$). If both bidders have the same value $n^2+k$, break ties in the following manner (in all cases, charge payments satisfying the payment identity, and to be ex-post canonical):
\begin{itemize}
\item If $k \neq k^*$, give the item to Bidder One.
\item If $k = k^*$, give Bidder One the item with probability $1-\frac{\Delta}{f_2(v_2^{k^*})}$, and to Bidder Two with probability $\frac{\Delta}{f_2(v_2^{k^*})}$.\footnote{Observe that this is feasible, as we've guaranteed that $\Delta =  \frac1{2n} + O(1/n^2) < f_2(v_2^k)$ for all $k \in [1, n]$.}
\end{itemize}
\end{definition}

\begin{lemma} \label{lem:budgetBIC}
For all $x,y$ such that $\disj(x,y) = \mathsf{no}$, the Second-Price Auction, Except for $n+2$ with careful tie-breaking at $k^*$ is BIC. Moreover, $\pi_1(v_1^{n+2}) \cdot B =  P_1(v_1^{n+2})$.
\end{lemma}

With Lemma~\ref{lem:budgetBIC} in hand, the proof of Theorem~\ref{thm:budgetnotdisjoint} follows similarly to that of Theorem~\ref{thm:budgetdisjoint}.

\begin{theorem}\label{thm:budgetnotdisjoint} When $\disj(x,y) = \mathsf{no}$, let $k^*$ be the index promised by bullet four of Proposition~\ref{prop:budgetvalid}. Then the second-price auction except for $n+2$ with careful tie-breaking at $k^*$ witnesses optimality for $(\alpha^*, \gamma^*, \lambda^*)$.
\end{theorem}

\begin{proof}
First of all, the auction always awards the item to a bidder with highest non-negative virtual value by Proposition~\ref{prop:budgetvalid}. Furthermore, we have already established in Lemma~\ref{lem:budgetBIC} that the second-price auction with careful tie-breaking at $k^*$ is BIC. So we just need to check three bullet. We have also explicitly defined payments to be ex-post canonical and satisfy the payment identity, so first two bullets are satisfied. For bullet four, the budget equality holds by Lemma~\ref{lem:budgetBIC}. Finally, bullet three is easy to see, since $\pi_1(v_1^{n+1}) = \pi_1(v_1^{n+2}) = 1 - f_2(v_2^{n+2})$, and $P_1(v_1^{n+1}) =  P_1(v_1^{n+2})$.
\end{proof}

Again observe that Theorem~\ref{thm:budgetnotdisjoint} implies that the Second-Price Auction Except for $n+2$ with careful tie-breaking at $k^*$ is one optimal auction for $(D_1,D_2)$ when $\disj(x,y) = \mathsf{no}$. We again conclude the following corollary:

\begin{corollary}\label{cor:budgetnotdisjoint} If $\disj(x,y)=\mathsf{no}$, every optimal BIC auction $(X,P)$ for $D_1,D_2$ has $X_1(v_1^{n+2},v_2^{n+2}) = 0$.
\end{corollary}
\begin{proof}
Because a BIC auction witnesses optimality for $(\alpha^*, \gamma^* ,\lambda^*)$ (by Theorem~\ref{thm:budgetnotdisjoint}), every optimal BIC auction for $D_1,D_2$ witnesses optimality for $(\alpha^*, \gamma^* ,\lambda^*)$. Because $\Phi_1^{\alpha^*, \gamma^*,\lambda^*}(v_1^{n+2}) < \Phi_2^{\alpha^*,\gamma^* ,\lambda^*}(v_2^{n+2})$ by Proposition~\ref{prop:budgetvalid}, bullet five of Definition~\ref{def:budgetwitness} asserts that every optimal BIC auction satisfies $X_1(v_1^{n+2},v_2^{n+2}) = 0$.
\end{proof}

This wraps up the proof of Theorem~\ref{thm:mainbudget}.
\begin{proof}[Proof of Theorem~\ref{thm:mainbudget}]
Corollary~\ref{cor:budgetdisjoint} establishes that when $\disj(x,y) = \mathsf{yes}$, any optimal BIC auction must allocate the item to Bidder One on $(v_1^{n+2},v_2^{n+2})$ with probability one. Corollary~\ref{cor:budgetnotdisjoint} establishes that when $\disj(x,y) = \mathsf{no}$, any optimal BIC auction must allocate the item to Bidder One on $(v_1^{n+2},v_2^{n+2})$ with probability zero. Because $D_1$ can be constructed only as a function of $x$, and $D_2$ can be constructed only as a function of $y$, any communication protocol which correctly allocates the item on $(v_1^{n+2},v_2^{n+2})$ in accordance with \emph{any} optimal BIC mechanism (even with probability $2/3$) can also solve \disj\ (with probability $2/3$). Because any deterministic (resp. randomized, succeeding with probability $2/3$) protocol for disjointness requires communication $n$ (resp. $\Omega(n)$), this means that any deterministic (resp. randomized, succeeding with probability $2/3$) protocol which can correctly allocate the item on $(v_1^{n+2},v_2^{n+2})$ in accordance with any optimal BIC mechanism (resp. with probability $2/3$) requires communication at least $n$ (resp. $\Omega(n)$).
\end{proof}

\section{Omitted Proofs from Appendix~\ref{sec:budget}}\label{app:budgetproof}

\subsection{Bidder 1}

In order to distinguish different helpers defined in the constructions and improve readability, we will use the following notations throughout this appendix: 
\begin{itemize}
  \item We use $d^k$ to represent type $v_1^k$. Notation $z^d_k$ represents the helper $z_k$ for Bidder One. 
  \item We use $e^k$ to represent type $v_2^k$. Notation $z^e_k$ represents the helper $z_k$ for Bidder Two. 
\end{itemize}

Lemmas below provide some useful properties of the helpers $z^d$s, at the end of this subsection, we will use these to prove the distribution we constructed is valid and nearly-uniform.

The following lemma shows the relationship between $f(d^i)$ and $z^d_{i+1}$.
\begin{lemma} \label{budgetfzd}
 $f(d^i) \cdot b \ge z^d_{i + 1}$ for all $i \in [1, n]$.
\end{lemma}
\begin{proof}
   If $x_i = 1$, recall that $f(d^{i}) = \left \lceil z^d_{i} \right \rceil /b$, in the other case, by definition we know $f(d^i) = \left \lfloor z^d_{i} + \frac{n^3}{n - i + 2} \right \rfloor /b$. In both cases it is clear that $f(d^{i}) \cdot b \ge z^d_{i}$, thus we have
  \begin{align*}
        z^d_{i + 1} & = \frac{b - 1 - \sum_{j = 1}^{i} f(d^j) \cdot b} {n - i + 1} \\
        & = \frac{b - f(d^{i})\cdot b - 1 -  \sum_{j = 1}^{i-1} f(d^j) \cdot b} {n - i + 1} \\
        & = \frac{z^d_{i} \cdot (n - i + 2) - f(d^{i}) \cdot b}{n - i + 1} \\
        & \le \frac{f(d^{i}) \cdot b \cdot (n - i + 2) - f(d^{i}) \cdot b}{n - i + 1} \\
        & = f(d^{i}) \cdot b.
  \end{align*}
\end{proof}

The following lemmas bound the gap between two consecutive $z^ds$.

\begin{lemma} \label{budgetzd1gap}
  If $x_i = 0$, $z^d_{i} - z^d_{i + 1} \le \frac{n^3}{(n - i + 2)(n - i + 1)}$.
\end{lemma}

\begin{proof}
\begin{align*}
        z^d_{i + 1} & = \frac{b - 1 - \sum_{j = 1}^{i} f(d^j) \cdot b} {n - i + 1} \\
        & = \frac{b - 1 - f(d^{i})\cdot b - \sum_{j = 1}^{i-1} f(d^j) \cdot b} {n - i + 1} \\
        & = \frac{z^d_{i} \cdot (n - i + 2) - f(d^{i}) \cdot b}{n - i + 1} \\
        & = z^d_{i} + \frac{z^d_{i} - f(d^{i}) \cdot b}{n - i + 1} \\
        & = z^d_{i} + \frac{z^d_{i} -  \left \lfloor  z^d_{i} + \frac{n^3}{n - i + 2}   \right \rfloor}{n - i + 1} \\
        & \ge z^d_{i} + \frac{z^d_{i} -  \left (  z^d_{i} + \frac{n^3}{n - i + 2}  \right )}{n - i + 1} \\
        & = z^d_{i} - \frac{n^3}{(n - i + 2)(n - i + 1)}\\
\end{align*}
\end{proof}

\begin{lemma} \label{budgetzd0gap}
  If $x_i = 1$, $z^d_{i } - z^d_{i + 1} \le \frac{1}{n - i + 1}$.
\end{lemma}

\begin{proof}

\begin{align*}
        z^d_{i + 1} & = \frac{b -  1  - \sum_{j = 1}^{i} f(d^j) \cdot b} {n - i + 1} \\
        & = \frac{b - 1 - f(d^{i})\cdot b - \sum_{j = 1}^{i - 1} f(d^j) \cdot b} {n - i + 1} \\
        & = \frac{z^d_{i} \cdot (n - i + 2) - f(d^{i}) \cdot b}{n - i + 1} \\
        & = z^d_{i} + \frac{z^d_{i} - f(d^{i}) \cdot b}{n - i + 1} \\
        & = z^d_{i} + \frac{z^d_{i} -  \left \lceil z^d_{i}   \right \rceil}{n - i + 1} \\
        & \ge z^d_{i} + \frac{z^d_{i} -  \left (  z^d_{i} + 1  \right )}{n - i + 1} \\
        & = z^d_{i} - \frac{1}{n - i + 1}\\
\end{align*}
\end{proof}

\begin{lemma} \label{budgetzdmonotone}
$z^d_{i + 1} < z^d_{i}$ for all $i \in [1, n]$.
\end{lemma}

\begin{proof}
  If $x_i = 1$, then we have
\begin{align*}
        z^d_{i + 1} & = \frac{b - 1 - \sum_{j = 1}^{i} f(d^j) \cdot b} {n - i + 1} \\
        & = \frac{b - 1 - f(d^{i })\cdot b - \sum_{j = 1}^{i-1} f(d^j) \cdot b} {n - i + 1} \\
        & = \frac{z^d_{i} \cdot (n - i + 2) - f(d^{i}) \cdot b}{n - i + 1} \\
        & = z^d_{i} + \frac{z^d_{i} - f(d^{i}) \cdot b}{n - i + 1} \\
        & = z^d_{i} + \frac{z^d_{i} -  \left \lfloor  z^d_{i} + \frac{n^3}{n - i + 2}   \right \rfloor}{n - i + 1} \\
        & < z^d_{i} + \frac{z^d_{i} -  \left (  z^d_{i } + \frac{n^3}{n - i + 2} - 1  \right )}{n - i + 1} \\
        & = z^d_{i} - \frac {\frac{n^3}{n - i + 2} - 1 }{n - i + 1} \\
        & < z^d_{i}.
\end{align*}
Otherwise, we have
\begin{align*}
        z^d_{i + 1} & = \frac{b - 1 - \sum_{j = 1}^{i} f(d^j) \cdot b} {n - i + 1} \\
        & = \frac{b - 1 - f(d^{i})\cdot b - \sum_{j = 1}^{i-1} f(d^j) \cdot b} {n - i + 1} \\
        & = \frac{z^d_{i} \cdot (n - i + 2) - f(d^{i}) \cdot b}{n - i + 1} \\
        & = z^d_{i} + \frac{z^d_{i} - f(d^{i}) \cdot b}{n - i + 1} \\
        & = z^d_{i} + \frac{z^d_{i} -  \left \lceil  z^d_{i}  \right \rceil}{n - i + 1} \\
        & < z^d_{i} + \frac{z^d_{i} -  z^d_{i}  }{n - i + 1} \\
        & = z^d_{i},
\end{align*}
which concludes the proof.
\end{proof}

The following lemma bounds the range of $z^d$.
\begin{lemma}\label{budgetzdrange}
$z^d_{i} \in [a - n^3, a]$, for all $i \in [1, n + 1]$.
\end{lemma}
\begin{proof}
 Since by Lemma \ref{budgetzdmonotone} $z^d$ is monotone decreasing, and by definition $z^d_1 = a$, we only need to bound $z^d_{n+1}$:
\begin{align*}
    z^d_{n + 1} & = z^d_1 + \sum_{i = 1}^{n} (z^d_{i + 1} - z^d_{i})  \\
    & = z^d_1 +  \sum_{i = 1}^{n} \frac{z^d_{i} \cdot (n-i+2) - f(d^{i}) \cdot b}{n - i + 1} - z^d_{i} \\
    & = z^d_1 +  \sum_{i = 1}^{n} \frac{z^d_{i} - f(d^{i}) \cdot b}{n - i + 1} \\
    & = z^d_1 + \sum_{i = 1}^n \frac{z^d_{i} - \left \lfloor  z^d_{i} + \frac{n^3}{n - i + 2}   \right \rfloor \cdot [x_i = 0] - \lceil z^d_{i}\rceil \cdot [x_i = 1]}{n-i+1}   \\
    & \ge z^d_1 + \sum_{i = 1}^n \frac{z^d_{i} - \left \lfloor  z^d_{i} + \frac{n^3}{n - i + 2}   \right \rfloor }{n-i+1}   \\
    & \ge z^d_1 + \sum_{i = 1}^n \frac{z^d_{i} - \left ( z^d_{i} + \frac{n^3}{n - i + 2}   \right )}{n-i+1} \\
    & = z^d_1 - \sum_{i = 1}^n \frac{n^3} {(n-i+1)(n-i+2)} \\
    & = z^d_1 -  n^3 \sum_{i = 1}^n\frac1 {(n-i+1)(n-i+2)} \\
    &  = z^d_1 - n^3 \left (\frac{1}{(n + 1)n} + \frac{1}{n(n - 1)} + \ldots + \frac{1}{1\cdot2} \right) \\
    &  = z^d_1 - n^3 \left [ \left (\frac{1}{n} - \frac{1}{n+1} \right) + \left (\frac{1}{n-1} - \frac{1}{n} \right) + \ldots +  \left(1 - \frac12\right) \right] \\
    & = z^d_1 - n^3 \left(1-\frac1{n+1} \right) \\
    & > a - n^3,
\end{align*}
where the last inequality comes from the fact that  $z^d_1 = a$.
\end{proof}

The following proof proves Lemma~\ref{lem:budget1mass}, which shows our distribution is valid.
\begin{proof}[Proof of Lemma~\ref{lem:budget1mass}]
\begin{align*}
\sum_{k=1}^{n+2} f(d^k) &= f(d^{n+1}) + f(d^{n+2}) + \sum_{k=1}^{n} f(d^k)  \\
&= \frac{ b - 1 - \sum_{k = 1}^{n} f(d^k) \cdot b }{b} + f(d^{n+2}) + \sum_{k=1}^{n} f(d^k)  \\
&= 1
\end{align*}
\end{proof}

Similarly, utilizing the properties above, we conclude this subsection by proving a key property of our construction. It guarantees that Bidder One's distribution is nearly-uniform over $v_1^1,\ldots, v_1^{n+1}$.

\begin{proof}[Proof of Lemma~\ref{budgetdrange}]

By \ref{budgetzdrange} we have
\begin{align*}
    f(d^{i}) \cdot b & = \left \lceil  z^d_{i } \right \rceil \cdot [x_i = 1] + \left \lfloor  z^d_{i } + \frac{n^3}{n - i + 2}   \right \rfloor \cdot [x_i = 0] \\
    & \ge \left \lceil  z^d_{i } \right \rceil \\
    & \ge z^d_{i } \\
    & \ge a - n^3.
\end{align*}
On the other hand,  we have
\begin{align*}
     f(d^{i}) \cdot b & = \left \lceil  z^d_{i } \right \rceil \cdot [x_i = 1] + \left \lfloor  z^d_{i } + \frac{n^3}{n - i + 2}   \right \rfloor \cdot [x_i = 0] \\
    & \le \left \lfloor  z^d_{i} + \frac{n^3}{n - i + 2}   \right \rfloor \\
    & \le z^d_{i} + \frac{n^3}{n - i + 2} \\
    & \le z^d_{i} + n^3 \\
    & \le a + n^3,
\end{align*}
which concludes the proof.

\end{proof}

\subsection{Bidder 2}

Lemmas below provide some useful properties of the helpers $z^e$s, at the end of this subsection, we will use these to prove the distribution we constructed is valid and nearly-uniform.

The following lemma shows the relationship between $f(e^{i})$ and $z^e_{i+1}$ when $y_i = 0$.
\begin{lemma} \label{budgetfze}
If $y_i = 0, f(e^{i}) \cdot b < z^e_{i + 1}$ for all $i \in [1, n]$.
\end{lemma}
\begin{proof}
   Recall that $f(e^{i}) = \left \lfloor z^e_{i} - 1 \right \rfloor /b$ when $y_i = 0$, then we have
  \begin{align*}
        z^e_{i + 1} & = \frac{b  - n^4 - \sum_{j = 1}^{i} f(e^j) \cdot b} {n - i + 1} \\
        & = \frac{b - n^4 - f(e^{i})\cdot b - \sum_{j = 1}^{i-1} f(e^j) \cdot b} {n - i + 1} \\
        & = \frac{z^e_{i} \cdot (n - i + 2) - f(e^{i}) \cdot b}{n - i + 1} \\
        & > \frac{f(e^{i}) \cdot b \cdot (n - i + 2) - f(e^{i}) \cdot b}{n - i + 1} \\
        & = f(e^{i}) \cdot b
  \end{align*}
\end{proof}

The following lemmas bound the gap between two consecutive $z^es$.

\begin{lemma} \label{budgetze1gap}
  If $y_i = 1$, $z^e_{i} - z^e_{i + 1} \le \frac{n^2}{(n - i + 2)(n - i + 1)}$.
\end{lemma}

\begin{proof}
\begin{align*}
        z^e_{i + 1} & = \frac{b - n^4 - \sum_{j = 1}^{i} f(e^j) \cdot b} {n - i + 1} \\
        & = \frac{b - f(e^{i})\cdot b - n^4 - \sum_{j = 1}^{i-1} f(e^j) \cdot b} {n - i + 1} \\
        & = \frac{z^e_{i} \cdot (n - i + 2) - f(e^{i}) \cdot b}{n - i + 1} \\
        & = z^e_{i} + \frac{z^e_{i} - f(e^{i}) \cdot b}{n - i + 1} \\
        & = z^e_{i} + \frac{z^e_{i} -  \left \lfloor  z^e_{i} + \frac{n^2}{n - i + 2}   \right \rfloor}{n - i + 1} \\
        & \ge z^e_{i} + \frac{z^e_{i} -  \left (  z^e_{i} + \frac{n^2}{n - i + 2}  \right )}{n - i + 1} \\
        & = z^e_{i} - \frac{n^2}{(n - i + 2)(n - i + 1)}\\
\end{align*}
\end{proof}

\begin{lemma} \label{budgetze0gap}
  If $y_i = 0$, $z^e_{i} - z^e_{i + 1} \le -\frac{1}{n-i+1}$.
\end{lemma}

\begin{proof}
\begin{align*}
        z^e_{i + 1} & = \frac{b - n^4 - \sum_{j = 1}^{i} f(e^j) \cdot b} {n - i + 1} \\
        & = \frac{b - n^4 -  f(e^{i})\cdot b - \sum_{j = 1}^{i - 1} f(e^j) \cdot b} {n - i + 1} \\
        & = \frac{z^e_{i} \cdot (n - i + 2) - f(e^{i}) \cdot b}{n - i + 1} \\
        & = z^e_{i} + \frac{z^e_{i} - f(e^{i}) \cdot b}{n - i + 1} \\
        & = z^e_{i} + \frac{z^e_{i} -  \left \lfloor  z^e_{i} - 1  \right \rfloor}{n - i + 1} \\
        & \ge z^e_{i} + \frac{z^e_{i} -  \left (  z^e_{i}  - 1   \right )}{n - i + 1} \\
        & = z^e_{i} + \frac{1}{(n - i + 1)}
\end{align*}
\end{proof}

\begin{lemma} \label{budgetzenearlymonotone}
$z^e_{i + 1} < z^e_{i} + 2$ for all $i \in [1, n]$.

\end{lemma}

\begin{proof}
  If $y_i = 1$, then we have
\begin{align*}
        z^e_{i + 1} & = \frac{b - n^4 - \sum_{j = 1}^{i} f(e^j) \cdot b} {n - i + 1} \\
        & = \frac{b - n^4 - f(e^{i})\cdot b - \sum_{j = 1}^{i - 1} f(e^j) \cdot b} {n - i + 1} \\
        & = \frac{z^e_{i} \cdot (n - i + 2) - f(e^{i}) \cdot b}{n - i + 1} \\
        & = z^e_{i} + \frac{z^e_{i} - f(e^{i}) \cdot b}{n - i + 1} \\
        & = z^e_{i} + \frac{z^e_{i} -  \left \lfloor  z^e_{i } + \frac{n^2}{n - i + 2}   \right \rfloor}{n - i + 1} \\
        & < z^e_{i} + \frac{z^e_{i} -  \left (  z^e_{i} + \frac{n^2}{n - i + 2} - 1  \right )}{n - i + 1} \\
        & = z^e_{i} - \frac {\frac{n^2}{n - i + 2} - 1 }{n - i + 1} \\
        & < z^e_{i}.
\end{align*}
Otherwise, we have
\begin{align*}
        z^e_{i + 1} & = \frac{b -  n^4 - \sum_{j = 1}^{i} f(e^j) \cdot b} {n - i + 1} \\
        & = \frac{b - n^4 - f(e^{i })\cdot b - \sum_{j = 1}^{i-1} f(e^j) \cdot b} {n - i + 1} \\
        & = \frac{z^e_{i} \cdot (n - i + 2) - f(e^{i}) \cdot b}{n - i + 1} \\
        & = z^e_{i} + \frac{z^e_{i} - f(e^{i}) \cdot b}{n - i + 1} \\
        & = z^e_{i} + \frac{z^e_{i} -  \left \lfloor  z^e_{i }  \right \rfloor + 1}{n - i + 1} \\
        & < z^e_{i} + \frac{z^e_{i} -  z^e_{i } + 2  }{n - i + 1} \\
        & = z^e_{i} + \frac{2}{n - i + 1} \\
        & \le z^e_{i} + 2,
\end{align*}
which concludes the proof.
\end{proof}

\begin{lemma}\label{budgetzerange}
$z^e_{i} \in [a - n^3 - n, a]$, for all $i \in [1, n + 1]$.
\end{lemma}
\begin{proof}
  Since by Lemma~\ref{budgetzenearlymonotone}, $z^e_{i + 1} < z^e_{i} + 2$, which indicates $z^e_{n+1} - 2n < z^e_{i} < z^e_1 + 2n$ for all $i \in [1, n + 1]$, and by definition $z^e_1 = \frac{b - n^4}{n + 1} = a - \frac{n^4 -1}{n+1}$, we only need to lower bound $z^e_{n+1}$. Again by the fact that $z^e_1 = a - \frac{n^4 -1}{n+1} $ we have
\begin{align*}
    z^e_{n + 1} & = z^e_1 + \sum_{i = 1}^{n} (z^e_{i + 1} - z^e_{i})  \\
    & = z^e_1 +  \sum_{i = 1}^{n} \frac{z^e_{i} \cdot (n-i+2) - f(e^{i}) \cdot b}{n - i + 1} - z^e_{i} \\
    & = z^e_1 +  \sum_{i = 1}^{n} \frac{z^e_{i} - f(e^{i}) \cdot b}{n - i + 1} \\
    & = z^e_1 + \sum_{i = 1}^n \frac{z^e_{i} - \left \lfloor  z^e_{i} + \frac{n^2}{n - i + 2}   \right \rfloor \cdot [y_i = 1] - \lfloor z^e_{i} - 1 \rfloor \cdot [y_i = 0]}{n-i+1}   \\
    & \ge z^e_1 + \sum_{i = 1}^n \frac{z^e_{i} - \left \lfloor  z^e_{i} + \frac{n^2}{n - i + 2}   \right \rfloor }{n-i+1}   \\
    & \ge z^e_1 + \sum_{i = 1}^n \frac{z^e_{i} - \left ( z^e_{i} + \frac{n^2}{n - i + 2}   \right )}{n-i+1} \\
    & = z^e_1 - \sum_{i = 1}^n \frac{n^2} {(n-i+1)(n-i+2)} \\
    & = z^e_1 -  n^2\sum_{i = 1}^n\frac1 {(n-i+1)(n-i+2)} \\
    &  = z^e_1 - n^2 \left (\frac{1}{(n + 1)n} + \frac{1}{n(n - 1)} + \ldots + \frac{1}{1\cdot2} \right) \\
    &  = z^e_1 - n^2 \left [ \left (\frac{1}{n} - \frac{1}{n+1} \right) + \left (\frac{1}{n-1} - \frac{1}{n} \right) + \ldots +  \left(1 - \frac12\right) \right] \\
    & = z^e_1 - n^2 \left(1-\frac1{n+1} \right) \\
    & > a - n^3 - n,
\end{align*}
which concludes the proof.

\end{proof}

The following proof proves Lemma~\ref{lem:budget2mass} which ensures our distribution is valid.

\begin{proof}[Proof of Lemma~\ref{lem:budget2mass}]
\begin{align*}
\sum_{k=1}^{n+2} f(e^k) &= f(e^{n+1}) + f(d^{n+2})  + \sum_{k=1}^{n} f(e^k)  \\
&= \frac{b -  f(d^{n+2})  - \sum_{k=1}^{n}  f(e^k) \cdot b}{b} +  f(d^{n+2})  + \sum_{k=1}^{n+1} f(e^k) \\
&= 1.
\end{align*}
\end{proof}

Again, utilizing the properties above, we conclude this subsection by proving a key property of our construction. It guarantees that Bidder Two's distribution is nearly-uniform over $v_1^1,\ldots, v_1^{n+1}$.

\begin{proof}[Proof of Lemma~\ref{budgeterange}]

By Lemma~\ref{budgetzerange} we have
\begin{align*}
    f(e^{i}) \cdot b & = \left \lfloor  z^e_{i} - 1 \right \rfloor \cdot [y_i = 0] + \left \lfloor  z^e_{i + 1} + \frac{n^2}{n - i + 2}   \right \rfloor \cdot [y_i = 1] \\
    & \ge \left \lfloor  z^e_{i} - 1 \right \rfloor \\
    & > z^e_{i} - 2 \\
    & \ge a - n^3 - n  - 2.
\end{align*}
On the other hand, we have
\begin{align*}
     f(e^{i}) \cdot b & = \left \lceil  z^e_{i } \right \rceil \cdot [y_i = 0] + \left \lfloor  z^e_{i} + \frac{n^2}{n - i + 2}   \right \rfloor \cdot [y_i = 1] \\
    & \le \left \lfloor  z^e_{i } + \frac{n^2}{n - i + 2}   \right \rfloor \\
    & \le z^e_{i} + \frac{n^2}{n - i + 2} \\
    & \le z^e_{i } + n^2 \\
    & \le a + n^2,
\end{align*}
which concludes the proof.

\end{proof}

\begin{proof}[Proof of Proposition~\ref{prop:budgetsinglesimple}]
We will show that the single-bidder auction which sets a take-it-or-leave-it price of $n^2 + 1$ witnesses optimality for $(\alpha^M, \gamma^M , \lambda^M)$, where $(\alpha^M, \gamma^M ,\lambda^M)$ is the canonical Myerson flow (see Definition~\ref{budgetcanonical}).

We first confirm that the single-bidder auction which sets a take-it-or-leave-it price of $n^2 + 1$  is BIC for $D_1$ and $D_2$. Observe that this mechanism has a monotone allocation rule and satisfies the payment identity, so no bidder has incentive to to misreport their value. Furthermore, Bidder One always pays $n^2 + 1$ (which is less than her budget $B$) if and only if her value is at least as large as $n^2 + 1$.  Therefore, this mechanism is ex-post individually rational. Thus the first two bullets of Definition~\ref{def:budgetwitness} hold.

To see the final bullet holds for $D_1$ and $D_2$, observe that the mechanism awards the item to the only bidder if and only if their value is larger than or equal to $n^2 + 1$. Since $n^2 + 1$ is the smallest possible non-zero value for the bidder (see the type space defined in Section~\ref{sec:budget}), the final bullet holds if and only if: all virtual values of non-zero types are non-negative (which directly follows from the last bullet of Proposition~\ref{prop:budgetMyerson}).
\end{proof}

\section{Omitted Proofs from Appendix~\ref{sec:budgetflow}}\label{app:budgetflow}

\subsection{Analyzing the canonical flow}

Here we are going to use properties proved in Appendix~\ref{app:budgetproof} to analyze the canonical flow. First we will present some technical lemmas which asymptotically bound the gaps between virtual values. By comparing the orders of those gaps, we then wrap them up to obtain Proposition~\ref{prop:budgetMyerson}.

The following lemmas establish some useful bounds for Myerson virtual values in our reduction.

\begin{lemma} \label{budgetd1bound}
   If $x_i = 0$, $ v^{i} - (n - i + 1) + \frac{\frac12 n^3 - n }{a + 3n^3} \le \Phi^{\alpha^M,\gamma^M,\lambda^M}(d^{i}) \le v^{i} - (n - i + 1) + \frac{2n^3}{a - 2n^3}$ for all $i \in [1,n]$  under the canonical flow.
\end{lemma}

\begin{proof}
  Recall that, by definition $z^d_{i} = \frac{b - 1 - \sum_{j = 1}^{i - 1} f(d^j) \cdot b}{n - i + 2} = \frac{\sum_{j = i}^{n + 2} f(d^j) \cdot b}{n - i + 2}$.
  Let us first lower bound $\Phi^{\alpha^M,\gamma^M,\lambda^M}(d^{i})$. By Lemma \ref{budgetzdmonotone} and \ref{budgetzdrange} we have the following inequality,
\begin{align*}
    \Phi^{\alpha^M,\gamma^M,\lambda^M}(d^{i}) & = v^{i} - \frac{\sum_{k=i+1}^{n+2} f(d^k)}{f(d^{i})} \\
    & = v^{i} - (n - i + 1)\frac{z^d_{i+1}}{f(d^{i})\cdot b} \\
    & = v^{i} - (n - i + 1)\frac{z^d_{i+1}}{\left \lfloor z^d_{i} + \frac{n^3}{n-i+2}\right \rfloor} \\
    & \ge v^{i} - (n - i + 1)\frac{z^d_{i+1}}{z^d_{i} + \frac{n^3}{n-i+2} - 1}  \\
    & \ge v^{i} - (n - i + 1)\frac{z^d_{i} }{z^d_{i} + \frac{n^3}{n-i+2} - 1}  \\
    & = v^{i} - (n - i + 1) \left ( 1 - \frac{\frac{n^3}{n-i+2} - 1 }{z^d_{i} + \frac{n^3}{n-i+2} - 1} \right) \\
    & = v^{i} - (n - i + 1) + (n - i + 1) \frac{\frac{n^3}{n-i+2} -1 }{z^d_{i} + \frac{n^3}{n-i+2} - 1} \\
    & \ge v^{i} - (n - i + 1) + \frac{\frac12 n^3 - n }{z^d_{i} + \frac{n^3}{n-i+2}} \\
    & \ge v^{i} - (n - i + 1) + \frac{\frac12 n^3 - n }{a + 3n^3}.
\end{align*}
On the other hand, by Lemma \ref{budgetzd1gap} and \ref{budgetzdrange} we have
\begin{align*}
    \Phi^{\alpha^M,\gamma^M,\lambda^M}(d^{i}) & = v^{i} - \frac{\sum_{k=i+1}^{n+2} f(d^k)}{f(d^{i})} \\
    & = v^{i} - (n - i + 1)\frac{z^d_{i+1}}{f(d^{i})\cdot b} \\
    & = v^{i} - (n - i + 1)\frac{z^d_{i+1}}{\left \lfloor z^d_{i} + \frac{n^3}{n-i+2}\right \rfloor} \\
    & \le v^{i} - (n - i + 1)\frac{z^d_{i+1}}{z^d_{i} + \frac{n^3}{n-i+2} }  \\
    & \le v^{i} - (n - i + 1)\frac{z^d_{i} - \frac{n^3}{(n - i + 1)(n - i + 2)} }{z^d_{i} + \frac{n^3}{n-i+2} }  \\
    & = v^{i} - (n - i + 1) \left ( 1 - \frac{\frac{n^3}{n-i+2} + \frac{n^3}{(n - i + 1)(n - i + 2)}}{z^d_{i} + \frac{n^3}{n-i+2}} \right) \\
    & = v^{i} - (n - i + 1) + (n - i + 1) \frac{\frac{n^3}{n-i+2} + \frac{n^3}{(n - i + 1)(n - i + 2) }}{z^d_{i} + \frac{n^3}{n-i+2}} \\
    & \le v^{i} - (n - i + 1) + (n - i + 1) \frac{\frac{n^3}{n-i+1} + \frac{n^3}{(n - i + 1)(n - i + 2) }}{z^d_{i}} \\
    & \le v^{i} - (n - i + 1) + \frac{n^3 + \frac{n^3}{n - i + 2 }}{z^d_{i}} \\
    & \le v^{i} - (n - i + 1) + \frac{2n^3}{a - 2n^3}.
\end{align*}
\end{proof}

\begin{lemma} \label{budgetd0bound}
  If $x_i = 1$, $ v^{i} - (n - i + 1) \le \Phi^{\alpha^M,\gamma^M,\lambda^M}(d^{i}) \le v^{i} - (n - i + 1) + \frac{2n}{a - 2n^3}$ for all $i \in [1,n]$  under the canonical flow.
\end{lemma}

\begin{proof}
  Recall that, by definition $z^d_{i} = \frac{b - 1 - \sum_{j = 1}^{i - 1} f(d^j) \cdot b}{n - i + 2} = \frac{\sum_{j = i}^{n + 2} f(d^j) \cdot b}{n - i + 2}$.
  Let us first lower bound $\Phi^{\alpha^M,\gamma^M,\lambda^M}(d^{i})$. By Lemma \ref{budgetzdmonotone} and \ref{budgetzdrange} we have the following inequality,
\begin{align*}
    \Phi^{\alpha^M,\gamma^M,\lambda^M}(d^{i}) & = v^{i} - \frac{\sum_{k=i+1}^{n+2} f(d^k)}{f(d^{i})} \\
    & = v^{i} - (n - i + 1)\frac{z^d_{i+1}}{f(d^{i})\cdot b} \\
    & = v^{i} - (n - i + 1)\frac{z^d_{i+1}}{\left \lceil z^d_{i} \right \rceil} \\
    & \ge v^{i} - (n - i + 1)\frac{z^d_{i+1}}{z^d_{i}}  \\
    & \ge v^{i} - (n - i + 1)\frac{z^d_{i} }{z^d_{i}}  \\
    & \ge v^{i} - (n - i + 1).
\end{align*}
On the other hand, by Lemma \ref{budgetzd0gap} and \ref{budgetzdrange} we have
\begin{align*}
    \Phi^{\alpha^M,\gamma^M,\lambda^M}(d^{i}) & = v^{i} - \frac{\sum_{k=i+1}^{n+2} f(d^k)}{f(d^{i})} \\
    & = v^{i} - (n - i + 1)\frac{z^d_{i+1}}{f(d^{i})\cdot b} \\
    & = v^{i} - (n - i + 1)\frac{z^d_{i+1}}{\left \lceil z^d_{i} \right \rceil} \\
    & \le v^{i} - (n - i + 1)\frac{z^d_{i+1}}{z^d_{i} + 1 }  \\
    & \le v^{i} - (n - i + 1)\frac{z^d_{i} - \frac{1}{n - i + 1} }{z^d_{i} + 1 }  \\
    & = v^{i} - (n - i + 1) \left (1 - \frac{1 + \frac{1}{n - i + 1}}{z^d_{i} + 1} \right) \\
    & = v^{i} - (n - i + 1) + (n - i + 1) \frac{1 + \frac{1}{n - i + 1 }}{z^d_{i} + 1} \\
    & \le v^{i} - (n - i + 1) + (n - i + 1) \frac{1 + \frac{1}{n - i + 1 }}{z^d_{i}} \\
    & \le v^{i} - (n - i + 1) + \frac{n - i + 2}{z^d_{i}} \\
    & \le v^{i} - (n - i + 1) + \frac{2n}{a - 2n^3}.
\end{align*}
\end{proof}

\begin{lemma} \label{budgete1bound}
  If $y_i = 1$, $ v^{i} - (n - i + 1) + \frac{\frac12 n^2 - 3n }{a + 3n^3} \le \Phi^{\alpha^M,\gamma^M,\lambda^M}(e^{i}) \le v^{i} - (n - i + 1) + \frac{2n^2}{a - 2n^3}$ for all $i \in [1,n]$  under the canonical flow.
\end{lemma}

\begin{proof}
  Recall that, by definition $z^e_{i} = \frac{b - 2 - \sum_{j = 1}^{i - 1} f(e^j) \cdot b}{n - i + 2} = \frac{\sum_{j = i}^{n + 2} f(e^j) \cdot b}{n - i + 2}$.
  Let us first lower bound $\Phi^{\alpha^M,\gamma^M,\lambda^M}(e^{i})$. By Lemma \ref{budgetzenearlymonotone} and \ref{budgetzerange}, $z^e_{i + 1} < z^e_{i} + 2$ for all $i \in [1, n]$, and $z^e_{i} \le a$, for all $i \in [1, n + 1]$, we have the following inequality,
\begin{align*}
    \Phi^{\alpha^M,\gamma^M,\lambda^M}(e^{i}) & = v^{i} - \frac{\sum_{k=i+1}^{n+2} f(e^k)}{f(e^{i})} \\
    & = v^{i} - (n - i + 1)\frac{z^e_{i+1}}{f(e^{i})\cdot b} \\
    & = v^{i} - (n - i + 1)\frac{z^e_{i+1}}{\left \lfloor z^e_{i} + \frac{n^3}{n-i+2}\right \rfloor} \\
    & \ge v^{i} - (n - i + 1)\frac{z^e_{i+1}}{z^e_{i} + \frac{n^2}{n-i+2} - 1}  \\
    & \ge v^{i} - (n - i + 1)\frac{z^e_{i} + 2 }{z^e_{i} + \frac{n^2}{n-i+2} - 1}  \\
    & = v^{i} - (n - i + 1) \left ( 1 - \frac{\frac{n^2}{n-i+2} - 3 }{z^e_{i} + \frac{n^2}{n-i+2} - 1} \right) \\
    & = v^{i} - (n - i + 1) + (n - i + 1) \frac{\frac{n^2}{n-i+2} -3 }{z^e_{i} + \frac{n^2}{n-i+2} - 1} \\
    & \ge v^{i} - (n - i + 1) + \frac{\frac12 n^2 - 3n }{z^e_{i} + \frac{n^2}{n-i+2}} \\
    & \ge v^{i} - (n - i + 1) + \frac{\frac12 n^2 - 3n }{a + 3n^3}.
\end{align*}
On the other hand, by Lemma \ref{budgetze1gap}, $z^e_{i} - z^e_{i + 1} \le \frac{n^2}{(n - i + 2)(n - i + 1)}$, so we have
\begin{align*}
    \Phi^{\alpha^M,\gamma^M,\lambda^M}(e^{i}) & = v^{i} - \frac{\sum_{k=i+1}^{n+2} f(e^k)}{f(e^{i})} \\
    & = v^{i} - (n - i + 1)\frac{z^e_{i+1}}{f(e^{i})\cdot b} \\
    & = v^{i} - (n - i + 1)\frac{z^e_{i+1}}{\left \lfloor z^e_{i} + \frac{n^2}{n-i+2}\right \rfloor} \\
    & \le v^{i} - (n - i + 1)\frac{z^e_{i+1}}{z^e_{i} + \frac{n^2}{n-i+2} }  \\
    & \le v^{i} - (n - i + 1)\frac{z^e_{i} - \frac{n^2}{(n - i + 1)(n - i + 2)} }{z^e_{i} + \frac{n^2}{n-i+2} }  \\
    & = v^{i} - (n - i + 1) \left ( 1 - \frac{\frac{n^2}{n-i+2} + \frac{n^2}{(n - i + 1)(n - i + 2)}}{z^e_{i} + \frac{n^2}{n-i+2}} \right) \\
    & = v^{i} - (n - i + 1) + (n - i + 1) \frac{\frac{n^2}{n-i+2} + \frac{n^2}{(n - i + 1)(n - i + 2) }}{z^e_{i} + \frac{n^2}{n-i+2}} \\
    & \le v^{i} - (n - i + 1) + (n - i + 1) \frac{\frac{n^2}{n-i+1} + \frac{n^2}{(n - i + 1)(n - i + 2) }}{z^e_{i}} \\
    & \le v^{i} - (n - i + 1) + \frac{n^2 + \frac{n^2}{n - i + 2 }}{z^e_{i}} \\
    & \le v^{i} - (n - i + 1) + \frac{2n^2}{a - 2n^3}.
\end{align*}
\end{proof}

\begin{lemma} \label{budgete0bound}
  If $y_i = 0$, $ v^{i} - (n - i + 1) - \frac{4n}{a - 3n^3} \le \Phi^{\alpha^M,\gamma^M,\lambda^M}(e^{i}) \le v^{i} - (n - i + 1) - \frac{1}{a + 2n^3}$ for all $i \in [1,n]$  under the canonical flow.
\end{lemma}

\begin{proof}
  Recall that, by definition $z^e_{i} = \frac{b - 2 - \sum_{j = 1}^{i - 1} f(e^j) \cdot b}{n - i + 2} = \frac{\sum_{j = i}^{n + 2} f(e^j) \cdot b}{n - i + 2}$.
  Let us first lower bound $\Phi^{\alpha^M,\gamma^M,\lambda^M}(e^{i})$. By Lemma \ref{budgetzenearlymonotone} and \ref{budgetzerange}, $z^e_{i + 1} < z^e_{i} + 2$ for all $i \in [1, n]$, and $z^e_{i} \ge a - 2n^3$, for all $i \in [1, n + 1]$, we have the following inequality,
\begin{align*}
    \Phi^{\alpha^M,\gamma^M,\lambda^M}(e^{i}) & = v^{i} - \frac{\sum_{k=i+1}^{n+2} f(e^k)}{f(e^{i})} \\
    & = v^{i} - (n - i + 1)\frac{z^e_{i+1}}{f(e^{i})\cdot b} \\
    & = v^{i} - (n - i + 1)\frac{z^e_{i+1}}{\left \lfloor z^e_{i}  - 1\right \rfloor} \\
    & \ge v^{i} - (n - i + 1)\frac{z^e_{i+1}}{z^e_{i} - 2}  \\
    & \ge v^{i} - (n - i + 1)\frac{z^e_{i} + 2 }{z^e_{i} - 2}  \\
    & = v^{i} - (n - i + 1)\left ( 1 + \frac{4}{z^e_{i} - 2} \right )  \\
    & \ge v^{i} - (n - i + 1) - \frac{4n}{a - 3n^3}.
\end{align*}
On the other hand, by Lemma \ref{budgetze0gap} $z^e_{i} - z^e_{i + 1} \le -\frac{1}{n-i+1}$, we have
\begin{align*}
    \Phi^{\alpha^M,\gamma^M,\lambda^M}(e^{i}) & = v^{i} - \frac{\sum_{k=i+1}^{n+2} f(e^k)}{f(e^{i})} \\
    & = v^{i} - (n - i + 1)\frac{z^e_{i+1}}{f(e^{i})\cdot b} \\
    & = v^{i} - (n - i + 1)\frac{z^e_{i+1}}{\left \lfloor z^e_{i}  - 1\right \rfloor} \\
    & \le v^{i} - (n - i + 1)\frac{z^e_{i+1}}{z^e_{i} }  \\
    & \le v^{i} - (n - i + 1)\frac{z^e_{i} + \frac{1}{n - i + 1} }{z^e_{i} }  \\
    & = v^{i} - (n - i + 1) \left (1 + \frac{\frac{1}{n - i + 1}}{z^e_{i}} \right) \\
    & = v^{i} - (n - i + 1) - \frac{1}{z^e_{i}} \\
    & \le v^{i} - (n - i + 1) - \frac{1}{a + 2n^3}.
\end{align*}
\end{proof}

We now have enough tools to prove Proposition~\ref{prop:budgetMyerson}. Firstly we will prove the monotonicity of virtual values, which establishes the first two bullets of Proposition~\ref{prop:budgetMyerson}.
\begin{lemma} \label{budgetmonotone}
For all $x, y \in \{0, 1\}^n$, we have \[
\max \left (\Phi^{\alpha^M,\gamma^M,\lambda^M}(d^{i}),  \Phi^{\alpha^M,\gamma^M,\lambda^M}(e^{i})  \right) <  \min \left ( \Phi^{\alpha^M,\gamma^M,\lambda^M}(d^{i + 1}),  \Phi^{\alpha^M,\gamma^M,\lambda^M}(e^{i + 1})  \right) \] for $i \in [1, n + 1]$.
\end{lemma}

\begin{proof}
For convenience, let \[L_i = \min \left (\Phi^{\alpha^M,\gamma^M,\lambda^M}(d^{i}),  \Phi^{\alpha^M,\gamma^M,\lambda^M}(e^{i})  \right),\] and \[R_i = \max \left ( \Phi^{\alpha^M,\gamma^M,\lambda^M}(d^{i}),  \Phi^{\alpha^M,\gamma^M,\lambda^M}(e^{i})  \right).\]
We prove this lemma by showing the gap between $R_i$ and $L_{i+1}$ is at least 0.5 for all $i$ under the canonical flow.

First, $R_{n + 1} = \Phi^{\alpha^M,\gamma^M,\lambda^M}(d^{n + 1}) = n^2 + n + 1 - \frac{f(d^{n+2})}{f(d^{n+1})}$, and $L_{n+2} =  \Phi^{\alpha^M,\gamma^M,\lambda^M}(e^{n + 2}) =  n^2 + n + 1.9$. This lemma clearly holds for $i = n + 1$.

Second, $R_{n} = \Phi^{\alpha^M,\gamma^M,\lambda^M}(d^{n}) = n^2 + n - \frac{f(d^{n+2}) + f(d^{n+1})}{f(d^{n})}$, and $L_{n+1} =  \Phi^{\alpha^M,\gamma^M,\lambda^M}(e^{n + 1}) =  n^2 + n + 1 - \frac{0.9\cdot f(e^{n+2})}{f(e^{n+1})}$. This lemma clearly holds for $i = n$ by Lemma~\ref{budgetdrange} and Lemma~\ref{budgeterange}.

For $i \le n - 1$, again by  Lemma  \ref{budgetdrange} and \ref{budgeterange},  we have
\begin{align*}
    L_{i + 1} - R_{i} & \ge 1 - (n - i + 1)\frac{a + 2n^3}{a - 2n^3} + (n - i + 2)\frac{a - 2n^3}{a + 2n^3} \\
    & = 1 - (n - i + 1) \left ( 1 + \frac{4n^3}{a - 2n^3} \right) + (n - i + 2) \left ( 1 - \frac{4n^3}{a + 2n^3} \right) \\
    & = 2 -  (n - i + 1) \frac{4n^3}{a - 2n^3}  + (n - i + 2)  \frac{4n^3}{a + 2n^3} \\
    & \ge 0.5.
\end{align*}

\end{proof}

The next two lemmas will establish bullet three and four of Proposition~\ref{prop:budgetMyerson}.

\begin{lemma} \label{budgetle}
If either $x_{i} = 0$ or $y_{i} = 0$, then $\Phi^{\alpha^M,\gamma^M,\lambda^M}(d^{i}) > \Phi^{\alpha^M,\gamma^M,\lambda^M}(e^{i})$  for all $i \in [1,n]$ for the canonical flow.
\end{lemma}
\begin{proof}
By Lemma \ref{budgete0bound}, \ref{budgetd0bound} and \ref{budgetd1bound}, we know  $\Phi^{\alpha^M,\gamma^M,\lambda^M}(d^{i}) \ge v^{i} - (n - i + 1)$ whereas  $\Phi^{\alpha^M,\gamma^M,\lambda^M}(e^{i}) < v^{i} - (n - i + 1)$ when $y_i = 0$. Thus the only case left is $x_i = 0$ and $y_i = 1$, again by Lemma \ref{budgetd1bound} and \ref{budgete1bound}, we have
\begin{align*}
  \Phi^{\alpha^M,\gamma^M,\lambda^M}(d^{i}) - \Phi^{\alpha^M,\gamma^M,\lambda^M}(e^{i}) & \ge \frac{\frac12 n^3 - 3n }{a + 3n^3} -  \frac{2n^2}{a - 2n^3} \\
  & = \Omega(1/n^4),
\end{align*}
where the last equality is from the fact that $a = \Theta(n^7)$.

\end{proof}

\begin{lemma}\label{budgetge}
If $x_{i} = y_{i} = 1$, then $\Phi^{\alpha^M,\gamma^M,\lambda^M}(d^{i}) < \Phi^{\alpha^M,\gamma^M,\lambda^M}(e^{i})$ for the canonical flow. Furthermore, \[\min_{i : x_i = y_i = 1} \left (\Phi^{\alpha^M,\gamma^M,\lambda^M}(e^{i}) -\Phi^{\alpha^M,\gamma^M,\lambda^M}(d^{i})  \right ) = \Theta(1/n^5),\] and \[\max_{i: x_i = y_i = 1} \left (\Phi^{\alpha^M,\gamma^M,\lambda^M}(e^{i}) -\Phi^{\alpha^M,\gamma^M,\lambda^M}(d^{i}) \right ) = \Theta(1/n^5).\]
\end{lemma}
\begin{proof}
When $x_i = y_i = 1$, by Lemma \ref{budgetd0bound} and \ref{budgete1bound} we have
\begin{align*}
\Phi^{\alpha^M,\gamma^M,\lambda^M}(e^{i}) - \Phi^{\alpha^M,\gamma^M,\lambda^M}(d^{i}) & \ge \frac{\frac12 n^2 - 3n }{a + 3n^3} - \frac{2n}{a - 2n^3} \\
& = \Omega(1/n^5),
\end{align*}
where the last equality is from the fact that $a = \Theta(n^7)$.
On the other hand, we have
\begin{align*}
\Phi^{\alpha^M,\gamma^M,\lambda^M}(e^{i}) - \Phi^{\alpha^M,\gamma^M,\lambda^M}(d^{i}) & \le \frac{2n^2}{a - 2n^3} - 0 \\
& = O(1/n^5),
\end{align*}
where the last equality is from the fact that $a = \Theta(n^7)$.
\end{proof}

By Lemma~\ref{budgetdrange} and Lemma~\ref{budgeterange}, bullet five of Proposition~\ref{prop:budgetMyerson} is easy to veryfy:
\begin{align*}
\Phi^{\alpha^M,\gamma^M,\lambda^M}(d^{n + 1}) & = v^{n+1} - \frac{f(d^{n+2})}{f(d^{n+1})} \\
& = v^{n+1}  - \frac{1}{f(d^{n+1}) \cdot b} \\
& \ge  v^{n+1}  - \frac{1}{a - n^3} \\
& >  v^{n+1}  - \frac{n^4 \cdot 0.9}{a + n^3} \\
& \ge v^{n+1} - \frac{f(e^{n+2})}{f(e^{n+1})} \\
& = \Phi^{\alpha^M,\gamma^M,\lambda^M}(e^{n + 1}).
\end{align*}

Now we would like to show the relationship of virtual values of the highest type for the canonical flow, which establishes bullet six of Proposition~\ref{prop:budgetMyerson}.

\begin{lemma} \label{budgetbefore}
$ \Phi^{\alpha^M,\gamma^M,\lambda^M}(d^{n + 2})   > \Phi^{\alpha^M,\gamma^M,\lambda^M}(e^{n + 2})$ for the canonical flow.
\end{lemma}

\begin{proof}
This lemma is easy to see since $\Phi^{\alpha^M,\gamma^M,\lambda^M}(d^{n + 2}) = n^2+n+2 > n^2+n+1.9 = \Phi^{\alpha^M,\gamma^M,\lambda^M}(e^{n + 2})$.
\end{proof}

Finally, the following lemma shows all virtual values are positive which implies the last bullet of Proposition~\ref{prop:budgetMyerson}.
\begin{lemma} \label{budgetpositive}
  By Lemma~\ref{budgetmonotone} and Lemma~\ref{budgetbefore}, it is sufficient to show $\Phi^{\alpha^M,\gamma^M,\lambda^M}(e^1) > 0$. To this end, observe that:
  \begin{align*}
    \Phi^{\alpha^M,\gamma^M,\lambda^M}(e^1) & = v^1 - \frac{1 - f(e^1)}{f(e^1)} \\
    & > v^1 - (n + 2) \\
    & = n^2 + 1 - (n + 2) \\
    & > 0.
  \end{align*}
\end{lemma}

\subsection{Analyzing the Modified Flow}
First we bound the flow we need for the boosting operation:
\begin{lemma} \label{boundofflow}
When the modified flow is needed, $\varepsilon = \max \left (\frac{\Phi^{\alpha^M,\gamma^M,\lambda^M}(e^{i}) - \Phi^{\alpha^M,\gamma^M,\lambda^M}(d^{i})}{f(d^i)} \right) = \Theta(1/n^6)$.
\end{lemma}
\begin{proof}
This is directly from Lemma \ref{budgetdrange} and \ref{budgetge}.
\end{proof}

Next we give the order of $\delta$ for the ironing operation:
\begin{lemma} \label{boundofironing}
When the modified is needed, $\delta = \Theta(1/n^5)$.
\end{lemma}

\begin{proof}
After boosting, the virtual value of $d^{n + 2}$ : $\Phi(d^{n+2}) = v_1^{n+2} - \frac{\varepsilon \cdot (v_1^{n+2} - B)}{f(d^{n+2})}$ and the virtual value of $d^{n + 1}$ : $\Phi(d^{n+1}) = v_1^{n+1} + \frac{\varepsilon - f(d^{n+2})}{f(d^{n+1})}$.
Recall that we want to set $\delta$ such that $ \Phi^{\alpha^*,\gamma^*,\lambda^*}(d^{n + 1}) = \Phi^{\alpha^*,\gamma^*,\lambda^*}(d^{n + 2})$. By solving this equation, we have 
\begin{align*}
\delta &= \frac{1}{\frac{1}{f(d^{n+1})} + \frac{1}{f(d^{n+2})}} \left (\frac{\varepsilon \cdot(v_1^{n+2} - B)}{f(d^{n+2})} + \frac{\varepsilon}{f(d^{n+1})} - \frac{f(d^{n+2})}{f(d^{n+1})}  - 1\right)\\
& = \Theta(1/n^8) \cdot \left ( \Theta(n^3) + \Theta(1/n^5) - \Theta(1/n^7) - 1 \right) \\
& = \Theta(1/n^5).
\end{align*}
\end{proof}

After knowing the order of flow we need, we can compare it with the bounds for gaps between virtual values we have already established. We will use it and all properties we proved above to establish Proposition~\ref{prop:budgetvalid}. 

First we can now easily see that the modified flow $(\alpha^*, \gamma^*, \lambda^*)$ is a valid flow, since $\varepsilon = \Theta(1/n^6) < \lambda^M_1(k) $ for $k \in [1, n + 1]$. And for $n + 2$, $\lambda^*_1(n + 2) = f(d^{n+2}) - \varepsilon + \delta \ge 0$.

The following lemma shows bullet five of Proposition~\ref{prop:budgetvalid}.

\begin{lemma} \label{budgetafter}
When $\disj(x,y) = \mathsf{no}$, then: $\Phi^{\alpha^*, \gamma^* ,\lambda^*}(e^{n+2}) > \Phi^{\alpha^*, \gamma^*, \lambda^*}(d^{n +2}) = \Phi^{\alpha^*, \gamma^*, \lambda^*}(d^{n + 1}) > \Phi^{\alpha^*, \gamma^* ,\lambda^*}(e^{n+1})$.
\end{lemma}

\begin{proof}
By definition of the modified flow, we know that  $\Phi^{\alpha^*, \gamma^*, \lambda^*}(d^{n +2}) = \Phi^{\alpha^*, \gamma^*, \lambda^*}(d^{n + 1})$, so we only need to verify two inequalities. For the first one, by expanding $\Phi^{\alpha^*, \gamma^*, \lambda^*}(d^{n + 1})$ we have 
\begin{align*}
\Phi^{\alpha^*, \gamma^*, \lambda^*}(d^{n + 1}) &= v_1^{n+1} + \frac{\varepsilon - f(d^{n+2}) - \delta}{f(d^{n+1})} \\
& = n^2 + n + 1 - \Theta(1/n^4) \\
& < n^2 + n + 1.9 \\
& = \Phi^{\alpha^*, \gamma^* ,\lambda^*}(e^{n+2}).
\end{align*}
For the other inequality, note that $\Phi^{\alpha^*, \gamma^* ,\lambda^*}(e^{n+1}) = n^2 + n + 1 - \frac{0.9 \cdot n^4}{b \cdot f(e^{n+1})} = n^2 + n + 1 - \Theta(1/n^3) < \Phi^{\alpha^*, \gamma^*, \lambda^*}(d^{n + 1}) $.
\end{proof}

The following lemma guarantees that monotonicity still hold after boosting, which in particular (together with Lemma~\ref{budgetafter}) implies the first two bullets of Proposition~\ref{prop:budgetvalid} .

\begin{lemma} \label{budgetmonotone1}
For all $x, y \in \{0, 1\}^n$, we have \[
\max \left (\Phi^{\alpha^*,\gamma^*,\lambda^*}(d^{i}),  \Phi^{\alpha^*,\gamma^*,\lambda^*}(e^{i})  \right) <  \min \left (\Phi^{\alpha^*,\gamma^*,\lambda^*}(d^{i + 1}),  \Phi^{\alpha^*,\gamma^*,\lambda^*}(e^{i + 1})  \right)\] for $i \in [1, n]$.
\end{lemma}

\begin{proof}
By Lemma~\ref{budgetmonotone}, we know the gap under canonical flow is at least $0.5$, and we are only adding negligible amount of flow ($O(\frac{1}{n^6})$), so the overall perturbation on virtual values will be no more than $\frac{\varepsilon}{f(\cdot)}$, which is $O(\frac{1}{n^5})$ by Lemma~\ref{budgetdrange} and Lemma~\ref{budgeterange}. Thus the monotonicity still holds.
\end{proof}

Note that the bullet four of Proposition~\ref{prop:budgetvalid} is directly from the bullet four of Proposition~\ref{prop:budgetMyerson}, which guarantees that if $\disj = \mathsf{no}$, there must be some $k$ such that $\Phi^{\alpha^M, \gamma^M, \lambda^M}_2(v_2^{k}) > \Phi^{\alpha^M,\gamma^M,\lambda^M}_1(v_1^k )$. Then by the definition of modified flow, there must be some $k^*$ satisfies the property.

\begin{proof}[Proof of Lemma~\ref{lem:budgetBIC}]
We first confirm that the second-price auction except for $n+2$ with careful tie-breaking at $k^*$ is BIC. Observe first that, because the auction has a monotone allocation rule (and satisfies the payment identity) that no bidder can ever benefit by misreporting their value, but honestly reporting their interest. This follows immediately from the definition of the payment identity, and is well-known~\cite{Myerson81}. In particular, this implies that the auction is BIC for both bidders.

Now let's consider the budget constraint. By definition of the second-price auction except for $n+2$ with careful tie-breaking at $k^*$, we have 
\begin{align*}
 P_1(v_1^{n+2}) & = \sum_{k = 1}^{n+2} v_1^k \cdot (\pi_1(v_1^k) - \pi_1((v_1^{k-1})))\\
 & = \sum_{k = 1}^{n+1} v_1^k \cdot (\pi_1(v_1^k) - \pi_1((v_1^{k-1})))\\
 & =   (1-\Delta) \cdot f_2(v_2^k) \cdot v_1^{k^*} + \Delta \cdot f_2(v_2^k) \cdot v_1^{k^* + 1}  + \sum_{k \neq k^*} v_1^k \cdot f_2(v_2^k) \\
 & = \Delta \cdot (v_1^{k^* + 1} -  v_1^{k^*}) + \sum_{k = 1} ^{n+1} v_1^k \cdot f_2(v_2^k)\\
 & = \Delta + \sum_{k = 1} ^{n+1} v_1^k \cdot f_2(v_2^k) \\
 & =  \left ( 1 - \frac{n^4}{b} \right) \cdot B \\
 & = \pi_1(v_1^{n+2}) \cdot B,
\end{align*}
which concludes the proof.
\end{proof}

\end{document}